\definecolor{darkgreen}{RGB}{50,190,50}
\definecolor{darkblue}{RGB}{0,0,190}
\definecolor{darkred}{RGB}{238,0,0}
\definecolor{quantum}{RGB}{83,37,127}
\definecolor{quantumlight}{RGB}{169,146,191}
\definecolor{nice}{RGB}{230,0,230}
\definecolor{nicepink}{rgb}{0.858, 0.188, 0.478}
\tikzstyle{vecArrow} = [thick, decoration={markings,mark=at position
\tikzstyle{innerWhite} = [semithick, white,line width=1.4pt, shorten >= 4.5pt]
\definecolor{darkgreen}{RGB}{50,190,50}
\definecolor{darkblue}{RGB}{0,0,190}
\definecolor{darkred}{RGB}{238,0,0}
\definecolor{quantum}{RGB}{83,37,127}
\definecolor{quantumlight}{RGB}{169,146,191}
\definecolor{darkorange}{RGB}{255,100,0}
\newcommand{\nl}{\ensuremath{\hspace*{-0.5pt}}}
\newcommand{\nr}{\ensuremath{\hspace*{0.5pt}}}
\newcommand{\subtiny}[3]{\ensuremath{_{\hspace{#1 pt}\protect\raisebox{#2 pt}{\tiny{$ #3$}}}}}
\newcommand{\suptiny}[3]{\ensuremath{^{\hspace{#1 pt}\protect\raisebox{#2 pt}{\tiny{$ #3$}}}}}
\theoremstyle{plain}
\newtheorem{thm}{\protect\theoremname}
\theoremstyle{plain}
\newtheorem{lem}{\protect\lemmaname}
\theoremstyle{plain}
\theoremstyle{remark}
\newtheorem*{rem*}{\protect\remarkname}
\theoremstyle{plain}
\theoremstyle{plain}
\newtheorem{cor}{\protect\corollaryname}
\theoremstyle{definition}
\newtheorem{defn}{\protect\definitionname}
\theoremstyle{plain}
\newtheorem*{thm*}{\protect\theoremname}
\theoremstyle{plain}
\newtheorem*{lem*}{\protect\lemmaname}
\theoremstyle{plain}
\providecommand{\propositionname}{Proposition}
\providecommand{\theoremname}{Theorem}
\providecommand{\lemmaname}{Lemma}
\providecommand{\remarkname}{Remark}
\providecommand{\conjecturename}{Conjecture}
\providecommand{\definitionname}{Definition}
\providecommand{\corollaryname}{Corollary}
\def\bra#1{\langle{#1}\vert}
\def\ket#1{\vert{#1}\rangle}
\def\braket#1{\langle{#1}\rangle}
\def\BraVert{e.g.,roup\,\mid\,\bgroup}
\newcommand{\ketbra}[2]{\ensuremath{|{#1}\rangle\!\langle{#2}|}}
\newcommand{\Acal}{\mathcal{A}}
\newcommand{\Bcal}{\mathcal{B}}
\newcommand{\Ccal}{\mathcal{C}}
\newcommand{\Hcal}{\mathcal{H}}
\newcommand{\Mcal}{\mathcal{M}}
\newcommand{\Scal}{\mathcal{S}}
\newcommand{\Xcal}{\mathcal{X}}
\newcommand{\Wcal}{\mathcal{W}}
\newcommand{\Zcal}{\mathcal{Z}}
\def\tr#1{\mbox{tr}\left[{#1}\right]}
\newcommand{\ptr}[2]{\mbox{tr}_{\raisebox{-1pt}{\tiny{$#1$}}}\left[ #2 \right]}
\DeclareMathOperator{\diag}{diag}
\let\oldaddcontentsline\addcontentsline
\newcommand{\stoptocentries}{\renewcommand{\addcontentsline}[3]{}}
\newcommand{\starttocentries}{\let\addcontentsline\oldaddcontentsline}
\begin{document}
\stoptocentries

\title{Landauer vs. Nernst: What is the True Cost of Cooling a Quantum System?}

\author{Philip Taranto\,\orcidlink{0000-0002-4247-3901}}
\email{philipguy.taranto@phys.s.u-tokyo.ac.jp}
\thanks{P. T. and F. B. contributed equally.}
\affiliation{Department of Physics, Graduate School of Science, The University of Tokyo, 7-3-1 Hongo, Bunkyo City, Tokyo 113-0033, Japan}
\affiliation{Atominstitut, Technische Universit{\"a}t Wien, 1020 Vienna, Austria}
\affiliation{Institute for Quantum Optics and Quantum Information - IQOQI Vienna, Austrian Academy of Sciences, Boltzmanngasse 3, 1090 Vienna, Austria}

\author{Faraj Bakhshinezhad\,\orcidlink{0000-0002-0088-0672}}
\thanks{P. T. and F. B. contributed equally.}
\affiliation{Atominstitut, Technische Universit{\"a}t Wien, 1020 Vienna, Austria}
\affiliation{Department of Physics and Nanolund, Lund University, Box 118, 221 00 Lund, Sweden}
\affiliation{Institute for Quantum Optics and Quantum Information - IQOQI Vienna, Austrian Academy of Sciences, Boltzmanngasse 3, 1090 Vienna, Austria}

\author{Andreas Bluhm\,\orcidlink{0000-0003-4796-7633}}
\thanks{A. B. and R. S. contributed equally.}
\affiliation{Univ.\ Grenoble Alpes, CNRS, Grenoble INP, LIG, 38000 Grenoble, France}
\affiliation{QMATH, Department of Mathematical Sciences, University of Copenhagen, Universitetsparken 5, 2100 Copenhagen, Denmark}

\author{Ralph Silva\,\orcidlink{0000-0002-4603-747X}}
\thanks{A. B. and R. S. contributed equally.}
\affiliation{Institute for Theoretical Physics, ETH Z\"urich, Wolfgang-Pauli-Str. 27, Z\"urich, Switzerland}

\author{Nicolai Friis\,\orcidlink{0000-0003-1950-8640}}
\affiliation{Atominstitut, Technische Universit{\"a}t Wien, 1020 Vienna, Austria}
\affiliation{Institute for Quantum Optics and Quantum Information - IQOQI Vienna, Austrian Academy of Sciences, Boltzmanngasse 3, 1090 Vienna, Austria}

\author{Maximilian P.~E. Lock\,\orcidlink{0000-0002-8241-8202}}
\affiliation{Atominstitut, Technische Universit{\"a}t Wien, 1020 Vienna, Austria}
\affiliation{Institute for Quantum Optics and Quantum Information - IQOQI Vienna, Austrian Academy of Sciences, Boltzmanngasse 3, 1090 Vienna, Austria}

\author{Giuseppe Vitagliano\,\orcidlink{0000-0002-5563-3222}}
\affiliation{Atominstitut, Technische Universit{\"a}t Wien, 1020 Vienna, Austria}
\affiliation{Institute for Quantum Optics and Quantum Information - IQOQI Vienna, Austrian Academy of Sciences, Boltzmanngasse 3, 1090 Vienna, Austria}

\author{Felix C. Binder\,\orcidlink{0000-0003-4483-5643}}
\affiliation{School of Physics, Trinity College Dublin, Dublin 2, Ireland}
\affiliation{Institute for Quantum Optics and Quantum Information - IQOQI Vienna, Austrian Academy of Sciences, Boltzmanngasse 3, 1090 Vienna, Austria}
\affiliation{Atominstitut, Technische Universit{\"a}t Wien, 1020 Vienna, Austria}

\author{Tiago Debarba\,\orcidlink{0000-0001-6411-3723}}
\affiliation{Departamento Acad{\^ e}mico de Ci{\^ e}ncias da Natureza, Universidade Tecnol{\'o}gica Federal do Paran{\'a} (UTFPR), Campus Corn{\'e}lio Proc{\'o}pio, Avenida Alberto Carazzai 1640, Corn{\'e}lio Proc{\'o}pio, Paran{\'a} 86300-000, Brazil}

\author{Emanuel Schwarzhans\,\orcidlink{0000-0001-8259-9720}} 
\affiliation{Atominstitut, Technische Universit{\"a}t Wien, 1020 Vienna, Austria}
\affiliation{Institute for Quantum Optics and Quantum Information - IQOQI Vienna, Austrian Academy of Sciences, Boltzmanngasse 3, 1090 Vienna, Austria}

\author{Fabien Clivaz\,\orcidlink{0000-0003-0694-8575}} 
\affiliation{Institut f{\"u}r Theoretische Physik und IQST, Universit{\"a}t Ulm, Albert-Einstein-Allee 11, D-89069 Ulm, Germany}
\affiliation{Institute for Quantum Optics and Quantum Information - IQOQI Vienna, Austrian Academy of Sciences, Boltzmanngasse 3, 1090 Vienna, Austria}

\author{Marcus Huber\,\orcidlink{0000-0003-1985-4623}}
\email{marcus.huber@tuwien.ac.at} 
\affiliation{Atominstitut, Technische Universit{\"a}t Wien, 1020 Vienna, Austria}
\affiliation{Institute for Quantum Optics and Quantum Information - IQOQI Vienna, Austrian Academy of Sciences, Boltzmanngasse 3, 1090 Vienna, Austria}

\date{\today}

\begin{abstract}
Thermodynamics connects our knowledge of the world to our capability to manipulate and thus to control it. This crucial role of control is exemplified by the third law of thermodynamics, Nernst's unattainability principle, which states that infinite resources are required to cool a system to absolute zero temperature. But what are these resources and how should they be utilised? And how does this relate to Landauer's principle that famously connects information and thermodynamics? We answer these questions by providing a framework for identifying the resources that enable the creation of pure quantum states. We show that perfect cooling is possible with Landauer energy cost given infinite time or control complexity. However, such optimal protocols require complex unitaries generated by an external work source. Restricting to unitaries that can be run solely via a heat engine, we derive a novel Carnot-Landauer limit, along with protocols for its saturation. This generalises Landauer's principle to a fully thermodynamic setting, leading to a unification with the third law and emphasises the importance of control in quantum thermodynamics.
\end{abstract}

\maketitle

%\setcounter{tocdepth}{1}
%\tableofcontents

\pdfbookmark[1]{Introduction}{Introduction}
\label{sec:introduction}
\section{Introduction}

\emph{What is the cost of creating a pure state?} Pure states appear as ubiquitous idealisations in quantum information processing and preparing them with high fidelity is essential for quantum technologies such as reliable quantum communication~\cite{GisinRibordyTittelZbinden2002, PirandolaEtAl2020}, high-precision quantum parameter estimation~\cite{GiovannettiLloydMaccone2011, TothApellaniz2014, DemkowiczDobrzanskiJarzynaKolodynski2015}, and fault-tolerant quantum computation~\cite{Preskill1997, Preskill2018}. Fundamentally, pure states are prerequisites for ideal measurements~\cite{Guryanova2020} and precise timekeeping~\cite{Erker_2017,SchwarzhansLockErkerFriisHuber2021}. To answer the above question, one could turn to Landauer's principle, stating that erasing a bit of information has an \emph{energy} cost of at least $k_B T \log(2)$~\cite{Landauer_1961}. Alternatively, one could consult Nernst's unattainability principle (the third law of thermodynamics)~\cite{Nernst_1906}, stating that cooling a physical system to its ground state requires diverging resources. At the outset, it seems that these statements are at odds with one another. However, Landauer's protocol requires infinite time, thus identifying \emph{time} as a resource according to the third law~\cite{Ticozzi_2014,Masanes_2017,Wilming_2017,Freitas_2018,Scharlau_2018}. \emph{Does this mean either infinite energy or time are needed to prepare a pure state?} 

The perhaps surprising answer we give here is: \emph{no}. We show that finite energy and time suffice to perfectly cool any quantum system and we identify the previously hidden resource\textemdash \emph{control complexity}\textemdash that must diverge (in the spirit of Nernst's principle) to do so. Intuitively, the control complexity of a protocol refers to the structure of machine energy gaps that the cooling unitary must couple the system to; we demonstrate that this energy-level spectrum must approximate a continuum in order to cool with minimal time and energy costs. In short, the ultimate limit on the energetic cost of cooling is still provided by the Landauer limit, but in order to achieve it, either time or control complexity must diverge. 

At the same time, heat fluctuations and short coherence times in quantum technologies~\cite{Acin_2018} demand that both energy and time are not only finite, but minimal. Therefore, in addition to proving the necessity of diverging control complexity for perfect cooling with minimal time and energy, we develop explicit protocols that saturate the ultimate limits. We demonstrate that mitigating overall heat dissipation comes at the practical cost of controlling fine-tuned interactions that require a \emph{coherent} external work source, i.e., a quantum battery~\cite{Aberg2013,Skrzypczyk_2014,LostaglioJenningsRudolph2015,Friis2018,CampaioliPollockVinjanampathy2019}. From a thermodynamic perspective, this may seem somewhat unsatisfactory: nonequilibrium resources imply that the total system is not closed, and the optimal protocol (saturating the Landauer bound) is reminiscent of a Maxwellian demon with perfect control. 

\begin{figure*}[ht!]
  \centering
  \includegraphics[width=\textwidth]{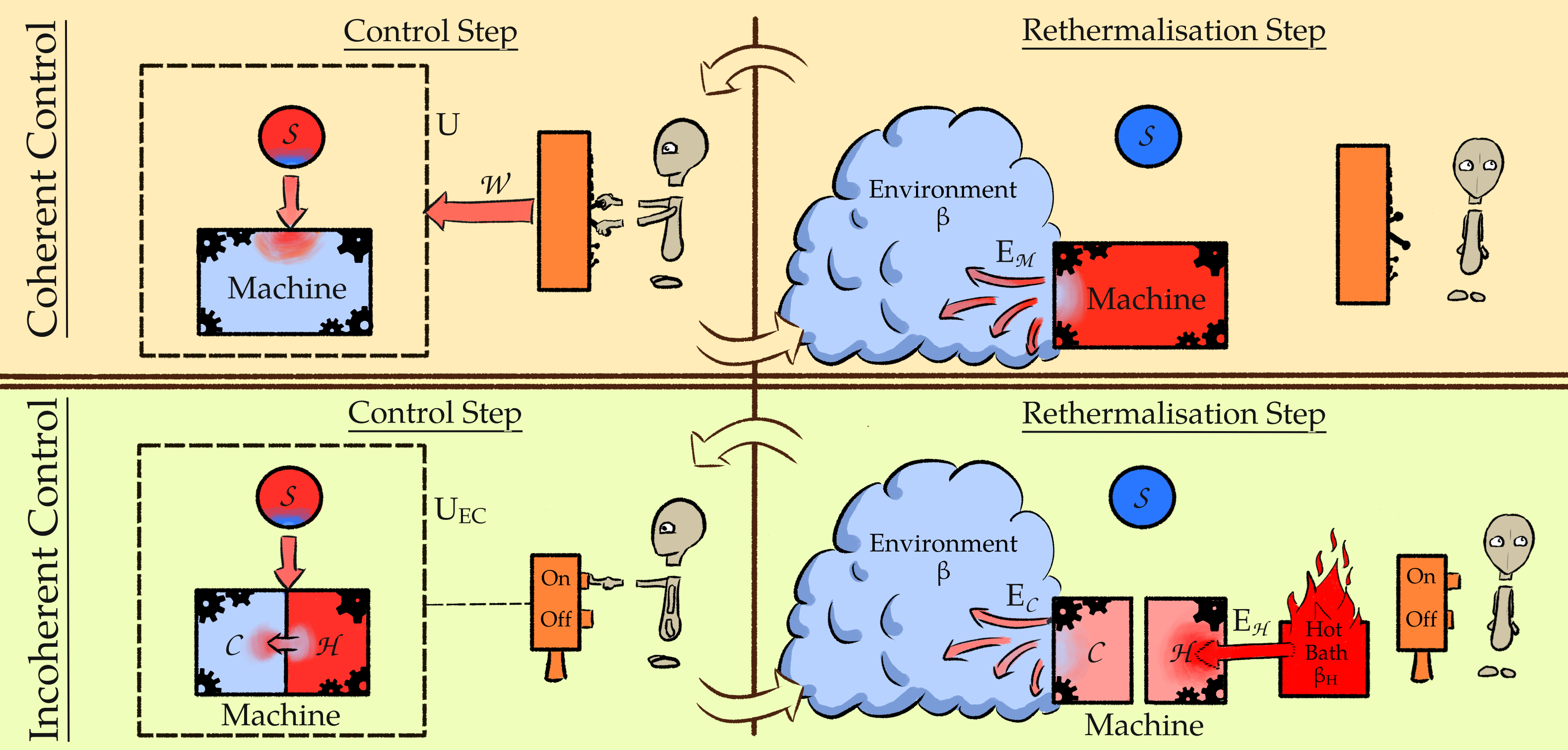}
  \vspace*{-1.5mm}
  \caption{\emph{Framework.} We consider the task of cooling a quantum system in two extremal control scenarios, with each step of both paradigms comprising two primitives. The top panel depicts the coherent-control scenario: in the control step (left), an agent can use a work source $\Wcal$ to implement any global unitary on the system $\Scal$ and machine $\Mcal$, which both begin thermal at inverse temperature $\beta$; in cooling the target, energy and entropy is transferred to the machine. The machine then rethermalises with its environment (right), thereby dissipating the energy it gained in the control step. The bottom panel depicts the incoherent-control scenario: the machine is bipartitioned into a cold part at inverse temperature $\beta$ and a hot part at inverse temperature $\beta_{\raisebox{-1pt}{\tiny{$H$}}} < \beta$. In the control step, the agent switches on an interaction between the three systems, represented by a global energy-conserving unitary $U_{\textup{EC}}$. In the rethermalisation step, the interaction is turned off and both subsystems of the machine rethermalise to their respective initial temperatures; the hot part draws energy from the heat bath while the cold part dissipates heat to its environment. In both paradigms, we quantify the control complexity as the effective dimension accessed by the unitary operation in a given control step (i.e., the dimension of the system-machine Hilbert space upon which the unitary acts nontrivially).} 
  \label{fig:schematic}
\end{figure*}

Accordingly, we also consider an \emph{incoherent} control setting restricted to global energy-conserving unitaries with a heat bath as thermodynamic energy source. This setting corresponds to minimal overall control, where interactions need only be switched on and off to generate transformations, i.e., a heat engine alone drives the dynamics~\cite{Scovil1959,Kosloff2014,Uzdin2015,Mitchison2019,Woods2019maximumefficiencyof}. The incoherent-control setting is therefore fully thermodynamically consistent inasmuch as both the machine state is assumed to be thermal (and to rethermalize between control steps) \emph{and} the permitted control operations are those implementable solely via a heat engine. In this paradigm, we show that the Landauer bound is not attainable, subsequently derive a novel limit---which we dub the \emph{Carnot-Landauer} bound---and construct protocols that saturate it, thereby establishing its significance. The Carnot-Landauer bound follows from an equality phrased in terms of entropic and energetic quantities that must hold for any state transformation in the incoherent control paradigm; in this sense, the Carnot-Landauer equality adapts the equality version of Landauer's principle developed in Ref.~\cite{Reeb_2014} to a fully (quantum) thermodynamic setting.

Our work thus both generalises Landauer’s erasure principle and, at the same time, unifies it with the laws of thermodynamics. By accounting for control complexity, we emphasise a crucial resource that is oftentimes overlooked but, as we show, must be taken into account for any operationally meaningful theory of thermodynamics. Here, we focus on the asymptotic setting that allows us to connect this resource with Nernst's unattainability principle. Beyond the asymptotic case, the gained insights also open the door to a better understanding of the intricate relationship between energy, time, and control complexity when all resources are finite, which will be crucial for practical applications; we additionally provide a preliminary analysis to this end. Lastly, our protocols saturating the Carnot-Landauer bound pave the way for thermodynamically driven (i.e., minimal-control) quantum technologies, which, by mitigating the cost of control at the very outset, could lead to tangible advantages. 

%%%%%%%%%%%%%%%%%%%%%%%%%%%%%%%%%%%%%%%%%%%%%%%%%%%%%%%%%%%%

\pdfbookmark[2]{Overview and Summary of Results}{Overview and Summary of Results}
\subsection*{Overview \& Summary of Results}
\label{sec:results}

Loosely speaking, there are two types of thermodynamic laws: those, like the second law, that bound (changes of) characteristic quantities during thermodynamic processes, and those, like the third law, which state the impossibility of certain tasks. Landauer's principle is of the former kind (indeed, it can be rephrased as a version of the second law), associating a minimal heat dissipation to any logically irreversible process, thereby placing a fundamental limit on the energy cost of computation. The paradigmatic logically irreversible process is that of erasing information, i.e., resetting an arbitrary state to a blank register. From a physics perspective, said task can be rephrased as \emph{perfectly cooling} a system to the ground state, or more generally, taking an initially full-rank state to a rank-deficient one.\footnote{Low-temperature thermal states correspond to those with low information content, as they have low entropy or small effective support; viewing cooling more broadly (i.e., not restricting to thermal states and allowing for arbitrary Hamiltonians), we see that cooling indeed encompasses information erasure: States with smaller effective support are ``colder'' than those with greater support according to any meaningful notion of ``cool'' (see Ref.~\cite{Clivaz2020Thesis}).} Note that although there is, in general, a distinction between physical cooling and information erasure, in this paper we focus on erasing quantum information encoded in fundamental degrees of freedom rather than in logical macrostate sectors, and accordingly use the terms somewhat interchangeably. This is justified because in either case, the ultimate limitation (be it cooling to absolute zero or perfectly erasing information) requires a rank-decreasing process, which is what we formally analyse.

Nernst's unattainability principle is of the latter kind of thermodynamic law, stating that perfectly cooling a system requires diverging resources. The resources typically considered are energy and time, whose asymptotic trade-off relation is relatively well established: on the one hand, perfect cooling can be achieved in finite time at the expense of an energy cost that diverges as the ground state is approached; on the other hand, the energy cost can be minimised by implementing a quasistatic process that saturates the Landauer limit but takes infinitely long.\footnote{Note, however, that although the asymptotic trade-off relationship is known, the connection between energy and time in the finite-resource setting remains unresolved: For instance, if one uses twice the amount of energy, it is not clear how much faster a given protocol can be implemented; we provide some preliminary insight to such questions in Sec.~\ref{subsec:imperfectcooling}.} 

These two types of thermodynamic laws are intimately related, but details of their interplay have remained elusive: under which conditions can the Landauer bound be saturated and what are the minimal resources required to do so? Which protocols asymptotically create pure states with given (diverging) resources? What type of control do such protocols require and how difficult are they to implement in practice? We address these questions by considering the task of cooling a quantum system in two extremal control paradigms (see Fig.~\ref{fig:schematic}): One driven by a \emph{coherent} work source and the other by an \emph{incoherent} heat engine.

After laying out the framework, we proceed to analyse the relationship between the aforementioned three resources for cooling. A core idea of this paper originates from the observation that it is possible to perfectly cool a physical system with both finite energy and time. Although said observation is simple in nature inasmuch as it can be obtained by a shift in perspective of Landauer's original protocol, its consequences run deep: indeed, the apparent tension between Landauer cooling and Nernst's unattainability principle that arises when only energy and time are considered as resources is resolved via the inclusion of control complexity as a consideration. Subsequently, we define a meaningful notion of control complexity in terms of the energy-level structure of the machine that the system must be coupled to throughout the cooling protocol and demonstrate its thermodynamic consistency by showing that it indeed must diverge to cool the system to the ground state at minimal energy cost, thereby reconciling the viewpoints of Landauer and Nernst.

Having established the trinity of relevant resources, we present three main results: 
\begin{enumerate}
    \item Perfect cooling is possible with coherent control provided either energy, time, or control complexity diverge. In particular, it is possible in finite time and at Landauer energy cost with diverging control complexity.
    \item Perfect cooling is possible with incoherent control, i.e., with a heat engine, provided either time or control complexity diverge. On the other hand, it is impossible with both finite time and control complexity, regardless of the amount of energy drawn from the heat bath.
    \item No process driven by a finite-temperature heat engine can (perfectly) cool a quantum system at the Landauer limit. Nonetheless, the Carnot-Landauer limit, which we introduce here (as a consequence of a stronger equality), can be saturated for any heat bath, given either diverging time or control complexity.
\end{enumerate}

In the following, we discuss each of these results in turn in more detail and provide a systematic study concerning the asymptotic interplay of energy, time, and control complexity as thermodynamic resources in two extremal control paradigms, as well as develop insight into the finite-resource regime for some special cases. We begin by outlining the framework.

\pdfbookmark[2]{Framework: Cooling a Physical System}{Framework: Cooling a Physical System}
\section{Framework: Cooling a Physical System}
\label{subsec:framework}

Consider a target system $\Scal$ in an initial state $\varrho_{\raisebox{-1pt}{\tiny{$\Scal$}}}$ described by a unit-trace, positive semidefinite operator with associated Hamiltonian $H_{\raisebox{-1pt}{\tiny{$\Scal$}}}$. An auxiliary machine $\Mcal$, initially uncorrelated with $\Scal$ and in equilibrium with a reservoir at inverse temperature $\beta := \tfrac{1}{k_B T}$, is used to cool the target system. The initial state of $\Mcal$ is thus of Gibbs form, 
\begin{align}
    \varrho_{\raisebox{-1pt}{\tiny{$\Mcal$}}} = \tau_{\raisebox{-1pt}{\tiny{$\Mcal$}}}(\beta, H_{\raisebox{-1pt}{\tiny{$\Mcal$}}}) := \frac{e^{-\beta H_{\raisebox{-1pt}{\tiny{$\Mcal$}}}}}{\mathcal{Z}_{\raisebox{-1pt}{\tiny{$\Mcal$}}}(\beta, H_{\raisebox{-1pt}{\tiny{$\Mcal$}}})},
\end{align}
where $H_{\raisebox{-1pt}{\tiny{$\Mcal$}}}$ is the machine Hamiltonian and $\mathcal{Z}_{\raisebox{-1pt}{\tiny{$\Mcal$}}}(\beta, H_{\raisebox{-1pt}{\tiny{$\Mcal$}}}):=\tr{e^{-\beta H_{\raisebox{-1pt}{\tiny{$\Mcal$}}}}}$ its partition function. Throughout this paper we consider only Hamiltonians with discrete spectra, i.e., with an associated separable Hilbert space that has a countable energy eigenbasis. Moreover, for the most part we consider finite-dimensional systems (or sequences thereof) and deal with infinite-dimensional systems separately. 

As shown in Fig.~\ref{fig:schematic}, a single step of a cooling process comprises two subprocedures: first, a joint unitary is implemented during the \emph{control} step; second, the machine \emph{rethermalises} to the ambient temperature. A cooling \emph{protocol} is determined by the initial conditions and any concatenation of such primitives\footnote{One could refer to both $\Mcal$ \emph{and} the transformations applied as the \emph{machine} and call the system $\Mcal$ itself the working \emph{medium} inasmuch as the latter passively facilitates the process, in line with conventional parlance; however, we use the terminology established in the pertinent literature.}. We consider two extremal control paradigms corresponding to two classes of allowed global transformations. The \emph{coherent control} paradigm permits arbitrary unitaries on $\Scal\Mcal$; in general, these change the total energy but leave the global entropy invariant and thus require an external work source $\Wcal$. At the other extreme is the \emph{incoherent control} paradigm, where the energy source is a heat bath. Here, the machine $\Mcal$ is bipartitioned: one part, $\Ccal$, is connected to a \emph{cold} bath at inverse temperature $\beta$, which serves as a sink for all energy and entropy flows; the other, $\Hcal$, is connected to a \emph{hot} bath at inverse temperature $\beta_{\raisebox{-1pt}{\tiny{$H$}}} \leq \beta$, which provides energy. The composite system $\Scal\Ccal\Hcal$ is closed and thus global unitary transformations are restricted to be energy conserving. The temperature gradient causes a natural heat flow away from the hot bath, which carries maximal entropic change with it. Cooling protocols in this setting can be run with minimal external control, i.e., they require only the switching on and off of interactions.  

%%%%%%%%%%%%%%%%%%%%%%%%%%%%%%%%%%%%%%%%%%%%
\pdfbookmark[1]{Coherent Control}{Coherent Control}
\section{Coherent Control }
\label{sec: coherentcontrol}

We begin by considering cooling with coherently controlled resources (see Fig.~\ref{fig:schematic}, top panel). We first analyse energy, time, and control complexity as resources that can be traded off against one another in order to optimise cooling performance, before focusing more specifically on the nature and role of control complexity. 

%%%%%%%%%%%%%%%%%%%%%%%%%%%%%%%%%%%%%%%%%%%%%%%%%%%%%%%%

\pdfbookmark[2]{Energy, Time, and Control Complexity as Resources}{Energy, Time, and Control Complexity as Resources}
\subsection{Energy, Time, and Control Complexity as Resources}
\label{subsec:energytimecontrolcomplexity}

In the coherent-control setting, a transformation $\varrho_{\raisebox{-1pt}{\tiny{$\Scal$}}} \to \varrho^\prime_{\raisebox{-1pt}{\tiny{$\Scal$}}}$ is enacted via a unitary $U$ on $\Scal\Mcal$ involving a thermal machine $\varrho_{\raisebox{-1pt}{\tiny{$\Mcal$}}} = \tau_{\raisebox{-1pt}{\tiny{$\Mcal$}}}(\beta, H_{\raisebox{-1pt}{\tiny{$\Mcal$}}})$, i.e., 
\begin{align}
    \varrho_{\raisebox{-1pt}{\tiny{$\Scal$}}}^\prime := \ptr{\Mcal}{U (\varrho_{\raisebox{-1pt}{\tiny{$\Scal$}}} \otimes \varrho_{\raisebox{-1pt}{\tiny{$\Mcal$}}}) U^\dagger}.
\end{align}
For such a transformation, there are two energy costs contributing to the total energy change, which must be drawn from a work source $\Wcal$. The first is the energy change of the target $\Delta E_{\raisebox{-1pt}{\tiny{$\Scal$}}} := \tr{H_{\raisebox{-1pt}{\tiny{$\Scal$}}} (\varrho_{\raisebox{-1pt}{\tiny{$\Scal$}}}^\prime - \varrho_{\raisebox{-1pt}{\tiny{$\Scal$}}})}$; the second is that of the machine $\Delta E_{\raisebox{-1pt}{\tiny{$\Mcal$}}} := \tr{H_{\raisebox{-1pt}{\tiny{$\Mcal$}}} (\varrho_{\raisebox{-1pt}{\tiny{$\Mcal$}}}^\prime - \varrho_{\raisebox{-1pt}{\tiny{$\Mcal$}}})}$, where $\varrho^\prime_{\raisebox{-1pt}{\tiny{$\Mcal$}}} := \ptr{\Scal}{U (\varrho_{\raisebox{-1pt}{\tiny{$\Scal$}}} \otimes \varrho_{\raisebox{-1pt}{\tiny{$\Mcal$}}}) U^\dagger}$. The latter is associated with the heat dissipated into the environment and is given by~\cite{Reeb_2014} 
\begin{align}\label{eq:landauerequality}
    \beta \Delta E_{\raisebox{-1pt}{\tiny{$\Mcal$}}} = \widetilde{\Delta} S_{\raisebox{-1pt}{\tiny{$\Scal$}}} + I(\Scal: \Mcal)_{\varrho_{\raisebox{-1pt}{\tiny{$\Scal \Mcal$}}}^\prime} + D(\varrho_{\raisebox{-1pt}{\tiny{$\Mcal$}}}^\prime \| \varrho_{\raisebox{-1pt}{\tiny{$\Mcal$}}}),
\end{align}
where $S(\varrho) := - \tr{\varrho \log (\varrho)}$ is the von Neumann entropy, $\widetilde{\Delta} S_{\raisebox{-1pt}{\tiny{$\Acal$}}} := S(\varrho_{\raisebox{-1pt}{\tiny{$\Acal$}}}) - S(\varrho_{\raisebox{-1pt}{\tiny{$\Acal$}}}^\prime)$\footnote{Note the differing sign conventions (denoted by the tilde) that we use for changes in energies, $\Delta E_{\raisebox{-1pt}{\tiny{$\Xcal$}}} := E_{\raisebox{-1pt}{\tiny{$\Xcal$}}}^\prime - E_{\raisebox{-1pt}{\tiny{$\Xcal$}}}$, and in entropies, $\widetilde{\Delta} S_{\raisebox{-1pt}{\tiny{$\Xcal$}}} := S_{\raisebox{-1pt}{\tiny{$\Xcal$}}} - S_{\raisebox{-1pt}{\tiny{$\Xcal$}}}^\prime$, such that energy \emph{increases} and entropy \emph{decreases} are positive.}, $I(\Acal:\Bcal)_{\varrho_{\Acal \Bcal}} := S(\varrho_{\raisebox{-1pt}{\tiny{$\Acal$}}})+S(\varrho_{\raisebox{-1pt}{\tiny{$\Bcal$}}}) - S(\varrho_{\raisebox{-1pt}{\tiny{$\Acal \Bcal$}}})$ (with marginals $\varrho_{\raisebox{-1pt}{\tiny{$\Acal$/$\Bcal$}}} := \ptr{\Bcal/\Acal}{\varrho_{\raisebox{-1pt}{\tiny{$\Acal \Bcal$}}}}$) is the mutual information between $\Acal$ and $\Bcal$, and $D(\varrho\|\sigma) := \tr{\varrho \log(\varrho)} - \tr{\varrho \log(\sigma)}$ is the relative entropy of $\varrho$ with respect to $\sigma$, with $D(\varrho\|\sigma) := \infty$ if $\textup{supp}[\varrho] \nsubseteq \textup{supp}[\sigma]$. We derive Eq.~\eqref{eq:landauerequality} and its generalisation to the incoherent-control setting in Appendix~\ref{app:equalityformsofthecarnot-landauerlimit}. The mutual information is non-negative and vanishes iff $\varrho_{\raisebox{-1pt}{\tiny{$\Acal \Bcal$}}} = \varrho_{\raisebox{-1pt}{\tiny{$\Acal$}}} \otimes \varrho_{\raisebox{-1pt}{\tiny{$\Bcal$}}}$; similarly, the relative entropy is non-negative and vanishes iff $\varrho = \sigma$. Dropping these terms leads to the Landauer bound~\cite{Landauer_1961}
\begin{align}\label{eq:landauerlimit}
    \beta \Delta E_{\raisebox{-1pt}{\tiny{$\Mcal$}}}  \geq \widetilde{\Delta} S_{\raisebox{-1pt}{\tiny{$\Scal$}}}.
\end{align}

\setlength{\tabcolsep}{8pt}

\begin{table}[t!]
\begin{tabular}{llll}
& \textbf{Energy} & \textbf{Time} & \textbf{Complexity} \\\colrule
\parbox[t]{2mm}{\multirow{3}{*}{\rotatebox[origin=c]{90}{Qudit}}} & $\to\infty$ & $1$ & $\tfrac{1}{2} d(d-1)$ \\
& Landauer & $\to\infty$ & $\tfrac{1}{2} d(d-1)$ \\
& Landauer & $1$ & $\to\infty$\\\colrule
\parbox[t]{2mm}{\multirow{4}{*}{\rotatebox[origin=c]{90}{H. O.}}} &$\to\infty$ & $1$ & $ \to \infty$ (Gaussian) \\
&Landauer & $\to\infty$ & $ \to \infty$ (Gaussian) \\
&Finite ($>$ Landauer) & $\to\infty$ & 1 (Non-Gaussian) \\
&Landauer & $1$ & $\to\infty$ (Gaussian)\\\botrule
\end{tabular}
\caption{\emph{Coherent-control cooling protocols for finite-dimensional (qudit) and harmonic oscillator systems.} Landauer energy cost refers to saturation of Eq.~\eqref{eq:landauerlimit} and complexity refers to the proxy measure effective dimension (see Def.~\ref{def:effectivedimension}); time is measured as the number of unitary operations with a fixed complexity. In the qudit case, the system and machine dimensions are equal: $d_{\raisebox{-1pt}{\tiny{$\Scal$}}} = d_{\raisebox{-1pt}{\tiny{$\Mcal$}}} =: d$.}
\label{tab:coherentcontrol}
\end{table}

The Landauer limit holds \emph{independently} of the protocol implemented, i.e., it assumes only that \emph{some} unitary was applied to the target and thermal machine. For large machines, the dissipated heat is typically much greater than the energy change of the target; nonetheless, the contributions can be comparable at the microscopic scale. We assume that the target begins in equilibrium with the reservoir at inverse temperature $\beta$, i.e., in the initial thermal state $\varrho_{\raisebox{-1pt}{\tiny{$\Scal$}}} = \tau_{\raisebox{-1pt}{\tiny{$\Scal$}}}(\beta , H_{\raisebox{-1pt}{\tiny{$\Scal$}}})$, with no loss of generality since such a relaxation can be achieved for free (by swapping the target with a suitable part of the environment; however, see Ref.~\cite{Riechers2021} for a discussion of initial state dependency of the bound). We track all energetic and entropic quantities and refer to the asymptotic saturation of Eq.~\eqref{eq:landauerlimit} with $\varrho_{\raisebox{-1pt}{\tiny{$\Scal$}}}^\prime$ pure as \emph{perfect cooling at the Landauer limit}. 

Although Landauer's limit sets the minimum heat that must be dissipated---and thereby the minimum energy cost---for cooling any physical system, the third law makes no specification that energy must be the resource minimised (or that time must diverge). One might instead consider using a source of unbounded energy to perfectly cool a system as quickly as possible. Additionally, control complexity plays an important role as a resource, inasmuch as its divergence permits perfect cooling at the Landauer limit in finite time (see below). As summarised in Table~\ref{tab:coherentcontrol}, we now present coherently controlled protocols that perfectly cool an arbitrary finite-dimensional target system using thermal machines when any one of the three considered resources\textemdash energy, time or control complexity\textemdash diverges; moreover, the resources that are kept finite saturate protocol-independent ultimate bounds. The following thus provides a comprehensive analysis of cooling with respect to the trinity of resources that can be traded off amongst each other.

%%%%%%%%%%%%%%%%%%%%%%
\subsection{Perfect Cooling at the Ultimate Limits with Infinite Resources}
\label{subsec:methods-coherentcontrol}

\emph{1. Diverging Energy.---}We first consider the situation in which time and control complexity are fixed to be finite, while the energy cost is allowed to diverge. Here, we present the following:
\begin{thm}\label{thm:divergingenergycoherent}
With diverging energy, any finite-dimensional quantum system can be perfectly cooled using a single interaction of finite complexity.
\end{thm}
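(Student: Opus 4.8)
The plan is to isolate the divergence in the energy alone, by choosing a single interaction whose complexity is manifestly bounded and then pushing the cost to infinity through the machine's spectrum. I would take the machine $\Mcal$ to be a qudit of the same dimension $d$ as the target, with Hamiltonian $H_\Mcal = E\sum_{j=1}^{d-1}\ketbra{j}{j}$, whose ground state sits at energy $0$ and whose $d-1$ excited levels are all lifted to a common gap $E$, and take the single unitary to be the full swap $U=\mathrm{SWAP}_{\Scal\Mcal}$. The machine then begins in
\begin{align}
\tau_\Mcal(\beta,H_\Mcal)=\frac{\ketbra{0}{0}+e^{-\beta E}\sum_{j=1}^{d-1}\ketbra{j}{j}}{1+(d-1)e^{-\beta E}},
\end{align}
whose ground-state weight $p_0(E)=[1+(d-1)e^{-\beta E}]^{-1}$ tends to $1$ as $E\to\infty$.

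I would then verify the two required behaviours in this limit. First, after the swap the target marginal is $\varrho_\Scal'=\ptr{\Mcal}{U(\varrho_\Scal\otimes\tau_\Mcal)U^\dagger}=\tau_\Mcal(\beta,H_\Mcal)$, \emph{independently} of the input; since $\tau_\Mcal\to\ketbra{0}{0}$ in trace norm, the target purifies, i.e.\ it is perfectly cooled to its ground state in the limit. Second, the post-swap machine carries the target's original populations, $\varrho_\Mcal'=\varrho_\Scal$, so the dissipated energy is $\Delta E_\Mcal=\tr{H_\Mcal\varrho_\Scal}-\tr{H_\Mcal\tau_\Mcal}=qE-E(d-1)e^{-\beta E}p_0(E)$, where $q:=1-\bra{0}\varrho_\Scal\ket{0}>0$ is the initial excited-state weight of the target (strictly positive for a finite-temperature input). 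The subtracted term vanishes, so $\Delta E_\Mcal\sim qE\to\infty$ while $\Delta E_\Scal$ stays bounded: the total cost diverges precisely as the output becomes pure.

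Crucially, the complexity does not grow with $E$: the swap equals the identity on the symmetric subspace and differs from it only on the antisymmetric subspace of $\Scal\Mcal$, so its effective dimension (Def.~\ref{def:effectivedimension}) is the fixed, finite value $\tfrac12 d(d-1)$ recorded in Table~\ref{tab:coherentcontrol}, and the protocol uses a single operation (time $1$). The one genuinely conceptual point---and the reason energy \emph{must} diverge rather than being traded for complexity here---is a rank obstruction: a unitary preserves the global rank $\mathrm{rank}(\varrho_\Scal)\,\mathrm{rank}(\tau_\Mcal)$, whereas a pure target marginal forces the output to factor as $\ketbra{\psi}{\psi}_\Scal\otimes\sigma_\Mcal$, of rank at most $d$. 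For a full-rank input this is compatible only if $\mathrm{rank}(\tau_\Mcal)=1$, which a Gibbs state attains solely in the infinite-gap limit. The only delicate step is therefore to phrase ``perfect cooling'' as this $E\to\infty$ limit, in which purity is reached and the energy cost diverges while complexity and time are held fixed; the construction itself is immediate.
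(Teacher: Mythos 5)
Your proposal is correct and follows essentially the same route as the paper's proof (Appendix~\ref{app:sufficiencydivergingenergycoolingprotocol}): swap the target with a same-dimension thermal machine whose excited-level energies are sent to infinity, so that the machine's Gibbs state approaches $\ketbra{0}{0}$ while the swap remains a single operation of effective dimension $\tfrac{1}{2}d(d-1)$. The only differences are cosmetic---you use a flat degenerate excited spectrum at energy $E$ rather than the paper's equally spaced ladder $\omega_{\raisebox{0pt}{\tiny{$\Mcal$}}}\sum_n n\ketbra{n}{n}$, compute the dissipated energy exactly instead of lower-bounding it, and append a rank-obstruction remark that the paper develops separately as the necessity claim in Appendix~\ref{app:necessitydivergingenergyprotocolsfinitetimecontrolcomplexity}.
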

\noindent 
The cooling protocol using diverging energy is the simplest. Here, one exchanges all populations of the target system with those of a thermal machine with suitably large energy gaps to sufficiently concentrate the initial machine population in the ground state subspace of the target system. This exchange requires a single system-machine unitary and is of finite complexity (in a sense discussed below). Nonetheless, the energy drawn from the work source in this protocol diverges. Moreover, in addition to being sufficient for perfect cooling with both finite time and control complexity, any protocol that cools perfectly with both finite time and control complexity requires diverging energy. See Appendix~\ref{app:divergingenergy} for details.

We now move to consider the situations in which the energy cost is minimised at the expense of either diverging time or control complexity. Equation~\eqref{eq:landauerequality} provides insight for understanding the conditions required for saturating the Landauer bound. Although for finite-dimensional machines only trivial processes of the form $U_{\raisebox{-1pt}{\tiny{$\Scal \Mcal$}}} = U_{\raisebox{-1pt}{\tiny{$\Scal$}}} \otimes \mathbbm{1}_{\raisebox{-1pt}{\tiny{$\Mcal$}}}$ saturate the Landauer limit~\cite{Reeb_2014}, we show how it can be asymptotically saturated with nontrivial processes by considering diverging machine and interaction properties, as we elaborate on shortly. Any such process must asymptotically exhibit no correlations such that $I(\Scal:\Mcal)_{\varrho_{\raisebox{-1pt}{\tiny{$\Scal \Mcal$}}}^\prime} \to 0$ and effectively not disturb the machine, i.e., yield $\varrho_{\raisebox{-1pt}{\tiny{$\Mcal$}}}^\prime \to \varrho_{\raisebox{-1pt}{\tiny{$\Mcal$}}}$ such that $D(\varrho_{\raisebox{-1pt}{\tiny{$\Mcal$}}}^\prime \| \varrho_{\raisebox{-1pt}{\tiny{$\Mcal$}}}) \to 0$. Indeed, any correlations created between initially thermal systems would come at the expense of an additional energetic cost~\cite{HuberPerarnauHovhannisyanSkrzypczykKloecklBrunnerAcin2015, BruschiPerarnauLlobetFriisHovhannisyanHuber2015,VitaglianoKloecklHuberFriis2019} whose minimisation is a problem that has so far only been partially resolved~\cite{BakhshinezhadEtAl2019}. However, it has been shown that for any (strictly) rank nondecreasing process, there exists a thermal machine and joint unitary such that for any $\epsilon > 0$, the heat dissipated satisfies $\beta \Delta E_{\raisebox{-1pt}{\tiny{$\Mcal$}}} \leq \widetilde{\Delta} S_{\raisebox{-1pt}{\tiny{$\Scal$}}} + \epsilon$~\cite{Reeb_2014}, thereby saturating the Landauer limit. Here, we present protocols that asymptotically achieve both this and perfect cooling (in particular, effectively decrease the rank), and provide necessary conditions on the underlying resources required to do so.

\emph{2. Diverging Time.---}We now present a protocol that uses a diverging number of operations of finite complexity to asymptotically attain perfect cooling at the Landauer limit~\cite{Anders_2013,Reeb_2014,Skrzypczyk_2014}.

\begin{thm}\label{thm:inftimeFinTepFinDim}
With diverging time, any finite-dimensional quantum system can be perfectly cooled at the Landauer limit via interactions of finite complexity.
\end{thm}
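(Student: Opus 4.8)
The plan is to realise the discrete analogue of a quasistatic isothermal process, built entirely from full swaps with finite-dimensional thermal machines. Since the coherent-control paradigm imposes no energy-conservation constraint, the elementary operation I would use is a \emph{full swap} $U^{(k)}_{\text{SW}}$ between the system $\Scal$ and a thermal machine qudit $\Mcal_k$ of dimension $d_{\Scal} =: d$, which rethermalises afterwards. Each such unitary acts nontrivially only on the $d^2$-dimensional space $\Scal\Mcal_k$, so every step has fixed finite complexity (effective dimension $\leq d^2$) irrespective of the total step number $N$; this $N$ plays the role of time and will be sent to infinity. The crucial point is that the machine \emph{gaps} may diverge while its \emph{dimension}, and hence the complexity, stays bounded.

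Concretely, I would specify the protocol by a path in the probability simplex. Writing the system populations in its energy eigenbasis, choose a path $\{q^{(k)}\}_{k=0}^{N}$ from the initial thermal distribution $q^{(0)} = \tau_{\Scal}(\beta,H_{\Scal})$ to a target $q^{(N)}$ concentrated, up to $\epsilon$, on the ground state. Each full-support intermediate $q^{(k)}$ is realised as the thermal state of $\Mcal_k$ by setting machine energies $E^{(k)}_i = -\beta^{-1}\log q^{(k)}_i + \mathrm{const}$, and pushing the excited energies of $\Mcal_N$ to infinity drives $q^{(N)}$ towards purity. At step $k$ the swap sets the system populations to $q^{(k)}$ and leaves the machine with the system's pre-swap populations $q^{(k-1)}$. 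Because swapping two diagonal product states yields a product state, the post-interaction mutual information vanishes \emph{exactly}, $I(\Scal:\Mcal_k)' = 0$, so the only excess over Landauer in the per-step equality \eqref{eq:landauerequality} is the relative-entropy term $D(\varrho_{\Mcal_k}'\|\varrho_{\Mcal_k}) = D(q^{(k-1)}\|q^{(k)})$.

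Summing \eqref{eq:landauerequality} over the $N$ steps, the entropy-change term telescopes to $\sum_k \widetilde{\Delta} S_{\Scal}^{(k)} = S(\varrho_{\Scal}) - S(\varrho_{\Scal}^{\text{fin}})$, which is exactly the Landauer contribution, while the total excess heat reduces to $\beta\sum_k \Delta E_{\Mcal}^{(k)} - \widetilde{\Delta} S_{\Scal} = \sum_k D(q^{(k-1)}\|q^{(k)})$. A second-order expansion gives $D(q^{(k-1)}\|q^{(k)}) = O(|q^{(k)}-q^{(k-1)}|^2)$, so along a fixed smooth path discretised into $N$ segments of equal Fisher length the sum is $O(1/N)\to 0$ and the Landauer limit is saturated. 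Taking first $N\to\infty$ (quasistatic, Landauer-saturating at fixed cooling error $\epsilon$) and then $\epsilon\to 0$ yields perfect cooling at the Landauer limit via bounded-complexity interactions.

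The main obstacle is the interchange of these two limits, which stems from the boundary behaviour of the relative entropy: the quadratic form $D(q+\delta\|q) = \tfrac{1}{2}\sum_i \delta_i^2/q_i + O(\delta^3)$ carries weights $1/q_i$ that diverge as the path approaches the pure target, so the Fisher length of the path grows without bound as $\epsilon\to 0$. For any fixed $\epsilon>0$ this is harmless, since the populations stay bounded away from zero and $\sum_k D^{(k)} = O(1/N)$; the care required is in choosing a discretisation that is finer near the endpoint and a growth rate $N(\epsilon)$ fast enough that the excess heat still vanishes in the joint limit. Verifying this, together with confirming that each rethermalisation dissipates precisely the energy $\Delta E_{\Mcal}^{(k)}$ entering \eqref{eq:landauerequality} and introduces no additional cost, constitutes the remaining technical work.
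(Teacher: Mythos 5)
Your protocol is physically the same as the paper's: a sequence of full swaps between the target and a family of same-dimension thermal machines whose Gibbs states trace out a discretised quasistatic path from the initial thermal state towards the ground state (the paper's Appendix~\ref{app:divergingtimecoolingprotocolfinitedimensionalsystems} uses the specific path generated by rescaled Hamiltonians $H_{\raisebox{-1pt}{\tiny{$\Mcal$}}}^{(n)} = (1+n\epsilon)H_{\raisebox{-1pt}{\tiny{$\Scal$}}}$, i.e., the states $\tau(\beta, x H_{\raisebox{-1pt}{\tiny{$\Scal$}}})$ for $x\in[1,x_{\textup{max}}]$, of which your simplex path realised via $E_i^{(k)} = -\beta^{-1}\log q_i^{(k)} + \mathrm{const}$ is a generalisation). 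What genuinely differs is the accounting. The paper never invokes the Landauer equality per step: it sums the per-swap energy changes directly, recognises the $N\to\infty$ limit as a Riemann integral, integrates by parts, and evaluates everything through partition functions to land exactly on $\beta^{-1}\widetilde{\Delta} S_{\raisebox{-1pt}{\tiny{$\Scal$}}}$. You instead apply Eq.~\eqref{eq:landauerequality} step by step, use the fact that a full swap of diagonal product states yields a product state so the per-step mutual information vanishes identically, telescope the entropy terms, and bound the total excess $\sum_k D(q^{(k-1)}\|q^{(k)})$ by a Fisher-information (thermodynamic-length) estimate of order $O(1/N)$. Your route is more conceptual and more general: it works for an arbitrary smooth full-rank path, cleanly identifies the excess heat as a discretised path divergence, and makes the $O(1/N)$ convergence rate explicit, something the paper only recovers with additional work for equally spaced Hamiltonians in Appendix~\ref{app:specialcaseequallyspacedhamiltonians}; the paper's route, in exchange, produces closed-form cost expressions for its particular protocol. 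One remark: the interchange-of-limits concern you flag at the end is not actually required for the theorem as stated. ``Perfect cooling at the Landauer limit'' is defined in the paper as asymptotic saturation, and both you and the paper take the limits sequentially---first $N\to\infty$ at fixed final temperature (your fixed $\epsilon>0$, the paper's fixed $\beta_{\textup{max}}<\infty$), then the final temperature to zero---so your fixed-$\epsilon$ argument already suffices and the flagged ``remaining technical work'' is cosmetic rather than a gap.
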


\emph{Sketch of proof.\textemdash }We first show that any system can be cooled from $\varrho_{\raisebox{-1pt}{\tiny{$\Scal$}}} = \tau_{\raisebox{-1pt}{\tiny{$\Scal$}}}(\beta, H_{\raisebox{-1pt}{\tiny{$\Scal$}}})$ to $\tau_{\raisebox{-1pt}{\tiny{$\Scal$}}}(\beta^{*}, H_{\raisebox{-1pt}{\tiny{$\Scal$}}})$, with $\beta^* \geq \beta$, using only $\beta^{-1}\, \widetilde{\Delta} S_{\raisebox{-1pt}{\tiny{$\Scal$}}}$ units of energy. Our proof is constructive in the sense that we provide a protocol that achieves the Landauer energy cost as the number of operations diverges. The individual interactions in this protocol are of finite control complexity as they simply swap the target system with one of a sequence of thermal machines with increasing energy gaps. In this way, the final state $\tau_{\raisebox{-1pt}{\tiny{$\Scal$}}}(\beta^*, H_{\raisebox{-1pt}{\tiny{$\Scal$}}})$ can be made to be arbitrarily close to $\ket{0}\!\bra{0}_{\raisebox{-1pt}{\tiny{$\Scal$}}}$ for any initial temperature. \qed

The proof is presented in Appendix~\ref{app:divergingtimecoolingprotocolfinitedimensionalsystems}, along with a more detailed dimension-dependent energy cost function for the special case of equally spaced Hamiltonians.

Through the protocol described above, we see that given a diverging amount of time, the target system can be sequentially coupled with a machine of finite complexity that rethermalizes between control steps in such a way that the final target system state is arbitrarily close to the ground state for any initial temperature. This trade-off between energy and time is well known, and we discuss it only briefly in order to help build intuition and highlight the versatility of our framework. Alternatively, one can compress all the operations applied in the diverging-time protocol into one global unitary that achieves the same final states, thereby achieving perfect cooling at the Landauer limit in a single unit of time but with an infinitely complex interaction. That is, the diverging temporal resource of repeated interactions with a single, finite-size machine is replaced by a single interaction with a larger machine of diverging control complexity.

\emph{3. Diverging Control Complexity.---}By reconsidering the diverging-time protocol above, a trade-off can be made between time and control complexity. As illustrated in Fig.~\ref{fig:complexity}, one can consider all of the operations $\{ U_k=e^{-iH_k t_k} \}_{k=1,\dots,N}$ required in said protocol to make up one single joint interaction $U_{\textup{tot}}:=\lim_{N\to\infty}\prod_{k=1}^{N}U_k=e^{-iH_{\textup{tot}} t_{\textup{tot}}}$ acting on a larger machine, thus setting the time required to be unity (in terms of the number of control operations before the machine rethermalises). In other words, for any finite number $N$ of unitary transformations $U_k$, there exists a total Hamiltonian $H_{\textup{tot}}\suptiny{0}{0}{(N)}$ and a finite time $t\subtiny{0}{0}{N}$ that generates the overall transformation $U_{\textup{tot}}\suptiny{0}{0}{(N)} := \prod_{k=1}^{N}U_k$; since $t\subtiny{0}{0}{N}$ is finite, we can set it equal to one without loss of generality by rescaling the Hamiltonian as $\widetilde{H}_{\textup{tot}}\suptiny{0}{0}{(N)}= t\subtiny{0}{0}{N} H_{\textup{tot}}\suptiny{0}{0}{(N)}$. Here, we refer to the limit $N \to\infty$ as diverging control complexity. Compressing a diverging number of finite-complexity operations thus yields a protocol of diverging control complexity. The fact that there exists such an operation that minimises both the time and energy requirements follows from our constructive proof of Theorem~\ref{thm:inftimeFinTepFinDim}. We therefore have the following:

\begin{cor}\label{cor:infcomplexity}
With diverging control complexity, any finite-dimensional quantum system can be perfectly cooled at the Landauer limit in finite time.
\end{cor}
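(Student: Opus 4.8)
The plan is to prove Corollary~\ref{cor:infcomplexity} as an immediate consequence of the constructive proof of Theorem~\ref{thm:inftimeFinTepFinDim} together with the compression argument already sketched in the main text. The core observation is that the diverging-time protocol of Theorem~\ref{thm:inftimeFinTepFinDim} consists of a sequence of $N$ unitaries $\{U_k = e^{-iH_k t_k}\}_{k=1,\dots,N}$, each of finite complexity, applied with intermediate rethermalisations of the machine. To obtain finite time, I would first remove the rethermalisation steps by enlarging the machine: instead of reusing a single finite-size machine that rethermalises between steps, I introduce a fresh thermal subsystem $\Mcal_k$ for each step $k$, so that the entire protocol becomes a product of commuting-support unitaries acting on $\Scal\Mcal_1\cdots\Mcal_N$ with all machine factors initially in their respective Gibbs states. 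This is legitimate because rethermalisation merely resets a machine to its thermal state, which is exactly the initial condition of a fresh ancilla.

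The second step is the actual compression. For any finite $N$, the ordered product $U_{\textup{tot}}^{(N)} := \prod_{k=1}^{N} U_k$ is itself a unitary on the enlarged system-machine Hilbert space, and since it is a finite product it is generated by some Hamiltonian $H_{\textup{tot}}^{(N)}$ over some finite time $t_N$ via $U_{\textup{tot}}^{(N)} = e^{-i H_{\textup{tot}}^{(N)} t_N}$. Because $t_N$ is finite, I rescale $\widetilde{H}_{\textup{tot}}^{(N)} = t_N H_{\textup{tot}}^{(N)}$ so that the single global interaction is implemented in one unit of time without loss of generality. The resulting protocol applies a \emph{single} unitary $U_{\textup{tot}}^{(N)}$ to $\Scal$ and the composite machine, followed by one rethermalisation, and hence takes unit time in the sense used throughout (number of control operations before rethermalisation).

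The third step is to transfer the energetic and cooling guarantees. By construction, $U_{\textup{tot}}^{(N)}$ produces the \emph{same} final state on $\Scal$ as the original $N$-step protocol, since the fresh-ancilla reformulation reproduces each intermediate thermal state that rethermalisation would have supplied. Theorem~\ref{thm:inftimeFinTepFinDim} guarantees that as $N\to\infty$ the target state approaches $\ketbra{0}{0}_{\Scal}$ and the total machine energy cost approaches the Landauer value $\beta^{-1}\widetilde{\Delta} S_{\Scal}$; these asymptotic statements carry over verbatim to the compressed protocol because the reduced dynamics on $\Scal$ is identical. Finally I would identify the diverging resource: as $N\to\infty$ the effective dimension on which $U_{\textup{tot}}^{(N)}$ acts nontrivially grows without bound, which is precisely the notion of diverging control complexity (cf.\ Def.~\ref{def:effectivedimension}). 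Thus perfect cooling at the Landauer limit is achieved in finite time at the cost of diverging control complexity.

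I expect the only genuine subtlety\textemdash rather than a true obstacle\textemdash to lie in making the fresh-ancilla reformulation rigorous, namely verifying that replacing each rethermalisation by a fresh Gibbs-state ancilla yields exactly the same marginal evolution on $\Scal$ and the same accumulated machine energy change, so that the Landauer-saturation limit is genuinely inherited. Once this equivalence is established, the corollary follows directly, since the limits $N\to\infty$ for fidelity and energy cost are already supplied by Theorem~\ref{thm:inftimeFinTepFinDim}, and the finite-time, finite-$N$ compression is elementary.
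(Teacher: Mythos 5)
Your proposal is correct and takes essentially the same route as the paper: compose the $N$ finite-complexity unitaries of the diverging-time protocol into a single unitary $U_{\textup{tot}}^{(N)}=\prod_{k=1}^{N}U_k$ on the enlarged machine, rescale its generating Hamiltonian so that it runs in unit time, and inherit the perfect-cooling and Landauer-cost limits from the constructive proof of Theorem~\ref{thm:inftimeFinTepFinDim}, with the diverging effective dimension identified as the diverging control complexity. The one subtlety you flag\textemdash replacing rethermalisation by fresh Gibbs-state ancillas\textemdash dissolves because the paper's construction already has this form: the proof of Theorem~\ref{thm:inftimeFinTepFinDim} uses $N$ distinct thermal subsystems $\Mcal_n$, each swapped with the target once and then discarded, so the marginal evolution of $\Scal$ and the accumulated machine energy change are identical by construction.
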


However, this particular way of constructing complex control protocols is not necessarily unique. It is thus natural to wonder if diverging control complexity is a generic feature necessary to achieve perfect cooling at the Landauer limit in unit time and indeed, how to quantify control complexity that is operationally meaningful between the extreme cases of being either very small or divergent, as we now turn to discuss. Indeed, the inclusion of an explicit quantifier of control complexity regarding thermodynamic tasks---which, although crucial for practical purposes, is oftentimes overlooked---is one of the main novelties of our present work.

%%%%%%%%%%%%%%%%%%%%%%%%%%%%%%%%%%%%%%%%%%%%%%
\pdfbookmark[2]{Control Complexity in Quantum Thermodynamics}{Control Complexity in Quantum Thermodynamics}
\section{Control Complexity in Quantum Thermodynamics}

Although the protocol described above has diverging control complexity by construction, one need not construct complex protocols in this way, and so the natural concern becomes understanding the generic features that enable perfect cooling at the Landauer limit in unit time. To address this issue, we first provide protocol-independent structural conditions that must be fulfilled by the machine to enable \emph{(1) perfect cooling} and \emph{(2) cooling at Landauer cost}; combined, these independent conditions provide a necessary requirement, namely that the machine must have an unbounded spectrum (from above) and be infinite-dimensional (respectively) for the \emph{possibility} of \emph{(3) perfect cooling at the Landauer limit}. Such properties of the machine Hamiltonian define the \emph{structural complexity}, which sets the potential for how cool the target system can be made and at what energy cost. As the name suggests, this is entailed by the structure of the machine, e.g., the number of energy gaps and their arrangement, and as such provides a static notion of complexity. However, given a machine with particular structural complexity, one may not be able to utilise said potential due to constraints on the dynamics that can be implemented. For instance, one may be restricted to only two-body interactions, or operations involving only a few energy levels at a time. Assuming a sufficient structural complexity at hand, such constraints limit one from optimally manipulating the systems. Thus, the extent to which a machine's potential is utilised depends on properties of the dynamics of a given protocol, i.e., the \emph{control complexity}. We provide a detailed study of structural and control complexity in Appendix~\ref{app:conditionsstructuralcontrolcomplexity}, and here summarise the key methods.

\subsection{Structural \& Dynamical Notions of Complexity}

We split the consideration of complexity into two parts: first, the protocol-independent \emph{structural} conditions that must be fulfilled by the machine and, second, the dynamic \emph{control complexity} properties of the interaction that implements a given protocol (see Fig.~\ref{fig:complexity}). 

%%%%%%%%%%%%%%%%%%%%%%%%%%%
%%%%%%%%%%%%%%%%%%%%%%%%%%%

\begin{figure}[t!]
  \centering
  \includegraphics[width=\linewidth]{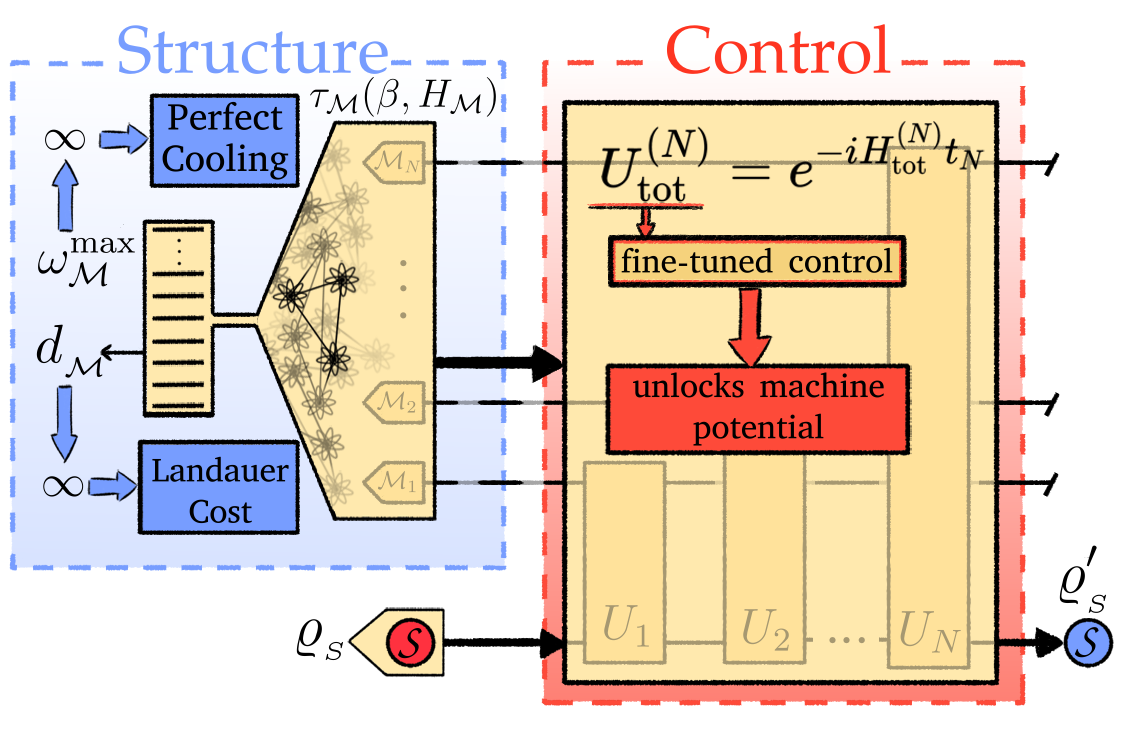}
   \caption{\emph{Complexity.} We consider structural (left) and control complexity (right). Structural complexity concerns properties of the machine Hamiltonian. For perfect cooling it is necessary that the largest energy gap diverges [see Eq.~\eqref{eq:maingeneralpuritybound}]. Moreover, an infinite-dimensional machine with particular energy-level structure is required for saturation of the Landauer bound. Control complexity refers to properties of the unitary that represents a protocol. The yellow box in the foreground represents a unitary $U$ involving the entire machine, whereas the smaller yellow columns in the background represent a potential decomposition (e.g., of the diverging-time protocol) into unitaries $U_{i}$ involving certain subspaces of the overall machine. Not only must the target system interact with all levels of an infinite-dimensional machine for Landauer-cost cooling, it must do so in a fine-tuned way.}
  \label{fig:complexity}
\end{figure}

%%%%%%%%%%%%%%%%%%%%%%%%%%%%%%%%%%%%%%%%%%%%%%%%%%%%

\subsubsection{Structural Complexity}

Regarding the former, first note that one can lower bound the smallest eigenvalue $\lambda_{\textup{min}}$ of the final state $\varrho_{\raisebox{-1pt}{\tiny{$\Scal$}}}'$ (and hence how cold the system can become) after \emph{any} unitary interaction with a thermal machine by~\cite{Reeb_2014}
\begin{align}\label{eq:maingeneralpuritybound}
    \lambda_{\textup{min}}(\varrho_{\raisebox{-1pt}{\tiny{$\Scal$}}}^\prime) \geq e^{-\beta\, \omega_{\raisebox{0pt}{\tiny{$\Mcal$}}}^{\textup{max}}} \lambda_{\textup{min}}(\varrho_{\raisebox{-1pt}{\tiny{$\Scal$}}}),
\end{align}
where $\omega_{\raisebox{0pt}{\tiny{$\Mcal$}}}^{\textup{max}}:=\max_{i,j}|\omega_{j}-\omega_{i}|$ denotes the largest energy gap of the machine Hamiltonian $H_{\raisebox{-1pt}{\tiny{$\Mcal$}}}$ with eigenvalues $\omega_{i}$. It follows that perfect cooling is only possible under two conditions: either the machine begins in a pure state ($\beta\to\infty$), or $H_{\raisebox{-1pt}{\tiny{$\Mcal$}}}$ is unbounded, i.e., $\omega_{\raisebox{0pt}{\tiny{$\Mcal$}}}^{\textup{max}}\to\infty$. Requiring $\beta<\infty$, a diverging energy gap in the machine Hamiltonian is thus a necessary structural condition for perfect cooling. Independently, another condition required to saturate the Landauer limit can be derived for any amount of cooling: in Ref.~\cite{Reeb_2014}, it was shown that for any finite-dimensional machine, there are correction terms to the Landauer bound which imply that it cannot be saturated; these terms only vanish in the limit where the machine dimension diverges. 

We thus have two independent necessary conditions on the structure of the machine that must be asymptotically fulfilled to achieve relevant goals for cooling: the former is required for perfect cooling; the latter for cooling at the Landauer limit. Together, these conditions imply the following: 
\begin{cor}\label{cor:structuralcondition}
To perfectly cool a target system with energy cost at the Landauer limit using a thermal machine $\tau_{\raisebox{-1pt}{\tiny{$\Mcal$}}}(\beta, H_{\raisebox{-1pt}{\tiny{$\Mcal$}}})$, the machine must be infinite dimensional and $\omega_{\raisebox{0pt}{\tiny{$\Mcal$}}}^{\textup{max}}$, the maximal energy gap of $H_{\raisebox{-1pt}{\tiny{$\Mcal$}}}\,$, must diverge.
\end{cor}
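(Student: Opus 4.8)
The plan is to obtain the corollary by combining the two independent necessary conditions established immediately above, after noting that \emph{perfect cooling at the Landauer limit} is the conjunction of two logically distinct demands: (i) driving the target to a rank-deficient (pure) state, and (ii) saturating the inequality~\eqref{eq:landauerlimit}. Each of these separately constrains the machine, and the corollary asserts that both constraints bind simultaneously.

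First I would handle the diverging-gap requirement. Since the target is initialised in the full-rank Gibbs state $\tau_{\raisebox{-1pt}{\tiny{$\Scal$}}}(\beta, H_{\raisebox{-1pt}{\tiny{$\Scal$}}})$, its smallest eigenvalue $\lambda_{\textup{min}}(\varrho_{\raisebox{-1pt}{\tiny{$\Scal$}}})$ is a strictly positive constant. The protocol-independent bound~\eqref{eq:maingeneralpuritybound} then gives $\lambda_{\textup{min}}(\varrho_{\raisebox{-1pt}{\tiny{$\Scal$}}}^\prime) \geq e^{-\beta\,\omega_{\raisebox{0pt}{\tiny{$\Mcal$}}}^{\textup{max}}}\lambda_{\textup{min}}(\varrho_{\raisebox{-1pt}{\tiny{$\Scal$}}})$ after \emph{any} unitary coupling to the thermal machine. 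Perfect cooling requires $\lambda_{\textup{min}}(\varrho_{\raisebox{-1pt}{\tiny{$\Scal$}}}^\prime)\to 0$; holding $\beta$ finite, the only way the right-hand side can vanish is for the exponential prefactor to vanish, i.e.\ $\omega_{\raisebox{0pt}{\tiny{$\Mcal$}}}^{\textup{max}}\to\infty$. This establishes the first half of the statement independently of how the cooling is performed.

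Next I would invoke the finite-size analysis of Ref.~\cite{Reeb_2014} for the dimensionality requirement: for any machine of finite dimension $d_{\raisebox{-1pt}{\tiny{$\Mcal$}}}$, the Landauer inequality~\eqref{eq:landauerlimit} carries strictly positive correction terms that vanish only as $d_{\raisebox{-1pt}{\tiny{$\Mcal$}}}\to\infty$. Hence exact saturation of~\eqref{eq:landauerlimit} is impossible at any finite dimension, and attaining the Landauer energy cost forces the machine to be infinite-dimensional. Combining this with the previous paragraph yields the corollary.

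The point I would emphasise as the genuine content---rather than an obstacle in the technical sense---is that these two conditions are truly independent: neither implies the other. In particular, an infinite-dimensional machine need not have a diverging maximal gap (its spectrum may be bounded, with gaps accumulating at a finite value), and conversely a diverging gap is possible in finite dimension. Thus the corollary is not a restatement of a single requirement but a nontrivial conjunction, and the only care needed is to verify that both necessary conditions must hold at once; their mutual compatibility, and the sufficiency of their asymptotic realisation, is witnessed by the explicit protocols constructed in Theorem~\ref{thm:inftimeFinTepFinDim} and Corollary~\ref{cor:infcomplexity}.
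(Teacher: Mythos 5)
Your proposal is correct and follows essentially the same route as the paper: the diverging-gap requirement is obtained from the protocol-independent purity bound of Eq.~\eqref{eq:maingeneralpuritybound} applied to a full-rank initial thermal target at finite $\beta$, and the infinite-dimensionality requirement is imported from the finite-size correction terms of Ref.~\cite{Reeb_2014} (stated in the paper as Theorem~\ref{thm:landauerstructural}), with the corollary then following as the conjunction of these two independent necessary conditions. Your closing remark on the logical independence of the two conditions also mirrors the paper's own discussion, so there is nothing to correct.
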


The unbounded structural properties of the machine support the \emph{possibility} for perfect cooling at the Landauer limit; we now move to focus on the control properties of the interaction that \emph{realise} said potential (see Fig.~\ref{fig:complexity}). This leads to the distinct notion of control complexity, which differentiates between protocols that access the machine in a more or less complex manner. The structural complexity properties are protocol independent and related to the energy spectrum and dimensionality of the machine, whereas the control complexity concerns dynamical properties of the unitary that represents a particular protocol. 

\subsubsection{Control Complexity}

Although it is intuitive that a unitary coupling the system to many degrees of freedom of the machine should be considered complex, it is \emph{a priori} unclear how to quantify control complexity in a manner that both
\begin{enumerate}
    \item corresponds to our intuitive understanding of the word ``complex'', meaning ``difficult to implement''; and
    \item is consistent with Nernst's third law in the sense that its divergence is necessary to reach a pure state (when all other considered resources are restricted to be finite).
\end{enumerate}
Many notions of complexity put forth throughout the literature to capture the first point above do not necessarily satisfy the second, as we discuss later. Here, we take the opposite approach and seek a \emph{minimal} notion of complexity that is first and foremost consistent with the third law of thermodynamics, which we hope to develop further to incorporate the idea of quantifying how difficult a protocol is to implement. 

In the following sections, we begin by demonstrating that any cooling protocol that achieves perfect cooling with minimal time and energy resources requires coupling the target system to an infinite-dimensional machine, thereby capturing a notion of control complexity that satisfies the second point above. However, by subsequently analysing the sufficient conditions for such optimal cooling, we see that such a condition is in general insufficient to achieve said goal; furthermore, coupling to an infinite-dimensional machine is not necessarily difficult to implement in practice in certain experimental platforms. The insights gained here finally motivate our more refined notion of control complexity, namely that the system must be coupled to a spectrum of machine energy gaps that approximate a continuum. This condition is indeed difficult to achieve in all experimental settings and therefore provides a reasonable definition of control complexity inasmuch as it satisfies both desiderata outlined above.

\subsection{Effective Dimension as a Notion of Control Complexity}

As a first step in this direction, a good proxy measure of control complexity is the effective dimension of a unitary operation, i.e., the dimension of the subspace of the global Hilbert space upon which the unitary acts nontrivially.\\

\begin{defn}\label{def:effectivedimension}
The \emph{effective dimension} is the minimum dimension of a subspace $\mathcal{A}$ of the joint Hilbert space $\mathscr{H}_{\raisebox{-1pt}{\tiny{$\Scal\Mcal$}}}$ in terms of which the unitary can be decomposed as $U_{\raisebox{-1pt}{\tiny{$\Scal\Mcal$}}} = U_{\raisebox{-1pt}{\tiny{$\Acal$}}} \oplus \mathbbm{1}_{\raisebox{-1pt}{\tiny{$\Acal^\perp$}}}$:
\begin{align}\label{eq:maineffectivedimension}
    d^{\,\textup{eff}} := \min \mathrm{dim}(\mathcal{A}) : U_{\raisebox{-1pt}{\tiny{$\Scal\Mcal$}}} = U_{\raisebox{-1pt}{\tiny{$\Acal$}}} \oplus \mathbbm{1}_{\raisebox{-1pt}{\tiny{$\Acal^\perp$}}}.
\end{align}
\vspace{-2em}
\end{defn}
Intuitively, given any (sufficiently large) machine dimension, the effective dimension captures how much of the machine takes part in the controlled interaction. While any dynamics that requires a high amount of control must accordingly have large effective dimension, the converse does not necessarily follow: there exist dynamics with corresponding large (even infinite) effective dimensions (e.g., Gaussian operations on two harmonic oscillators, such as those enacted by a beam splitter) that are easily implementable and do not require high levels of control, as we discuss further below. Nevertheless, using the definition above, we show that any protocol achieving perfect cooling at the Landauer limit necessarily involves interactions between the target and infinitely many energy levels of the machine. In other words, no interaction restricted to a finite-dimensional subspace suffices. 

We begin by demonstrating that the effective dimension (nontrivially) accessed by a unitary (see Def.~\ref{def:effectivedimension}) must diverge to achieve perfect cooling at the Landauer limit, thereby providing a good proxy for control complexity in the sense that it aligns with Nernst's third law and provides a necessary condition. Intuitively, the effective dimension of a unitary operation is the dimension of the subspace of the global Hilbert space upon which the unitary acts nontrivially, in other words the part of the joint space that is actually accessed by the control protocol. This quantity can be computed by considering a given cooling protocol and finite unit of time $T$ (which we can set equal to unity without loss of generality) with respect to which the target and total machine transform unitarily by decomposing the Hamiltonian in $U_{\raisebox{-1pt}{\tiny{$\Scal\Mcal$}}} = e^{-i H_{\raisebox{-1pt}{\tiny{$\Scal\Mcal$}}} T}$ in terms of local and interaction terms, i.e., $H_{\raisebox{-1pt}{\tiny{$\Scal\Mcal$}}} = H_{\raisebox{-1pt}{\tiny{$\Scal$}}}\otimes \mathbbm{1}_{\raisebox{-1pt}{\tiny{$\Mcal$}}} + \mathbbm{1}_{\raisebox{-1pt}{\tiny{$\Scal$}}}\otimes H_{\raisebox{-1pt}{\tiny{$\Mcal$}}} + H_{\raisebox{-1pt}{\scriptsize{\textup{int}}}}$. The effective dimension then corresponds to $\mathrm{rank}(H_{\raisebox{-1pt}{\scriptsize{\textup{int}}}})$.
With this definition at hand, we have the following:

\begin{thm}\label{thm:variety}
The unitary representing a cooling protocol that saturates the Landauer limit must act nontrivially on an infinite-dimensional subspace of $\operatorname{supp}(H_{\raisebox{-1pt}{\tiny{$\Mcal$}}})$. This implies $d^{\,\textup{eff}} \to \infty$. 
\end{thm}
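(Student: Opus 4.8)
The plan is to argue by contraposition, combining the Landauer equality~\eqref{eq:landauerequality} with the finite-size correction to Landauer's principle established by Reeb and Wolf~\cite{Reeb_2014}. For a nontrivial cooling step (one with $\widetilde{\Delta} S_{\Scal} > 0$), asymptotic saturation of the Landauer limit forces both non-negative terms on the right-hand side of Eq.~\eqref{eq:landauerequality} to vanish, i.e.\ $I(\Scal:\Mcal)_{\varrho'_{\Scal\Mcal}} \to 0$ and $D(\varrho'_{\Mcal}\|\varrho_{\Mcal}) \to 0$. I would show that whenever the cooling unitary has \emph{finite} effective dimension, the process reduces to one driven by an effectively \emph{finite}-dimensional machine, for which Reeb and Wolf guarantee a strictly positive lower bound on the excess $\beta\Delta E_{\Mcal} - \widetilde{\Delta} S_{\Scal} = I(\Scal:\Mcal)_{\varrho'_{\Scal\Mcal}} + D(\varrho'_{\Mcal}\|\varrho_{\Mcal})$; this contradicts saturation and hence forces $d^{\,\textup{eff}} \to \infty$.

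First I would set up the reduction. By Def.~\ref{def:effectivedimension}, finite effective dimension means $U_{\Scal\Mcal} = U_{\Acal} \oplus \mathbbm{1}_{\Acal^\perp}$ with $\dim(\Acal) < \infty$. Let $\Mcal'$ be the span of the machine components of all vectors in $\Acal$; this is finite-dimensional, $\Acal \subseteq \mathscr{H}_{\Scal}\otimes\Mcal'$, and $U_{\Scal\Mcal}$ acts as the identity on $\mathscr{H}_{\Scal}\otimes(\Mcal')^\perp$. After enlarging $\Mcal'$ to be spanned by machine energy eigenstates, the thermal machine state becomes block-diagonal, $\varrho_{\Mcal} = q\,\tau_{\Mcal'}\oplus(1-q)\,\tau_{(\Mcal')^\perp}$ with $q = \mathrm{tr}[\Pi_{\Mcal'}\varrho_{\Mcal}]$ and $\tau_{\Mcal'}$ the Gibbs state of $H_{\Mcal}$ restricted to $\Mcal'$, while $U_{\Scal\Mcal}$ becomes block-diagonal with respect to $\Mcal'\oplus(\Mcal')^\perp$, acting as a genuine unitary on $\mathscr{H}_{\Scal}\otimes\Mcal'$ and trivially on the complement.

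With this structure the global output is a direct sum of an \emph{active} branch of weight $q$, in which the system interacts with the finite-dimensional machine $\Mcal'$, and an untouched \emph{spectator} branch of weight $1-q$ that is exactly trivial (and hence saturating on its own). I would then apply the Reeb--Wolf finite-size correction to the active branch: for a machine of dimension $\dim(\Mcal') = m$, any nontrivial cooling obeys $\beta\Delta E - \widetilde{\Delta} S \geq \varepsilon_m > 0$, with $\varepsilon_m$ vanishing only as $m\to\infty$. Propagating this through the convex decomposition of the entropic and energetic balances bounds the full excess $I(\Scal:\Mcal)_{\varrho'_{\Scal\Mcal}} + D(\varrho'_{\Mcal}\|\varrho_{\Mcal})$ below by a strictly positive quantity fixed by $\varepsilon_m$ and the cooling carried by the active branch. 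Saturation would demand this vanish, which is impossible at finite $m$; therefore $\dim(\Mcal')$, and a fortiori $d^{\,\textup{eff}} \geq \dim(\Mcal')$, must diverge, so the unitary couples the target to an infinite-dimensional subspace of the machine.

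The main obstacle I expect lies in the reduction, not the final bound. Two points need care. First, a finite-dimensional machine subspace $\Mcal'$ can in principle have support on infinitely many energy eigenstates (a single machine vector may be an infinite energy superposition), so passing to the energy-eigenstate hull used above could inflate $\Mcal'$ to infinite dimension; I would need to argue that Landauer-saturating dynamics can couple only finitely many gaps, or else bound the resulting off-diagonal (coherence) terms of $\varrho_{\Mcal}$ directly rather than assuming they vanish. Second, I must verify that the positive correction $\varepsilon_m$ on the active branch genuinely survives the mixing with the trivially-saturating spectator branch: since the spectator contributes weight $1-q$ and is itself saturating, the delicate step is to rule out offloading essentially all of the required entropy reduction onto a vanishing active weight $q$, which is exactly where the necessity of diverging dimension must re-enter quantitatively.
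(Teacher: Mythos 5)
Your strategy is the same as the paper's---contraposition plus the Reeb--Wolf finite-size corrections---but both obstacles you flag at the end are precisely where your execution departs from the paper's proof, and as written neither is resolved. On the first: the paper does not start from Def.~\ref{def:effectivedimension} at all. Its contrapositive hypothesis is that $U$ couples $\mathscr{H}_{\Scal}$ only to $\mathscr{H}_{\Mcal^\prime}=\textup{span}_{n\leq m}\{\ket{n}\}$, a subspace spanned by \emph{finitely many energy eigenstates} of $H_{\Mcal}$; this is exactly why Theorem~\ref{thm:variety} is phrased in terms of $\operatorname{supp}(H_{\Mcal})$, with the statement about $d^{\,\textup{eff}}$ only a corollary. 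Starting from the abstract effective dimension, as you do, the passage to the energy-eigenstate hull is a genuine hole: if $\Acal$ is finite dimensional but not aligned with the energy eigenbasis, then $\varrho_{\Mcal}$ is \emph{not} block diagonal with respect to $\Mcal^\prime\oplus(\Mcal^\prime)^\perp$, the spectator branch is not invariant, and your entire active/spectator decomposition fails. You correctly identify this but offer no repair; the repair in the paper is simply to prove the theorem for energy-eigenstate subspaces, which is what the statement asserts.

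On the second obstacle, the routes differ instructively. The paper never applies Reeb--Wolf branch-wise: it collapses the entire untouched population onto a \emph{single} auxiliary level of energy $\widetilde{\omega}_{m+1}=-\beta^{-1}\log\bigl(\sum_{n>m}e^{-\beta\omega_{n}}\bigr)$, so that the effective machine $\widetilde{\varrho}_{\Mcal}$ is an exactly thermal state of a finite-dimensional Hamiltonian $\widetilde{H}_{\Mcal}$, and then verifies that the final system state, $\widetilde{\Delta}S_{\Scal}$, and $\Delta E_{\Mcal}$ are all unchanged by this replacement; Theorem 6 of Ref.~\cite{Reeb_2014} then applies to the whole process in one stroke. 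Your branch-wise alternative can in fact be completed, but the missing ingredient is concavity of the von Neumann entropy: writing $\varrho^\prime_{\Scal}=q\,\varrho^\prime_{\textup{act}}+(1-q)\,\varrho_{\Scal}$, concavity gives $\widetilde{\Delta}S_{\Scal}\leq q\,\widetilde{\Delta}S_{\textup{act}}$, while linearity of the energy gives $\Delta E_{\Mcal}=q\,\Delta E_{\textup{act}}$, so
\begin{align}
\beta\Delta E_{\Mcal}-\widetilde{\Delta}S_{\Scal}\;\geq\; q\bigl(\beta\Delta E_{\textup{act}}-\widetilde{\Delta}S_{\textup{act}}\bigr)\;\geq\; q\,\varepsilon_{m}.
\end{align}
The same concavity bound also answers your ``offloading'' worry: $q\geq \widetilde{\Delta}S_{\Scal}/\log d_{\Scal}$, so the active weight cannot vanish for any protocol performing a fixed nontrivial amount of cooling, and moreover $\widetilde{\Delta}S_{\textup{act}}\geq\widetilde{\Delta}S_{\Scal}$ (from $q\leq 1$), which keeps the Reeb--Wolf correction $\varepsilon_{m}$---itself increasing in the entropy change---bounded away from zero at fixed $m$. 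Without these two inequalities your sketch does not yet constitute a proof; with them it does, at the cost of more bookkeeping than the paper's single-level construction.
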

\noindent Intuitively, we show that if a protocol accesses only a finite-dimensional subspace of the machine, then the machine is effectively finite-dimensional inasmuch as a suitable replacement can be made while keeping all quantities relevant for cooling invariant. Invoking the main result of Ref.~\cite{Reeb_2014} then implies that there are finite-dimensional correction terms such that the Landauer limit cannot be saturated. 

The effective dimension therefore provides a minimal quantifier for control complexity: it is the quantity that must diverge in order to (perfectly) cool at minimal energy cost---thus, it satisfies the above point 2. Moreover, it requires no assumption on the underlying structure of the machine, with the results holding for either collections of finite-dimensional systems or harmonic oscillators without any restrictions on the types of individual operations allowed. This highlights a certain level of generality regarding the definition put forth, inasmuch as it is not tied to any presupposed structure of the systems at hand or the ability of the agent to control them. Additionally, as we discuss below, in many situations of interest, such as a machine comprising a collection of qubits and/or natural gate-set limitations, said definition also corresponds to protocols that are difficult to implement in practice, therefore also satisfying the above point~1. However, such additional restrictions are by no means generic. Moreover, it is \emph{a priori} unclear if having a diverging effective dimension is enough to permit perfect cooling with minimal time and energy cost. We now move on to discuss the connection to practical difficulty in general before analysing sufficient conditions regarding control complexity.

\subsubsection{Correspondence to Practical Difficulty}

Importantly, if one supposes that the system and machines are finite dimensional, then diverging effective dimension implies diverging circuit complexity, where the latter is defined in terms of the minimum number of gates (from a predetermined set of possibilities) required to implement the overall circuit representing a particular protocol. For instance, considering a qubit system and machines, and the ability to perform arbitrary two-qubit gates, the effective dimension is simply the logarithm of the number of distinct machine qubits that the system interacts with throughout the protocol. For any cooling protocol that achieves Landauer energy cost, it is clear that every one of a diverging number of qubit machines must take part in the overall transformation. Moreover, the particular interactions applied can be taken to be \texttt{SWAP} gates, which require the ability for the agent to be able to perform a \texttt{CNOT} gate, which in turn permits universal quantum computation with two-qubit interactions. Thus, given a universal two-qubit gate set, the circuit required to perform perfect cooling at minimal energy cost has a complexity that scales with the number of machine qubits. For higher-dimensional architectures or further restrictions on the gate set, any meaningful notion of control complexity will increase accordingly. This means that the task of cooling a finite-dimensional system with finite-dimensional machines at the Landauer limit is---even with a perfect quantum computer---an impossibly difficult task.

However, although our proposed definition of effective dimension as a notion of control complexity is flexible inasmuch as it applies to arbitrary system-machine structures, the price of such generality comes with the drawback that it tends to overestimate the difficulty of implementing a particular protocol in practice. In other words, without imposing any additional assumptions regarding the situation at hand, the effective dimension does not necessarily satisfy the above point~1. For example, whilst the effective dimension and the circuit complexity coincide for qubits, in higher-dimensional settings, the former overestimates the latter because not all system-machine subspaces are necessarily required to implement a particular protocol (i.e., although using all such subspaces provides one way to achieve it, this is not unique). Thus, the extent to which the circuit complexity is overestimated depends on the allowed gate set that is considered ``simple'' in general. At the extreme end, i.e., for harmonic-oscillator systems and machines, this can be seen from the fact that a single beam-splitter operation (which is a two-mode Gaussian operation, corresponding to a simple circuit complexity in the usual sense considered for infinite-dimensional quantum circuit architectures) already has infinite effective dimension, but is far from sufficient to achieve perfect cooling at Landauer cost.

As a representative for infinite-dimensional systems, we treat harmonic oscillator target systems separately in Appendix~\ref{app:harmonicoscillators}. In the infinite-dimensional setting, the difficulty of implementing an operation is often related to the polynomial degree of its generators. Here, we see some friction with respect to Eq.~\eqref{eq:maineffectivedimension}: as mentioned above, a generic Gaussian unitary operation (i.e., one generated by a Hamiltonian at most quadratic in the mode operators) between a harmonic oscillator target and machine already implies infinite effective dimensionality. In light of this, we first construct a protocol that achieves perfect cooling at the Landauer limit with diverging time using only sequences of Gaussian operations [i.e., those typically considered to be practically easily implementable (cf. Refs.~\cite{BrownFriisHuber2016,Friis2018}), but nonetheless with infinite effective dimensionality according to Def.~\ref{def:effectivedimension}]. This result highlights that the polynomial degree of the generators of a particular protocol would---somewhat counterintuitively, since operations corresponding to high polynomial degree are difficult to achieve in practice---\emph{not} provide a suitable measure of control complexity inasmuch as its divergence is not necessary for Landauer-cost cooling. In contrast, we then present a protocol that demonstrates that perfect cooling is possible given diverging time and operations acting on only a finite effective dimensionality (i.e., using non-Gaussian operations), with a finite energy cost that is greater than the Landauer limit; whether or not a similar protocol that saturates the Landauer limit exists in this setting remains an open question. 

\subsubsection{Sufficiency for Optimal Cooling}

Thus, in general, accessing an infinite-dimensional machine subspace is not sufficient for reaching the Landauer limit. Indeed, in all of the protocols that we present, the degrees of freedom of the machine must be individually addressed in a fine-tuned manner to permute populations optimally, which intuitively corresponds to complicated multipartite gates and demonstrates that an operationally meaningful notion of control complexity must take into account factors beyond the effective dimensionality accessed by an operation. In particular, the interactions couple the target system to a diverging number of subspaces of the machine corresponding to distinct energy gaps. Moreover, there are a diverging number of energy levels of the machine both above and below the first excited level of the target. These observations highlight that fine-tuned control plays an important role. Indeed, both the final temperature of the target as well as the energy cost required to achieve this depends upon how the global eigenvalues are permuted via the cooling process. First, how cool the target becomes depends on the sum of the eigenvalues that are placed into the subspace spanned by the ground state. Second, for any fixed amount of cooling, the energy cost depends on the constrained distribution of eigenvalues within the machine. Thus, in general, the optimal permutation of eigenvalues depends upon properties of both the target and machine. To highlight this, in Appendix~\ref{app:conditionsstructuralcontrolcomplexity}, we consider the task of cooling a maximally mixed target system with the additional constraint that the operation implemented lowers the temperature as much as possible. This allows us to derive a closed-form expression for the distribution of machine eigenvalues alone that must be asymptotically satisfied as the machine dimension diverges. Drawing from these insights, in the coming section we propose a stronger notion of control complexity (in the sense that it bounds the effective dimension from below and that it corresponds to practical difficulty in virtual every setting imaginable) in terms of the energy-gap structure of the machine and demonstrate that this measure too must diverge to cool perfectly with minimal time and energy costs. This concept is even more important in the case where all resources are finite, as particular structures of machines and the types of interactions permitted play a crucial role in both how much time or energy is spent cooling a system and how cold the system can ultimately become (see, e.g., Refs.~\cite{Clivaz_2019E,Taranto_2020,zhen2021}).

\subsection{Energy-Gap Variety as a Notion of Control Complexity}

This analysis motivates searching for a more detailed notion of control complexity that takes the energy-level structure of the machine into account, which should hold across all platforms and dimension scales. The discussion above illustrates some key challenges in defining a measure of control complexity that satisfies natural desiderata: such a measure should correspond to the difficulty of implementing operations in practice and simultaneously cover all possible physical platforms, including finite-dimensional systems such as, e.g., specific optical transitions of electrons in the shell of trapped ions, and infinite-dimensional systems such as the state-space-specific modes of the electromagnetic field. The effective dimension that we introduce above as a proxy manages to cover all such systems and provides a rigorous mathematical criterion that every physical protocol will necessarily have to fulfil in order to cool at minimal energy cost. As we have seen, however, infinite effective dimension is insufficient for cooling at the Landauer limit and it may not be all that difficult to achieve in continuous-variable setups. This begs the question of how this minimal definition of control complexity can be extended in order to more faithfully represent what permits saturation of the ultimate limitations and is difficult to achieve in practice. 

Looking at all of our cooling protocols, a common property that seems to be important in minimising the energy cost of cooling is that the system is coupled to a set of machine energy gaps that are distributed in such a way that they (approximately) densely cover the interval $[\omega_1, \omega^*]$, where $\omega_1$ is the first energy gap of the target system and $\omega^*$ is the maximal energy gap, which sets the final achievable temperature of the system (for perfect cooling to the ground state, note that one requires $\omega^* \to \infty$). Let us denote the number of distinct energy gaps in a (fixed) interval as the \emph{energy-gap variety}. More formally, we have the following:

\begin{defn}\label{def:energygapvariety}
Consider an interval $[\omega_a, \omega_b) \subseteq \mathbbm{R}$. We define the energy-gap variety in terms of the set of machine energy gaps that lie in said interval, i.e., first construct the set
\begin{align}
    \mathcal{E}_{[\omega_a, \omega_b)} := \{ \omega_\gamma := \omega_i-\omega_j \, | \, \omega_i-\omega_j \in [\omega_a, \omega_b) \}_{\gamma}.
\end{align}
The number of distinct elements in such a set is the \emph{energy-gap variety}. 
\end{defn}

On the one hand, it is clear that coupling a system to a large number of distinct and/or closely spaced energy gaps requires fine-tuned control that is difficult in any experimental setting. On the other, the energy-gap variety lower bounds the effective dimension, and thus it is not clear that it needs to diverge in order to cool at Landauer energy cost. In Appendix~\ref{app:conditionsstructuralcontrolcomplexity}, we demonstrate that the energy-gap variety must indeed diverge and, additionally, that the set of energy gaps must densely cover a relevant interval (whose endpoints set the amount of cooling possible) in order to perfectly cool at the Landauer limit by proving the following:
\begin{thm}\label{thm:energygapvariety-main}
    In order to cool $\varrho_{\raisebox{-1pt}{\tiny{$\Scal$}}} \mapsto \ketbra{0}{0}$ with a thermal machine $\tau_{\raisebox{0pt}{\tiny{$\Mcal$}}}(\beta,H_{\raisebox{0pt}{\tiny{$\Mcal$}}})$ at Landauer energy cost with a single control operation, the global unitary $U$ must couple the system to a diverging number of distinct energy gaps that densely cover the interval $[\omega_0,\infty)$, where $\omega_0$ is the smallest energy gap of the target system. 
\end{thm}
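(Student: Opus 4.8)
The plan is to deduce the dense energy-gap structure from the joint requirements of perfect cooling and saturation of the Landauer limit, taking the equality \eqref{eq:landauerequality} as the starting point. Saturating Eq.~\eqref{eq:landauerlimit} forces both non-negative correction terms to vanish, $I(\Scal\!:\!\Mcal)_{\varrho'_{\Scal\Mcal}}\to 0$ and $D(\varrho'_{\Mcal}\|\tau_{\Mcal})\to 0$, and via Pinsker's inequality this implies that $\varrho'_{\Scal\Mcal}$ converges in trace norm to the product $\ketbra{0}{0}\otimes\tau_{\Mcal}$: the machine is left asymptotically undisturbed and uncorrelated while the target is purified. First I would record this product state as the effective target of the single unitary $U$.

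Next, since $U$ acts on $\Scal\Mcal$ it preserves the global spectrum, so the ordered spectrum of its image coincides exactly with that of the initial state $\tau_{\Scal}\otimes\tau_{\Mcal}$. Because trace-norm closeness controls the closeness of sorted eigenvalues (a Mirsky/Lidskii-type estimate, $\sum_k|\lambda_k^\downarrow(\varrho'_{\Scal\Mcal})-\lambda_k^\downarrow(\ketbra{0}{0}\otimes\tau_{\Mcal})|\le\|\varrho'_{\Scal\Mcal}-\ketbra{0}{0}\otimes\tau_{\Mcal}\|_1$), the matching above forces the ordered spectrum of $\tau_{\Scal}\otimes\tau_{\Mcal}$ to converge, along the sequence of growing machines, to that of $\ketbra{0}{0}\otimes\tau_{\Mcal}$. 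Writing the eigenvalues as Boltzmann weights turns this into a condition on energies: the multiset $\{\epsilon_a+\omega_i\}_{a,i}$ built from the target energies $\epsilon_a$ and machine energies $\omega_i$ must coincide, up to the free-energy shift $\beta^{-1}\log\Zcal_{\Scal}$, with the machine multiset $\{\omega_i\}_i$ (the remaining eigenvalues collapsing to zero). Equivalently, convolving the machine density of states with the finite target spectrum must reproduce a rigidly shifted copy of that same density of states.

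I would then extract the gap structure from this self-consistency condition. Any gap in the machine spectrum exceeding the target level spacing produces a hole in the convolution that cannot be filled by the shifted copy, obstructing the spectral matching; hence the machine levels must accumulate into an asymptotically dense set, and the transitions $\omega_i-\omega_j$ that $U$ must drive to realise the (energy-optimal, population-permuting) rearrangement sweep across this dense set. The lower endpoint is fixed by the smallest target transition $\omega_0$, which sets the finest hole that must be filled in order to de-excite the target reversibly, while the upper endpoint diverges because perfect cooling already requires $\omega^{\textup{max}}_{\Mcal}\to\infty$ by the purity bound \eqref{eq:maingeneralpuritybound} and Corollary~\ref{cor:structuralcondition}. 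Together these yield a diverging energy-gap variety (Def.~\ref{def:energygapvariety}) densely covering $[\omega_0,\infty)$; this also refines Theorem~\ref{thm:variety}, since a dense set of distinct gaps a fortiori forces $d^{\,\textup{eff}}\to\infty$.

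The hard part will be the two steps that convert static spectral data into a statement about the dynamical coupling. First, establishing rigorously that the \emph{density} of machine levels — the absence of spectral gaps exceeding $\omega_0$ — is \emph{necessary} rather than merely sufficient for the matching; I expect to argue this by contradiction, showing that a spectral hole forces one of the correction terms in \eqref{eq:landauerequality} to remain bounded away from zero, so that the Landauer cost cannot be met. Second, promoting ``the machine must possess these gaps'' to ``$U$ must couple across these gaps'': this requires analysing the energy-minimising permutation of the global populations and showing it necessarily transports weight between every pair of machine levels realising a gap in $[\omega_0,\infty)$, which is most cleanly done through the explicit optimal machine-eigenvalue distribution obtained by minimising $\beta\Delta E_{\Mcal}$ at fixed cooling. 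I anticipate this optimisation, together with the no-hole argument, to be the technical core of the proof.
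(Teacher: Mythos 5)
Your overall strategy---extract necessary spectral conditions from the vanishing of the two correction terms in Eq.~\eqref{eq:landauerequality}, then exclude spectral holes by contradiction---is the same as the paper's (Appendix~\ref{app:energygapvariety}), but your technical route is genuinely different. The paper works with eigenvalue \emph{ratios}: it argues that the energy-optimal unitary must leave the machine passive and close to its initial thermal state, derives the ratio conditions of Eq.~\eqref{eq:lemmagenerictermeigenvalue}, and lower-bounds the relative entropy by $D(\varrho_{\Mcal}^\prime\|\varrho_{\Mcal})\geq \tfrac{1}{2}N_{\pm}\theta^{2}$ whenever a non-negligible population $N_{\pm}$ is transported between levels whose ratio deviates from unity by $\theta$. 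You instead convert $D\to 0$ and $I\to 0$ into trace-norm convergence (Pinsker), then into $\ell_1$ matching of sorted spectra (Mirsky, exploiting unitary invariance of the global spectrum), then into a convolution identity on the density of states. The Mirsky step is an elegant way to use spectrum preservation, but note that by passing through Pinsker you retain only \emph{absolute-difference} information about eigenvalues, discarding exactly the ratio information on which the paper's quantitative bound rests; the paper explicitly warns that absolute closeness of eigenvalues does not control the relative-entropy term.

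The genuine gap is your central claim (``hard part 1'') that any machine spectral hole exceeding the target level spacing obstructs the convolution matching. This is false when the hole is \emph{commensurate} with the logarithms of the target populations, and the paper itself contains the counterexample: the Reeb--Wolf-type machine of Appendix~\ref{app:finetunedcontrolconditions}, with equally spaced levels of gap $\Delta$ satisfying $e^{-\beta\Delta}=(1-\epsilon)/2$ and degeneracy doubling at each level. For the maximally mixed qubit target (the very test case the paper's own proof adopts), the two shifted copies of the machine spectrum interleave exactly: the factor $\tfrac12$ from the target population is compensated by the factor $e^{\beta\Delta}\to 2$ from one lattice step, so the sorted-spectrum matching you require holds ($\lambda^{\downarrow k}\approx\tfrac12\lambda^{\downarrow\lfloor k/2\rfloor}$ with ratio $\tfrac{1}{1-\epsilon}\to1$), and indeed the paper shows this machine attains perfect cooling at Landauer cost---yet the unitary couples the target only to gaps in the lattice $\{k\Delta\}_{k}$, none of which lie in the interval $\bigl(0,\beta^{-1}\log 2\bigr)$. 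In other words, the mass your convolution places ``in the hole'' is exactly matched by genuine machine levels one lattice step up, so the transport cost you intend to bound away from zero vanishes identically. Consequently the no-hole step cannot be carried out as stated: it can at best exclude holes incommensurate with the target's Boltzmann structure, a genericity restriction appearing nowhere in the theorem. (The paper's ratio-based argument silently passes over the same configuration---its assertion that any finite jump $\Omega$ forces some ratio $\lambda_n^\prime/\lambda_n$ a finite distance from unity fails for this lattice machine---so reproducing the paper's route would not rescue your proof either; handling the commensurate case is the real technical core, not the permutation optimisation you anticipated in ``hard part 2.'')
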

Taken in combination with its sufficiency to achieve said task, this result posits the energy-gap variety as a better quantifier of control complexity than the effective dimension, constituting the best thermodynamically meaningful notion of control complexity that we have put forth so far.

The above theorem establishes the relevance of the energy-gap variety regarding the ultimate limitations of perfect cooling. In reality, of course, experimental imperfections abound, and so naturally the question arises: \emph{how robust is the energy-gap variety and to what extent can it incorporate errors?} Regarding the former: note that the above theorem posits the impossibility of cooling at Landauer energy cost unless one has control over an (infinitely) fine-grained energy-gap structure. Any perturbation away from said structure will result in some additional energy requirement for cooling; however, intuitively, small perturbations will correspond to small increases in energy costs. Properly accounting for such impacts, e.g., by bounding the additional energy cost in terms of a difference from the optimal energy-gap structure, is an important next step to understand the practical limitations of cooling. Regarding the latter point, in reality one never has perfect control over microscopic degrees of freedom. For instance, an immediate experimental imperfection that should be accounted for is the fact that two energy gaps which are very close together will be practically indistinguishable. Although a full-fledged error analysis here would constitute a major follow-up work, note that such cases can be formally dealt with within our framework by suitably modifying the definition, i.e., by discretising energy bands to suitably capture the indistinguishability of energy gaps and/or error margins.

Aside from introducing and highlighting the important role of control complexity, we now take a step back to consider the notion of overall control at a higher level. It is clear that the protocols that saturate the Landauer limit for the energy cost of cooling require highly controlled microstate interactions between the system and machine; in turn, such transformations necessitate that the agent has access to a versatile \emph{work source}, i.e., either a quantum battery~\cite{Aberg2013,Skrzypczyk_2014,LostaglioJenningsRudolph2015,Friis2018,CampaioliPollockVinjanampathy2019} or a classical work source with a precise clock~\cite{Erker_2017,SchwarzhansLockErkerFriisHuber2021}. Such control is reminiscent of Maxwell's demon, who can indeed address all microscopic configurations at hand. This level of control is, however, in some sense at odds with the true spirit of thermodynamics. Indeed, the very reason that the machine is taken to begin as a thermal (Gibbs) state in thermodynamics is precisely because it provides the microscopic description that is \emph{both} consistent with macroscopic observations (in particular, average energy) and makes minimal assumptions regarding the information that the agent has about the initial state; thermodynamics as a whole is largely concerned with what can be done with minimal information requirements. Beginning with this, and then going on to permit dynamical interactions that address the full complex microstructure is somewhat contradictory, at least in essence; indeed, it has been argued that ``Maxwell's demon cannot operate''~\cite{Brillouin_1951} as an autonomous thermal being. Thus, a more thermodynamically sound setting would also restrict the transformations themselves to be ones that can be driven with minimal overall control. We now move to analyse the task of cooling within such a context.

\pdfbookmark[1]{Incoherent Control}{Incoherent Control}
\section{Incoherent Control (Heat Engine)}
\label{subsec:methods-incoherentcontrol}

The results presented so far pertain to cooling with the only restriction being that the machines are initially thermal. In particular, there are no restrictions on the allowed unitaries. In general, the operations required for cooling are not energy conserving and require an external work source. With respect to standard considerations of thermodynamics, this may seem somewhat unsatisfactory, as the joint system is, in the coherent setting, open to the universe. When quantifying thermodynamic resources, one typically restricts the permitted transformations to be energy conserving, thereby closing the joint system and yielding a self-contained theory. 

We therefore analyse protocols using energy-conserving unitaries. With this restriction, it is in general not possible to cool a target system with machines that are initially thermal at a single temperature, as was considered in the coherent-control paradigm~\cite{Clivaz_2019L}. Instead, cooling can be achieved by partitioning the machine into one cold subsystem $\Ccal$ that begins in equilibrium at inverse temperature $\beta$ and another hot subsystem $\Hcal$ coupled to a heat bath at inverse temperature $\beta_{\raisebox{-1pt}{\tiny{$H$}}} < \beta$~\cite{Clivaz_2019L,Clivaz_2019E} (see Fig.~\ref{fig:schematic}, bottom panel). In other words, one uses a hot and a cold bath to construct a heat engine that cools the target. As we demonstrate, perfect cooling can be achieved in this setting as pertinent resources diverge. However, the structure of the hot bath plays a crucial role regarding the resource requirements. In particular, we present a no-go theorem that states that perfect cooling with a heat engine using a single unitary of finite control complexity is impossible, even given diverging energy drawn from the hot bath. This result is in stark contrast to its counterpart in the coherent-control setting, where diverging energy is sufficient for perfect cooling and serves to highlight the fact that the incoherent-control setting is a fundamentally distinct paradigm that must be considered independently. Here, we focus on finite-dimensional systems and leave the analysis of infinite-dimensional ones to future work.

\subsection{Ultimate Limits in the Incoherent Control Paradigm}

In the incoherent-control setting, an adaptation of the (equality-form) Landauer bound on the minimum heat dissipated (or, as we phrase it here, the minimum amount of energy drawn from the hot bath) can be derived, which we dub the \emph{Carnot-Landauer limit}: 
\begin{thm}\label{thm:main-landauer-incoherent}
Let $F_{\raisebox{-1pt}{\tiny{$\beta$}}}(\varrho_{\raisebox{-1pt}{\tiny{$\Xcal$}}}) := \mathrm{tr}[H_{\raisebox{-1pt}{\tiny{$\Xcal$}}} \varrho_{\raisebox{-1pt}{\tiny{$\Xcal$}}}] - \beta^{-1} S(\varrho_{\raisebox{-1pt}{\tiny{$\Xcal$}}})$ be the free energy of a state $\varrho_{\raisebox{-1pt}{\tiny{$\Xcal$}}}$ with respect to a heat bath at inverse temperature $\beta$, $\Delta F_{\raisebox{-1pt}{\tiny{$\Scal$}}}^{(\beta)} :=  F_{\raisebox{-1pt}{\tiny{$\beta$}}}(\varrho_{\raisebox{-1pt}{\tiny{$\Scal$}}}^\prime) - F_{\raisebox{-1pt}{\tiny{$\beta$}}}(\varrho_{\raisebox{-1pt}{\tiny{$\Scal$}}})$, and let $\eta := 1 - \beta_{\raisebox{-1pt}{\tiny{$H$}}}/\beta \in (0,1)$ be the Carnot efficiency with respect to the hot and cold baths. In the incoherent-control setting, the quantity \begin{align}\label{eq:landauerincoherent1}
&\Delta F_{\raisebox{-1pt}{\tiny{$\Scal$}}}^{(\beta)} + \eta\, \Delta E_{\raisebox{-1pt}{\tiny{$\Hcal$}}}  \\
    &= -\frac{1}{\beta}[\Delta S_{\raisebox{-1pt}{\tiny{$\Scal$}}} + \Delta  S_{\raisebox{-1pt}{\tiny{$\Ccal$}}} + \Delta  S_{\raisebox{-1pt}{\tiny{$\Hcal$}}} + D(\varrho_{\raisebox{-1pt}{\tiny{$\Ccal$}}}^\prime||\varrho_{\raisebox{-1pt}{\tiny{$\Ccal$}}}) +  D(\varrho_{\raisebox{-1pt}{\tiny{$\Hcal$}}}^\prime||\varrho_{\raisebox{-1pt}{\tiny{$\Hcal$}}})] \notag
\end{align} 
satisfies the inequality
\begin{align}\label{eq:landauerincoherent2}
\Delta F_{\raisebox{-1pt}{\tiny{$\Scal$}}}^{(\beta)} + \eta \Delta E_{\raisebox{-1pt}{\tiny{$\Hcal$}}}  &\leq 0. 
\end{align}
\end{thm}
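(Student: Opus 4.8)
The plan is to establish the equality by direct computation and then read off the inequality from two non-negativity statements. Three ingredients do all the work: the Gibbs-state form of the relative entropy, energy conservation of $U_{\textup{EC}}$, and unitary invariance of the global entropy combined with subadditivity.

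First I would rewrite the two relative-entropy terms using the identity that for any Gibbs state $\tau(\beta,H)=e^{-\beta H}/\mathcal{Z}(\beta,H)$ one has $D(\varrho\|\tau(\beta,H)) = \beta\,\tr{H\varrho} - S(\varrho) + \log\mathcal{Z}(\beta,H)$. Since the machine parts start thermal, $\varrho_\Ccal=\tau_\Ccal(\beta,H_\Ccal)$ and $\varrho_\Hcal=\tau_\Hcal(\beta_H,H_\Hcal)$, subtracting each vanishing self-relative-entropy gives
\begin{align}
D(\varrho_\Ccal^\prime\|\varrho_\Ccal) &= \beta\,\Delta E_\Ccal - \Delta S_\Ccal, \\
D(\varrho_\Hcal^\prime\|\varrho_\Hcal) &= \beta_H\,\Delta E_\Hcal - \Delta S_\Hcal .
\end{align}
Substituting these into the bracket on the right-hand side of the claimed equality, the $\Delta S_\Ccal$ and $\Delta S_\Hcal$ contributions cancel, and only $\Delta S_\Scal + \beta\,\Delta E_\Ccal + \beta_H\,\Delta E_\Hcal$ survives.

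Next I would impose energy conservation. Because $U_{\textup{EC}}$ commutes with $H_\Scal+H_\Ccal+H_\Hcal$ and the three systems are uncoupled before and after the interaction, the bare energies balance: $\Delta E_\Scal+\Delta E_\Ccal+\Delta E_\Hcal=0$. Eliminating $\Delta E_\Ccal=-\Delta E_\Scal-\Delta E_\Hcal$ and multiplying the bracket by $-\beta^{-1}$ yields $(\Delta E_\Scal-\beta^{-1}\Delta S_\Scal)+(1-\beta_H/\beta)\,\Delta E_\Hcal$, which is precisely $\Delta F_\Scal^{(\beta)}+\eta\,\Delta E_\Hcal$. This establishes the equality.

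For the inequality it then suffices to show the bracket is non-negative. The two relative entropies are manifestly non-negative, so I only need $\Delta S_\Scal+\Delta S_\Ccal+\Delta S_\Hcal\geq 0$. This follows because the initial state is a product, so that $S(\varrho_\Scal)+S(\varrho_\Ccal)+S(\varrho_\Hcal)=S(\varrho_{\Scal\Ccal\Hcal})$; a unitary preserves the global entropy, $S(\varrho_{\Scal\Ccal\Hcal}^\prime)=S(\varrho_{\Scal\Ccal\Hcal})$; and subadditivity gives $S(\varrho_{\Scal\Ccal\Hcal}^\prime)\leq S(\varrho_\Scal^\prime)+S(\varrho_\Ccal^\prime)+S(\varrho_\Hcal^\prime)$. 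Chaining these shows the sum of local entropy changes equals the total correlation generated in the final state and is hence $\geq 0$; the full bracket is therefore non-negative, and being multiplied by $-\beta^{-1}<0$ forces the quantity to be $\leq 0$. I expect no substantive analytic obstacle here: the whole argument is an algebraic reorganisation plus these two non-negativity facts. The only points requiring care are the bookkeeping of sign conventions (the untilded $\Delta$ here meaning final minus initial, in contrast to the tilded entropy differences used earlier) and confirming that \emph{energy-conserving} is meant in the strong sense $[U_{\textup{EC}},\,H_\Scal+H_\Ccal+H_\Hcal]=0$, so that it is the bare local energies---and not any transient interaction energy---that sum to zero.
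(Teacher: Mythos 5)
Your proposal is correct and follows essentially the same route as the paper's proof: the same Gibbs-state identity $D(\varrho_{\raisebox{-1pt}{\tiny{$\Ccal$}}}^\prime\|\varrho_{\raisebox{-1pt}{\tiny{$\Ccal$}}})=\beta\Delta E_{\raisebox{-1pt}{\tiny{$\Ccal$}}}-\Delta S_{\raisebox{-1pt}{\tiny{$\Ccal$}}}$ (and its hot-bath analogue), the same use of strict energy conservation $\Delta E_{\raisebox{-1pt}{\tiny{$\Scal$}}}+\Delta E_{\raisebox{-1pt}{\tiny{$\Ccal$}}}+\Delta E_{\raisebox{-1pt}{\tiny{$\Hcal$}}}=0$, and the same identification of $\Delta S_{\raisebox{-1pt}{\tiny{$\Scal$}}}+\Delta S_{\raisebox{-1pt}{\tiny{$\Ccal$}}}+\Delta S_{\raisebox{-1pt}{\tiny{$\Hcal$}}}$ with the non-negative tripartite mutual information of the final state via product initial state, unitary invariance, and subadditivity. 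The only difference is ordering—the paper opens with the mutual information and works toward the equality, while you simplify the bracket first and invoke the mutual information only for the inequality—which is purely presentational.
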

\noindent Equation~\eqref{eq:landauerincoherent2} holds due to the non-negativity of the sum of local entropy changes and the relative-entropy terms. The derivation is provided in Appendix~\ref{app:equalityformsofthecarnot-landauerlimit}, where we also show that the usual Landauer bound is recovered in the limit of an infinite-temperature heat bath. 

The incoherent-control setting is fundamentally distinct from the coherent-control setting in terms of what can (or cannot) be achieved with given resources. For instance, consider the case where one wishes to achieve perfect cooling in unit time and with finite control complexity with diverging energy cost. In the coherent-control setting, this task is possible in principle (see Theorem~\ref{thm:divergingenergycoherent}). On the other hand, in the incoherent-control setting, we have the following no-go theorem (see Appendix~\ref{app:coolingprotocolsincoherentcontrolparadigm} for a proof):
\begin{thm}\label{thm:infenergyauto}
In the incoherent control scenario, it is not possible to perfectly cool any quantum system of finite dimension in unit time and with finite control complexity, even given diverging energy drawn from the hot bath, for any non-negative inverse temperature heat bath $\beta_{\raisebox{-1pt}{\tiny{$H$}}} \in [0, \beta < \infty)$. 
\end{thm}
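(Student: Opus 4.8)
The plan is to use the two defining constraints of the setting---global energy conservation and finite effective dimension (Def.~\ref{def:effectivedimension})---to collapse the problem to a finite, block-diagonal population-rearrangement problem, and then to show that the residual non-ground population of $\Scal$ is strictly positive and cannot be driven to zero by any amount of hot-bath energy. First I would fix notation: write $H_{\textup{tot}} = H_{\Scal}+H_{\Ccal}+H_{\Hcal}$ (with tacit identities), take the initial product Gibbs state $\varrho = \tau_{\Scal}(\beta)\otimes\tau_{\Ccal}(\beta)\otimes\tau_{\Hcal}(\beta_H)$, and let $Q := \mathbbm{1}_{\Scal}-\ketbra{0}{0}_{\Scal}$ (tensored with identity) project onto the excited sector of the target, so that perfect cooling is the statement $\operatorname{Tr}[Q\varrho']=0$ for $\varrho'=U\varrho U^\dagger$. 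Finite control complexity means $U=U_{\Acal}\oplus\mathbbm{1}_{\Acal^\perp}$ with $\dim\Acal<\infty$, and the incoherent restriction means $[U,H_{\textup{tot}}]=0$. The first key reduction is that, since $U$ commutes with $H_{\textup{tot}}$, it preserves every total-energy eigenspace (``shell''), so $\varrho'$ is block-diagonal across shells and, within each shell, is a unitary conjugate of $\varrho$---carrying exactly the same multiset of strictly positive Gibbs weights.

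The core step is then a variational (Ky Fan) bound applied shell by shell to the finitely many microstates on which $U$ acts nontrivially. Let $\Pi^{(E)}_{\textup{exc}}$ project onto the excited-$\Scal$ microstates in the shell at total energy $E$, of rank $n_{\textup{exc}}(E)$, and let $\{w_\mu\}$ be the Gibbs weights there. Because $\varrho'|_E$ is unitarily equivalent to $\operatorname{diag}(w_\mu)$, minimising $\operatorname{Tr}[\Pi^{(E)}_{\textup{exc}}\varrho'|_E]$ over admissible unitaries is equivalent to minimising $\operatorname{Tr}[P\,\operatorname{diag}(w_\mu)]$ over rank-$n_{\textup{exc}}(E)$ projectors $P$, whose value is the sum of the $n_{\textup{exc}}(E)$ smallest weights. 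The excited population residing outside $\Acal$ is simply frozen, while that inside $\Acal$ is redistributed among a finite set of states of strictly positive weight; in either case $\operatorname{Tr}[Q\varrho']$ is bounded below by a finite sum of strictly positive terms. Hence $\operatorname{Tr}[Q\varrho']>0$, so $\varrho'_{\Scal}$ is never pure, for every $\beta_H\in[0,\beta)$ and \emph{independently of the energies drawn}. This is precisely the contrast with the coherent protocol of Theorem~\ref{thm:divergingenergycoherent}, where a fixed finite-complexity \textsc{swap} purifies in the gap-diverging limit: there the unitary moves population \emph{across} shells, whereas energy conservation here confines it \emph{within} shells.

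To give the ``even given diverging energy'' clause its teeth---and explain why, unlike the coherent case, more energy cannot help---I would invoke passivity at the common temperature. The joint state $\tau_{\Scal}(\beta)\otimes\tau_{\Ccal}(\beta)$ is the Gibbs state of $H_{\Scal}+H_{\Ccal}$ and is therefore a fixed point of every $(H_{\Scal}+H_{\Ccal})$-conserving unitary, so any reduction of the target's excited population must be powered by de-excitation of $\Hcal$; a short computation of the beneficial microscopic transitions shows they are exactly those with $\Delta E_{\Hcal}<0$, using $\beta-\beta_H>0$. Consequently the population sitting with $\Hcal$ in its ground state is dynamically stuck, and its weight is controlled by $1/\mathcal{Z}_{\Hcal}$. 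Scaling the hot gaps upward to extract unbounded energy drives $\mathcal{Z}_{\Hcal}\to 1$, so the stuck excited population tends to its full initial value $\operatorname{Tr}[Q\varrho]$ rather than to zero: more energy makes cooling \emph{worse}, and $\lambda_{\textup{min}}(\varrho'_{\Scal})$ stays bounded away from zero for fixed complexity.

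I expect the main obstacle to be the careful interlocking of the three structures---the energy-shell (block) decomposition, the finite-rank effective-dimension constraint, and the target's ground/excited projector---so that the strictly positive lower bound is both rigorous and manifestly energy-independent. In particular, one must argue that $\Acal$ may be taken compatible with the energy eigenbasis (it is $H_{\textup{tot}}$-invariant since $[U,H_{\textup{tot}}]=0$) and that the Ky Fan estimate survives when $\Scal\Ccal\Hcal$ is genuinely infinite-dimensional with only finitely many microstates touched. Converting the per-instance strict positivity into the quantitative, passivity-based statement $\operatorname{Tr}[Q\varrho']\gtrsim \operatorname{Tr}[Q\varrho]/\mathcal{Z}_{\Hcal}$ that is robust against the diverging-energy limit is the most delicate part, and is what cleanly separates this no-go from its coherent counterpart.
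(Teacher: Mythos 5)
Your proposal is correct in its essentials and reaches the theorem by a genuinely different route to the uniform bound that gives the ``even with diverging energy'' clause its force. Both proofs share the same backbone: energy conservation forces $U$ to split as a direct sum over the degenerate subspaces (shells) of $H_{\Scal}+H_{\Ccal}+H_{\Hcal}$, and a Schur--Horn/Ky Fan argument within each shell lower-bounds the population that must remain in the excited sector of $\Scal$ by a sum of smallest Gibbs weights (the paper phrases this via majorisation and Schur's theorem inside a block $B_{i^*}$). Where you differ is in how that bound is made independent of the machine's energy structure. The paper first proves a degeneracy-necessity lemma (every $\ket{i00}$ must be resonant with some $\ket{0jk}$), then uses the resonance condition $\omega_{\Ccal}^{j}+\omega_{\Hcal}^{k}=\omega_{\Scal}^{i}$ together with $\beta_{H}\leq\beta$ to lower-bound the relevant populations by $e^{-2\beta\omega_{\Scal}^{i^*}}/(\mathcal{Z}_{\Scal}d_{\Ccal}d_{\Hcal})$, finishing with a pigeonhole count inside a single block. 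You instead exploit the fact that $\Scal$ and $\Ccal$ share the temperature $\beta$: within any shell at total energy $E$, a basis state's Gibbs weight is $e^{-\beta E}e^{(\beta-\beta_{H})\omega_{\Hcal}^{k}}/Z_{\textup{tot}}$, i.e., it depends only on the $\Hcal$ level and increases with it (your $\Delta E_{\Hcal}<0$ computation is exactly this ratio). Hence the minimal weights in every shell are those of the $\Hcal$-ground tier, and since each of the $n_{\textup{exc}}$ smallest weights is at least that tier weight while the number of excited-$\Scal$, $\Hcal$-ground states in the shell is at most $n_{\textup{exc}}$, the Ky Fan sum in each shell is at least the initial weight of the ``$\Scal$ excited and $\Hcal$ ground'' states there. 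Summing over shells gives $\operatorname{Tr}[Q\varrho']\geq\operatorname{Tr}[Q\varrho]\, p_{\Hcal}^{0}\geq\operatorname{Tr}[Q\varrho]/d_{\Hcal}$ for all $\beta_{H}\in[0,\beta)$, uniformly in the energy structure. This is arguably cleaner than the paper's bound (no dependence on $d_{\Ccal}$ or $\mathcal{Z}_{\Scal}$), and it sharpens the paper's remark that drawing more energy cannot help into the statement that raising the hot gaps makes the obstruction worse, since $\mathcal{Z}_{\Hcal}\to 1$.

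Two points need tightening. First, ``dynamically stuck'' is literally false: population in $\ket{i,j,0}$ \emph{can} be moved into the $\Scal$-ground sector by exchanging it with the equal-weight states $\ket{0,j',0}$; only the net bound survives, so the correct formalisation is the tier-counting inside the Ky Fan step sketched above (which you have all the pieces for), not a claim that individual populations are frozen. Second, your $1/\mathcal{Z}_{\Hcal}$ bound, like the paper's $1/(d_{\Ccal}d_{\Hcal})$ bound, genuinely requires the machine itself to be finite dimensional, not merely the accessed subspace $\Acal$: if $\Hcal$ has infinitely many low-lying levels, $\mathcal{Z}_{\Hcal}$ diverges and both bounds become vacuous, so one must first invoke the reduction of a finite-complexity unitary to an effectively finite-dimensional machine (Appendix~\ref{app:conditionsstructuralcontrolcomplexity}); the paper sidesteps this by assuming $d_{\Ccal},d_{\Hcal}$ finite from the outset, and your proof should state the analogous assumption explicitly.
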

\noindent This result follows from the fact that in the incoherent-control setting, the target system can only interact with subspaces of the joint hot-and-cold machine with respect to which it is energy degenerate. For any operation of fixed control complexity, there is always a finite amount of population remaining outside of the accessible subspace, implying that perfect cooling cannot be achieved, independent of the amount of energy drawn from the hot bath.

\subsection{Saturating the Carnot-Landauer Limit}

The above result emphasises the difference between coherent and incoherent controlling, which means that it is \emph{a priori} unclear if the Carnot-Landauer bound is attainable and, if so, how to attain it. Indeed, the restriction to energy-conserving unitaries generally makes it difficult to tell if the ultimate bounds can be saturated in the incoherent-control setting, and which resources would be required to do so. We present a detailed study of cooling in the incoherent-control setting in Appendix~\ref{app:coolingprotocolsincoherentcontrolparadigm}, where we prove the following results. We begin by demonstrating incoherent cooling protocols that saturate the Landauer bound in the regime where the heat-bath temperature goes to infinity. We do so by fine tuning the machine structure such that the desired cooling transitions between the target system and the cold and hot parts of the machine are rendered energy conserving. In particular, we prove the following:
\begin{thm}\label{thm:autoinftempinftime}
In the incoherent control scenario, for an infinite-temperature hot bath $\beta_{\raisebox{-1pt}{\tiny{$H$}}} = 0$, any finite-dimensional system can be perfectly cooled at the Landauer limit with diverging time via interactions of finite control complexity. Similarly, the goal can be achieved in unit time with diverging control complexity. 
\end{thm}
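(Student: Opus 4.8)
The plan is to bootstrap from the coherent-control diverging-time protocol of Theorem~\ref{thm:inftimeFinTepFinDim} and show that, when the hot bath is at infinite temperature, every constituent operation can be promoted to a strictly energy-conserving unitary at no additional free-energy cost. The guiding observation is that at $\beta_H = 0$ the hot part of the machine is maximally mixed, $\tau_{\Hcal} = \mathbbm{1}_{\Hcal}/d_{\Hcal}$, so all its energy eigenstates are equally populated; consequently it can supply or absorb any energy quantum while carrying no free energy. This is exactly the statement that the Carnot--Landauer bound of Theorem~\ref{thm:main-landauer-incoherent} collapses to the ordinary Landauer bound~\eqref{eq:landauerlimit} as $\eta = 1 - \beta_H/\beta \to 1$, so that saturating the former in this regime is the same as saturating the latter.

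First I would establish an elementary dilation lemma. The coherent protocol cools by swapping the target with a sequence of cold subsystems of increasing energy gap; a single such swap exchanges the populations of two $\Scal\Ccal$ levels whose energies differ by some mismatch $\Delta$, and is therefore not energy conserving on its own. I would render it energy conserving by adjoining a uniformly-spaced hot-bath ladder of spacing $\Delta$ and applying the energy-conserving permutation that shifts the ladder by one rung in tandem with the $\Scal\Ccal$ exchange. A direct population count shows that, because the ladder is uniformly occupied, the channel induced on the $\Scal\Ccal$ marginal reproduces the desired swap up to a boundary term of order $1/d_{\Hcal}$ stemming from the two ends of the ladder, which vanishes as the hot-bath dimension diverges. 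This is the concrete mechanism by which the cooling transitions between the target and the cold and hot parts are rendered energy conserving.

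With this building block the two assertions follow as in the coherent case. For the diverging-time, finite-complexity statement I would run the incoherent analogue of the sequence in Theorem~\ref{thm:inftimeFinTepFinDim}: at step $k$ apply the energy-conserving swap to the target, a cold subsystem of gap $\omega_k$ (with $\omega_k \to \infty$), and a finite hot-bath ladder, then let both machine parts rethermalise (to $\beta$ and to $\beta_H = 0$, respectively) before the next step. Each interaction acts on a fixed finite-dimensional subspace and hence has finite control complexity, while the diverging number of steps with increasing gaps drives the target arbitrarily close to $\ketbra{0}{0}$; quasistaticity in this limit renders the process reversible, so the dissipation attains the Landauer value $\widetilde{\Delta} S_{\Scal}/\beta$. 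For the unit-time, diverging-complexity statement I would instead compress the whole sequence into a single energy-conserving unitary exactly as in Corollary~\ref{cor:infcomplexity}; this employs an infinite hot bath so that the swaps become exact, at the price of a diverging effective dimension (Def.~\ref{def:effectivedimension}).

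The hard part will be controlling, in the diverging-time case, three things at once: keeping the per-step effective dimension finite, ensuring the residual $O(1/d_{\Hcal})$ swap errors do not accumulate so as to spoil perfect cooling, and verifying that the energy accounting saturates the Landauer (not merely the Carnot--Landauer) limit. The cleanest route is to show that, with rethermalisation between steps, the target populations after step $k$ converge monotonically as $\omega_k \to \infty$, while the hot-bath energy drawn, weighted by $\eta = 1$, tends to $-\Delta F_{\Scal}^{(\beta)}$, so that the inequality~\eqref{eq:landauerincoherent2} is saturated. Establishing that the per-step errors are summable---or can be made so by choosing the (still finite) hot-bath size large enough at each gap---is the one genuinely technical point; everything else is bookkeeping inherited from the coherent construction.
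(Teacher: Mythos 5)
Your proposal is sound in substance and reaches both claims, but the mechanism you construct is genuinely different from the paper's. The paper does not dilate each non-resonant swap with an auxiliary ladder; instead it fine-tunes the machine so that the cooling transition is resonant from the outset: at stage $n$ the cold subsystem has gap $\omega_n = \omega_{\raisebox{0pt}{\tiny{$\Scal$}}} + n\epsilon$ and the hot subsystem has gap exactly $\omega_n - \omega_{\raisebox{0pt}{\tiny{$\Scal$}}}$, so that the three-body exchange $\ket{1,0,1}_{\raisebox{-1pt}{\tiny{$\Scal\Ccal\Hcal$}}} \leftrightarrow \ket{0,1,0}_{\raisebox{-1pt}{\tiny{$\Scal\Ccal\Hcal$}}}$ conserves energy exactly. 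The price is that this exchange acts only on a two-dimensional resonant subspace and is therefore a \emph{partial} swap; the paper compensates by repeating it with rethermalisation in between, so the target converges geometrically to the virtual temperature of the $\Ccal\Hcal$ virtual qubit, and the essential role of $\beta_{\raisebox{-1pt}{\tiny{$H$}}}=0$ is that only then does this virtual temperature coincide with the cold machine's own temperature, so that the gap increments $\epsilon_n$ may be taken arbitrarily small [the constraint $\epsilon_n \geq \gamma(\omega_n - \omega_{\raisebox{0pt}{\tiny{$\Scal$}}})$ with $\gamma = \beta_{\raisebox{-1pt}{\tiny{$H$}}}/(\beta-\beta_{\raisebox{-1pt}{\tiny{$H$}}})$ trivialises] and the quasistatic energy accounting goes through as in Theorem~\ref{thm:inftimeFinTepFinDim}. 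Your ladder dilation instead achieves an almost-full swap in a single shot: the population bookkeeping indeed gives a swap on the $\Scal\Ccal$ marginal up to a boundary term $(p_a - p_b)/d_{\raisebox{-1pt}{\tiny{$\Hcal$}}}$, so the lemma you posit is correct, and it makes the role of the uniform (infinite-temperature) ladder populations very transparent.

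The trade-off worth flagging concerns the ``finite control complexity'' clause, which you yourself identify as the hard part. In your route, vanishing per-step swap error forces $d_{\raisebox{-1pt}{\tiny{$\Hcal$}}} \to \infty$, and your patch (finite but growing ladders chosen so that the errors are summable) yields interactions that are each of finite effective dimension but admit no uniform bound along the sequence. This satisfies the letter of the paper's definition (each unitary acts nontrivially on a finite-dimensional subspace) but is strictly weaker than the paper's construction, in which every interaction is a fixed three-qubit (or fixed-dimension qudit) gate, so the complexity is uniformly bounded and the divergence is loaded entirely into time---which is the cleaner reading of the time/complexity dichotomy the theorem is meant to express. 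You can recover the uniform bound within your own scheme by fixing the ladder size and repeating each stage with rethermalisation (the deviation from the stage's virtual temperature then contracts geometrically in the number of repetitions), but at that point your protocol essentially coincides with the paper's, with a ladder playing the role of the resonant hot qubit.
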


Following our analysis of infinite-temperature heat baths, we study the more general case of finite-temperature heat baths. In Appendix~\ref{app:incoherentcoolingfinitetemperature}, we detail cooling protocols that saturate the Carnot-Landauer limit for any finite-temperature heat bath. More precisely, we prove:
\begin{thm}\label{thm:autofintempinftime}
In the incoherent control scenario, for any finite-temperature hot bath $0 < \beta_{\raisebox{-1pt}{\tiny{$H$}}} < \beta$, any finite-dimensional quantum system can be perfectly cooled at the Carnot-Landauer limit given diverging time via finite control complexity interactions. Similarly, the goal can be achieved in unit time with diverging control complexity. 
\end{thm}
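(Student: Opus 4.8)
The plan is to transplant the quasistatic swap construction behind Theorem~\ref{thm:inftimeFinTepFinDim} into the energy-conserving setting, engineering the cold and hot spectra so that each elementary cooling transition becomes degenerate with a matched machine transition and can therefore be driven by an energy-conserving unitary. First I would isolate the elementary cooling primitive on the three-body subspace spanned by $\ket{1}_{\Scal}\ket{0}_{\Ccal}\ket{1}_{\Hcal}$ and $\ket{0}_{\Scal}\ket{1}_{\Ccal}\ket{0}_{\Hcal}$: choosing the cold gap $E_{\Ccal}=E_{\Scal}+E_{\Hcal}$ renders these two states degenerate, so the swap between them de-excites the target while drawing the balancing energy from the hot part. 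Computing the post-swap marginal and iterating (with $\Ccal$ and $\Hcal$ rethermalising to $\beta$ and $\beta_{\raisebox{-1pt}{\tiny{$H$}}}$ between steps) shows that the target relaxes to the virtual inverse temperature
\begin{align}
    \beta_v = \beta + (\beta - \beta_{\raisebox{-1pt}{\tiny{$H$}}})\,\frac{E_{\Hcal}}{E_{\Scal}} > \beta,
\end{align}
so that the ratio $E_{\Hcal}/E_{\Scal}$ dials how cold the target becomes, with $\beta_v\to\infty$ as $E_{\Hcal}\to\infty$.

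For the diverging-time statement I would concatenate these primitives using a sequence of machines whose hot gaps increase slowly, so that at each step the virtual temperature lies only infinitesimally below the current target temperature; perfect cooling then follows as $\beta_v^{(k)}\to\infty$ along the sequence, with every interaction confined to a fixed finite-dimensional subspace and hence of finite effective dimension. For a qudit target one applies the primitive to each excited transition in turn to funnel all population into the ground state. To certify saturation of the Carnot-Landauer limit I would read off the equality of Theorem~\ref{thm:main-landauer-incoherent}: because the global unitary is energy conserving and the initial state is a product, conservation of global entropy forces the summed local entropy changes to equal the residual total correlation, $\Delta S_{\raisebox{-1pt}{\tiny{$\Scal$}}}+\Delta S_{\raisebox{-1pt}{\tiny{$\Ccal$}}}+\Delta S_{\raisebox{-1pt}{\tiny{$\Hcal$}}}=I_{\textup{tot}}^\prime\ge 0$. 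Saturation thus reduces to showing that $I_{\textup{tot}}^\prime$ together with the machine-disturbance terms $D(\varrho_{\raisebox{-1pt}{\tiny{$\Ccal$}}}^\prime\|\varrho_{\raisebox{-1pt}{\tiny{$\Ccal$}}})$ and $D(\varrho_{\raisebox{-1pt}{\tiny{$\Hcal$}}}^\prime\|\varrho_{\raisebox{-1pt}{\tiny{$\Hcal$}}})$ vanish; in the quasistatic limit each swap exchanges nearly equal populations and so perturbs the marginals only to second order in the virtual-temperature increment, whence these non-negative quantities sum to zero over the diverging sequence, exactly mirroring the coherent-control saturation.

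The unit-time statement then follows from the compression argument of Corollary~\ref{cor:infcomplexity}: the whole concatenation is absorbed into a single energy-conserving unitary on the enlarged (now infinite-dimensional) machine, trading diverging time for diverging control complexity while leaving every energetic and entropic quantity---and hence Carnot-Landauer saturation---untouched.

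I expect the principal obstacle to be the joint fine-tuning of the cold and hot spectra so that the \emph{entire} quasistatic trajectory from $\beta$ to the ground state is energy conserving at every step: in contrast to the coherent case, where any unitary is allowed, here each infinitesimal cooling transition must be resonant with a degenerate hot-plus-cold transition, and one must exhibit a single machine spectrum realising all of them simultaneously. Compounding this is the tension intrinsic to a finite-temperature bath---perfect cooling forces $E_{\Hcal}\to\infty$ (a diverging gap, consistent with Corollary~\ref{cor:structuralcondition}), yet saturation demands reversibility---so the rate at which the hot gap grows must be balanced against the shrinking step size to keep the relative-entropy terms vanishing while the target still reaches its ground state.
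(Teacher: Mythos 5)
Your proposal is correct and follows essentially the same route as the paper's proof: the same resonant three-body swap primitive with the cold gap fixed by $E_{\Ccal}=E_{\Scal}+E_{\Hcal}$, the same virtual-temperature formula $\beta_v=\beta+(\beta-\beta_{\raisebox{-1pt}{\tiny{$H$}}})E_{\Hcal}/E_{\Scal}$, the same staged quasistatic sequence of machines with repeated swaps and rethermalisation within each stage (and per-pair hot qubits for the qudit case), and the same compression argument for the unit-time claim. The only difference is presentational rather than strategic: you certify saturation through the Carnot-Landauer equality with a second-order dissipation estimate, whereas the paper computes the heat drawn from the hot bath explicitly, sandwiches it between left and right Riemann sums of the free-energy integral, and quantitatively propagates the errors from using only finitely many repetitions $m_n$ per stage---which is precisely the bookkeeping your plan defers but correctly identifies as the balancing act to be carried out.
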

\noindent As in the coherent-control setting, these protocols use either diverging time or control complexity to asymptotically saturate the Carnot-Landauer bound. The results presented in this section therefore provide a comprehensive understanding of the resources required to perfectly cool at minimum energy cost in a setting that aligns with the resource theories of thermodynamics. 

\pdfbookmark[1]{Imperfect Cooling with Finite Resources}{Imperfect Cooling with Finite Resources}
\section{Imperfect Cooling with Finite Resources}
\label{subsec:imperfectcooling}

The above results set the ultimate limitations for cooling inasmuch as the protocols saturate optimal bounds by using diverging resources. In reality, however, any practical implementation is limited to having only finite resources at its disposal. According to the third law, a perfectly pure state cannot be achieved in this scenario. Nonetheless, one can prepare a state of finite temperature by investing said resources appropriately. In this finite-resource setting, the interplay between energy, time, and control complexity is rather complicated. First, the cooling performance is stringent upon the chosen figure of merit for the notion of cool---the ground-state population, purity, average energy,
or temperature of the nearest thermal state are all reasonable candidates, but they differ in general~\cite{Clivaz_2019L}. Second, the total amount of resources available bounds the reachable temperature in any given protocol. Third, the details of the protocol itself influence the energy cost of achieving a desired temperature. In other words, determining the optimal distribution of resources is an extremely difficult task in general and remains an open question.

We therefore focus here on the paradigmatic special case of cooling a qubit target system by increasing its ground-state population in order to highlight some salient points regarding cooling to finite temperatures. First, we compare the finite performance of two distinct coherent control protocols that both asymptotically saturate the Landauer limit; nonetheless, at any finite time, their performance varies. The first protocol simply swaps the target qubit with one of a sequence of machine qubits whose energy gaps are distributed linearly; the second involves interacting the target with a high-dimensional machine with a particular degeneracy structure. Although the latter cannot be decomposed easily into a qubit circuit (thereby making it more difficult to implement in practice), one can compare the two protocols fairly by fixing the total (and effective) dimension to be equal, i.e., comparing the performance of the linear sequential qubit machine protocol after $N+1$ qubits have been accessed with that of the latter protocol with machine dimension $2^{N+1}$. In doing so, we see that the simpler former protocol outperforms the more difficult latter one in terms of the energy cost at finite times, emphasising the fact that difficulty in practice does not necessarily correspond to complexity as a thermodynamic resource. Additionally, we analyse the cooling rates at which energy and time can be traded off amongst each other in the linear qubit sequence protocol by deriving an analytic expression. Lastly, we compare the performance of a coherent and an incoherent control protocol that use a similar machine structure to achieve a desired final temperature. We see that the price one must pay for running the protocol via a heat engine is that either more steps or more complex operations are required to match the performance of the coherent control setting. This example serves to elucidate the connection between the two extremal control scenarios relevant for thermodynamics.

Although throughout most of the paper we focus on the asymptotic achievability of optimal cooling strategies, the protocols that we construct provide insight into how said asymptotic limits are approached. This facilitates a better understanding of the more practically relevant questions that are constrained when all resources are restricted to be finite: i) \emph{how cold can the target system be made?} and ii) \emph{at what energy cost?} In line with Nernst's third law, the answer to the former question cannot be perfectly cold (i.e., zero temperature). The answer depends upon how said resources are configured and utilized. For instance, given a single unitary interaction of finite complexity in the coherent-control setting, the ground-state population of the output state can be upper bounded in terms of the largest energy gap of the machine, $\omega_{\textup{max}}$ [see Eq.~\eqref{eq:maingeneralpuritybound}]. On the other hand, supposing that one can reuse a single machine system multiple times, then as the number of operation steps increases, the ground-state population of the output state approaches $(1+e^{-\beta \omega_{\textup{max}}})^{-1}$ from below~\cite{Clivaz_2019L}. There is clearly a trade-off relation here between time and complexity, and a systematic analysis of the rate at which these quantities can be traded off against one another warrants further investigation. Similarly, the energy cost to reach a desired final temperature also depends upon the distribution of resources, as we now examine. 

Given access to a machine of a certain size (as measured by its dimension), one could ask: \emph{what is the optimal configuration of machine energy spectrum and global unitary to cool a system as efficiently as possible?} Here, we compare two contrasting constructions for the cooling unitary in the coherent-control setting for a qubit target system (with energy gap $\omega_{\raisebox{-1pt}{\tiny{$\Scal$}}}$)---both of which asymptotically achieve Landauer cost cooling, but whose finite behaviour differs. The first protocol considers a machine of $N$ qubits whose energy gaps increase linearly from the first excited state energy level of the system $\omega_1 = \omega_{\raisebox{-1pt}{\tiny{$\Scal$}}}$ to some maximum energy level $\omega_{\raisebox{-1pt}{\tiny{$N$}}} = \omega_{\textup{max}}$, which dictates the final achievable temperature. In this protocol, the target system is swapped sequentially with each of the $N$ qubits in order of increasing energy gaps; we hence refer to it as the \emph{linear qubit machine sequence}. The second protocol we consider is presented in full in Appendix~\ref{app:finetunedcontrolconditions} and inspired by one presented in Ref.~\cite{Reeb_2014} (see Appendix D therein); we hence refer to it as the Reeb \& Wolf \textbf{(RW)} protocol. Here, the global unitary acts on the system and a high-dimensional machine with an equally spaced Hamiltonian whose degeneracy doubles with each increasing energy level, i.e., it has a singular ground state, a twofold degenerate first excited state, a fourfold degenerate second excited state, and so on; the final energy level has an extra state so that the total dimension is $2^{N+1}$ (where $N$ is the number of energy levels). In particular, the unitary performs the permutation that places the maximal amount of population in the ground state of the target system. Due to the structure of both protocols, one can make a fair comparison between them, contrasting the single unitary on a $2^{N}$-dimensional machine in the RW protocol versus the composition of $N$ two-qubit \texttt{SWAP} unitaries in the linear machine sequence, i.e., such that both protocols access a machine of the same size overall. 

\begin{figure}[t!]
    \centering
    \includegraphics[width=\linewidth]{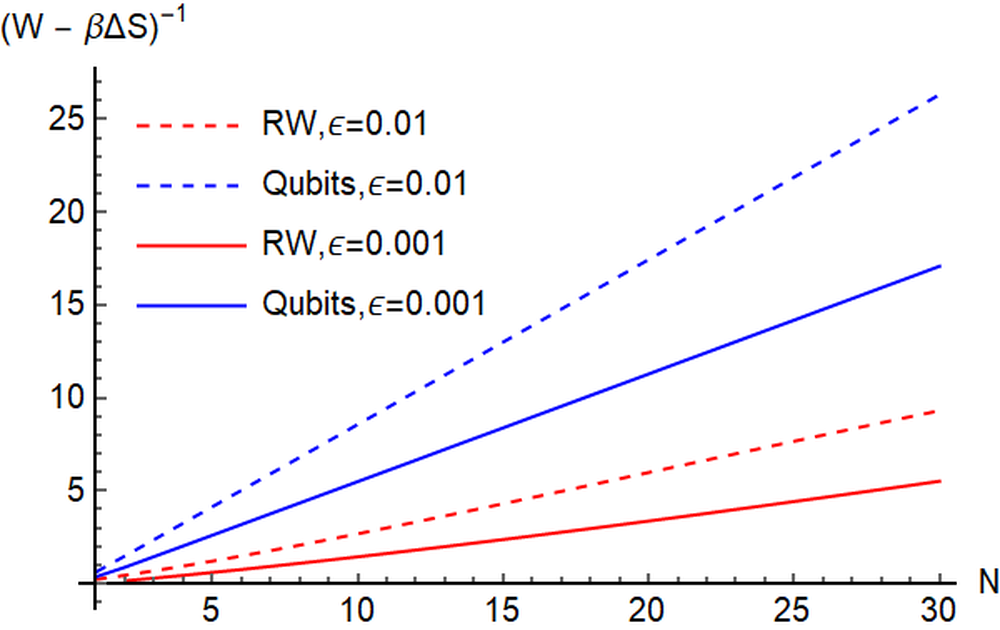}
    \caption{\emph{Imperfect Cooling.} We compare the cooling performance of a degenerate qubit target system using either $N$ machine qubits of linearly increasing energy accessed sequentially or a single unitary on a $2^N$-dimensional machine, the latter being a finite adaptation of a protocol presented in Ref.~\cite{Reeb_2014}. We set $\beta = 1$, choose units such that $\hbar=k_B = 1$, and fix $1-\epsilon$ to be the desired final ground-state population of the target. We plot the inverse of the excess work cost above the Landauer limit, $W-\beta\widetilde{\Delta} S_{\raisebox{-1pt}{\tiny{$\Scal$}}}$ (in units of the smallest machine energy gap, $\omega_{\raisebox{-1pt}{\tiny{$\Mcal$}}}^{\textup{min}}$), confirming that the surplus work cost in both cases scales with $N^{-1}$. Interestingly, we see that the protocol in which the target is sequentially swapped with machine qubits outperforms that which uses a high-dimensional unitary (at equal overall control complexity) in terms of energy cost required to reach a desired temperature.}
    \label{fig:RWqubits}
\end{figure}

As shown in Fig.~\ref{fig:RWqubits}, although both protocols asymptotically tend to the Landauer limit, their finite behaviour differs. Indeed, the work cost of the linear qubit machine sequence protocol outperforms that of the RW protocol. This is somewhat surprising, as the latter is a complex high-dimensional unitary whereas the former a composition of qubit swaps; although both protocols have the same effective dimension in this comparison overall, this highlights that difficulty in the lab setting need not correspond to resourcefulness in a thermodynamic sense. Indeed, developing optimal finite cooling strategies for arbitrary systems and machines is difficult in general and remains an important open question. Nonetheless, in Appendix~\ref{app:imperfectcooling}, we derive the rate of resource divergence of the sequential qubit protocol to further clarify the trade-off between time and energy for this protocol.

Finally, we contrast the two extremal thermodynamic paradigms considered by comparing the energy cost of a coherently controlled cooling protocol to an incoherently controlled one that achieves the same final ground-state population. Intuitively, the latter setting requires more resources to achieve the same performance as the former due to the fact that only energy-resonant subspaces can be accessed by the unitary, and hence only a subspace of the full machine is usable. This implies that a greater number of operations (of fixed control complexity) are required to achieve similar results as the coherent setting, as demonstrated in Appendix~\ref{app:imperfectcooling} explicitly. Indeed, determining the optimal cooling protocols for a range of realistic assumptions remains a major open avenue.

\pdfbookmark[2]{Perfect Cooling with Incoherent Control (Heat Engine)}{Perfect Cooling with Incoherent Control (Heat Engine)}

\pdfbookmark[1]{Discussion}{Discussion}
\section{Discussion}
\label{sec:discussion}

\pdfbookmark[2]{Relation to Previous Works}{Relation to Previous Works}
\subsection*{Relation to Previous Works}
\label{subsec:relationpreviousworks}

A vast amount of the literature concerning quantum thermodynamics considers resource theories (see Refs.~\cite{Ng_2018,Lostaglio_2019} and references therein), whose central question is: \emph{what transformations are possible given particular resources, and how can one quantify the value of a resource?} While this perspective sheds light on what is possible in principle, it does not per se concern itself with the potential implementation of said transformations. Yet, the unitary operations considered in a resource theory will themselves require certain resources to implement in practice. Focusing only on a resource-theoretic perspective would thus overlook the question: \emph{how does one optimally use said resources?} Our results focus on this latter question and highlight the role of control complexity in optimising resource use. 

Concurrently, by considering arbitrary unitary operations (akin to our coherent-control paradigm without limitations on machine size) Refs.~\cite{Anders_2013,Skrzypczyk_2014} and~\cite{Reeb_2014}, studied the potential saturation of the second law of thermodynamics and Landauer's limit, respectively. References~\cite{Skrzypczyk_2014} and~\cite{Anders_2013} develop a similar protocol to our diverging time protocol in the context of work extraction and demonstrate its optimality for saturating the second law. However, these works do not discuss the practical viewpoint that the goal can be achieved in a smaller number of operations by allowing the latter to be more complex, as we emphasise. On the other hand, Ref.~\cite{Reeb_2014} considers the resources required for saturation of the Landauer limit and show an important result regarding structural complexity, namely that the machine must be infinite dimensional to cool at the Landauer limit. Our analysis regarding complexity begins here and continues to elucidate the key complexity properties that enhance the efficiency of a cooling protocol. In particular, we show that an infinite-dimensional machine is not sufficient unless the controlled unitary indeed accesses the entire machine. This first leads to the notion of ``effective dimension'', which provides a good proxy for control complexity that is consistent with Nernst's third law for all types of quantum machines---from finite-dimensional systems to harmonic oscillators. Moreover, we highlight that the optimal interactions must be fine tuned, i.e., they must couple the system to particular energy gaps of the machine in a specific configuration, paving the way for a more nuanced definition of control complexity that takes into account the complicated and precise level of control required, as we present in terms of the ``energy-gap variety''. Lastly, we emphasise that the latter discussion concerns the coherent-control scenario, which is only one of the extremal control paradigms that we consider. In addition, we consider the task of cooling in a more thermodynamically consistent setting, namely the incoherent-control paradigm. There we derive the Carnot-Landauer equality and consequent inequality, which are adaptations of the Landauer equality~\cite{Reeb_2014} and inequality~\cite{Landauer_1961}, respectively, where the protocol can only be run via a heat engine.

On the more practical side, note that our work here concerns erasing quantum information encoded in fundamental rather than logical degrees of freedom. Our reasoning here is twofold: firstly, the ultimate limitations that we aim to understand are the same whether one wishes to cool a physical system or erase information; in other words, although it may be possible to save some finite trade-off costs for imperfect erasure in the coarse-grained setting, the resources required to perform a rank-reducing process asymptotically diverge in both cases. Secondly, it is much more difficult to create coherent superpositions in the case where information is redundantly encoded in macrostates, as this would require all microstates to be in phase (indeed, this is a major reason why quantum computers aim to encode information in fundamental degrees of freedom). For erasing quantum information using bulk (classical) cooling (i.e., coupling to a suitably engineered cold bath), the relevant condition is nondegeneracy of the ground state; additionally, many original Landauer thought experiments consider degenerate Hamiltonians for the computational states. In contrast, our protocols are based upon directly controlled cooling, which works independently of the target system Hamiltonian and as such bridges the gap between various perspectives. Moving forward, it will be interesting to explore how information can be erased cheaper if it is encoded in a coarse-grained fashion, in order to better square our fundamental results presented here with experimental demonstrations. Doing so would require finite versions of all of the systems and resources that we analyse here, which we leave for future exploration.

\pdfbookmark[2]{Conclusions and Outlook}{Conclusions and Outlook}
\subsection*{Conclusions \& Outlook}
\label{subsec:conclusionsoutlook}

The results of this work have wide-ranging implications. We have both generalised and unified Landauer's bound with respect to the laws of thermodynamics. In particular, we have posed the ultimate limitations for cooling quantum systems or erasing quantum information in terms of resource costs and presented protocols that asymptotically saturate these limits. Indeed, while it is well known that heat and time requirements must be minimised to combat the detrimental effects of fluctuation-induced errors and short decoherence times on quantum technologies~\cite{Acin_2018}, we have shown that this comes at a practical cost of greater control. In particular, we have demonstrated the necessity of implementing fine-tuned interactions involving a diverging number of energy levels to minimise energy and time costs, which serves to deliver a cautionary message: control complexity must be accounted for to build operationally meaningful resource theories of quantum thermodynamics. This result posits the energy-gap variety accessed by a unitary protocol as a meaningful quantifier of control complexity that is both fully consistent with the third law of thermodynamics and chimes well with what is difficult to achieve in practice. Our analysis of the incoherent-control setting further provides pragmatic ultimate limitations for the scenario where minimal control is required, in the sense that all transformations are driven by thermodynamic energy and entropy flows between two heat baths, which could be viewed as a thermodynamically driven quantum computer~\cite{Bennett_1982}. Nevertheless, the intricate relationship between various resources here will need to be further explored. 

Looking forward, we believe it will be crucial to go beyond asymptotic limits. While Landauer erasure and the third law of thermodynamics conventionally deal with the creation of pure states, practical results would need to consider cooling to a finite temperature (i.e., creating approximately pure states) with a finite amount of invested resources~\cite{Clivaz_2019E,Taranto_2020,zhen2021}. In this context, the trade-off between time and control complexity will gain more practical relevance, as realistic quantum technologies have limited coherence times and interaction Hamiltonians are limited to few-body terms. Here, operational measures of control complexity that fit the envisioned experimental setup present an important challenge that must be overcome to apply our results across various platforms. 

Our results strengthen the view that, in contrast to classical thermodynamics, the role of control is one of the most crucial issues to address before a true understanding of the limitations and potential of quantum machines is revealed. On the one hand, in classical systems, control is only ever achieved over few bulk degrees of freedom, whereas addressing and designing particular microstate control is within reach of current quantum technological platforms, offering additional routes towards operations enhanced by fine-tuned control. On the other hand, the cost of such control itself can quickly exceed the energy scale of the system, potentially rendering any perceived advantages a mirage. This is exacerbated by the fact that it is not possible to observe (measure) a quantum machine without incurring significant additional thermodynamic costs~\cite{Guryanova2020,DebarbaEtAl2019} and non-negligible backaction on the operation of the machine itself~\cite{Manzano_2019}. A fully developed theory of quantum thermodynamics would need to take these into account and we hope that our study sheds light on the role of control complexity in this endeavour.

\pdfbookmark[1]{Acknowledgments}{Acknowledgments}
\begin{acknowledgments}

The authors thank Elizabeth Agudelo and Paul Erker for very insightful discussions at the early stages of the project. 
P.T. acknowledges support from the Austrian Science Fund (FWF) project: Y879-N27 (START), the European Research Council (Consolidator grant `Cocoquest' 101043705), and the Japan Society for the Promotion of Science (JSPS) by KAKENHI Grant No. 21H03394.
F.B. is supported by FQXi Grant No. FQXi-IAF19-07 from the Foundational Questions Institute Fund, a donor advised fund of Silicon Valley Community Foundation. 
A.B. acknowledges support from the VILLUM FONDEN via the QMATH Centre of Excellence (Grant no.  10059) and from the QuantERA ERA-NET Cofund in Quantum Technologies implemented within the European Union's Horizon 2020 Programme (QuantAlgo project) via the Innovation Fund Denmark. 
R.S. acknowledges funding from the Swiss National Science Foundation via an Ambizione grant PZ00P2\_185986. 
N.F. is supported by the Austrian Science Fund (FWF) projects: P 36478-N and P 31339-N27. 
M.P.E.L. acknowledges financial support by the ESQ (Erwin Schr{\"o}dinger Center for Quantum Science \& Technology) Discovery programme, hosted by the Austrian Academy of Sciences ({\"O}AW) and TU Wien. 
G.V. is supported by the Austrian Science Fund (FWF) projects ZK 3 (Zukunftskolleg) and M 2462-N27 (Lise-Meitner). 
F.C.B. acknowledges support from the European Union's Horizon 2020 research and innovation programme under the Marie Sk{\l}odowska-Curie Grant Agreement No. 801110 and the Austrian Federal Ministry of Education, Science and Research (BMBWF). This project reflects only the authors' view, the EU Agency is not responsible for any use that may be made of the information it contains.
T.D. acknowledges support from the Brazilian agency CNPq INCT-IQ through the project (465469/2014-0). 
E.S. is supported by the Austrian Science Fund (FWF) project: Y879-N27 (START). 
F.C. is supported by the ERC Synergy grant HyperQ (Grant No. 856432).
M.H. is supported by the European Research Council (Consolidator grant `Cocoquest' 101043705), the Austrian Science Fund (FWF) project: Y879-N27 (START), and acknowledges financial support by the ESQ (Erwin Schr{\"o}dinger Center for Quantum Science \& Technology) Discovery programme, hosted by the Austrian Academy of Sciences ({\"O}AW). 
\end{acknowledgments}

\pdfbookmark[1]{References}{References}
\bibliographystyle{apsrev4-1fixed_with_article_titles_full_names_new}
%\bibliography{references}
%merlin.mbs apsrev4-1.bst 2010-07-25 4.21a (PWD, AO, DPC) hacked
%Control: key (0)
%Control: author (72) initials jnrlst
%Control: editor formatted (1) identically to author
%Control: production of article title (-1) disabled
%Control: page (0) single
%Control: year (1) truncated
%Control: production of eprint (0) enabled
%

\newpage
\onecolumngrid
\appendix

\starttocentries
%\vspace*{-5mm}
\pdfbookmark[0]{Supplemental Material}{Supplemental Material}
\vspace*{5mm}
\label{supplementalmaterial}
\phantomsection
\begin{center}
\begin{LARGE}
Supplemental Material
\end{LARGE}
\end{center}

\setcounter{tocdepth}{2}
\vspace*{-5mm}
\tableofcontents

\renewcommand{\thesection}{\Alph{section}}
\renewcommand{\thesubsection}{\Alph{section}\arabic{subsection}}
\renewcommand{\thesubsubsection}{\Alph{section}\arabic{subsection}\alph{subsubsection}}
\makeatletter
\renewcommand{\p@subsection}{}
\renewcommand{\p@subsubsection}{}
\makeatother

\section{Equality Forms of the (Carnot-)Landauer Limit}
\label{app:equalityformsofthecarnot-landauerlimit}

In this section, we present lower bounds on the energy change of the machine (or heat dissipated into its environment) in terms of the entropy change of the target system, both in the coherent and incoherent-control settings outlined in the main text. In the coherent setting, this amounts to the well-known Landauer principle~\cite{Landauer_1961}, whereas the incoherent setting requires an extension of this derivation. These lower bounds are important, because they put limits on the optimal energetic performance of the machines for cooling. Note, finally, that the initial state of the machine is diagonal in its energy eigenbasis and must remain so for any process saturating the (Carnot-)Landauer limit; moreover, the target begins similarly and ends up in the pure state $\ket{0}\!\bra{0}$ when perfect cooling is achieved. As a result, all quantities relevant to perfect cooling at the (Carnot-)Landauer limit can be computed in terms of their ``classical'' counterparts, i.e., $\varrho_{\raisebox{-1pt}{\tiny{$\Xcal$}}} \to p_{\raisebox{-1pt}{\tiny{$\Xcal$}}} := (p_0, \hdots, p_d)$ with $p_n = e^{-\beta E_n}$, $\tr{H \varrho_{\raisebox{-1pt}{\tiny{$\Xcal$}}}} \to \langle E \rangle_{p_{\raisebox{-1pt}{\tiny{$\Xcal$}}}} := \sum_n p_n E_n$, $S(\varrho_{\raisebox{-1pt}{\tiny{$\Xcal$}}}) \to S(p_{\raisebox{-1pt}{\tiny{$\Xcal$}}}) := -\sum_n p_n \log{(p_n)}$, $\Zcal(\beta, H_{\raisebox{-1pt}{\tiny{$\Xcal$}}}) = \sum_n e^{-\beta E_n}$, and so on. Nonetheless, all of the results presented hold for the more general ``quantum'' properties. 

\subsection{Coherent-Control Paradigm: The Landauer Limit}
\label{app:coherentcontrolparadigmlandauerlimit}

The coherent setting was already studied in detail in Ref.~\cite{Reeb_2014}, where the authors derived an equality version of Landauer's principle. We restate the results here for convenience, since we will also use them in the incoherent paradigm. Recall that the setting we consider consists of two parts, the target system $\Scal$ and the machine $\Mcal$. In the beginning, the joint state is $\varrho_{\raisebox{-1pt}{\tiny{$\Scal \Mcal$}}} = \varrho_{\raisebox{-1pt}{\tiny{$\Scal$}}} \otimes \tau_{\raisebox{-1pt}{\tiny{$\Mcal$}}}(\beta, H_{\raisebox{-1pt}{\tiny{$\Mcal$}}})$ for some arbitrary (but fixed) Hamiltonian $H_{\raisebox{-1pt}{\tiny{$\Mcal$}}}$ and $\beta \in \mathbb R$. Note that any full-rank state $\varrho$ can be associated to some chosen temperature $\beta$, which sets the energy scale, and a Hamiltonian $H = -\frac{1}{\beta} \log{(\varrho)}$; as we consider arbitrary Hamiltonians, we only write the state dependence on these parameters when necessary. If the state is not full rank, the rank can be used to redefine the dimension. We assume that both systems are finite dimensional. Let $U$ be a global unitary on $\Scal \Mcal$. We write $\varrho^\prime_{\raisebox{-1pt}{\tiny{$\Scal\Mcal$}}} := U [\varrho_{\raisebox{-1pt}{\tiny{$\Scal$}}} \otimes \tau_{\raisebox{-1pt}{\tiny{$\Mcal$}}}(\beta, H_{\raisebox{-1pt}{\tiny{$\Mcal$}}})]U^\dagger$ and denote by $\varrho_{\raisebox{-1pt}{\tiny{$\Scal$}}}^\prime$ and $\varrho_{\raisebox{-1pt}{\tiny{$\Mcal$}}}^\prime$ the respective reduced states. The quantity $I(\Scal: \Mcal)_{\varrho^\prime_{\raisebox{-1pt}{\tiny{$\Scal \Mcal$}}}} = S(\varrho_{\raisebox{-1pt}{\tiny{$\Scal$}}}^\prime) + S(\varrho_{\raisebox{-1pt}{\tiny{$\Mcal$}}}^\prime) - S(\varrho^\prime_{\raisebox{-1pt}{\tiny{$\Scal \Mcal$}}})$ is the final mutual information between $\Scal$ and $\Mcal$ and $D(\varrho^\prime_{\raisebox{-1pt}{\tiny{$\Mcal$}}}||\varrho_{\raisebox{-1pt}{\tiny{$\Mcal$}}}) = \tr{\varrho^\prime_{\raisebox{-1pt}{\tiny{$\Mcal$}}} \log(\varrho^\prime_{\raisebox{-1pt}{\tiny{$\Mcal$}}})}-\tr{\varrho^\prime_{\raisebox{-1pt}{\tiny{$\Mcal$}}} \log(\varrho_{\raisebox{-1pt}{\tiny{$\Mcal$}}})}$ is the relative entropy of the final machine state with respect to its initial state.

\begin{lem}[{\cite[Lemma 2]{Reeb_2014}}] \label{lem:second-law-lemma}
Let the setting be as above. Then
\begin{equation}
    [S(\varrho_{\raisebox{-1pt}{\tiny{$\Scal$}}}^\prime) - S(\varrho_{\raisebox{-1pt}{\tiny{$\Scal$}}})] + [S(\varrho_{\raisebox{-1pt}{\tiny{$\Mcal$}}}^\prime) - S(\varrho_{\raisebox{-1pt}{\tiny{$\Mcal$}}})] = I(\Scal: \Mcal)_{\varrho^\prime_{\raisebox{-1pt}{\tiny{$\Scal \Mcal$}}}} \geq 0.
\end{equation}
\end{lem}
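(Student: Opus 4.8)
The plan is to reduce the claimed equality to two completely elementary facts about the von Neumann entropy — invariance under unitary conjugation and additivity on product states — and then to obtain the inequality from subadditivity. The structure of the right-hand side (a mutual information evaluated on the \emph{final} state) together with the product structure of the \emph{initial} state is exactly what makes the local entropy changes collapse into a single quantity.

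First I would rewrite the left-hand side by regrouping the marginal entropies,
\begin{equation}
[S(\varrho_{\Scal}^\prime) - S(\varrho_{\Scal})] + [S(\varrho_{\Mcal}^\prime) - S(\varrho_{\Mcal})] = [S(\varrho_{\Scal}^\prime) + S(\varrho_{\Mcal}^\prime)] - [S(\varrho_{\Scal}) + S(\varrho_{\Mcal})] .
\end{equation}
Next I would invoke the hypothesis that the initial joint state factorises, $\varrho_{\Scal\Mcal} = \varrho_{\Scal} \otimes \tau_{\Mcal}(\beta, H_{\Mcal})$, so that entropy is additive on it and $S(\varrho_{\Scal}) + S(\varrho_{\Mcal}) = S(\varrho_{\Scal\Mcal})$; equivalently, the initial mutual information between $\Scal$ and $\Mcal$ vanishes. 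Then I would use that $\varrho_{\Scal\Mcal}^\prime = U \varrho_{\Scal\Mcal} U^\dagger$ and that the von Neumann entropy is invariant under unitary conjugation, giving $S(\varrho_{\Scal\Mcal}^\prime) = S(\varrho_{\Scal\Mcal}) = S(\varrho_{\Scal}) + S(\varrho_{\Mcal})$.

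Substituting this into the regrouped expression yields
\begin{equation}
[S(\varrho_{\Scal}^\prime) - S(\varrho_{\Scal})] + [S(\varrho_{\Mcal}^\prime) - S(\varrho_{\Mcal})] = S(\varrho_{\Scal}^\prime) + S(\varrho_{\Mcal}^\prime) - S(\varrho_{\Scal\Mcal}^\prime) = I(\Scal : \Mcal)_{\varrho_{\Scal\Mcal}^\prime},
\end{equation}
which is precisely the definition of the final mutual information and establishes the equality. The inequality $I(\Scal : \Mcal)_{\varrho_{\Scal\Mcal}^\prime} \geq 0$ then follows immediately from subadditivity of the von Neumann entropy, $S(\varrho_{\Scal}^\prime) + S(\varrho_{\Mcal}^\prime) \geq S(\varrho_{\Scal\Mcal}^\prime)$.

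I do not expect a genuine obstacle here: the result is a two-line consequence of standard entropy identities. The only points that warrant care are making sure the finite-dimensionality assumption (stated in the setting) legitimises both the additivity of entropy on the product and the unitary invariance, and emphasising that the product structure of the \emph{initial} state is essential — it is what makes the initial mutual information drop out, so that the entire right-hand side is carried by the final correlations. Were the initial state correlated, the equality would instead pick up a difference of mutual informations, so I would flag this factorisation as the load-bearing hypothesis.
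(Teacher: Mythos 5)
Your proof is correct and follows essentially the same route as the paper: both use additivity of the von Neumann entropy on the product initial state together with unitary invariance to rewrite the sum of local entropy changes as the final mutual information, with non-negativity from subadditivity. Your additional remark that the product structure of the initial state is the load-bearing hypothesis is accurate and consistent with the paper's reasoning.
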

\begin{proof}
We note that 
\begin{equation}
    [S(\varrho_{\raisebox{-1pt}{\tiny{$\Scal$}}}^\prime) - S(\varrho_{\raisebox{-1pt}{\tiny{$\Scal$}}})] + [S(\varrho_{\raisebox{-1pt}{\tiny{$\Mcal$}}}^\prime) - S(\varrho_{\raisebox{-1pt}{\tiny{$\Mcal$}}})] = S(\varrho_{\raisebox{-1pt}{\tiny{$\Scal$}}}^\prime) + S(\varrho_{\raisebox{-1pt}{\tiny{$\Mcal$}}}^\prime) - S(\varrho_{\raisebox{-1pt}{\tiny{$\Scal \Mcal$}}}^\prime),
\end{equation}
since the von Neumann entropy is additive for product states and invariant under unitary evolution. The assertion follows from the definition of the mutual information and the fact that it is non-negative.
\end{proof}

\begin{thm}[Equality form of Landauer's principle, {\cite[Theorem 3]{Reeb_2014}}]
Let the setting be as above. Then
\begin{equation}
     \beta \, \mathrm{tr}[H_{\raisebox{-1pt}{\tiny{$\Mcal$}}}(\varrho_{\raisebox{-1pt}{\tiny{$\Mcal$}}}^\prime-\varrho_{\raisebox{-1pt}{\tiny{$\Mcal$}}})] - [S(\varrho_{\raisebox{-1pt}{\tiny{$\Scal$}}}) - S(\varrho_{\raisebox{-1pt}{\tiny{$\Scal$}}}^\prime)] =  I(\Scal : \Mcal)_{\varrho^\prime_{\raisebox{-1pt}{\tiny{$\Scal \Mcal$}}}} + D(\varrho_{\raisebox{-1pt}{\tiny{$\Mcal$}}}^\prime||\varrho_{\raisebox{-1pt}{\tiny{$\Mcal$}}}) \geq 0.
\end{equation}
\end{thm}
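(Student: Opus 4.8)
The plan is to prove this as an exact identity (not an estimate) by combining two ingredients: the Gibbs form of the initial machine state, which lets me rewrite $\beta H_{\Mcal}$ in terms of $\log\varrho_{\Mcal}$, and the already-established second-law lemma (Lemma~\ref{lem:second-law-lemma}), which lets me trade the machine's entropy change for the system's. The non-negativity is then a free corollary of the non-negativity of the two terms on the right.

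First I would unpack the relative-entropy term. Since the machine begins thermal, $\varrho_{\Mcal} = \tau_{\Mcal} = e^{-\beta H_{\Mcal}}/\Zcal_{\Mcal}$, so $\log\varrho_{\Mcal} = -\beta H_{\Mcal} - (\log\Zcal_{\Mcal})\,\id_{\Mcal}$. Inserting this into $D(\varrho_{\Mcal}'\|\varrho_{\Mcal}) = \tr{\varrho_{\Mcal}'\log\varrho_{\Mcal}'} - \tr{\varrho_{\Mcal}'\log\varrho_{\Mcal}}$ and using $\tr{\varrho_{\Mcal}'}=1$ turns the second trace into $\beta\,\tr{H_{\Mcal}\varrho_{\Mcal}'} + \log\Zcal_{\Mcal}$, giving $D(\varrho_{\Mcal}'\|\varrho_{\Mcal}) = -S(\varrho_{\Mcal}') + \beta\,\tr{H_{\Mcal}\varrho_{\Mcal}'} + \log\Zcal_{\Mcal}$. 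The additive constant is fixed by running the same computation on $\varrho_{\Mcal}$ itself, where the relative entropy vanishes, yielding $\log\Zcal_{\Mcal} = S(\varrho_{\Mcal}) - \beta\,\tr{H_{\Mcal}\varrho_{\Mcal}}$. Combining the two produces the central identity
\[
D(\varrho_{\Mcal}'\|\varrho_{\Mcal}) = \beta\,\tr{H_{\Mcal}(\varrho_{\Mcal}'-\varrho_{\Mcal})} - [S(\varrho_{\Mcal}') - S(\varrho_{\Mcal})],
\]
i.e.\ the relative entropy equals the heat dissipated into the machine (measured in units of $\beta^{-1}$) minus the machine's entropy increase. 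This is the heart of the argument.

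Next I would eliminate the machine entropy change via Lemma~\ref{lem:second-law-lemma}, which gives $[S(\varrho_{\Mcal}') - S(\varrho_{\Mcal})] = I(\Scal:\Mcal)_{\varrho'_{\Scal\Mcal}} - [S(\varrho_{\Scal}') - S(\varrho_{\Scal})]$. Substituting this into the identity above and rearranging yields
\[
\beta\,\tr{H_{\Mcal}(\varrho_{\Mcal}'-\varrho_{\Mcal})} + [S(\varrho_{\Scal}') - S(\varrho_{\Scal})] = I(\Scal:\Mcal)_{\varrho'_{\Scal\Mcal}} + D(\varrho_{\Mcal}'\|\varrho_{\Mcal}),
\]
which is precisely the claimed equality once $S(\varrho_{\Scal}') - S(\varrho_{\Scal})$ is rewritten as $-[S(\varrho_{\Scal}) - S(\varrho_{\Scal}')]$ in the paper's tilde sign convention. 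The inequality $\geq 0$ then follows immediately because mutual information and relative entropy are each non-negative.

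I do not anticipate a genuine obstacle, since the statement is an exact identity with nothing to bound. The only points requiring care are bookkeeping ones: tracking the partition-function constant correctly, and keeping the two opposite sign conventions straight (energy differences $\Delta E = E'-E$ versus entropy differences $\widetilde{\Delta}S = S-S'$). I would also check that $\log\varrho_{\Mcal}$ is well defined, which holds because $\tau_{\Mcal}$ is full rank for any finite $\beta$; in the degenerate case $\mathrm{supp}(\varrho_{\Mcal}') \nsubseteq \mathrm{supp}(\varrho_{\Mcal})$ the relative entropy is $+\infty$ by convention and both sides diverge consistently, so the identity still holds.
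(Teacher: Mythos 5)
Your proposal is correct and follows essentially the same route as the paper: both use the Gibbs form of the initial machine state to expand $D(\varrho_{\raisebox{-1pt}{\tiny{$\Mcal$}}}^\prime\|\varrho_{\raisebox{-1pt}{\tiny{$\Mcal$}}})$ in terms of the machine's energy change and entropies, and both invoke Lemma~\ref{lem:second-law-lemma} to trade the machine's entropy change for the system's entropy change plus the mutual information. The only differences are the order of substitution and your added (welcome but inessential) remarks on full rank and the support convention.
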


\begin{proof}
From Lemma \ref{lem:second-law-lemma}, it follows that 
\begin{equation}
    [S(\varrho_{\raisebox{-1pt}{\tiny{$\Scal$}}}) - S(\varrho_{\raisebox{-1pt}{\tiny{$\Scal$}}}^\prime)] + I(\Scal : \Mcal)_{\varrho^\prime_{\raisebox{-1pt}{\tiny{$\Scal \Mcal$}}}} = S(\varrho_{\raisebox{-1pt}{\tiny{$\Mcal$}}}^\prime) - S(\varrho_{\raisebox{-1pt}{\tiny{$\Mcal$}}}).
    \label{eq:th8}
\end{equation}
Using the fact that $\varrho_{\raisebox{-1pt}{\tiny{$\Mcal$}}} = \tau_{\raisebox{-1pt}{\tiny{$\Mcal$}}}(\beta, H_{\raisebox{-1pt}{\tiny{$\Mcal$}}})$, we infer that
$D(\varrho_{\raisebox{-1pt}{\tiny{$\Mcal$}}}^\prime||\varrho_{\raisebox{-1pt}{\tiny{$\Mcal$}}})=-S(\varrho_{\raisebox{-1pt}{\tiny{$\Mcal$}}}^\prime)+\beta \mathrm{tr}[H_{\raisebox{-1pt}{\tiny{$\Mcal$}}} \varrho_{\raisebox{-1pt}{\tiny{$\Mcal$}}}^\prime] + \log{[\mathrm{tr}(e^{-\beta H_{\raisebox{-1pt}{\tiny{$\Mcal$}}}})]}$ and $S(\varrho_{\raisebox{-1pt}{\tiny{$\Mcal$}}}) = \beta \mathrm{tr}[H_{\raisebox{-1pt}{\tiny{$\Mcal$}}} \varrho_{\raisebox{-1pt}{\tiny{$\Mcal$}}}] + \log{[\mathrm{tr}(e^{-\beta H_{\raisebox{-1pt}{\tiny{$\Mcal$}}}})]}$. Re-expressing the first of these for $S(\varrho_{\raisebox{-1pt}{\tiny{$\Mcal$}}}^\prime)$ and inserting both into Eq.~\eqref{eq:th8} yields the claimed equality. The inequality results from non-negativity of relative entropy and mutual information. This completes the proof.
\end{proof}

\subsection{Incoherent-Control Paradigm: The Carnot-Landauer Limit}
\label{app:incoherentcontrolparadigmcarnot-landauerlimit}

Landauer's principle provides a relationship between how much heat must necessarily be dissipated into the thermal background environment upon manipulating the entropy of a given quantum system. Until now, we have assumed that the system of interest can interact arbitrarily with its environment (i.e., the machine); in other words, we have considered general joint unitary interactions between system and machine, without restriction. In doing so, we have tacitly assumed the ability to draw energy from some external resource (i.e., a work source) in order to implement said unitaries, which are in general not energy preserving. The particularities of such a resource are left as an abstraction. However, from a thermodynamicists' perspective, this setting may seem somewhat unsatisfactory, as the joint target-machine system is not energetically closed. In order to provide a more self-contained picture of the cooling procedure, one can explicitly include the energy resource, modelled as a quantum system itself, into the setting. 

To this end, note first that said resource must be out of thermal equilibrium with respect to the target and machine in order to perform any meaningful thermodynamic transformation. Furthermore, it is sensible to assume that the energy resource system is in thermal equilibrium with its own environment to begin with. The joint target-machine-resource system is then considered to be energetically closed; as such, global unitaries in this setting are restricted to be energy conserving. In order to act as a resource for cooling the target in this picture, the energy source here must begin in equilibrium with a heat bath that is hotter than the initial temperature of the machine (assuming that the machine and resource both begin in thermal states), such that a natural heat flow is induced that leads the environment of the machine to act as a final heat sink. This setting is what we call the incoherent-control scenario. In this context, Landauer's principle translates to studying the relationship between the heat that is necessarily dissipated into the machine's environment upon manipulating the entropy of the target system. Finally, note that the relationship between the coherent and the incoherent-control paradigms is interesting in itself: while on the one hand the incoherent setting includes an additional system and therefore increases the dimensionality of the overall joint system, on the other hand by restricting the transformations on this larger space to be energy conserving, one limits the orbit of attainable states.

Now let us consider the incoherent-control setting. Here, we have the target system $\Scal$ and the machine comprises of one part $\Ccal$ coupled to the cold bath and another $\Hcal$ coupled to the hot bath. We assume that all systems are finite-dimensional. Every subsystem $\Acal$ is associated to a Hamiltonian $H_{\raisebox{-1pt}{\tiny{$\Acal$}}}$ and $\Ccal$, $\Hcal$ are initially in a thermal state; the cold bath has inverse temperature $\beta$ and the hot bath has inverse temperature $\beta_{\raisebox{-1pt}{\tiny{$H$}}} < \beta$. We assume $\beta$, $\beta_{\raisebox{-1pt}{\tiny{$H$}}}$. Thus, the initial joint state is $\varrho_{\raisebox{-1pt}{\tiny{$\Scal \Ccal \Hcal$}}} = \varrho_{\raisebox{-1pt}{\tiny{$\Scal$}}} \otimes \tau_{\raisebox{-1pt}{\tiny{$\Ccal$}}}(\beta, H_{\raisebox{-1pt}{\tiny{$\Ccal$}}}) \otimes \tau_{\raisebox{-1pt}{\tiny{$\Hcal$}}}(\beta_{\raisebox{-1pt}{\tiny{$H$}}}, H_{\raisebox{-1pt}{\tiny{$\Hcal$}}})$. The global evolution on $\Scal\Ccal\Hcal$ is implemented via a unitary $U$, leading to $\varrho_{\raisebox{-1pt}{\tiny{$\Scal \Ccal \Hcal$}}}^\prime = U (\varrho_{\raisebox{-1pt}{\tiny{$\Scal \Ccal \Hcal$}}}) U^\dagger $. We further assume that the unitary evolution on the joint system is energy conserving, i.e.,\ $[U, H_{\raisebox{-1pt}{\tiny{$\Scal$}}} + H_{\raisebox{-1pt}{\tiny{$\Ccal$}}} + H_{\raisebox{-1pt}{\tiny{$\Hcal$}}}] = 0$. We write $\Delta S_{\raisebox{-1pt}{\tiny{$\Acal$}}} := S(\varrho^\prime_{\raisebox{-1pt}{\tiny{$\Acal$}}}) - S(\varrho_{\raisebox{-1pt}{\tiny{$\Acal$}}})$ for the entropy change on subsystem $\Acal$ and $\Delta E_{\raisebox{-1pt}{\tiny{$\Acal$}}} := \mathrm{tr}[H_{\raisebox{-1pt}{\tiny{$\Acal$}}}(\varrho^\prime_{\raisebox{-1pt}{\tiny{$\Acal$}}} - \varrho_{\raisebox{-1pt}{\tiny{$\Acal$}}})]$ for the average energy change. Moreover, the free energy of a state $\varrho_{\raisebox{-1pt}{\tiny{$\Acal$}}}$ with respect to the inverse temperature $\beta$ is $F_{\raisebox{-1pt}{\tiny{$\beta$}}}(\varrho_{\raisebox{-1pt}{\tiny{$\Acal$}}}) = \mathrm{tr}[H_{\raisebox{-1pt}{\tiny{$\Acal$}}} \varrho_{\raisebox{-1pt}{\tiny{$\Acal$}}}] - \beta^{-1} S(\varrho_{\raisebox{-1pt}{\tiny{$\Acal$}}})$.

In the incoherent setting, it makes sense to look at the energy decrease in the hot bath $\mathcal{H}$, since the hot bath can be seen as the energetic resource one must to expend in order to cool the system $\Scal$ (alternatively, as we present after the following theorem, one can consider the energy dissipated into the cold bath $\Ccal$, which serves as the heat sink).

\begin{thm} \label{thm:landauer-incoherent}
In the above setting, it holds that
\begin{equation}
    \Delta F_{\raisebox{-1pt}{\tiny{$\Scal$}}}^{(\beta)} + \eta \Delta E_{\raisebox{-1pt}{\tiny{$\Hcal$}}} = - \frac{1}{\beta}[\Delta S_{\raisebox{-1pt}{\tiny{$\Scal$}}} + \Delta S_{\raisebox{-1pt}{\tiny{$\Ccal$}}} + \Delta S_{\raisebox{-1pt}{\tiny{$\Hcal$}}} + D(\varrho_{\raisebox{-1pt}{\tiny{$\Ccal$}}}^\prime||\varrho_{\raisebox{-1pt}{\tiny{$\Ccal$}}}) +  D(\varrho_{\raisebox{-1pt}{\tiny{$\Hcal$}}}^\prime||\varrho_{\raisebox{-1pt}{\tiny{$\Hcal$}}})] \leq 0,
\end{equation}
where $(0,1) \ni \eta := 1 - \beta_{\raisebox{-1pt}{\tiny{$H$}}}/\beta $ is the Carnot efficiency and $\Delta F_{\raisebox{-1pt}{\tiny{$\Scal$}}}^{(\beta)} =  F_{\raisebox{-1pt}{\tiny{$\beta$}}}(\varrho_{\raisebox{-1pt}{\tiny{$\Scal$}}}^\prime) - F_{\raisebox{-1pt}{\tiny{$\beta$}}}(\varrho_{\raisebox{-1pt}{\tiny{$\Scal$}}})$. 
\end{thm}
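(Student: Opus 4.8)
The plan is to prove the equality by a direct term-by-term rewriting and then read off the inequality from the non-negativity of the individual pieces. Two structural facts drive the whole argument. First, the initial state factorises, $\varrho_{\raisebox{-1pt}{\tiny{$\Scal\Ccal\Hcal$}}} = \varrho_{\raisebox{-1pt}{\tiny{$\Scal$}}}\otimes\tau_{\raisebox{-1pt}{\tiny{$\Ccal$}}}(\beta,H_{\raisebox{-1pt}{\tiny{$\Ccal$}}})\otimes\tau_{\raisebox{-1pt}{\tiny{$\Hcal$}}}(\beta_H,H_{\raisebox{-1pt}{\tiny{$\Hcal$}}})$. Second, $U$ is energy-conserving, $[U,H_{\raisebox{-1pt}{\tiny{$\Scal$}}}+H_{\raisebox{-1pt}{\tiny{$\Ccal$}}}+H_{\raisebox{-1pt}{\tiny{$\Hcal$}}}]=0$, which upon tracing against the total Hamiltonian yields the global energy-balance relation $\Delta E_{\raisebox{-1pt}{\tiny{$\Scal$}}}+\Delta E_{\raisebox{-1pt}{\tiny{$\Ccal$}}}+\Delta E_{\raisebox{-1pt}{\tiny{$\Hcal$}}}=0$.

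The crucial step is to treat both baths on the same footing. For a bath $\Acal\in\{\Ccal,\Hcal\}$ initially in the Gibbs state $\tau_{\raisebox{-1pt}{\tiny{$\Acal$}}}(\beta_{\raisebox{-1pt}{\tiny{$\Acal$}}},H_{\raisebox{-1pt}{\tiny{$\Acal$}}})$, I would expand the relative entropy and substitute $\log\tau_{\raisebox{-1pt}{\tiny{$\Acal$}}}=-\beta_{\raisebox{-1pt}{\tiny{$\Acal$}}}H_{\raisebox{-1pt}{\tiny{$\Acal$}}}-\log\Zcal_{\raisebox{-1pt}{\tiny{$\Acal$}}}$ to get $D(\varrho_{\raisebox{-1pt}{\tiny{$\Acal$}}}^\prime\|\varrho_{\raisebox{-1pt}{\tiny{$\Acal$}}})=-S(\varrho_{\raisebox{-1pt}{\tiny{$\Acal$}}}^\prime)+\beta_{\raisebox{-1pt}{\tiny{$\Acal$}}}\tr{H_{\raisebox{-1pt}{\tiny{$\Acal$}}}\varrho_{\raisebox{-1pt}{\tiny{$\Acal$}}}^\prime}+\log\Zcal_{\raisebox{-1pt}{\tiny{$\Acal$}}}$, while the initial entropy obeys $S(\varrho_{\raisebox{-1pt}{\tiny{$\Acal$}}})=\beta_{\raisebox{-1pt}{\tiny{$\Acal$}}}\tr{H_{\raisebox{-1pt}{\tiny{$\Acal$}}}\varrho_{\raisebox{-1pt}{\tiny{$\Acal$}}}}+\log\Zcal_{\raisebox{-1pt}{\tiny{$\Acal$}}}$. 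Adding these, the entropy and partition-function pieces cancel and leave the clean identity $D(\varrho_{\raisebox{-1pt}{\tiny{$\Acal$}}}^\prime\|\varrho_{\raisebox{-1pt}{\tiny{$\Acal$}}})+\Delta S_{\raisebox{-1pt}{\tiny{$\Acal$}}}=\beta_{\raisebox{-1pt}{\tiny{$\Acal$}}}\Delta E_{\raisebox{-1pt}{\tiny{$\Acal$}}}$, with $\beta_{\raisebox{-1pt}{\tiny{$\Ccal$}}}=\beta$ and $\beta_{\raisebox{-1pt}{\tiny{$\Hcal$}}}=\beta_H$. This converts the informational quantities attached to the baths into heat flows.

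Assembling the right-hand side is then immediate: substituting the two identities collapses the bracket to $\Delta S_{\raisebox{-1pt}{\tiny{$\Scal$}}}+\beta\Delta E_{\raisebox{-1pt}{\tiny{$\Ccal$}}}+\beta_H\Delta E_{\raisebox{-1pt}{\tiny{$\Hcal$}}}$; dividing by $-\beta$ and eliminating $\Delta E_{\raisebox{-1pt}{\tiny{$\Ccal$}}}$ via energy conservation gives $\Delta E_{\raisebox{-1pt}{\tiny{$\Scal$}}}-\beta^{-1}\Delta S_{\raisebox{-1pt}{\tiny{$\Scal$}}}+(1-\beta_H/\beta)\Delta E_{\raisebox{-1pt}{\tiny{$\Hcal$}}}$, in which I recognise the first two terms as $\Delta F_{\raisebox{-1pt}{\tiny{$\Scal$}}}^{(\beta)}$ and the coefficient of $\Delta E_{\raisebox{-1pt}{\tiny{$\Hcal$}}}$ as $\eta$, recovering the left-hand side and proving the equality. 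For the inequality I would show every term in the bracket is non-negative: the two relative entropies by Klein's inequality, and the entropy changes because the global entropy is unitarily invariant and the initial state factorises, so $S(\varrho_{\raisebox{-1pt}{\tiny{$\Scal$}}})+S(\varrho_{\raisebox{-1pt}{\tiny{$\Ccal$}}})+S(\varrho_{\raisebox{-1pt}{\tiny{$\Hcal$}}})=S(\varrho_{\raisebox{-1pt}{\tiny{$\Scal\Ccal\Hcal$}}})=S(\varrho_{\raisebox{-1pt}{\tiny{$\Scal\Ccal\Hcal$}}}^\prime)$; subadditivity then yields $\Delta S_{\raisebox{-1pt}{\tiny{$\Scal$}}}+\Delta S_{\raisebox{-1pt}{\tiny{$\Ccal$}}}+\Delta S_{\raisebox{-1pt}{\tiny{$\Hcal$}}}=S(\varrho_{\raisebox{-1pt}{\tiny{$\Scal$}}}^\prime)+S(\varrho_{\raisebox{-1pt}{\tiny{$\Ccal$}}}^\prime)+S(\varrho_{\raisebox{-1pt}{\tiny{$\Hcal$}}}^\prime)-S(\varrho_{\raisebox{-1pt}{\tiny{$\Scal\Ccal\Hcal$}}}^\prime)\geq 0$, this being the total correlation of the final state. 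Since $\beta>0$, the overall factor $-1/\beta$ flips the sign and delivers $\Delta F_{\raisebox{-1pt}{\tiny{$\Scal$}}}^{(\beta)}+\eta\Delta E_{\raisebox{-1pt}{\tiny{$\Hcal$}}}\leq 0$.

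The proof is essentially careful bookkeeping, so there is no deep obstacle; the one place demanding care is the combined use of energy conservation and the two distinct bath temperatures to make the Carnot factor $\eta$ emerge, where a sign slip or a misplaced $\beta_H/\beta$ ratio is the most likely source of error. As a consistency check I would take the limit $\beta_H\to 0$ (an infinite-temperature hot bath, $\eta\to 1$) and verify that the per-bath identity for $\Hcal$ degenerates correctly and the standard coherent-control Landauer equality of Appendix~\ref{app:coherentcontrolparadigmlandauerlimit} is recovered.
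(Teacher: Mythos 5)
Your proof is correct and follows essentially the same route as the paper's: the per-bath identity $\Delta S_{\raisebox{-1pt}{\tiny{$\Acal$}}} = \beta_{\raisebox{-1pt}{\tiny{$\Acal$}}}\Delta E_{\raisebox{-1pt}{\tiny{$\Acal$}}} - D(\varrho_{\raisebox{-1pt}{\tiny{$\Acal$}}}^\prime\|\varrho_{\raisebox{-1pt}{\tiny{$\Acal$}}})$ for each thermal bath, global energy conservation, and non-negativity of the tripartite mutual information (your ``total correlation of the final state'') together with Klein's inequality are exactly the ingredients used in the paper, merely assembled in the reverse order (you collapse the right-hand side to the left, while the paper builds up from the mutual information). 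No gaps; the argument stands as written.
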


\begin{proof}
Let us consider
\begin{equation}
    I(\Scal:\Ccal:\Hcal)_{\varrho_{\raisebox{-1pt}{\tiny{$\Scal \Ccal \Hcal$}}}^\prime} := S(\varrho^\prime_{\raisebox{-1pt}{\tiny{$\Scal$}}}) + S(\varrho^\prime_{\raisebox{-1pt}{\tiny{$\Ccal$}}}) + S(\varrho^\prime_{\raisebox{-1pt}{\tiny{$\Hcal$}}}) - S(\varrho_{\raisebox{-1pt}{\tiny{$\Scal \Ccal \Hcal$}}}^\prime) \geq 0.
\end{equation}
Note that the quantity $I(\Scal:\Ccal:\Hcal)_{\varrho_{\raisebox{-1pt}{\tiny{$\Scal \Ccal \Hcal$}}}^\prime}$, which quantifies the tripartite mutual information of the state $\varrho_{\raisebox{-1pt}{\tiny{$\Scal \Ccal \Hcal$}}}^\prime$, is non-negative via subadditivity $S(\varrho_{\raisebox{-1pt}{\tiny{$\Acal$}}}) + S(\varrho_{\raisebox{-1pt}{\tiny{$\Bcal$}}}) \geq S(\varrho_{\raisebox{-1pt}{\tiny{$\Acal\Bcal$}}})$ for any state $\varrho_{\raisebox{-1pt}{\tiny{$\Acal\Bcal$}}}$. Furthermore, since the von Neumann entropy is invariant under unitary transformations and additive for tensor product states, we have 
\begin{equation}
    I(\Scal:\Ccal:\Hcal)_{\varrho_{\raisebox{-1pt}{\tiny{$\Scal \Ccal \Hcal$}}}^\prime} = \Delta S_{\raisebox{-1pt}{\tiny{$\Scal$}}} + \Delta S_{\raisebox{-1pt}{\tiny{$\Ccal$}}} + \Delta S_{\raisebox{-1pt}{\tiny{$\Hcal$}}}.
\end{equation}
We also have that
\begin{equation}
    \Delta S_{\raisebox{-1pt}{\tiny{$\Ccal$}}} = \beta \Delta E_{\raisebox{-1pt}{\tiny{$\Ccal$}}} - D(\varrho_{\raisebox{-1pt}{\tiny{$\Ccal$}}}^\prime||\varrho_{\raisebox{-1pt}{\tiny{$\Ccal$}}})
\end{equation}
and 
\begin{equation}
    \Delta S_{\raisebox{-1pt}{\tiny{$\Hcal$}}} = \beta_{\raisebox{-1pt}{\tiny{$H$}}} \Delta E_{\raisebox{-1pt}{\tiny{$\Hcal$}}} - D(\varrho_{\raisebox{-1pt}{\tiny{$\Hcal$}}}^\prime||\varrho_{\raisebox{-1pt}{\tiny{$\Hcal$}}}).
\end{equation}
Thus,
\begin{equation}\label{eq:incoherentlandauerequalityallterms}
    I(\Scal:\Ccal:\Hcal)_{\varrho_{\raisebox{-1pt}{\tiny{$\Scal \Ccal \Hcal$}}}^\prime} = \Delta S_{\raisebox{-1pt}{\tiny{$\Scal$}}} +  \beta \Delta E_{\raisebox{-1pt}{\tiny{$\Ccal$}}} - D(\varrho_{\raisebox{-1pt}{\tiny{$\Ccal$}}}^\prime||\varrho_{\raisebox{-1pt}{\tiny{$\Ccal$}}}) + \beta_{\raisebox{-1pt}{\tiny{$H$}}} \Delta E_{\raisebox{-1pt}{\tiny{$\Hcal$}}} - D(\varrho_{\raisebox{-1pt}{\tiny{$\Hcal$}}}^\prime||\varrho_{\raisebox{-1pt}{\tiny{$\Hcal$}}}).
\end{equation}
Since the unitary is energy conserving, we infer that $\Delta E_{\raisebox{-1pt}{\tiny{$\Scal$}}} + \Delta E_{\raisebox{-1pt}{\tiny{$\Ccal$}}} + \Delta E_{\raisebox{-1pt}{\tiny{$\Hcal$}}} = 0$. Hence, we have
\begin{equation}
    \Delta S_{\raisebox{-1pt}{\tiny{$\Scal$}}} - \beta \Delta E_{\raisebox{-1pt}{\tiny{$\Scal$}}} + (\beta_{\raisebox{-1pt}{\tiny{$H$}}} - \beta) \Delta E_{\raisebox{-1pt}{\tiny{$\Hcal$}}} = I(\Scal:\Ccal:\Hcal)_{\varrho_{\raisebox{-1pt}{\tiny{$\Scal \Ccal \Hcal$}}}^\prime} + D(\varrho_{\raisebox{-1pt}{\tiny{$\Ccal$}}}^\prime||\varrho_{\raisebox{-1pt}{\tiny{$\Ccal$}}}) +  D(\varrho_{\raisebox{-1pt}{\tiny{$\Hcal$}}}^\prime||\varrho_{\raisebox{-1pt}{\tiny{$\Hcal$}}}).
\end{equation}
Using the free energy, we can rewrite this as
\begin{equation}
    - \beta[F_{\raisebox{-1pt}{\tiny{$\beta$}}}(\varrho_{\raisebox{-1pt}{\tiny{$\Scal$}}}^\prime) - F_{\raisebox{-1pt}{\tiny{$\beta$}}}(\varrho_{\raisebox{-1pt}{\tiny{$\Scal$}}})] - (\beta - \beta_{\raisebox{-1pt}{\tiny{$H$}}}) \Delta E_{\raisebox{-1pt}{\tiny{$\Hcal$}}} = I(\Scal:\Ccal:\Hcal)_{\varrho_{\raisebox{-1pt}{\tiny{$\Scal \Ccal \Hcal$}}}^\prime} + D(\varrho_{\raisebox{-1pt}{\tiny{$\Ccal$}}}^\prime||\varrho_{\raisebox{-1pt}{\tiny{$\Ccal$}}}) +  D(\varrho_{\raisebox{-1pt}{\tiny{$\Hcal$}}}^\prime||\varrho_{\raisebox{-1pt}{\tiny{$\Hcal$}}}).
\end{equation}
Dividing by $-\beta$, we obtain the assertion, since, in particular, $ I(\Scal:\Ccal:\Hcal)_{\varrho_{\raisebox{-1pt}{\tiny{$\Scal \Ccal \Hcal$}}}^\prime} + D(\varrho_{\raisebox{-1pt}{\tiny{$\Ccal$}}}^\prime||\varrho_{\raisebox{-1pt}{\tiny{$\Ccal$}}}) +  D(\varrho_{\raisebox{-1pt}{\tiny{$\Hcal$}}}^\prime||\varrho_{\raisebox{-1pt}{\tiny{$\Hcal$}}}) \geq 0$ by the non-negativity of each term.
\end{proof}

In particular, we have shown that the energy extracted from the hot bath is lower-bounded by the increase in free energy, weighted by the inverse Carnot efficiency:
\begin{equation}
    \mathrm{tr}[H_{\raisebox{-1pt}{\tiny{$\Hcal$}}}(\varrho_{\raisebox{-1pt}{\tiny{$\Hcal$}}} - \varrho_{\raisebox{-1pt}{\tiny{$\Hcal$}}}^\prime)] \geq \frac{1}{\eta}[F_{\raisebox{-1pt}{\tiny{$\beta$}}}(\varrho_{\raisebox{-1pt}{\tiny{$\Scal$}}}^\prime) - F_{\raisebox{-1pt}{\tiny{$\beta$}}}(\varrho_{\raisebox{-1pt}{\tiny{$\Scal$}}})].
\end{equation}
Note that if $\varrho_{\raisebox{-1pt}{\tiny{$\Scal$}}}=\tau_{\raisebox{-1pt}{\tiny{$\Scal$}}}(\beta, H_{\raisebox{-1pt}{\tiny{$\Scal$}}})$, the r.h.s. is non-negative for any nontrivial thermodynamic process, i.e., any for which the target system is heated or\textemdash of particular relevance for us\textemdash cooled. This follows by the Gibbs variational principle, which states that the free energy of $\varrho$ is minimal iff $\varrho$ is the corresponding Gibbs state. 

Finally, in order to make a more concrete connection to the spirit of Landauer's original derivation, note that one can consider bounding the heat dissipated into the cold bath, rather than that drawn from the hot bath. Substituting $\Delta E_{\raisebox{-1pt}{\tiny{$\Hcal$}}} = - (\Delta E_{\raisebox{-1pt}{\tiny{$\Scal$}}} + \Delta E_{\raisebox{-1pt}{\tiny{$\Ccal$}}})$ into Eq.~\eqref{eq:incoherentlandauerequalityallterms} leads to
\begin{align}\label{eq:carnotlandauercold}
    -\widetilde{\Delta} S_{\raisebox{-1pt}{\tiny{$\Scal$}}} - \beta_{\raisebox{-1pt}{\tiny{$H$}}} \Delta E_{\raisebox{-1pt}{\tiny{$\Scal$}}} + (\beta - \beta_{\raisebox{-1pt}{\tiny{$H$}}}) \Delta E_{\raisebox{-1pt}{\tiny{$\Ccal$}}} \geq 0,
\end{align}
which recovers the standard Landauer bound for the dissipated heat in the limit of an infinitely hot heat bath, i.e., $\beta_{\raisebox{-1pt}{\tiny{$H$}}} \to 0$.

%%%%%%%%%%%%%%%%%%%%%%%%%%%%%%%%%%%%%%%%%%%%%%%%%%%%%%%%%%%%%%%%%%%%%%%%%%%%%%

\section{Diverging Energy}
\label{app:divergingenergy}

\subsection{Sufficiency: Diverging Energy Cooling Protocol}
\label{app:sufficiencydivergingenergycoolingprotocol}

This cooling protocol is arguably the simplest of those presented. The thermal populations of any target system can be exchanged with a machine system of the same dimension, in the thermal state of $H_{\raisebox{-1pt}{\tiny{$\Mcal$}}}=\omega_{\raisebox{0pt}{\tiny{$\Mcal$}}} \sum_{n=0}^{d-1} n |n\rangle\!\langle n|$. As $\omega_{\raisebox{0pt}{\tiny{$\Mcal$}}}\rightarrow\infty$, the machine state $\tau_{\raisebox{-1pt}{\tiny{$\Mcal$}}}(\beta, H_{\raisebox{-1pt}{\tiny{$\Mcal$}}})$ approaches $ |0\rangle\!\langle0|_{\raisebox{-1pt}{\tiny{$\Mcal$}}}$ independently of $\beta$ (as long as $\beta \neq 0$). Such a population-exchange operation is a single interaction (i.e., the protocol occurs in unit time), which is of finite complexity (in a sense that we discuss below). However, the energy drawn from the resource $\Wcal$ upon performing said \texttt{SWAP} operation is at least $E = (p_{\raisebox{-1pt}{\tiny{$\Scal$}}}^{(1)}-p_{\raisebox{-1pt}{\tiny{$\Mcal$}}}^{(1)}) (\omega_{\raisebox{0pt}{\tiny{$\Mcal$}}}-\omega_{\raisebox{0pt}{\tiny{$\Scal$}}}^{(1)})$, where $p_{\raisebox{-1pt}{\tiny{$\Xcal$}}}^{(1)}$ is the initial population of the first excited level of system $\Xcal$ and $\omega_{\raisebox{0pt}{\tiny{$\Scal$}}}^{(1)}$ is the first energy eigenvalue of the target system. Denoting by $\omega_{\raisebox{0pt}{\tiny{$\Scal$}}}^{(k)}$ the energy eigenvalue of the $k^{\text{th}}$ excited level of the target system, we have above assumed that $\omega_{\raisebox{0pt}{\tiny{$\Scal$}}}^{(0)}=0$ (which we do for all Hamiltonians without loss of generality) and $\omega_{\raisebox{0pt}{\tiny{$\Mcal$}}} > \omega_{\raisebox{0pt}{\tiny{$\Scal$}}}^{(d-1)}$. As such, perfect cooling will incur diverging energy cost. 

\subsection{Necessity of Diverging Energy for Protocols with Finite Time and Control Complexity}\label{app:necessitydivergingenergyprotocolsfinitetimecontrolcomplexity}

Consider the following Hamiltonians for the target system and machine with finite but otherwise arbitrary energy levels, $H_{\raisebox{-1pt}{\tiny{$\Scal$}}}= \sum_{n=0}^{d_{\raisebox{-1pt}{\tiny{$\Scal$}}}-1} \omega_{\raisebox{0pt}{\tiny{$\Scal$}}}^{(n)} |n\rangle\!\langle n|_{\raisebox{-1pt}{\tiny{$\Scal$}}}$ and $H_{\raisebox{-1pt}{\tiny{$\Mcal$}}}= \sum_{n=0}^{d_{\raisebox{-1pt}{\tiny{$\Mcal$}}}-1} \omega_{\raisebox{0pt}{\tiny{$\Mcal$}}}^{(n)} |n\rangle\!\langle n|_{\raisebox{-1pt}{\tiny{$\Mcal$}}}$, respectively. For any finite inverse temperature $\beta$, the initial thermal states $\tau_{\raisebox{-1pt}{\tiny{$\Scal$}}}(\beta, H_{\raisebox{-1pt}{\tiny{$\Scal$}}})$ and $\tau_{\raisebox{-1pt}{\tiny{$\Mcal$}}}(\beta,H_{\raisebox{-1pt}{\tiny{$\Mcal$}}})$ are of full rank. Suppose now that one can implement a single unitary transformation (i.e., a unit time protocol) of finite control complexity on the joint target and machine, yielding the joint output state $\varrho'_{\raisebox{-1pt}{\tiny{$\Scal \Mcal$}}} = \ptr{\Mcal}{U (\tau_{\raisebox{-1pt}{\tiny{$\Scal$}}}(\beta,H_{\raisebox{-1pt}{\tiny{$\Scal$}}}) \otimes \tau_{\raisebox{-1pt}{\tiny{$\Scal$}}}(\beta,H_{\raisebox{-1pt}{\tiny{$\Mcal$}}})) U^\dagger}$, and wishes to attain perfect cooling of the target in doing so. By invariance of the rank under unitary transformations and the fact that the system and machine begin uncorrelated, we have
\begin{align}
    \mathrm{rank}[\tau_{\raisebox{-1pt}{\tiny{$\Scal$}}}(\beta,H_{\raisebox{-1pt}{\tiny{$\Scal$}}})] \, \mathrm{rank}[\tau_{\raisebox{-1pt}{\tiny{$\Mcal$}}}(\beta,H_{\raisebox{-1pt}{\tiny{$\Mcal$}}})] = \mathrm{rank}[\varrho'_{\raisebox{-1pt}{\tiny{$\Scal \Mcal$}}}] \leq \mathrm{rank}[\varrho'_{\raisebox{-1pt}{\tiny{$\Scal$}}}] \, \mathrm{rank}[\varrho'_{\raisebox{-1pt}{\tiny{$\Mcal$}}}], 
\end{align}
where the inequality follows from the subadditivity of the R{\' e}nyi-zero entropy~\cite{vanDam2002}, which is the logarithm of the rank. To achieve perfect cooling of the target, one must (at least asymptotically) attain $\mathrm{rank}[\varrho'_{\raisebox{-1pt}{\tiny{$\Scal$}}}] < \mathrm{rank}[\tau_{\raisebox{-1pt}{\tiny{$\Scal$}}}(\beta,H_{\raisebox{-1pt}{\tiny{$\Scal$}}})]$, which implies that $\mathrm{rank}[\varrho'_{\raisebox{-1pt}{\tiny{$\Mcal$}}}] > \mathrm{rank}[\tau_{\raisebox{-1pt}{\tiny{$\Mcal$}}}(\beta,H_{\raisebox{-1pt}{\tiny{$\Mcal$}}})]$. However, if this condition is achieved, then $D[\varrho'_{\raisebox{-1pt}{\tiny{$\Mcal$}}} \| \tau_{\raisebox{-1pt}{\tiny{$\Mcal$}}}(\beta,H_{\raisebox{-1pt}{\tiny{$\Mcal$}}})]$ diverges, implying a diverging energy cost by Eq.~\eqref{eq:landauerequality}. The above argument already appears in Ref.~\cite{Reeb_2014}.

The other situation that one must consider is the case where one attains a $\varrho'_{\raisebox{-1pt}{\tiny{$\Scal$}}}$ such that $\mathrm{rank}[\varrho'_{\raisebox{-1pt}{\tiny{$\Scal$}}}] = \mathrm{rank}[\tau_{\raisebox{-1pt}{\tiny{$\Scal$}}}(\beta,H_{\raisebox{-1pt}{\tiny{$\Scal$}}})]$ but nonetheless $\varrho'_{\raisebox{-1pt}{\tiny{$\Scal$}}}$ is arbitrarily close to a pure state, as is the case, for instance, in the protocols that we present. Consider a sequence of machines $\varrho_{\raisebox{-1pt}{\tiny{$\Mcal$}}}^{(i)}$ and unitaries $U^{(i)}$ such that $\varrho_{\raisebox{-1pt}{\tiny{$\Mcal$}}}^{(i)} \to \varrho_{\raisebox{-1pt}{\tiny{$\Mcal$}}}$ and $U^{(i)} \to U$. Note that since we fixed the dimensions of $\mathcal S$ and $\mathcal M$, any sequence of machines has a converging subsequence by the Bolzano-Weierstrass theorem and the fact that the set of quantum states is compact. Here, $\varrho_{\raisebox{-1pt}{\tiny{$\Mcal$}}}$ and $U$ achieve perfect cooling. If we fix $\varrho_{\raisebox{-1pt}{\tiny{$\Scal$}}}$, we obtain a corresponding sequence $(\varrho_{\raisebox{-1pt}{\tiny{$\Mcal$}}}^\prime)^{(i)}$ such that $(\varrho_{\raisebox{-1pt}{\tiny{$\Mcal$}}}^\prime)^{(i)} \to \varrho_{\raisebox{-1pt}{\tiny{$\Mcal$}}}^\prime$. Crucially, here, since we restrict the unitary transformation to be of finite control complexity, the states $\varrho_{\raisebox{-1pt}{\tiny{$\Mcal$}}}$ and $\varrho_{\raisebox{-1pt}{\tiny{$\Mcal$}}}^\prime$ are effectively finite dimensional, in the sense that whatever their true dimension, they can be replaced by finite-dimensional versions without changing any of the relevant quantities (see Appendix~\ref{app:conditionsstructuralcontrolcomplexity}). Since the relative entropy $(\varrho, \sigma) \mapsto D(\varrho||\sigma)$ is lower semicontinuous~\cite{Ohya1993, Ohya2010} and since $D(\varrho_{\raisebox{-1pt}{\tiny{$\Mcal$}}}^\prime || \varrho_{\raisebox{-1pt}{\tiny{$\Mcal$}}}) \to\infty$ by the arguments above, we infer that $D[(\varrho_{\raisebox{-1pt}{\tiny{$\Mcal$}}}^\prime)^{(i)} || \varrho_{\raisebox{-1pt}{\tiny{$\Mcal$}}}^{(i)}] \to \infty$ as $i \to \infty$. This argument holds independently of $\mathrm{rank}[\varrho_{\raisebox{-1pt}{\tiny{$\Scal$}}}^\prime]$; in particular, for the special case $\mathrm{rank}[\varrho'_{\raisebox{-1pt}{\tiny{$\Scal$}}}] = \mathrm{rank}[\tau_{\raisebox{-1pt}{\tiny{$\Scal$}}}(\beta,H_{\raisebox{-1pt}{\tiny{$\Scal$}}})]$ that we are considering here. Thus, to approach perfect cooling in finite time and with finite control complexity, one would need a diverging energy cost. Thus, we see that within the resource trinity of energy, time, and control complexity, if the latter two are finite, then energy must diverge to asymptotically achieve a pure state. Whether or not there exist other (unaccounted for) resources that allow one to achieve this with all three of the aforementioned resources being finite remains an open question.

Importantly, the above argument no longer holds if the time or control complexity is allowed to diverge. In such cases, both $\varrho_{\raisebox{-1pt}{\tiny{$\Mcal$}}}$ and $\varrho_{\raisebox{-1pt}{\tiny{$\Mcal$}}}^\prime$ can be infinite dimensional, and because of this the rank argument no longer applies and the relative entropy does not necessarily diverge in the limit of perfect cooling. In contrast, as we show, it is even possible to saturate the Landauer bound.

%%%%%%%%%%%%%%%%%%%%%%%%%%%%%%%%%%%%%%%%%%%%%%%%%%%%%%%%%%%%%%%%%%%%%%%%%%%

\section{Diverging Time Cooling Protocol for Finite-Dimensional Systems}
\label{app:divergingtimecoolingprotocolfinitedimensionalsystems}

\subsection{Proof of Theorem~\ref{thm:inftimeFinTepFinDim}}
\label{app:prooftheoreminftime}

\begin{proof} Consider a target system $\Scal$ of dimension $d$ with associated Hamiltonian
\begin{equation}\label{eq:sysHam}
   H_{\raisebox{-1pt}{\tiny{$\Scal$}}}=\sum_{k=0}^{d-1} \, \omega_{k} \ket{k}\!\bra{k}_{\raisebox{-1pt}{\tiny{$\Scal$}}},
\end{equation} 
where we also set $\omega_0 = 0$ without loss of generality. Consider also the machine $\Mcal$ to be composed of $N$ subsystems, $\{\Mcal_n\}_{n=1,\hdots,N}$, each of the same dimension $d$ as the target, whose local Hamiltonians are
\begin{equation}\label{eq:machHam}
    H_{\raisebox{-1pt}{\tiny{$\Mcal$}}}^{(n)}=(1+n\epsilon)H_{\raisebox{-1pt}{\tiny{$\Scal$}}} ,
\end{equation} where $\epsilon=(\beta_{\textup{max}}-\beta)/(N\beta)$. We first cool the system initially at nonzero $\beta$ to some fixed, finite $\beta_{\textup{max}}$, which we eventually take $\beta_{\textup{max}} \to \infty$ in order to asymptotically achieve perfect cooling. We treat the case $\beta = 0$ as a limiting case of $\beta \to 0$: here, as $\beta \to 0$, we let $N \rightarrow \infty$ such that $N \beta \to \infty$, e.g., we specify a suitable function $N(\beta)$ such that $N(\beta) \to \infty$ ``faster'' than $\beta \to 0$. 

We now show that, given the ability to perform a diverging number of operations on such a configuration, one can reach the target state $\tau_{\raisebox{-1pt}{\tiny{$\Scal$}}}(\beta_{\textup{max}}, H_{\raisebox{-1pt}{\tiny{$\Scal$}}})$. In particular, we show that the protocol presented uses the minimal amount of energy to do so, and explicitly calculate this to be $\beta^{-1} \widetilde{\Delta} S$ units of energy, where $\widetilde{\Delta} S:=S[\tau_{\raisebox{-1pt}{\tiny{$\Scal$}}}(\beta, H_{\raisebox{-1pt}{\tiny{$\Scal$}}})]-S[\tau_{\raisebox{-1pt}{\tiny{$\Scal$}}}(\beta_{\textup{max}}, H_{\raisebox{-1pt}{\tiny{$\Scal$}}})]$. In other words, as the number of operations in the protocol diverges, we approach perfect cooling at the Landauer limit, thereby saturating the ultimate bound. 

The diverging time cooling protocol is as follows. At each step, the target system interacts with a single machine labelled by $n$ via the \texttt{SWAP} operator
\begin{align}
    \mathbbm{S}^d_{\Scal\Mcal_n} := \sum_{i,j=0}^{d-1}\ket{i,j}\!\bra{j,i}_{\Scal\Mcal_n}.
\end{align}
As the target and machine subsystems considered here are of the same dimension, we drop the subscript on the states associated to each subsystem, for ease of notation. Such a transformation is, in general, not energy conserving, but one can calculate the energy change for both the target system and the machine due to the $n^{\textup{th}}$ interaction as
\begin{align}
    \Delta E_{\raisebox{-1pt}{\tiny{$\Scal$}}}^{(n)} = \tr{{H}_{\raisebox{-1pt}{\tiny{$\Scal$}}}\,\tau (\beta, H_{\raisebox{-1pt}{\tiny{$\Mcal$}}}^{(n)})}
  -\tr{{H}_{\raisebox{-1pt}{\tiny{$\Scal$}}}\, \tau (\beta, H_{\raisebox{-1pt}{\tiny{$\Mcal$}}}^{(n-1)})},
  \end{align}
and so the total energy change of the system over the entire $N$-step protocol is given by
    \begin{align}
  \Delta E_{\raisebox{-1pt}{\tiny{$\Scal$}}}=\sum_{n=1}^N\Delta E_{\raisebox{-1pt}{\tiny{$\Scal$}}}^{(n)}=\tr{{H}_{\raisebox{-1pt}{\tiny{$\Scal$}}}\,\tau (\beta, H_{\raisebox{-1pt}{\tiny{$\Mcal$}}}^{(N)})}
  -\tr{{H}_{\raisebox{-1pt}{\tiny{$\Scal$}}}\, \tau (\beta, H_{\raisebox{-1pt}{\tiny{$\Mcal$}}}^{(0)})}.
\end{align}
The energy change of the machine subsystem that is swapped with the target system at each step is given by
\begin{align}
\Delta E_{\raisebox{-1pt}{\tiny{$\Mcal$}}}^{(n)}=&\tr{H_{\raisebox{-1pt}{\tiny{$\Mcal$}}}^{(n)}\tau (\beta, H_{\raisebox{-1pt}{\tiny{$\Mcal$}}}^{(n-1)})}
  -\tr{H_{\raisebox{-1pt}{\tiny{$\Mcal$}}}^{(n)}\tau (\beta, H_{\raisebox{-1pt}{\tiny{$\Mcal$}}}^{(n)})}
  = \sum_{k=0}^{d-1} \,(1+n\epsilon) \omega_{k} \left[p_k (\beta, H_{\raisebox{-1pt}{\tiny{$\Mcal$}}}^{(n-1)})-p_k(\beta, H_{\raisebox{-1pt}{\tiny{$\Mcal$}}}^{(n)})\right],
  \label{eq:energy exch machine ar}
\end{align}
where $p_k(\beta, H_{\raisebox{-1pt}{\tiny{$\Mcal$}}}^{(n)}) = e^{-\beta (1+n\epsilon)\omega_k}/\mathcal{Z}_{\Mcal_n}(\beta, H_{\raisebox{-1pt}{\tiny{$\Mcal$}}}^{(n)})$ is the population in the $k^{\textup{th}}$ energy level of the thermal state of the $n^{\textup{th}}$ machine subsystem at inverse temperature $\beta$, with $\mathcal{Z}_{\Mcal_n}(\beta, H_{\raisebox{-1pt}{\tiny{$\Mcal$}}}^{(n)})=\tr{e^{-\beta H_{\raisebox{-1pt}{\tiny{$\Mcal$}}}^{(n)} }}$ being the partition function.

By summing the contributions of the energy changes in each step, one can obtain the total energy change for the overall machine throughout the entire process:
\begin{align}
 \Delta E_{\raisebox{-1pt}{\tiny{$\Mcal$}}}^{(N)}=& \sum_{n=1}^N\Delta E_{\raisebox{-1pt}{\tiny{$\Mcal$}}}^{(n)}
  = \sum_{n=1}^N \sum_{k=0}^{d-1} \, (1+n\epsilon)\omega_{k} \left[p_k (\beta, H_{\raisebox{-1pt}{\tiny{$\Mcal$}}}^{(n-1)})-p_k(\beta, H_{\raisebox{-1pt}{\tiny{$\Mcal$}}}^{(n)})\right],
  \label{eq:total energy exch machine ar}
\end{align}
In general, it is complicated to calculate the energy cost for the protocol up until a finite time step $N$, since this depends on the full energy structure of the target system and machine subsystems involved (we return to resolve this problem for the special case of equally spaced system and machine Hamiltonians in the coming section). Here, we focus on a special case in which $N \to \infty$, i.e., there is a diverging number of machine subsystems that the target system interacts with throughout the protocol. This limit physically corresponds to that of requiring a diverging amount of time (in terms of the number of steps). Furthermore, we take the limit $\epsilon \to 0$ for any fixed $\beta, \beta_{\textup{max}}$. Considering the differentials 
\begin{equation}
    \Delta p_k^{(n)} := p_k(\beta, H_{\raisebox{-1pt}{\tiny{$\Mcal$}}}^{(n)}) - p_k (\beta, H_{\raisebox{-1pt}{\tiny{$\Mcal$}}}^{(n-1)}) ,
\end{equation}
and
\begin{equation}
\Delta x_n := x_{n} - x_{n-1} \qquad \text{with} \quad x_n := 1+n \epsilon .
\end{equation} 
In order for $x_n$ to become infinitesimal, and noting the explicit form of the machine subsystem Hamiltonians $H_{\raisebox{-1pt}{\tiny{$\Mcal$}}}^{(n)} = (1+n\epsilon)H_{\raisebox{-1pt}{\tiny{$\Scal$}}}$, we can make the replacement
\begin{align}
    -\frac{\Delta p_k^{(n)}}{\Delta x_n} \, \Delta x_n \to -\frac{\partial{p_k (\beta, xH_{\raisebox{-1pt}{\tiny{$\Scal$}}})}}{\partial x} \,\textup{d}x
\end{align}
where $x:=1+n \epsilon$ has become a continuous parameter. This way we can express the limit $N\to\infty$ of Eq.~\eqref{eq:total energy exch machine ar} as a Riemann integral in the following form:
  \begin{align}
 \lim_{N\to\infty}\Delta E^{(N)}_{\raisebox{-1pt}{\tiny{$\Mcal$}}}
  =& -\int_{1}^{x_{\textup{max}}} \sum_{k=0}^{d-1} \, x \omega_{k} \,\frac{\partial p_k (\beta, xH_{\raisebox{-1pt}{\tiny{$\Scal$}}})}{\partial x}\,  \textup{d}x,
\end{align}
where $x_{\textup{max}}:=\beta_{\textup{max}}/\beta$.
Both the summation and the integral converge, so one can swap the order of their evaluation. Integrating by parts then gives
\begin{align}
   \lim_{N\to\infty}\Delta E^{(N)}_{\raisebox{-1pt}{\tiny{$\Mcal$}}}
  =& \sum_{k=0}^{d-1} \left[ - x \omega_{k} \, p_k (\beta, xH_{\raisebox{-1pt}{\tiny{$\Scal$}}})\big|_{1}^{x_{\textup{max}}}+\int_{1}^{x_\textup{{max}}}  \,  \omega_{k} \,p_k (\beta, xH_{\raisebox{-1pt}{\tiny{$\Scal$}}})\,  \textup{d}x\right] \nonumber\\
    =&  \sum_{k=0}^{d-1} \left[- x \omega_{k} \, p_k (\beta, xH_{\raisebox{-1pt}{\tiny{$\Scal$}}})\big|_{1}^{x_{\textup{max}}} \right] -\int_{1}^{x_\textup{{max}}} \frac{1}{\beta}\frac{\partial }{\partial x}\big[\log \mathcal{Z}(\beta, x H_{\raisebox{-1pt}{\tiny{$\Scal$}}})\big]\,  \textup{d}x  \nonumber\\
  =& {E}[\tau(\beta, H_{\raisebox{-1pt}{\tiny{$\Scal$}}})]-{E}[\tau(\beta, x_{\textup{max}}H_{\raisebox{-1pt}{\tiny{$\Scal$}}})]-\frac{1}{\beta}\log \mathcal{Z}(\beta, x_{\textup{max}}\, H_{\raisebox{-1pt}{\tiny{$\Scal$}}})+\frac{1}{\beta}\log \mathcal{Z}(\beta, \, H_{\raisebox{-1pt}{\tiny{$\Scal$}}}),
  \label{eq:total energy exch integral}
  \end{align}
where in the second line we again swap the order of the integral and the sum to write $\sum_{k=0}^{d-1} \omega_k p_k(\beta, x H_{\raisebox{-1pt}{\tiny{$\Scal$}}}) = -\tfrac{1}{\beta} \tfrac{\partial}{\partial x} [\log {\mathcal{Z}(\beta, x H_{\raisebox{-1pt}{\tiny{$\Scal$}}})} ]$ and in the last line we invoke ${E}[\tau(\beta,x H)]=\tr{x H\, \tau(\beta,x H)}$. Finally, writing the partition function in terms of the average energy and entropy, i.e., $ \log[ \mathcal{Z}(\beta,x H)]=-\beta \, E [\tau(\beta,x H)]+S[\tau(\beta,x H)]$, the total energy change of the machine is given by
\begin{align}
 \lim_{N\to\infty}\Delta E^{(N)}_{\raisebox{-1pt}{\tiny{$\Mcal$}}}
  =& {E}[\tau(\beta, H_{\raisebox{-1pt}{\tiny{$\Scal$}}})]-{E}[\tau(\beta, x_{\textup{max}}H_{\raisebox{-1pt}{\tiny{$\Scal$}}})]+ \, E [\tau(\beta,x_{\textup{max}} H_{\raisebox{-1pt}{\tiny{$\Scal$}}})]-\frac{1}{\beta} S[\tau(\beta,x_{\textup{max}} H)]- \, E [\tau(\beta, H_{\raisebox{-1pt}{\tiny{$\Scal$}}})]+\frac{1}{\beta}S[\tau(\beta, H_{\raisebox{-1pt}{\tiny{$\Scal$}}})] \nonumber\\
  =&\frac{1}{\beta}\big\{S[\tau(\beta, H_{\raisebox{-1pt}{\tiny{$\Scal$}}})] - S[\tau(\beta_{\textup{max}}, H_{\raisebox{-1pt}{\tiny{$\Scal$}}})]\big\}=\frac{1}{\beta}\,\widetilde{\Delta} S_{\raisebox{-1pt}{\tiny{$\Scal$}}},  \label{eq:total energy exch integral final}
\end{align}
where we make use of the property $\tau_{\raisebox{-1pt}{\tiny{$\Scal$}}}(\beta, x_{\textup{max}} H_{\raisebox{-1pt}{\tiny{$\Scal$}}})=\tau_{\raisebox{-1pt}{\tiny{$\Scal$}}}(\beta_{\textup{max}}, H_{\raisebox{-1pt}{\tiny{$\Scal$}}})$ and the entropy decrease of the target system corresponds to that associated with the transformation $\tau(\beta, H_{\raisebox{-1pt}{\tiny{$\Scal$}}})\to \tau(\beta_{\textup{max}}, H_{\raisebox{-1pt}{\tiny{$\Scal$}}})$. Thus, as the number of timesteps diverges, this cooling process saturates the Landauer limit for the heat dissipated by the machine. In order to achieve perfect cooling at the Landauer limit, i.e., the final target state to approach $\ketbra{0}{0}$ and thus prove Theorem~\ref{thm:inftimeFinTepFinDim}, we can now take the limit $\beta_{\textup{max}} \to \infty$. \end{proof}

The above proof holds for systems and machines of arbitrary (but equal) dimension, either finite or infinite, with arbitrary Hamiltonians. We now present some more detailed analysis regarding the special case where the Hamiltonians of the target system and all machine subsystems are equally spaced; this provides an opportunity both to derive a more detailed formula for the energy costs involved and to build intuition regarding some of the important differences between the finite- and infinite-dimensional settings.

\subsection{Special Case: Equally Spaced Hamiltonians}
\label{app:specialcaseequallyspacedhamiltonians}

Consider a finite $d$-dimensional target system beginning at inverse temperature $\beta$ with an equally spaced Hamiltonian $H_{\raisebox{-1pt}{\tiny{$\Scal$}}}(\omega_{\raisebox{0pt}{\tiny{$\Scal$}}}) = \omega_{\raisebox{0pt}{\tiny{$\Scal$}}} \sum_{n=0}^{d-1} n \ket{n}\!\bra{n}_{\raisebox{-1pt}{\tiny{$\Scal$}}}$. In this case, we can derive a more precise dimension-dependant function for the energy cost dissipated by the machines throughout the optimal cooling protocol presented above. 

Consider an initial target system $\tau_{\raisebox{-1pt}{\tiny{$\Scal$}}}(\beta, H_{\raisebox{-1pt}{\tiny{$\Scal$}}})$ and a diverging number $N$ of machines $\{ \mathcal{M}_{\alpha} \}_{\alpha = 0, \hdots, N}$ of the same dimension $d$ as the target, which all begin in a thermal state at inverse temperature $\beta$ with respect to an equally spaced Hamiltonian whose gaps between neighbouring energy levels $\omega_{\raisebox{0pt}{\tiny{$\Mcal_\alpha$}}}$ are ordered non-decreasingly. Each machine is used once and then discarded; the particular interaction is the aforementioned \texttt{SWAP} between the target system and the $n^\text{th}$ qudit machine, i.e., that represented by the unitary $\mathbbm{S}^d_{\raisebox{-1pt}{\tiny{$\Scal \Mcal_\alpha$}}} := \sum_{i,j=0}^{d-1}\ket{i,j}\!\bra{j,i}_{\raisebox{-1pt}{\tiny{$\Scal\Mcal_\alpha$}}}.$ After applying such an operation, the state of the target system is given by 
\begin{align}\label{eq: opt asy state max}
    \tau_{\raisebox{-1pt}{\tiny{$\Scal$}}}(\beta,\omega_{\alpha}) := \frac{e^{-\beta {H}_{\raisebox{-1pt}{\tiny{$\Scal$}}}(\omega_\alpha) }}{{\mathcal{Z}}_{\raisebox{-1pt}{\tiny{$\Scal$}}}(\beta,\omega_\alpha)},
\end{align}
where ${H}_{\raisebox{-1pt}{\tiny{$\Scal$}}}(\omega_\alpha) := \omega_{\alpha} \sum_{n=0}^{d-1} n  \ket{n}\!\bra{n}_{\raisebox{-1pt}{\tiny{$\Scal$}}}$ and ${\mathcal{Z}}_{\raisebox{-1pt}{\tiny{$\Scal$}}}(\beta,\omega_\alpha) :=\tr{e^{-\beta {H}_{\raisebox{-1pt}{\tiny{$\Scal$}}}(\omega_\alpha) }}$. 

We now calculate the energy cost explicitly for the diverging time cooling protocol, which saturates the Landauer bound in the asymptotic limit. In order to minimise the energy cost of cooling, the target system must be cooled by the qudit system in the machines with the smallest gap between neighbouring energy levels (that permits cooling) as much as possible at each stage. In order to optimally use the given machine structure at hand, we thus order the set of energy gaps $\omega_\alpha $ in non-decreasing order. In addition, the protocol to reach the Landauer erasure bound, i.e., minimal energy cost, dictates that one must infinitesimally increase $\omega_\alpha$ of the machines in order to dissipate as little heat as possible throughout the interactions. Since we are here considering a diverging time limit, we have access to a diverging number of qudit machine with distinct energy gap $\omega_{\alpha}$  at our disposal; the task is then to use these in an energy-optimal manner.

It is straightforward to see that to minimise the total energy cost, one must apply the sequence of unitaries $\mathbbm{S}^d_{\raisebox{-1pt}{\tiny{$\Scal\Mcal_\alpha$}}}$ such that $\mathbbm{S}^d_{\raisebox{-1pt}{\tiny{$\Scal\Mcal_0$}}}$ is first applied to reach the optimally cool $\tau_{\raisebox{-1pt}{\tiny{$\Scal$}}}(\beta,\omega_0)$, then $\mathbbm{S}^d_{\raisebox{-1pt}{\tiny{$\Scal\Mcal_1$}}}$ to reach $\tau_{\raisebox{-1pt}{\tiny{$\Scal$}}}(\beta,\omega_1)$, and so on. The heat dissipated by the reset machines in each stage of such a cooling protocol (i.e., for each value of $\alpha$) can thus be calculated as
\begin{align}
 \Delta E_{\raisebox{-1pt}{\tiny{$\Mcal_\alpha$}}}(\omega_{\alpha})&=-\left\{\tr{{H}_{\raisebox{-1pt}{\tiny{$\Mcal_\alpha$}}}(\omega_{\alpha})\tau_{\raisebox{-1pt}{\tiny{$\Mcal_\alpha$}}} (\beta, \omega_{\alpha})}
  +\tr{{H}_{\raisebox{-1pt}{\tiny{$\Mcal_\alpha$}}}(\omega_{\alpha})\, \tau_{\raisebox{-1pt}{\tiny{$\Mcal_\alpha$}}} (\beta, \omega_{\alpha-1})}\right\}\nonumber\\
  &=-\tr{{H}_{\raisebox{-1pt}{\tiny{$\Scal$}}}(\omega_{\alpha})\left[\tau_{\raisebox{-1pt}{\tiny{$\Scal$}}} (\beta, \omega_{\alpha})
  -\, \tau_{\raisebox{-1pt}{\tiny{$\Scal$}}} (\beta, \omega_{\alpha-1})\right]}.
  \label{eq:energy exch R}
\end{align}
In the second line, we have made use of the fact that the Hamiltonians of both the target system and each of machine are $d$-dimensional and equally spaced. So far, we have obtained the energy dissipated by the reset machines. To investigate the total energy cost of cooling in such a process, we also must consider the contribution of energy transferred to the target system $\Scal$, which is characterised via its local Hamiltonian $H_{\raisebox{-1pt}{\tiny{$\Scal$}}}$ and calculated via
\begin{equation}
  \Delta {E}_{\raisebox{-1pt}{\tiny{$\Scal$}}}(\omega_{\alpha})=
  \tr{{H}_{\raisebox{-1pt}{\tiny{$\Scal$}}}(\omega_{\raisebox{0pt}{\tiny{$\Scal$}}})\,\tau_{\raisebox{-1pt}{\tiny{$\Scal$}}} (\beta, \omega_{\alpha})}
  -\tr{{H}_{\raisebox{-1pt}{\tiny{$\Scal$}}}(\omega_{\raisebox{0pt}{\tiny{$\Scal$}}})\, \tau_{\raisebox{-1pt}{\tiny{$\Scal$}}} (\beta, \omega_{\alpha-1})},
  \label{eq:energy exch SL}
\end{equation}
in which we set $\omega_0=\omega_\Scal$. Using Eqs.~(\ref{eq:energy exch R},~\ref{eq:energy exch SL}), the total energy cost for each stage of cooling is given by
\begin{equation}
    \Delta {E}_{\raisebox{-1pt}{\tiny{$\Scal \Mcal$}}}(\omega_\alpha)=\Delta {E}_{\raisebox{-1pt}{\tiny{$\Scal$}}}(\omega_{\alpha})+\Delta {E}_{\raisebox{-1pt}{\tiny{$\Mcal$}}}(\omega_{\alpha})= \mathrm{tr}\left\{\big[{H}_{\raisebox{-1pt}{\tiny{$\Scal$}}}(\omega_{\Scal})-{H}_{\raisebox{-1pt}{\tiny{$\Scal$}}}(\omega_{\alpha})\big]\big[\tau_{\raisebox{-1pt}{\tiny{$\Scal$}}} (\beta, \omega_{\alpha})-\tau_{\raisebox{-1pt}{\tiny{$\Scal$}}}(\beta,\omega_{\alpha-1})\big]\right\},
    \label{eq:tot energy memory}
\end{equation}
which leads to the overall energy cost after $N$ stages, where $N$ is the number of non-zero distinct energy gaps of the reset machines, as
\begin{align}
    \Delta {E}_{\raisebox{-1pt}{\tiny{$\Scal \Mcal$}}}^{(N)}&= \sum_{\alpha = 1}^N \Delta {E}_{\raisebox{-1pt}{\tiny{$\Scal \Mcal$}}}(\omega_\alpha) 
  =\sum_{\alpha=1}^N \mathrm{tr}\left\{\big[{H}_{\raisebox{-1pt}{\tiny{$\Scal$}}}(\omega_{\Scal})-{H}_{\raisebox{-1pt}{\tiny{$\Scal$}}}(\omega_{\alpha})\big]\big[\tau_{\raisebox{-1pt}{\tiny{$\Scal$}}} (\beta, \omega_{\alpha})-\tau_{\raisebox{-1pt}{\tiny{$\Scal$}}}(\beta,\omega_{\alpha-1})\big]\right\}
  \label{eq:tot energy n}.
\end{align}
Now, we can obtain the total energy cost for each stage of the protocol (i.e., each value of $\alpha$ considered) in terms of the transformation of the target system alone. Note that in this protocol, each stage corresponding to each of the $N$ distinct energy gaps $\{ \omega_\alpha\}$ in itself requires only one operation to perfectly reach $\tau_{\raisebox{-1pt}{\tiny{$\Scal$}}}(\beta, \omega_\alpha)$. The end result of this protocol is that the target system is cooled from the initial thermal state $\tau_{\raisebox{-1pt}{\tiny{$\Scal$}}}(\beta,\omega_{\raisebox{0pt}{\tiny{$\Scal$}}})$, where $\omega_{\raisebox{0pt}{\tiny{$\Scal$}}}$ is the energy gap between each pair of adjacent energy levels in the system, to $\tau_{\raisebox{-1pt}{\tiny{$\Scal$}}}(\beta,\omega_{\textup{max}})$ in the energy-optimal manner. 

Starting from Eq.~\eqref{eq:tot energy n}, we have
\begin{align}
    \Delta {E}_{\raisebox{-1pt}{\tiny{$\Scal \Mcal$}}}^{(N)}&=\sum_{\alpha=1}^{N} \mathrm{tr}\left\{\big[{H}_{\raisebox{-1pt}{\tiny{$\Scal$}}}(\omega_{\raisebox{0pt}{\tiny{$\Scal$}}})-{H}_{\raisebox{-1pt}{\tiny{$\Scal$}}}(\omega_{\alpha})\big]\big[\tau_{\raisebox{-1pt}{\tiny{$\Scal$}}} (\beta, \omega_{\alpha})-\tau_{\raisebox{-1pt}{\tiny{$\Scal$}}}(\beta,\omega_{\alpha-1})\big]\right\}\nonumber\\
    &=\sum_{\alpha=1}^{N} (\omega_{\raisebox{0pt}{\tiny{$\Scal$}}}-\omega_\alpha)\left[\left(\frac{e^{-\beta \omega_{\alpha}}}{1-e^{-\beta \omega_{\alpha}}}-\frac{e^{-\beta \omega_{\alpha-1}}}{1-e^{-\beta \omega_{\alpha-1}}}\right)-\left(\frac{d\,e^{-\beta d \omega_{\alpha}}}{1-e^{-\beta d \omega_{\alpha}}}-\frac{d e^{-\beta d \omega_{\alpha-1}}}{1-e^{-\beta d \omega_{\alpha-1}}}\right)\right]\nonumber\\
    &=\lim_{K\to\infty} \sum_{\alpha=1}^{N} (\omega_{\raisebox{0pt}{\tiny{$\Scal$}}}-\omega_\alpha)\sum_{k=0}^{K}\big[\big(e^{-\beta (k+1) \omega_{\alpha}}-e^{-\beta (k+1) \omega_{\alpha-1}}\big)-d\, \big(e^{-\beta (k+1)d \omega_{\alpha}}-e^{-\beta (k+1)d \omega_{\alpha-1}}\big)\big]\nonumber\\
     &=\lim_{K\to\infty}\sum_{\alpha=1}^{N} (\omega_{\raisebox{0pt}{\tiny{$\Scal$}}}-\omega_\alpha)\sum_{k=0}^{K}\big[e^{-\beta (k+1) \omega_{\alpha}}\big(1-e^{-\beta (k+1) (\omega_{\alpha-1}-\omega_{\alpha})}\big)-d\, e^{-\beta d(k+1) \omega_{\alpha}}\big(1-e^{-\beta d(k+1) (\omega_{\alpha-1}-\omega_{\alpha})}\big)\big].
    \label{eq:tot energy d}
\end{align}
Here, since both ${H}_{\raisebox{-1pt}{\tiny{$\Mcal_\alpha$}}}$ and $H_{\raisebox{-1pt}{\tiny{$\Scal$}}}$ are equally spaced Hamiltonians, the average energy can be written as 
\begin{align}
    E(\omega_x, \omega_y) = \tr{{H}_{\raisebox{-1pt}{\tiny{$\Scal$}}}(\omega_x)\,\tau_{\raisebox{-1pt}{\tiny{$\Scal$}}} (\beta, \omega_y)}= \frac{\sum_{n=0}^{d-1}\, n \omega_{x}e^{-n\beta \omega_y}}{\sum_{n=0}^{d-1}\, e^{-n\beta \omega_y}}=\omega_{x}\left(\frac{e^{-\beta \omega_{y}}}{1-e^{-\beta \omega_{y}}}-\frac{d\,e^{-\beta d\, \omega_{y}}}{1-e^{-\beta d\,\omega_{y}}}\right)
\end{align}
by evaluating the geometric series 
\begin{align}
    \mathcal{Z}(\beta, \omega_y) = \sum_{n=0}^{d-1} e^{- \beta n \omega_y} = \tfrac{1-e^{-\beta d \omega_y}}{1-e^{-\beta \omega_y}}
\end{align}
and writing 
\begin{align}
    E(\omega_x, \omega_y) = \sum_{n=0}^{d-1} n \omega_x \tfrac{e^{- \beta n \omega_y}}{\mathcal{Z}(\beta,\omega_y)} = \tfrac{\omega_x}{\omega_y} \left\{ - \tfrac{\partial}{\partial \beta} \log{\left[ \mathcal{Z}(\beta, \omega_y)\right]} \right\} = - \tfrac{\omega_x}{\omega_y} \tfrac{\partial}{\partial \beta} \left[ \log{\left( 1-e^{-\beta d \omega_y}\right) - \log{\left( 1-e^{-\beta \omega_y}\right)}} \right]
\end{align}
as we do in the second line of Eq.~\eqref{eq:tot energy d} and then using the infinite series expression $(1-x)^{-1}=\lim_{K\to\infty}\sum_{k=0}^K x^k$ for any $|x|<1$ as per the third line.

As we will see in Appendix~\ref{app:hodivergingtimegaussian}, the energy cost for cooling an infinite-dimensional system when both target and machines have equally spaced Hamiltonians (i.e., harmonic oscillators) is similar to the form of Eq.~\eqref{eq:tot energy d}. Importantly, the second term in square parenthesis vanishes as $d\to\infty$, simplifying the expression even further.

We now assume that the energy gaps of the machine are given by $\omega_\alpha =\omega_{\raisebox{0pt}{\tiny{$\Scal$}}}+\epsilon \alpha$ and so the total energy cost can be written as follows:
\begin{align}
    \Delta {E}_{\raisebox{-1pt}{\tiny{$\Scal \Mcal$}}}^{(N)}
     &=- \lim_{K\to\infty} \sum_{\alpha=1}^{N} \alpha\epsilon \sum_{k=0}^{K} e^{-\beta k (\omega_{\raisebox{0pt}{\tiny{$\Scal$}}}+\alpha \epsilon)}\big(1-e^{\beta k  \epsilon}\big)+\lim_{K\to\infty} \sum_{\alpha=1}^{N} \alpha d \epsilon  \sum_{k=0}^{K}\, e^{-\beta k d (\omega_{\raisebox{0pt}{\tiny{$\Scal$}}}+\alpha \epsilon)}\big(1-e^{\beta k d  \epsilon}\big)\nonumber\\
     &= \lim_{K\to\infty} \sum_{k=0}^{K} \big[e^{-\beta k \omega_{\raisebox{0pt}{\tiny{$\Scal$}}}}\big(e^{\beta k  \epsilon}-1\big)\big(\sum_{\alpha=1}^{N} \alpha\epsilon e^{-\beta k \alpha \epsilon}\big)\big]-\lim_{K\to\infty} \sum_{k=0}^{K}e^{-\beta k d \omega_{\raisebox{0pt}{\tiny{$\Scal$}}}}\big[\big(e^{\beta k d  \epsilon}-1\big)\big(\sum_{\alpha=1}^{N} d \alpha\epsilon\, e^{-\beta kd \alpha \epsilon}\big)\big],
    \label{eq:tot energy alphaepsilon}
\end{align}
where we can swap the order of summation since both sums converge and the summands are non-positive. This can be seen from the first line above, using the fact that $e^{-\alpha x}(1-e^x) \in [-1,0]$ for all $\alpha\geq 1 $ and $ x \geq 0$. We now calculate the sum over $\alpha$.
\begin{align}
\sum_{\alpha=1}^{N} \, \alpha\epsilon\, e^{-\beta  \alpha \epsilon}&= -\frac{\partial }{\partial \beta}\sum_{\alpha=0}^{N}   e^{-\beta \alpha \epsilon}=   -\frac{\partial }{\partial \beta}\left(\frac{1-e^{-\beta (N+1)\epsilon}}{1-e^{-\beta \epsilon}}\right)\nonumber\\
  &= -\left(\frac{(N+1)\epsilon e^{-\beta (N+1)\epsilon} -(N+1)\epsilon e^{-\beta (N+2)\epsilon}-\epsilon e^{-\beta \epsilon}+\epsilon e^{-\beta (N+2)\epsilon}}{(1-e^{-\beta \epsilon})^2}\right)\nonumber\\
 & = \frac{\epsilon e^{-\beta \epsilon }}{(1-e^{-\beta \epsilon})^2}\big(1-(N+1) e^{-\beta N\epsilon} +N e^{-\beta (N+1)\epsilon}\big)\nonumber\\
 &=\frac{\epsilon e^{-\beta \epsilon }}{(1-e^{-\beta \epsilon})^2}\left(1- e^{-\beta N\epsilon} -N e^{-\beta N\epsilon}(1-e^{-\beta\epsilon})\right).
\label{eq:sum 1}
\end{align}
Combining Eqs.~(\ref{eq:tot energy alphaepsilon}) and~(\ref{eq:sum 1}), we arrive at 
\begin{align}
    \Delta {E}_{\raisebox{-1pt}{\tiny{$\Scal \Mcal$}}}^{(N)}
     &= \lim_{K\to\infty}\sum_{k=0}^{K} \left[\frac{e^{-\beta k \omega_{\raisebox{0pt}{\tiny{$\Scal$}}}}}{k}\frac{k\epsilon (1- e^{-\beta Nk\epsilon})}{(1-e^{-\beta k \epsilon})}- N \epsilon \, e^{-\beta k ( \omega_{\raisebox{0pt}{\tiny{$\Scal$}}}+N \epsilon)} \right]\nonumber\\
     &
     - \lim_{K\to\infty} \sum_{k=0}^{K} \left[\frac{e^{-\beta k d \omega_{\raisebox{0pt}{\tiny{$\Scal$}}}}}{k}\frac{kd\epsilon (1- e^{-\beta Nkd\epsilon})}{(1-e^{-\beta kd \epsilon})}- N d \epsilon \, e^{-\beta kd ( \omega_{\raisebox{0pt}{\tiny{$\Scal$}}}+N \epsilon)} \right].
    \label{eq:tot energy alphaepsilon 2}
\end{align}
In order to optimise the energy cost, we now assume that the energy gaps of the machines can be chosen to be smoothly increasing in such way that $\epsilon=\Delta \omega/N :=(\omega_{\textup{max}}-\omega_{\raisebox{0pt}{\tiny{$\Scal$}}})/N$. Substituting this expression for $\epsilon$ into the above equation yields
\begin{align}
    \Delta {E}_{\raisebox{-1pt}{\tiny{$\Scal \Mcal$}}}^{(N)} &= \lim_{K\to\infty}\sum_{k=0}^{K} \left[\frac{e^{-\beta k \omega_{\raisebox{0pt}{\tiny{$\Scal$}}}}}{k}\frac{k\Delta \omega (1- e^{-\beta k\Delta \omega})}{N(1-e^{-\beta k \frac{\Delta \omega}{N}})}- \Delta \omega \, e^{-\beta k ( \omega_{\raisebox{0pt}{\tiny{$\Scal$}}}+\Delta \omega)} \right]\nonumber\\ 
    &- \lim_{K\to\infty} \sum_{k=0}^{K} \left[\frac{e^{-\beta k d \omega_{\raisebox{0pt}{\tiny{$\Scal$}}}}}{k}\frac{kd\Delta \omega (1- e^{-\beta kd\Delta \omega})}{N(1-e^{-\beta kd \frac{\Delta \omega}{N}})}- d \Delta \omega \, e^{-\beta kd ( \omega_{\raisebox{0pt}{\tiny{$\Scal$}}}+ \Delta \omega)} \right].
\end{align}
We now wish to take the limit of $N \gg K \to \infty$. This assumption means that energy change of the system is approximately equal to its free energy change; in other words, the process occurs quasi-adiabatically. The ability to switch the order of taking the limits of $K$ and $N$ going to $\infty$ follows from the monotonic convergence of the sum over $k$. In particular, note that the term inside square parentheses in each summand converges and the first term in each summation (which is the only part that depends on $N$) is positive and bounded.

Under this assumption, we can use the approximation $\lim_{\beta x\to 0}\, \frac{x}{1-e^{-\beta x}}=\frac{1}{\beta}$; since $0 < e^{-\beta x} < 1$ for any positive $x$, the sum over $k$ converges to a finite value. In general, this approximation introduces a correction term for the energy change, however under said assumption the error incurred becomes negligible. Then, the total energy change $\Delta {E}_{\raisebox{-1pt}{\tiny{$\Scal \Mcal$}}}^{\textup{tot}}$ for the transformation $ \tau_{\raisebox{-1pt}{\tiny{$\Scal$}}}(\beta, \omega_{\raisebox{0pt}{\tiny{$\Scal$}}})\to \tau_{\raisebox{-1pt}{\tiny{$\Scal$}}}(\beta, \omega_{\textup{max}})$ throughout the overall process is
\begin{align}
\Delta {E}_{\raisebox{-1pt}{\tiny{$\Scal \Mcal$}}}^{\textup{tot}}
     &= \lim_{K\to\infty} \sum_{k=0}^{K} \left[\frac{e^{-\beta k \omega_{\raisebox{0pt}{\tiny{$\Scal$}}}}}{\beta k}-\frac{e^{-\beta k \omega_{\textup{max}}}}{\beta k}
    - (\omega_{\textup{max}}-\omega_{\raisebox{0pt}{\tiny{$\Scal$}}}) \, e^{-\beta k  \omega_{\textup{max}}} \right]\nonumber\\
     &
     -\lim_{K\to\infty}\sum_{k=0}^{K} \left[\frac{e^{-\beta kd \omega_{\raisebox{0pt}{\tiny{$\Scal$}}}}}{\beta k}-\frac{e^{-\beta k d\omega_{\textup{max}}}}{\beta k}
    - d (\omega_{\textup{max}}-\omega_{\raisebox{0pt}{\tiny{$\Scal$}}}) \, e^{-\beta k d \omega_{\textup{max}}} \right].
    \label{eq:tot energy alphaepsilon N infty}
\end{align}
As a side remark, note that here one can see that in the special case of equally spaced Hamiltonians, one indeed requires a diverging number of machine subsystems to attain perfect cooling at the Landauer limit, as this is the only way to fulfil the condition of Theorem~\ref{thm:variety}. This follows from the fact that the approximation $\frac{x}{1-e^{-\beta x}}\approx\frac{1}{\beta}$ only holds for small $\beta x$ and in general one would need to include higher-order terms that lead to an increase in energy cost.

We then have, using the expression for $E(\omega_x, \omega_y)$ derived earlier: 
\begin{align}
\Delta {E}_{\raisebox{-1pt}{\tiny{$\Scal \Mcal$}}}^{\textup{tot}} 
    &= -\frac{1}{\beta}\log (1-e^{-\beta \omega_{\raisebox{0pt}{\tiny{$\Scal$}}}})+\frac{1}{\beta}\log (1-e^{-\beta \omega_{\textup{max}}})-\frac{(\omega_{\textup{max}}-\omega_{\raisebox{0pt}{\tiny{$\Scal$}}}) \, e^{-\beta \omega_{\textup{max}}}}{1- \, e^{-\beta  \omega_{\textup{max}}}}\nonumber\\
    &+\frac{1}{\beta}\log (1-e^{-\beta d \omega_{\raisebox{0pt}{\tiny{$\Scal$}}}})-\frac{1}{\beta}\log (1-e^{-\beta d \omega_{\textup{max}}})+\frac{d (\omega_{\textup{max}}-\omega_{\raisebox{0pt}{\tiny{$\Scal$}}}) \, e^{-\beta d \omega_{\textup{max}}}}{1- \, e^{-\beta  d \omega_{\textup{max}}}}\nonumber\\
  &  = \frac{1}{\beta}\log \left(\frac{1-e^{-\beta d \omega_{\raisebox{0pt}{\tiny{$\Scal$}}}}}{1-e^{-\beta \omega_{\raisebox{0pt}{\tiny{$\Scal$}}}}}\right)- \frac{1}{\beta}\log \left(\frac{1-e^{-\beta d \omega_{\textup{max}}}}{1-e^{-\beta \omega_{\textup{max}}}}\right)
 -(\omega_{\textup{max}}-\omega_{\raisebox{0pt}{\tiny{$\Scal$}}})\left(\frac{ \, e^{-\beta \omega_{\textup{max}}}}{1- \, e^{-\beta  \omega_{\textup{max}}}}-\frac{d \, e^{-\beta d \omega_{\textup{max}}}}{1- \, e^{-\beta  d \omega_{\textup{max}}}}\right)\nonumber\\
 &=\frac{1}{\beta} \log [\mathcal{Z}_{\raisebox{-1pt}{\tiny{$\Scal$}}}(\beta, \omega_{\raisebox{0pt}{\tiny{$\Scal$}}})]-\frac{1}{\beta} \log [\mathcal{Z}_{\raisebox{-1pt}{\tiny{$\Scal$}}}(\beta,\omega_{\textup{max}})]- \tr{{H}_{\raisebox{-1pt}{\tiny{$\Scal$}}}(\omega_{\textup{max}})\,\tau_{\raisebox{-1pt}{\tiny{$\Scal$}}}(\beta, \omega_{\textup{max}})}+\tr{{H}_{\raisebox{-1pt}{\tiny{$\Scal$}}}(\omega_{\raisebox{0pt}{\tiny{$\Scal$}}})\,\tau_{\raisebox{-1pt}{\tiny{$\Scal$}}}(\beta, \omega_{\textup{max}})}\notag\\
  &=\frac{1}{\beta} \log [\mathcal{Z}_{\raisebox{-1pt}{\tiny{$\Scal$}}}(\beta , \omega_{\raisebox{0pt}{\tiny{$\Scal$}}})]-\frac{1}{\beta} \log [\mathcal{Z}_{\raisebox{-1pt}{\tiny{$\Scal$}}}(\beta , \omega_{\textup{max}})]\nonumber\\
  &- \tr{{H}_{\raisebox{-1pt}{\tiny{$\Scal$}}}(\omega_{\textup{max}})\,\tau_{\raisebox{-1pt}{\tiny{$\Scal$}}}(\beta, \omega_{\textup{max}})}+\tr{{H}_{\raisebox{-1pt}{\tiny{$\Scal$}}}(\omega_{\raisebox{0pt}{\tiny{$\Scal$}}})\,\tau_{\raisebox{-1pt}{\tiny{$\Scal$}}}(\beta, \omega_{\raisebox{0pt}{\tiny{$\Scal$}}})}-\tr{{H}_{\raisebox{-1pt}{\tiny{$\Scal$}}}(\omega_{\raisebox{0pt}{\tiny{$\Scal$}}})\,\tau_{\raisebox{-1pt}{\tiny{$\Scal$}}}(\beta, \omega_{\raisebox{0pt}{\tiny{$\Scal$}}})}+\tr{{H}_{\raisebox{-1pt}{\tiny{$\Scal$}}}(\omega_{\raisebox{0pt}{\tiny{$\Scal$}}})\,\tau_{\raisebox{-1pt}{\tiny{$\Scal$}}}(\beta, \omega_{\textup{max}})}\nonumber\\
 & =\frac{1}{\beta}{\Delta} S_{\raisebox{-1pt}{\tiny{$\Scal$}}}
 +\Delta E_{\raisebox{-1pt}{\tiny{$\Scal$}}},
    \label{eq:tot energy alphaepsilon N infty 1}
\end{align}
where we have explicitly written the von Neumann entropy $S(\varrho) = -\tr{\varrho \log (\varrho)}$ of a thermal state at inverse temperature $\beta$ as $S[\tau_{\raisebox{-1pt}{\tiny{$\Scal$}}}(\beta,\omega)]= \log [\mathcal{Z}_{\raisebox{-1pt}{\tiny{$\Scal$}}}(\beta ,\omega)] +\beta \, E [\tau_{\raisebox{-1pt}{\tiny{$\Scal$}}}(\beta ,\omega)]$. Since the energy change of the target system only concerns its local Hamiltonian, we immediately see that the heat dissipated by the resetting of machines in such a cooling process, i.e., $\Delta E_{\raisebox{-1pt}{\tiny{$\Mcal$}}}$, saturates the Landauer bound as it is equal to $\beta^{-1} {\Delta}S_{\raisebox{-1pt}{\tiny{$\Scal$}}}$. The process described is thus energy-optimal.

%%%%%%%%%%%%%%%%%%%%%%%%%%%%%%%%%%%%%%%%%%%%%%%%%%%%%%%%%
%%%%%%%%%%%%%%%%%%%%%%%%%%%%%%%%%%%%%%%%%%%%%%%%%%%%%%%

\section{Conditions for Structural and Control Complexity}
\label{app:conditionsstructuralcontrolcomplexity}

Here we begin by considering the protocol-independent structural conditions that must be fulfilled by the machine Hamiltonian to enable \emph{(1) perfect cooling} and \emph{(2) cooling at Landauer cost}; combined, these independent conditions provide a necessary requirement, namely that the machine must be infinite-dimensional with a spectrum that is unbounded (from above) for the \emph{possibility} of \emph{(3) perfect cooling at the Landauer limit}. We then turn to analyse the control complexity, which concerns the properties of the interaction that implements a given protocol. The properties of the machine Hamiltonian define the \emph{structural complexity}, which set the potential for how cool the target system can be made and at what energy cost; the extent to which a machine's potential is utilised in a particular protocol then depends on the properties of the joint unitary, i.e., the \emph{control complexity}. Here, we show that it is necessary that any protocol achieving perfect cooling at the Landauer limit involves interactions between the target and infinitely-many levels of the machine to realise the full cooling potential. We then analyse some sufficient conditions that arise as observations from our diverging control complexity protocols. This then leads us to demonstrate that individual degrees of freedom of the machine must be addressed in a fine-tuned manner to permute populations, highlighting that an operationally meaningful notion of control complexity must take into account factors beyond the effective dimensionality. 

\subsection{Necessary Complexity Conditions}
\label{app:necessaryconditions}

\subsubsection{Necessary Structural Conditions}
\label{app:necessarystructuralconditions}

\emph{1. Perfect Cooling.\textemdash }Let us consider the task of perfect cooling, independently from protocol-specific constraints, in the envisaged setting. One can lower bound the smallest eigenvalue $\lambda_{\textup{min}}$ of the final state $\varrho_{\raisebox{-1pt}{\tiny{$\Scal$}}}'$ (and hence how cold the system can become) after \emph{any} unitary interaction with a thermal machine by~\cite{Reeb_2014}
\begin{align}\label{eq:generalpuritybound}
    \lambda_{\textup{min}}(\varrho_{\raisebox{-1pt}{\tiny{$\Scal$}}}^\prime) \geq e^{-\beta \omega_{\raisebox{0pt}{\tiny{$\Mcal$}}}^{\textup{max}}} \lambda_{\textup{min}}(\varrho_{\raisebox{-1pt}{\tiny{$\Scal$}}}),
\end{align}
where $\omega_{\raisebox{0pt}{\tiny{$\Mcal$}}}^{\textup{max}}:=\max_{i,j}|\omega_{j}-\omega_{i}|$ denotes the largest energy gap of the machine Hamiltonian $H_{\raisebox{-1pt}{\tiny{$\Mcal$}}}$ with eigenvalues $\omega_{i}$. Without loss of generality, throughout this paper we set the ground-state energy of any system to be zero, i.e., $\omega_0 = 0$, such that the largest energy gap coincides with the largest energy eigenvalue. As we make no restrictions on the size or structure of the target or machine, the above inequality pertains to cooling protocols that could, for instance, be realised via sequences of unitaries on the target and parts of the machine. It follows that perfect cooling is only possible under two conditions: either the machine begins in a pure state ($\beta\to\infty$), or $H_{\raisebox{-1pt}{\tiny{$\Mcal$}}}$ is unbounded, i.e., $\omega_{\raisebox{0pt}{\tiny{$\Mcal$}}}^{\textup{max}}\to\infty$. Requiring $\beta<\infty$, a diverging energy gap in the machine Hamiltonian is thus a necessary structural condition for perfect cooling. Indeed, the largest energy gap of the machine plays a crucial role in limiting how cool the target system can be made (see also, e.g., Refs.~\cite{Allahverdyan_2011,Clivaz_2019L}). We now detail an independent property that is required for cooling with minimal energetic cost. 

\emph{2. Cooling at the Landauer Limit.\textemdash }Suppose now that one wishes to cool an initial target state $\tau_{\raisebox{-1pt}{\tiny{$\Scal$}}}(\beta, H_{\raisebox{-1pt}{\tiny{$\Scal$}}})$ to any thermal state $\tau^\prime_{\raisebox{-1pt}{\tiny{$\Scal$}}}(\beta^*, H_{\raisebox{-1pt}{\tiny{$\Scal$}}})$ with $\beta^*>\beta$ (not necessarily close to a pure state), at an energy cost saturating the Landauer limit. In Ref.~\cite{Reeb_2014}, it was shown that for any finite-dimensional machine, there are correction terms to the Landauer bound, which imply that it cannot be saturated; these terms vanish only in the limit where the machine dimension diverges. Thus, a necessary condition for achieving cooling with energy cost at the Landauer limit is provided by the following:

\begin{thm}\label{thm:landauerstructural}
To cool a target system $\tau_{\raisebox{-1pt}{\tiny{$\Scal$}}}(\beta, H_{\raisebox{-1pt}{\tiny{$\Scal$}}})$ to $\tau_{\raisebox{-1pt}{\tiny{$\Scal$}}}(\beta^*,H_{\raisebox{-1pt}{\tiny{$\Scal$}}})$, with $\beta^* > \beta$, using a machine in the initial state $\tau_{\raisebox{-1pt}{\tiny{$\Mcal$}}}(\beta, H_{\raisebox{-1pt}{\tiny{$\Mcal$}}})$ with energy cost at the Landauer limit, the machine must be infinite dimensional.
\end{thm}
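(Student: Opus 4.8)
The plan is to reduce the claim to the quantitative finite-size correction of Reeb and Wolf~\cite{Reeb_2014} via the equality form of Landauer's principle. First I would invoke Eq.~\eqref{eq:landauerequality}, which in the present setup reads
\begin{equation}
\beta \Delta E_{\raisebox{-1pt}{\tiny{$\Mcal$}}} - \widetilde{\Delta} S_{\raisebox{-1pt}{\tiny{$\Scal$}}} = I(\Scal:\Mcal)_{\varrho_{\raisebox{-1pt}{\tiny{$\Scal\Mcal$}}}^\prime} + D(\varrho_{\raisebox{-1pt}{\tiny{$\Mcal$}}}^\prime \| \varrho_{\raisebox{-1pt}{\tiny{$\Mcal$}}}),
\end{equation}
so that saturating the Landauer limit is equivalent to driving the non-negative correction $I + D$ to zero. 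Since the target is cooled from $\tau_{\raisebox{-1pt}{\tiny{$\Scal$}}}(\beta, H_{\raisebox{-1pt}{\tiny{$\Scal$}}})$ to $\tau_{\raisebox{-1pt}{\tiny{$\Scal$}}}(\beta^*, H_{\raisebox{-1pt}{\tiny{$\Scal$}}})$ with $\beta^* > \beta$ and $H_{\raisebox{-1pt}{\tiny{$\Scal$}}} \not\propto \mathbbm{1}$, the thermal entropy strictly decreases, i.e. $\widetilde{\Delta} S_{\raisebox{-1pt}{\tiny{$\Scal$}}} > 0$; this strict positivity is what will obstruct saturation at finite dimension.

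The second step would establish that exact saturation is outright impossible for any nontrivial process, as a warm-up that isolates the mechanism. If $D(\varrho_{\raisebox{-1pt}{\tiny{$\Mcal$}}}^\prime \| \varrho_{\raisebox{-1pt}{\tiny{$\Mcal$}}}) = 0$ then $\varrho_{\raisebox{-1pt}{\tiny{$\Mcal$}}}^\prime = \varrho_{\raisebox{-1pt}{\tiny{$\Mcal$}}}$ and hence the machine entropy is unchanged, whereas Lemma~\ref{lem:second-law-lemma} gives $S(\varrho_{\raisebox{-1pt}{\tiny{$\Mcal$}}}^\prime) - S(\varrho_{\raisebox{-1pt}{\tiny{$\Mcal$}}}) = \widetilde{\Delta} S_{\raisebox{-1pt}{\tiny{$\Scal$}}} + I(\Scal:\Mcal)_{\varrho_{\raisebox{-1pt}{\tiny{$\Scal\Mcal$}}}^\prime} \geq \widetilde{\Delta} S_{\raisebox{-1pt}{\tiny{$\Scal$}}} > 0$, a contradiction. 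To upgrade this to a statement about \emph{asymptotic} saturation with a finite-dimensional machine, I would invoke the main finite-size result of Ref.~\cite{Reeb_2014}: for a machine of dimension $d_{\raisebox{-1pt}{\tiny{$\Mcal$}}}$ the correction is bounded below by a strictly positive, dimension-dependent quantity,
\begin{equation}
I(\Scal:\Mcal)_{\varrho_{\raisebox{-1pt}{\tiny{$\Scal\Mcal$}}}^\prime} + D(\varrho_{\raisebox{-1pt}{\tiny{$\Mcal$}}}^\prime \| \varrho_{\raisebox{-1pt}{\tiny{$\Mcal$}}}) \geq g\big(\widetilde{\Delta} S_{\raisebox{-1pt}{\tiny{$\Scal$}}}, d_{\raisebox{-1pt}{\tiny{$\Mcal$}}}\big) > 0,
\end{equation}
where $g$ depends only on the fixed, positive entropy change and on the machine dimension, and where $g \to 0$ only in the limit $d_{\raisebox{-1pt}{\tiny{$\Mcal$}}} \to \infty$. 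This bound holds uniformly over all joint unitaries, so it cannot be evaded by optimising the interaction at fixed $d_{\raisebox{-1pt}{\tiny{$\Mcal$}}}$.

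Combining these ingredients, for any fixed finite-dimensional machine the excess heat $\beta \Delta E_{\raisebox{-1pt}{\tiny{$\Mcal$}}} - \widetilde{\Delta} S_{\raisebox{-1pt}{\tiny{$\Scal$}}}$ remains bounded below by $g(\widetilde{\Delta} S_{\raisebox{-1pt}{\tiny{$\Scal$}}}, d_{\raisebox{-1pt}{\tiny{$\Mcal$}}}) > 0$, so the Landauer limit cannot even be approached; achieving it requires a sequence of machines whose dimension diverges, which establishes the claim. The main obstacle is the quantitative finite-size bound itself: the elementary contradiction in the second step only rules out \emph{exact} equality, and ruling out asymptotic saturation genuinely requires the dimension-dependent lower bound on $I + D$ that vanishes precisely as $d_{\raisebox{-1pt}{\tiny{$\Mcal$}}} \to \infty$. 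A secondary point to verify is that this bound, stated in Ref.~\cite{Reeb_2014} for general entropy reductions, applies to cooling toward the finite-temperature thermal target $\tau_{\raisebox{-1pt}{\tiny{$\Scal$}}}(\beta^*, H_{\raisebox{-1pt}{\tiny{$\Scal$}}})$ considered here, which it does since only the strict inequality $\widetilde{\Delta} S_{\raisebox{-1pt}{\tiny{$\Scal$}}} > 0$ is needed rather than any approach to a pure state.
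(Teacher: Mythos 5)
Your proposal is correct and follows essentially the same route as the paper: the paper also establishes this theorem by invoking the finite-size correction result of Ref.~\cite{Reeb_2014} (there are dimension-dependent correction terms to the Landauer bound, uniform over all joint unitaries, which are strictly positive for any nonzero entropy change and vanish only as $d_{\raisebox{-1pt}{\tiny{$\Mcal$}}} \to \infty$), the only cosmetic difference being that the paper formally packages this as a consequence of the stronger Theorem~\ref{thm:variety} on effective dimension, whose proof reduces to the very same citation after a subspace-replacement argument. Your added care in distinguishing exact from asymptotic saturation, and in noting that only $\widetilde{\Delta} S_{\raisebox{-1pt}{\tiny{$\Scal$}}} > 0$ (not purity of the target) is needed, matches the paper's intent.
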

As we will discuss below, this minimal requirement for the notion of complexity is far from sufficient to achieve cooling at Landauer cost. 

\emph{3. Perfect Cooling at the Landauer Limit.\textemdash }We have two independent necessary conditions on the structure of the machine that must be asymptotically achieved to enable relevant goals for cooling: the former is required to achieve perfect cooling; the latter for cooling at the Landauer limit. Together, these conditions imply that in order to achieve perfect cooling at the Landauer limit, one must have an infinite-dimensional machine with a spectrum that is unbounded (from above), as stated in Corollary~\ref{cor:structuralcondition}.

Henceforth, we assume that these conditions are satisfied by the machine. The question then becomes: \emph{how does one engineer an interaction between the target system and machine to achieve perfect cooling at Landauer cost?} 

\subsubsection{Necessary Control Complexity Conditions}
\label{app:necessarycontrolcomplexityconditions}

The unbounded structural properties of the machine support the \emph{possibility} for perfect cooling at the Landauer limit; however, we now focus on the control properties of the interaction that \emph{realise} said potential (see Fig.~\ref{fig:complexity}). This leads to the distinct notion of \emph{control complexity}, which aims to differentiate between protocols that access the machine in a more or less complex manner. The structural complexity properties are protocol independent and related to the energy spectrum and dimensionality of the machine, whereas the control complexity concerns properties of the unitary that represents a particular protocol. For instance, the diverging-time protocol previously outlined comprises a sequence of interactions, each of which is individually not very complex; at the same time, the unconstrained control complexity protocol accesses the total (overall infinite-dimensional) machine ``at once'', and thus the number of (nontrivial) terms in the interaction Hamiltonian, or the effective dimensionality of the machine accessed by the unitary, becomes unbounded. Nonetheless, the net energy cost of this protocol with unconstrained control complexity remains in accordance with the Landauer limit, as the initial and final states of both the system and machine are identical to those in the diverging-time protocol. 

\emph{Effective Dimensionality.---}We begin by considering the effective dimensionality accessed (nontrivially) by a unitary, whose divergence is necessary but insufficient for achieving perfect cooling at the Landauer limit, as we show in the next section. This in turn motivates the desire for a more detailed notion of control complexity that takes into account the energy-level structure of the machine. 

We define the effective dimension as the dimension of the subspace of the global Hilbert space upon which the unitary acts nontrivially, which can be quantified via the minimum dimension of a subspace $\mathcal{A}$ of the joint Hilbert space $\mathscr{H}_{\raisebox{-1pt}{\tiny{$\Scal\Mcal$}}}$ in terms of which the unitary can be decomposed as $U_{\raisebox{-1pt}{\tiny{$\Scal\Mcal$}}} = U_{\raisebox{-1pt}{\tiny{$\Acal$}}} \oplus \mathbbm{1}_{\raisebox{-1pt}{\tiny{$\Acal^\perp$}}}$, i.e.,
\begin{align}\label{eq:appeffectivedimension}
    d^{\,\textup{eff}} := \min \mathrm{dim}(\mathcal{A}) : U_{\raisebox{-1pt}{\tiny{$\Scal\Mcal$}}} = U_{\raisebox{-1pt}{\tiny{$\Acal$}}} \oplus \mathbbm{1}_{\raisebox{-1pt}{\tiny{$\Acal^\perp$}}}.
\end{align}
One can relate this quantity to properties of the Hamiltonian that generates the evolution in a finite unit of time $T$ (which we can set equal to unity without loss of generality) by considering the interaction picture. In general, any global unitary $U_{\raisebox{-1pt}{\tiny{$\Scal\Mcal$}}} = e^{-i H_{\raisebox{-1pt}{\tiny{$\Scal\Mcal$}}} T}$ is generated by a Hamiltonian of the form $H_{\raisebox{-1pt}{\tiny{$\Scal\Mcal$}}} = H_{\raisebox{-1pt}{\tiny{$\Scal$}}}\otimes \mathbbm{1}_{\raisebox{-1pt}{\tiny{$\Mcal$}}} + \mathbbm{1}_{\raisebox{-1pt}{\tiny{$\Scal$}}}\otimes H_{\raisebox{-1pt}{\tiny{$\Mcal$}}} + H_{\raisebox{-1pt}{\scriptsize{\textup{int}}}}$. However, all protocols considered in this work have vanishing local terms, i.e., $H_{\raisebox{-1pt}{\tiny{$\Scal$}}}= H_{\raisebox{-1pt}{\tiny{$\Mcal$}}}=0$. More generally, one can argue that the local terms play no role in how the machine is used to cool the target. As such, one can consider unitaries generated by only the nontrivial term $H_{\raisebox{-1pt}{\scriptsize{\textup{int}}}}$ to be those representing a particular protocol of interest. That is, we can restrict our attention to $U_{\raisebox{-1pt}{\tiny{$\Scal\Mcal$}}} = e^{-i H_{\raisebox{-1pt}{\scriptsize{\textup{int}}}} T}$, where $H_{\raisebox{-1pt}{\scriptsize{\textup{int}}}}$ is a Hermitian operator on $\mathscr{H}_{\raisebox{-1pt}{\tiny{$\Scal\Mcal$}}}$ of the form $\sum_i A^i_{\raisebox{-1pt}{\tiny{$\Scal$}}}\otimes B^i_{\raisebox{-1pt}{\tiny{$\Mcal$}}}$ such that none of the $A^i_{\raisebox{-1pt}{\tiny{$\Scal$}}}, B^i_{\raisebox{-1pt}{\tiny{$\Mcal$}}}$ are proportional to the identity operator. In doing so, it follows that the effective dimension corresponds to $\mathrm{rank}(H_{\raisebox{-1pt}{\scriptsize{\textup{int}}}})$. Lastly, note that the above definition in terms of a direct sum decomposition provides an upper bound on any similar quantification of effective dimensionality based on other tensor factorisations of the joint Hilbert space considered and makes no assumption about the underlying structure. On the other hand, knowledge of said structure would permit a more meaningful notion of complexity to be defined. For instance, the effective dimensionality of a unitary acting on a many qubit system is better captured by considering its decomposition into a tensor product factorisation rather than the direct sum. We leave the exploration of such considerations to future work.

The effective dimensionality provides a minimal quantifier for a notion of control complexity, insofar as its divergence is necessary for saturating the Landauer bound, as we prove in the next section. In fact, we prove a slightly stronger statement, namely that the dimension of the machine Hilbert space to which the unitary (nontrivially) couples the target system to must diverge. However, as we discuss below, $d^{\,\textup{eff}} \to \infty$ is generally insufficient to achieve said goal, and fine-tuned control is required. Nonetheless, the manifestation of such control seems to be system dependent, precluding our ability (so far) to present a universal quantifier of control complexity. Thus, even though further conditions need to be met to achieve perfect cooling at minimal energy cost in unit time (see Theorem~\ref{thm:maineigenvaluecondition}), whenever we talk of an operation with finite control complexity, we mean those represented by a unitary that acts (nontrivially) only on a finite-dimensional subspace of the target system and machine. In contrast, by diverging control complexity, we mean a unitary that couples the target (nontrivially) to a full basis of the machine's Hilbert space, whose dimension diverges. With this notion at hand, we have Theorem~\ref{thm:variety}, which is proven below. Intuitively, we show that if a protocol accesses only a finite-dimensional subspace of the machine, then the machine is effectively finite dimensional inasmuch as a suitable replacement can be made while keeping all quantities relevant for cooling invariant. Invoking then the main result of Ref.~\cite{Reeb_2014}, there are finite-dimensional correction terms that then imply that the Landauer limit cannot be saturated.

Note finally that in Theorem~\ref{thm:variety} no particular structure of the systems is presupposed and the effective dimensionality relates to various notions of complexity put forth throughout the literature (see, e.g., Refs.~\cite{Ladyman2013,Holovatch_2017}). For instance, for a finite-dimensional target system with equally spaced energy levels $\omega_{\raisebox{0pt}{\tiny{$\Scal$}}}$, suppose that the machine structure is decomposed as $N$ qubits with energy gaps $\omega_{\raisebox{0pt}{\tiny{$\Mcal_n$}}} \in \{ \omega_{\raisebox{0pt}{\tiny{$\Scal$}}}+n \epsilon\}_{n=1, \dots, N}$, with arbitrarily small $\epsilon > 0$ and $N\to\infty$. Then the overall unitary that approaches perfect cooling at the Landauer limit has circuit complexity equal to the diverging $N$. 

\subsection{Proof of Theorem~\ref{thm:variety}, Corollary~\ref{cor:structuralcondition}, and Theorem~\ref{thm:landauerstructural}}
\label{app:proofcomplexitytheorems}

Here we prove Theorem~\ref{thm:variety}, which implies Theorem~\ref{thm:landauerstructural} and leads to Corollary~\ref{cor:structuralcondition}.

\begin{proof}
Let $\mathscr{H}_{\raisebox{-1pt}{\tiny{$\Xcal$}}}$ be a separable Hilbert space associated with the system $\Xcal$. Consider
\begin{align}
    H_{\raisebox{-1pt}{\tiny{$\Mcal$}}} = \sum_{n=0}^{\infty} \omega_n \ket{n}\!\bra{n} \quad \quad \textup{and} \quad \quad
    \mathscr H_{\raisebox{-1pt}{\tiny{$\Mcal^\prime$}}} = \textup{span}_{n \leq m}\{\ket{n}\},
\end{align}
for some finite $m$. In other words, $\mathscr H_{\raisebox{-1pt}{\tiny{$\Mcal^\prime$}}}$ is a finite-dimensional restriction of $\mathscr H_{\raisebox{-1pt}{\tiny{$\Mcal$}}}$. We show that any unitary that (nontrivially) interacts the target system with only a subspace spanned by finitely many eigenstates of $H_{\raisebox{-1pt}{\tiny{$\Mcal$}}}$ cannot attain Landauer's bound. Consider a general unitary $U$. Suppose that $U$ couples only $\mathscr{H}_{\raisebox{-1pt}{\tiny{$\Scal$}}}$ with $\mathscr{H}_{\raisebox{-1pt}{\tiny{$\Mcal^\prime$}}}$; whenever we talk of an operation with finite effective dimension in this paper, we mean specifically such a $U$, and by diverging effective dimension we mean a unitary that couples the target to any subspace of $\mathscr H_{\raisebox{-1pt}{\tiny{$\Mcal$}}}$ whose dimension diverges. Since
\begin{align}
    \mathscr{H}_{\raisebox{-1pt}{\tiny{$\Scal$}}} \otimes \mathscr{H}_{\raisebox{-1pt}{\tiny{$\Mcal$}}} &= \mathscr{H}_{\raisebox{-1pt}{\tiny{$\Scal$}}} \otimes (\mathscr{H}_{\raisebox{-1pt}{\tiny{$\Mcal^\prime$}}} \oplus \mathscr{H}_{\raisebox{-1pt}{\tiny{$\Mcal^\prime$}}}^\perp) \simeq (\mathscr{H}_{\raisebox{-1pt}{\tiny{$\Scal$}}} \otimes \mathscr{H}_{\raisebox{-1pt}{\tiny{$\Mcal^\prime$}}} )\oplus(\mathscr{H}_{\raisebox{-1pt}{\tiny{$\Scal$}}}  \otimes \mathscr{H}_{\raisebox{-1pt}{\tiny{$\Mcal^\prime$}}}^\perp),
\end{align}
we can associate the subspace $\mathscr{H}_{\raisebox{-1pt}{\tiny{$\Scal$}}} \otimes \mathscr{H}_{\raisebox{-1pt}{\tiny{$\Mcal^\prime$}}}$ with the label $\Acal$ and $\mathscr{H}_{\raisebox{-1pt}{\tiny{$\Scal$}}}  \otimes \mathscr{H}_{\raisebox{-1pt}{\tiny{$\Mcal^\prime$}}}^\perp$ with $\Bcal$ and write $U = U_{\raisebox{-1pt}{\tiny{$\Acal$}}} \oplus \mathbbm{1}_{\raisebox{-1pt}{\tiny{$\Bcal$}}}$. Then the initial configuration can be expressed as
\begin{align}
    \varrho_{\raisebox{-1pt}{\tiny{$\Scal$}}} \otimes \tau_{\raisebox{-1pt}{\tiny{$\Mcal$}}}(\beta, H_{\raisebox{-1pt}{\tiny{$\Mcal$}}}) = \left[ \begin{array}{cc}
    \varrho_{\raisebox{-1pt}{\tiny{$\Scal$}}} \otimes \varrho_{\raisebox{-1pt}{\tiny{$\Mcal^\prime$}}} & 0 \\
    0 & \varrho_{\raisebox{-1pt}{\tiny{$\Scal$}}} \otimes \varrho_{\raisebox{-1pt}{\tiny{$\Mcal^\prime$}}}^\perp
    \end{array} \right],
\end{align}
where
\begin{align}
    \varrho_{\raisebox{-1pt}{\tiny{$\Mcal^\prime$}}} := \frac{1}{\Zcal_{\raisebox{-1pt}{\tiny{$\Mcal$}}}(\beta, H_{\raisebox{-1pt}{\tiny{$\Mcal$}}})} \sum_{n \leq m} e^{-\beta \omega_n } \ket{n}\!\bra{n} \quad \quad \textup{and} \quad \quad \varrho_{\raisebox{-1pt}{\tiny{$\Mcal^\prime$}}}^\perp := \frac{1}{\Zcal_{\raisebox{-1pt}{\tiny{$\Mcal$}}}(\beta, H_{\raisebox{-1pt}{\tiny{$\Mcal$}}})} \sum_{n > m} e^{-\beta \omega_n } \ket{n}\!\bra{n}
\end{align}
add up to a (normalised) thermal state. 
Now consider the state
\begin{align}
    \widetilde{\varrho}_{\raisebox{-1pt}{\tiny{$\Mcal$}}} = \left[ \begin{array}{cc}
     \varrho_{\raisebox{-1pt}{\tiny{$\Mcal^\prime$}}} & 0 \\
    0 & \tr{\varrho_{\raisebox{-1pt}{\tiny{$\Mcal^\prime$}}}^\perp}
    \end{array} \right].
\end{align}
It is straightforward to check that is indeed a quantum state; moreover, it is the Gibbs state (at inverse temperature $\beta$) associated with the Hamiltonian
\begin{align}
    \widetilde{H}_{\raisebox{-1pt}{\tiny{$\Mcal$}}} = \sum_{n \leq m} \omega_n \ket{n}\!\bra{n} - \frac{1}{\beta} \log\left( \sum_{n>m} e^{-\beta \omega_n}\right) \ket{m+1}\!\bra{m+1}.
\end{align}
To see this, note that $\mathcal{Z}_{\raisebox{-1pt}{\tiny{$\Mcal$}}}(\beta, H_{\raisebox{-1pt}{\tiny{$\Mcal$}}}) = \mathcal{Z}_{\raisebox{-1pt}{\tiny{$\Mcal$}}}(\beta, \widetilde{H}_{\raisebox{-1pt}{\tiny{$\Mcal$}}})$ and that
\begin{align}
    \textup{exp}\left\{-\beta \left[-\frac{1}{\beta} \log\left( \sum_{n>m} e^{-\beta \omega_n}\right)\right]\right\} = \sum_{n>m} e^{-\beta \omega_n}.
\end{align}
Thus $\widetilde{\varrho}_{\raisebox{-1pt}{\tiny{$\Mcal$}}} = \tau_{\raisebox{-1pt}{\tiny{$\Mcal$}}}(\beta, \widetilde{H}_{\raisebox{-1pt}{\tiny{$\Mcal$}}})$. To ease notation in what follows, we write $\widetilde{\omega}_{m+1} := - \frac{1}{\beta} \log\left( \sum_{n>m} e^{-\beta \omega_n}\right)$. In the rest of the proof, we show that the unitary $U$ and the Hamiltonian $H_{\raisebox{-1pt}{\tiny{$\Mcal$}}}$ can be replaced by finite-dimensional versions without changing the quantities relevant for Landauer's principle.

Let $\widetilde{U} = U_{\raisebox{-1pt}{\tiny{$\Acal$}}} \oplus (\mathbbm{1}_{\raisebox{-1pt}{\tiny{$\Scal$}}}\otimes \ket{m+1}\!\bra{m+1})$. We then have
\begin{align}
    \widetilde{U} \left( \varrho_{\raisebox{-1pt}{\tiny{$\Scal$}}} \otimes \widetilde{\varrho}_{\raisebox{-1pt}{\tiny{$\Mcal$}}} \right) \widetilde{U}^\dagger = \left[ \begin{array}{cc}
        U_{\raisebox{-1pt}{\tiny{$\Acal$}}} (\varrho_{\raisebox{-1pt}{\tiny{$\Scal$}}} \otimes \varrho_{\raisebox{-1pt}{\tiny{$\Mcal^\prime$}}}) U_{\raisebox{-1pt}{\tiny{$\Acal$}}}^\dagger & 0 \\
        0 & \frac{e^{-\beta \widetilde{\omega}_{m+1}}}{\mathcal{Z}_{\raisebox{-1pt}{\tiny{$\Mcal$}}}(\beta, H_{\raisebox{-1pt}{\tiny{$\Mcal$}}})} \varrho_{\raisebox{-1pt}{\tiny{$\Scal$}}}
    \end{array} \right]
\end{align}
and
\begin{align}
    \ptr{\Mcal}{\widetilde{U} \left( \varrho_{\raisebox{-1pt}{\tiny{$\Scal$}}} \otimes \widetilde{\varrho}_{\raisebox{-1pt}{\tiny{$\Mcal$}}} \right) \widetilde{U}^\dagger} = \ptr{\Mcal^\prime}{U_{\raisebox{-1pt}{\tiny{$\Acal$}}} \left( \varrho_{\raisebox{-1pt}{\tiny{$\Scal$}}} \otimes \varrho_{\raisebox{-1pt}{\tiny{$\Mcal^\prime$}}} \right) U_{\raisebox{-1pt}{\tiny{$\Acal$}}}^\dagger} + \frac{e^{-\beta \widetilde{\omega}_{m+1}}}{\mathcal{Z}_{\raisebox{-1pt}{\tiny{$\Mcal$}}}(\beta, H_{\raisebox{-1pt}{\tiny{$\Mcal$}}})} \varrho_{\raisebox{-1pt}{\tiny{$\Scal$}}}
\end{align}
Compare this to the expression
\begin{align}
    \ptr{\Mcal}{U (\varrho_{\raisebox{-1pt}{\tiny{$\Scal$}}} \otimes \varrho_{\raisebox{-1pt}{\tiny{$\Mcal$}}}) U^\dagger} &= \mathrm{tr}_{\raisebox{-1pt}{\tiny{$\Mcal^\prime$}}}\left[ \begin{array}{cc}
       U_{\raisebox{-1pt}{\tiny{$\Acal$}}} (\varrho_{\raisebox{-1pt}{\tiny{$\Scal$}}} \otimes \varrho_{\raisebox{-1pt}{\tiny{$\Mcal^\prime$}}}) U^\dagger_{\raisebox{-1pt}{\tiny{$\Acal$}}}  & 0 \\
        0 & \varrho_{\raisebox{-1pt}{\tiny{$\Scal$}}} \otimes \varrho_{\raisebox{-1pt}{\tiny{$\Mcal^\prime$}}}^\perp
    \end{array}\right] \notag \\ &= \ptr{\Mcal^\prime}{U_{\raisebox{-1pt}{\tiny{$\Acal$}}} (\varrho_{\raisebox{-1pt}{\tiny{$\Scal$}}} \otimes \varrho_{\raisebox{-1pt}{\tiny{$\Mcal^\prime$}}}) U^\dagger_{\raisebox{-1pt}{\tiny{$\Acal$}}}} + \tr{\varrho_{\raisebox{-1pt}{\tiny{$\Mcal^\prime$}}}^\perp} \varrho_{\raisebox{-1pt}{\tiny{$\Scal$}}} \notag \\
    &= \ptr{\Mcal^\prime}{U_{\raisebox{-1pt}{\tiny{$\Acal$}}} (\varrho_{\raisebox{-1pt}{\tiny{$\Scal$}}} \otimes \varrho_{\raisebox{-1pt}{\tiny{$\Mcal^\prime$}}}) U^\dagger_{\raisebox{-1pt}{\tiny{$\Acal$}}}} + \frac{e^{-\beta \widetilde{\omega}_{m+1}}}{\mathcal{Z}_{\raisebox{-1pt}{\tiny{$\Mcal$}}}(\beta, H_{\raisebox{-1pt}{\tiny{$\Mcal$}}})} \varrho_{\raisebox{-1pt}{\tiny{$\Scal$}}},
\end{align}
since $\tr{\varrho_{\raisebox{-1pt}{\tiny{$\Mcal^\prime$}}}^\perp} = \frac{1}{\mathcal{Z}_{\raisebox{-1pt}{\tiny{$\Mcal$}}}(\beta, H_{\raisebox{-1pt}{\tiny{$\Mcal$}}})} \sum_{n>m} e^{-\beta \omega_n}$. Thus, the final system state is the same as it would be if we replaced the full initial machine state with $\widetilde{\varrho}_{\raisebox{-1pt}{\tiny{$\Mcal$}}}$; in particular, the entropy decrease of the system for any unitary that cools it is also unchanged.

The last thing we need to check is that the energy change of the machine similarly remains invariant. To that end, we have that
\begin{align}
    \widetilde{\varrho}_{\raisebox{-1pt}{\tiny{$\Mcal$}}}^\prime &= \ptr{\Scal}{\widetilde{U} (\varrho_{\raisebox{-1pt}{\tiny{$\Scal$}}} \otimes \widetilde{\varrho}_{\raisebox{-1pt}{\tiny{$\Mcal$}}}) \widetilde{U}^\dagger} = \ptr{\Scal}{U_{\raisebox{-1pt}{\tiny{$\Acal$}}} (\varrho_{\raisebox{-1pt}{\tiny{$\Scal$}}} \otimes \varrho_{\raisebox{-1pt}{\tiny{$\Mcal^\prime$}}}) U_{\raisebox{-1pt}{\tiny{$\Acal$}}}^\dagger} + \frac{e^{-\beta \widetilde{\omega}_{m+1}}}{\mathcal{Z}_{\raisebox{-1pt}{\tiny{$\Mcal$}}}(\beta, H_{\raisebox{-1pt}{\tiny{$\Mcal$}}})} \ket{m+1}\!\bra{m+1} \notag \\
    \widetilde{\varrho}_{\raisebox{-1pt}{\tiny{$\Mcal$}}} &= \varrho_{\raisebox{-1pt}{\tiny{$\Mcal^\prime$}}} + \frac{e^{-\beta \widetilde{\omega}_{m+1}}}{\mathcal{Z}_{\raisebox{-1pt}{\tiny{$\Mcal$}}}(\beta, H_{\raisebox{-1pt}{\tiny{$\Mcal$}}})} \ket{m+1}\!\bra{m+1}.
\end{align}
Thus, we have
\begin{align}
    \tr{\widetilde{H}_{\raisebox{-1pt}{\tiny{$\Mcal$}}}(\widetilde{\varrho}_{\raisebox{-1pt}{\tiny{$\Mcal$}}}^\prime - \widetilde{\varrho}_{\raisebox{-1pt}{\tiny{$\Mcal$}}})} = \mathrm{tr}\left\{H_{\raisebox{-1pt}{\tiny{$\Mcal$}}} \left[\ptr{\Scal}{U_{\raisebox{-1pt}{\tiny{$\Acal$}}} (\varrho_{\raisebox{-1pt}{\tiny{$\Scal$}}} \otimes \varrho_{\raisebox{-1pt}{\tiny{$\Mcal^\prime$}}}) U_{\raisebox{-1pt}{\tiny{$\Acal$}}}^\dagger} - \varrho_{\raisebox{-1pt}{\tiny{$\Mcal^\prime$}}} \right] \right\},
\end{align}
since $U_{\raisebox{-1pt}{\tiny{$\Acal$}}}$ only acts on $\mathscr{H}_{\raisebox{-1pt}{\tiny{$\Scal$}}} \otimes \mathscr{H}_{\raisebox{-1pt}{\tiny{$\Mcal^\prime$}}}$ and $\widetilde{H}_{\raisebox{-1pt}{\tiny{$\Mcal|\Mcal^\prime$}}} = H_{\raisebox{-1pt}{\tiny{$\Mcal|\Mcal^\prime$}}}$. In the same way, we have
\begin{align}
    \ptr{\Scal}{U (\varrho_{\raisebox{-1pt}{\tiny{$\Scal$}}} \otimes \varrho_{\raisebox{-1pt}{\tiny{$\Mcal$}}}) U^\dagger} &= \ptr{\Scal}{U_{\raisebox{-1pt}{\tiny{$\Acal$}}} (\varrho_{\raisebox{-1pt}{\tiny{$\Scal$}}}\otimes \varrho_{\raisebox{-1pt}{\tiny{$\Mcal^\prime$}}}) U_{\raisebox{-1pt}{\tiny{$\Acal$}}}^\dagger} + \varrho_{\raisebox{-1pt}{\tiny{$\Mcal^\prime$}}}^\perp \notag \\
    \varrho_{\raisebox{-1pt}{\tiny{$\Mcal$}}} &= \varrho_{\raisebox{-1pt}{\tiny{$\Mcal^\prime$}}} + \varrho_{\raisebox{-1pt}{\tiny{$\Mcal^\prime$}}}^\perp.
\end{align}
Thus, the energy difference is also
\begin{align}
    \mathrm{tr}\left\{ H_{\raisebox{-1pt}{\tiny{$\Mcal$}}} \left[ \ptr{\Scal}{U_{\raisebox{-1pt}{\tiny{$\Acal$}}} (\varrho_{\raisebox{-1pt}{\tiny{$\Scal$}}}\otimes \varrho_{\raisebox{-1pt}{\tiny{$\Mcal^\prime$}}}) U_{\raisebox{-1pt}{\tiny{$\Acal$}}}^\dagger} - \varrho_{\raisebox{-1pt}{\tiny{$\Mcal^\prime$}}} \right] \right\}.
\end{align}
Hence, we show that one can replace (a potentially infinite-dimensional) $\Mcal$ by some (finite) $m+1$-dimensional machine $\widetilde \Mcal$ if the joint unitary $U$ acts only on $m$ levels of $H_{\raisebox{-1pt}{\tiny{$\Mcal$}}}$. 
By Theorem 6 of Ref.~\cite{Reeb_2014}, there are finite-dimensional corrections to the Landauer bound, which then imply that it cannot be reached for finite $m$. Thus, the effective machine dimension, i.e., that which is actually (nontrivially) accessed throughout the interaction, must diverge in order for cooling to be possible at the Landauer limit. This proves Theorem~\ref{thm:variety}, which implies Theorem~\ref{thm:landauerstructural}.
\end{proof}

\subsection{Sufficient Complexity Conditions}
\label{app:sufficientconditionscomplexity}

Having shown the necessary requirements for cooling at Landauer cost, namely a control interaction that acts nontrivially on an infinite-dimensional (sub)space of the machine's Hilbert space, let us now return to emphasise the properties of the machine and cooling protocol that are sufficient to achieve perfect cooling at Landauer cost. For simplicity, we consider the case of a qubit, which exemplifies the discussion of finite-dimensional systems. The case of infinite-dimensional systems is treated independently in the next Appendix. 

We first consider the structural properties of the machine. The diverging-time protocol discussed in Appendix~\ref{app:divergingtimecoolingprotocolfinitedimensionalsystems} makes use of a diverging number $N$ of machines. Thus, the machine begins in the thermal state $\tau (\beta, H_{\raisebox{-1pt}{\tiny{$\Mcal$}}}^{\rm tot})$ of a $(2^N)$-dimensional system (with $N$ eventually diverging), with energy-level structure given by the sum of the Hamiltonians in Eq.~\eqref{eq:machHam}, i.e.,
\begin{equation}
    H_{\raisebox{-1pt}{\tiny{$\Mcal$}}}^{\rm tot} = \sum_{n=1}^N H_{\raisebox{-1pt}{\tiny{$\Mcal$}$_n$}}^{(n)} =\sum_n (1+n\epsilon)H_{\raisebox{-1pt}{\tiny{$\Scal$}}}^{(n)} ,
\end{equation}
that acts on the full Hilbert space (we use the usual convention that it acts as identity on unlabelled subspaces, e.g., $H^{(1)}_{\raisebox{-1pt}{\tiny{$\Mcal$}}}\equiv H^{(1)}_{\raisebox{-1pt}{\tiny{$\Mcal$}}} \otimes \openone^{(2)}\otimes \dots \otimes \openone^{(N)}$). 
Let us analyse in detail the properties of this Hamiltonian. The ground state is $\ket{0}^{\otimes N}$, which is set at zero energy. More generally, the energy eigenvalue corresponding to an eigenstate $\ket{i_0,i_1,\dots,i_{\raisebox{-1pt}{\tiny{$N$}}}}$ is given by $\omega_1$ multiplied by the number of indices $i_k$ that are equal to $1$, plus a sum of terms $k \epsilon$ where $k$ is the label of each index equal to $1$. Thus, the energy eigenvalue of the eigenstate $\ket{1,\dots,1}$ diverges as the number of subsystems diverges. At the same time, letting the factor $\epsilon$ go to zero renders all eigenstates with the same (constant) number of indices such that $i_k=1$ approach the same energy. Thus, in the limit $\epsilon \rightarrow 0$, one obtains subspaces of energy $E_{\raisebox{-1pt}{\tiny{$\Mcal$}}}^{(k)}=k \omega_1$ with degeneracy given by $D_k = \binom{N}{k}$, which also diverges for each constant $k$ and diverging $N$. Therefore, in addition to satisfying the structural conditions that are necessary for perfect cooling, as stated in Theorem~\ref{thm:landauerstructural}, the machine used here features additional properties, which are crucially important for this particular protocol, in particular because they are sufficient for perfect cooling at Landauer cost. As a remark, we also emphasise that for fixed (large) $N$ and (small) $\epsilon$, the machine is finite dimensional and has a nondegenerate Hamiltonian without any energy levels formally at infinity. 

Concerning the control complexity properties of the unitary that achieves perfect cooling in unit time, note that it is a cyclic shift operator, which can be written as
\begin{align}
    U_{\raisebox{-1pt}{\tiny{$\Scal \Mcal$}}} &= \Pi_{n=1}^N \mathbbm{S}^2_{\raisebox{-1pt}{\tiny{$\Scal \Mcal$}$_n$}} =  \Pi_n \left( \sum_{i,j_n=0}^{1} \ket{i,j_1,\dots,j_n,\dots,j_{\raisebox{-1pt}{\tiny{$N$}}}}\!\bra{j_n,j_1,\dots ,i,\dots,j_{\raisebox{-1pt}{\tiny{$N$}}}}_{\raisebox{-1pt}{\tiny{$\Scal \Mcal$}}}\right) \notag \\
    &= \sum_{i,j_1\dots j_{\raisebox{-1pt}{\tiny{$N$}}}=0}^{1}\ket{i,j_1,\dots ,j_{\raisebox{-1pt}{\tiny{$N$}}}}\!\bra{j_{\raisebox{-1pt}{\tiny{$N$}}},i,j_1,\dots,j_{N-1}}_{\raisebox{-1pt}{\tiny{$\Scal \Mcal$}}} .
\end{align}
As it is evident from its form, this unitary acts nontrivially on all of the (divergingly many) energy levels of the machine. The only basis vectors of the system-plus-machine Hilbert space that are left invariant are $\ket{i=0,j_1=0,\dots,j_{\raisebox{-1pt}{\tiny{$N$}}}=0}$ and 
$\ket{i=1,j_1=1,\dots,j_{\raisebox{-1pt}{\tiny{$N$}}}=1}$.

\subsection{Fine-Tuned Control Conditions}
\label{app:finetunedcontrolconditions}

Theorem~\ref{thm:variety} captures a notion of control complexity as a resource in a thermodynamically consistent manner, i.e., in line with Nernst's unattainability principle. However, following the discussion around Theorem~\ref{thm:landauerstructural} and that above, the protocols that we present that achieve perfect cooling at Landauer cost make use of machines and interactions with a far more complicated structure than suggested by the necessary condition of diverging effective dimensionality. In particular, we note that the interactions couple the target system to a diverging number of subspaces of the machine corresponding to distinct energy gaps in a fine-tuned manner. Moreover, there are a diverging number of energy levels of the machine both above and below the first excited level of the target. In this section, we begin by outlining the general conditions that perfect cooling at the Landauer limit entails, before presenting a more nuanced notion of control complexity in terms of the variety of distinct energy gaps in the machine in Appendix~\ref{app:energygapvariety}.

This suggests that an operationally meaningful quantifier of control complexity must take into account the energy-level structure of the machine that is accessed throughout any given protocol; additionally that of the target system plays a role. Indeed, both the final temperature of the target as well as the energy cost required to achieve this depends upon how the global eigenvalues are permuted via the cooling process. First, how cool the target becomes depends on the sum of the eigenvalues that are placed into the subspace spanned by the ground state. Second, for any fixed cooling amount, the energy cost depends on the constrained distribution of eigenvalues within the machine. Thus, in general, the optimal permutation of eigenvalues depends upon properties of both the target and machine. 

For instance, consider an arbitrary initially thermal target qubit, whose state is given by $\mathrm{diag}(p, 1-p)$ and a thermal machine of dimension $d_{\raisebox{-1pt}{\tiny{$\Mcal$}}}$ with spectrum $\{\lambda_{\raisebox{-1pt}{\tiny{$\Mcal$}}}^{i}\}_{i = 0, \hdots , d_{\raisebox{-1pt}{\tiny{$\Mcal$}}}-1}$. Now consider the decomposition of the joint Hilbert space into two orthogonal subspaces, $\Bcal_0$ and $\Bcal_1$, corresponding to the ground and excited eigenspaces of the target. The initial joint state is $p \, \mathrm{diag}(\lambda_{\raisebox{-1pt}{\tiny{$\Bcal_0$}}}^{i}) \oplus (1-p) \, \mathrm{diag}(\lambda_{\raisebox{-1pt}{\tiny{$\Bcal_1$}}}^{i})$, where we write $\lambda_{\raisebox{-1pt}{\tiny{$\Bcal_j$}}}^{i}$ to denote the $i^\textup{th}$ machine eigenvalue in the subspace $\Bcal_j$. The total population in the subspaces $\Bcal_0$ and $\Bcal_1$ are $p$ and $(1-p)$ respectively. To achieve perfect cooling one must permute the eigenvalues such that approximately a net transfer of population $(1-p)$ is moved from $\Bcal_1$ to $\Bcal_0$. To do this, one can take any subset $K$ of $\{ \lambda_{\raisebox{-1pt}{\tiny{$\Bcal_1$}}}^{i} \}$ such that as $d_{\raisebox{-1pt}{\tiny{$\Mcal$}}} \to \infty$, $\sum_{i\in K} \lambda_{\raisebox{-1pt}{\tiny{$\Bcal_1$}}}^{i} \to (1-p)$ and a subset $K^\prime$ (with $|K| = |K^\prime|$) from $\{ \lambda_{\raisebox{-1pt}{\tiny{$\Bcal_0$}}}^{i} \}$ such that $\sum_{i \in K^\prime} \{ \lambda_{\raisebox{-1pt}{\tiny{$\Bcal_0$}}}^{i} \} \to 0$ and exchange them. Although the choice of eigenvalues permuted is nonunique, the requirement must be fulfilled for some sets to perfectly cool the target. For any pair of eigenvalues exchanged between the subspaces, demanding that the exchange costs minimal energy amounts to a fine-tuning condition of the form $\lambda_{\raisebox{-1pt}{\tiny{$\Mcal$}}}^{i} \to p \lambda_{\raisebox{-1pt}{\tiny{$\Bcal_0$}}}^{i} + (1-p) \lambda_{\raisebox{-1pt}{\tiny{$\Bcal_1$}}}^{i}$ that must be satisfied. In general, the fine-tuned eigenvalue conditions that must be asymptotically attained depend upon target and machine eigenvalues, making it difficult to derive a closed-form expression. However, in the restricted scenario in which the target qubit begins maximally mixed (i.e., at infinite temperature), the machine begins thermal at some $\beta > 0$ and of dimension $d_{\raisebox{-1pt}{\tiny{$\Mcal$}}}$, and that the unitary implemented is such that the target is cooled as much as possible, one can derive precise conditions in terms of the machine structure alone, as we demonstrate below. The case for higher-dimensional target systems is similar. 

This discussion highlights the importance of capturing properties beyond the effective dimensionality, e.g., those regarding the distribution of machine (and, more generally, target system) eigenvalues, in order to meaningfully quantify control complexity in thermodynamics. Our protocols display similar behaviour to that discussed above asymptotically. Moreover, the machines exhibit an energy-level structure such that every possible energy gap is present, i.e., the set of machine energy gaps $\{ \omega_{ij} = \omega_i - \omega_j \}$ densely covers the interval $[\omega_{\raisebox{0pt}{\tiny{$\Scal$}}}, \infty)$, where $\omega_{\raisebox{0pt}{\tiny{$\Scal$}}}$ is the energy of the first excited level of the target. In Appendix~\ref{app:energygapvariety}, we demonstrate that indeed this condition is necessary for minimal-energy cost cooling.

Before doing so, we here first derive the fine-tuned control conditions that are asymptotically required for cooling at the Landauer limit. We begin with some general considerations before focusing on a special case for which an analytic expression can be derived. Furthermore, we demand that the unitary implemented is such that the target is cooled as much as possible: this does not preclude the possibility for cooling the target system less (albeit still close to a pure state) at a cost closer to the Landauer bound without satisfying all of the fine-tuning conditions. Nonetheless, in general there are a number of such conditions to be satisfied, and the special case serves as a pertinent example that demonstrates how the particular set of fine-tuning conditions for any considered scenario can be similarly derived. 

Consider an arbitrary thermal target system and machine of finite dimensions, with respective spectra $\boldsymbol{\lambda}_{\raisebox{-1pt}{\tiny{$\Scal$}}} := \{ \lambda_{\raisebox{-1pt}{\tiny{$\Scal$}}}^{0}, \hdots, \lambda_{\raisebox{-1pt}{\tiny{$\Scal$}}}^{d_{\raisebox{-1pt}{\tiny{$\Scal$}}}-1}\}$ and $\boldsymbol{\lambda}_{\raisebox{-1pt}{\tiny{$\Mcal$}}} := \{ \lambda_{\raisebox{-1pt}{\tiny{$\Mcal$}}}^{0}, \hdots, \lambda_{\raisebox{-1pt}{\tiny{$\Mcal$}}}^{d_{\raisebox{-1pt}{\tiny{$\Mcal$}}}-1}\}$. The states begin uncorrelated, so the global spectrum of the initial joint state is $\boldsymbol{\lambda}_{\raisebox{-1pt}{\tiny{$\Scal \Mcal$}}} := \{ \lambda_{\raisebox{-1pt}{\tiny{$\Scal \Mcal$}}}^0, \hdots, \lambda_{\raisebox{-1pt}{\tiny{$\Scal \Mcal$}}}^{d_{\raisebox{-1pt}{\tiny{$\Scal$}}} d_{\raisebox{-1pt}{\tiny{$\Mcal$}}} - 1}\} = \{ \lambda_{\raisebox{-1pt}{\tiny{$\Scal$}}}^{0}\lambda_{\raisebox{-1pt}{\tiny{$\Mcal$}}}^{0}, \lambda_{\raisebox{-1pt}{\tiny{$\Scal$}}}^{0}\lambda_{\raisebox{-1pt}{\tiny{$\Mcal$}}}^{1}, \hdots , \lambda_{\raisebox{-1pt}{\tiny{$\Scal$}}}^{d_{\raisebox{-1pt}{\tiny{$\Scal$}}}-1}\lambda_{\raisebox{-1pt}{\tiny{$\Mcal$}}}^{d_{\raisebox{-1pt}{\tiny{$\Mcal$}}}-1}\}$. Consider now a global unitary transformation; such a transformation cannot change the values of the spectrum, but merely permute them. In other words, the spectrum of the final global state after any such unitary is invariant and we have equivalence of the (unordered) sets $\boldsymbol{\lambda}_{\raisebox{-1pt}{\tiny{$\Scal \Mcal$}}}^\prime$ and $\boldsymbol{\lambda}_{\raisebox{-1pt}{\tiny{$\Scal \Mcal$}}}$.

The transformation that cools the target system as much as possible\footnote{We take majorisation among passive states to be the measure of cooling; this implies the highest possible ground state population and purity, and lowest possible entropy and average energy via Schur convexity.} is the one that places the $d_{\raisebox{-1pt}{\tiny{$\Mcal$}}}$ largest of the global eigenvalues into the subspace spanned by the ground state of the target, the second $d_{\raisebox{-1pt}{\tiny{$\Mcal$}}}$ largest into that spanned by the first excited state of the target, and so forth, with the smallest $d_{\raisebox{-1pt}{\tiny{$\Mcal$}}}$ global eigenvalues placed in the subspace corresponding to the highest energy eigenstate of the target system (we prove this statement shortly). More precisely, we denote by $\boldsymbol{\lambda}^{\downarrow}$ the nonincreasing ordering of the set $\boldsymbol{\lambda}$. Since the target and machine begin thermal, the local spectra $\boldsymbol{\lambda}_{\raisebox{-1pt}{\tiny{$\Scal$}}}$ and $\boldsymbol{\lambda}_{\raisebox{-1pt}{\tiny{$\Mcal$}}}$ are already ordered in this way with respect to their energy eigenbases, which we consider to be labelled in nondecreasing order. Cooling the target system as much as possible amounts to achieving the final reduced state of the target
\begin{align}\label{eq:systemreducedoptimallycool}
    \varrho_{\raisebox{-1pt}{\tiny{$\Scal$}}}^\prime = \sum_{i=0}^{d_{\raisebox{-1pt}{\tiny{$\Scal$}}}-1} \left(\sum_{j=0}^{d_{\raisebox{-1pt}{\tiny{$\Mcal$}}}-1} \lambda_{\raisebox{-1pt}{\tiny{$\Scal \Mcal$}}}^{\downarrow id_{\raisebox{-1pt}{\tiny{$\Mcal$}}} + j}\right) \ket{i}\!\bra{i}_{\raisebox{-1pt}{\tiny{$\Scal$}}}.
\end{align}
As a side remark, note that since each of the global eigenvalues are a product of the initial local eigenvalues (due to the initial tensor product structure), which are in turn related to the energy-level structure of the target system and machine (as they begin as thermal states), one can already see here that in order to approach perfect cooling, the machine must have some diverging energy gaps, such that the (finite) sum of the global eigenvalues contributing to the ground-state population of the target approaches 1. 

Of course, there is an equivalence class of unitaries that can achieve the same amount of cooling; in particular, any permutation of the set of the $d_{\raisebox{-1pt}{\tiny{$\Mcal$}}}$ global eigenvalues \emph{within} each energy eigenspace of the target system achieves the same amount of cooling, since it is the sum of these values that contribute to the total population in each subspace. Importantly, although such unitaries cool the target system to the same extent, their effect on the machine differs, and therefore so too does the energy cost of the protocol. However, demanding that such cooling is achieved at minimal energy cost amounts to a unique constraint on the global post-transformation state, namely that it must render the machine energetically passive, leading to the form:
\begin{align}\label{eq:globaloptimallycool}
    \varrho_{\raisebox{-1pt}{\tiny{$\Scal \Mcal$}}}^\prime = \sum_{i=0}^{d_{\raisebox{-1pt}{\tiny{$\Scal$}}}-1} \sum_{j=0}^{d_{\raisebox{-1pt}{\tiny{$\Mcal$}}}-1} \lambda_{\raisebox{-1pt}{\tiny{$\Scal \Mcal$}}}^{\downarrow id_{\raisebox{-1pt}{\tiny{$\Mcal$}}}+j} \ket{ij}\!\bra{ij}_{\raisebox{-1pt}{\tiny{$\Scal \Mcal$}}}.
\end{align}

We can derive the above form of the final joint state as follows. Consider the following ordering for the energy eigenbasis of $\Scal\Mcal$ chosen to match the above form
\begin{align}\label{eq:basisordering}
\{ \ket{00}_{\raisebox{-1pt}{\tiny{$\Scal \Mcal$}}}, \ket{01}_{\raisebox{-1pt}{\tiny{$\Scal \Mcal$}}}, ...,  \ket{0,d_{\raisebox{-1pt}{\tiny{$ \Mcal$}}}-1}_{\raisebox{-1pt}{\tiny{$\Scal \Mcal$}}}, \ket{10}_{\raisebox{-1pt}{\tiny{$\Scal \Mcal$}}}, ...,  \ket{1,d_{\raisebox{-1pt}{\tiny{$\Mcal $}}}-1}_{\raisebox{-1pt}{\tiny{$\Scal \Mcal$}}}, ..., \ket{d_{\raisebox{-1pt}{\tiny{$\Scal $}}}-1,0}_{\raisebox{-1pt}{\tiny{$\Scal \Mcal$}}}, ..., \ket{d_{\raisebox{-1pt}{\tiny{$\Scal $}}}-1,d_{\raisebox{-1pt}{\tiny{$ \Mcal$}}}-1}_{\raisebox{-1pt}{\tiny{$\Scal \Mcal$}}}\}.
\end{align}
This ordering is monotonically nondecreasing primarily with respect to the energy of $\Scal$, and secondarily w.r.t. $\Mcal$. We take the final state $\rho_{\raisebox{-1pt}{\tiny{$\Scal \Mcal$}}}^\prime$ to be expressed in this basis. To maximise the cooling in a single unitary operation, we maximise the sum of the first $k \cdot d_{\raisebox{-1pt}{\tiny{$\Mcal$}}}$ diagonal elements, for each $k\in \{1,2,..., d_{\raisebox{-1pt}{\tiny{$\Scal$}}}\}$, as each sum corresponds to the total population in the $k^\textup{th}$ lowest energy eigenstate of $\Scal$. The initial state $\varrho_{\raisebox{-1pt}{\tiny{$\Scal \Mcal$}}}$ is diagonal in this basis, so the vector of initial diagonal elements, which we label $\boldsymbol{\theta} := \mathrm{diag}(\varrho_{\raisebox{-1pt}{\tiny{$\Scal \Mcal$}}})$, is also the vector of eigenvalues, $\boldsymbol{\lambda}_{\raisebox{-1pt}{\tiny{$\Scal \Mcal$}}}$, i.e., $\boldsymbol{\theta} = \boldsymbol{\lambda}_{\raisebox{-1pt}{\tiny{$\Scal \Mcal$}}}$. Furthermore, since the unitary operation leaves the set of eigenvalues invariant, we have via the Schur-Horn lemma~\cite{2011Marshall} that the vector of final diagonal elements, which we label $\boldsymbol{\theta}^\prime := \mathrm{diag}(\varrho_{\raisebox{-1pt}{\tiny{$\Scal \Mcal$}}}^\prime)$, is majorised by the vector of initial ones, i.e., $\boldsymbol{\theta}^\prime \prec \boldsymbol{\theta}$. It follows that the partial sums we wish to maximise are upper bounded by the corresponding partial sums of the $k \cdot d_{\raisebox{-1pt}{\tiny{$ \Mcal$}}}$ largest diagonal elements of the initial state. We claim that the unitary that cools this maximal cooling amount at minimum energy cost is the one that permutes the diagonal elements to be ordered w.r.t. the basis ordering in Eq.~\eqref{eq:basisordering}.

More precisely, via the Schur-Horn lemma, one can always write $\boldsymbol{\theta}^\prime = D \boldsymbol{\theta}$, with $D$ a doubly stochastic matrix. The partial sums of the $k \cdot d_{\raisebox{-1pt}{\tiny{$ \Mcal$}}}$ first elements are linear functions of the elements of $\boldsymbol{\theta}$. Thus the maximum values are obtained at the extremal points of the convex set of doubly stochastic matrices, which are the permutation matrices, via the Birkhoff-von Neumann theorem~\cite{2011Marshall}. One can see by inspection that the optimal permutation matrices are the ones that place the largest $d_{\raisebox{-1pt}{\tiny{$ \Mcal$}}}$ diagonal elements in the first block (i.e., the ground-state eigenspace of $\Scal$), the next largest $d_{\raisebox{-1pt}{\tiny{$ \Mcal$}}}$ elements in the second block (i.e., the first excited-state eigenspace of $\Scal$), and so on. Within each block, the ordering does not affect the cooling of the target, so there is an equivalence class of permutations that satisfy the maximal cooling criterion. However, adding the optimisation over the energy cost eliminates this freedom. We may consider the reduced set of stochastic matrices that satisfy maximal cooling, generated by the permutations described above. Since the average energy of the final state is again a linear function of the diagonal elements, here too the minimum corresponds to a permutation matrix. Clearly the permutation that minimises the average energy is the one that orders the elements within each block to be decreasing w.r.t. the energies of $\Mcal$. Thus, the unique\footnote{Note that degeneracies in energy eigenvalues would lead to sets of equal diagonal elements, and prevent one from choosing a unique permutation. However, as the state in such degenerate subspaces is proportional to the identity matrix, we may take any unitary that is block diagonal w.r.t. the degeneracies without affecting the state, and hence the final cooling or average energy change.} stochastic matrix $D$ that leads to maximal cooling at the least energy cost possible is the one that permutes the energy eigenvalues to be ordered decreasing primarily w.r.t. the system energies, and secondarily w.r.t. the machine energies. The action of the stochastic matrix on diagonal elements of the state is related to the unitary operation on the entire quantum state by $|U_{ij}|^2 = D_{ij}$, so that the unitary operation is also a permutation (up to an energy-dependent phase, which is irrelevant since the initial and final states are diagonal).

We may understand this optimal operation through the notion of passivity, by noting that it cools at minimal energy cost by rendering the machine into the most energetically passive reduced state in the joint unitary orbit with respect to the cooling constraint on the target. Intuitively, one has cooled the target system maximally at the expense of heating the machine as little as possible. The final reduced state of the machine corresponding to this energetically optimal cooling transformation is
\begin{align}
    \varrho_{\raisebox{-1pt}{\tiny{$\Mcal$}}}^\prime = \sum_{j=0}^{d_{\raisebox{-1pt}{\tiny{$\Mcal$}}}-1} \left(\sum_{i=0}^{d_{\raisebox{-1pt}{\tiny{$\Scal$}}}-1} \lambda_{\raisebox{-1pt}{\tiny{$\Scal \Mcal$}}}^{\downarrow i d_{\raisebox{-1pt}{\tiny{$\Mcal$}}} + j}\right) \ket{j}\!\bra{j}_{\raisebox{-1pt}{\tiny{$\Mcal$}}}.
\end{align}

In general, any unitary that achieves these desired conditions simultaneously depends upon the energy-level structure of both the target system and machine, precluding a closed-form set of conditions that can be expressed only in terms of the machine. However, for the special case of a maximally mixed initial target state (i.e., cooling a thermal state at infinite temperature or erasing quantum information from its most entropic state), one can deduce this ordering precisely and moreover relate it directly to properties of the machine Hamiltonian, as we now demonstrate. In the following, we assume that $d_{\raisebox{-1pt}{\tiny{$\Mcal$}}}$ is even; the case for odd $d_{\raisebox{-1pt}{\tiny{$\Mcal$}}}$ can be derived similarly.

\begin{thm}\label{thm:maineigenvaluecondition}
Consider the target system to begin in the maximally mixed state and a thermal machine at temperature $\beta > 0$, whose eigenvalues are labelled in nonincreasing order, $\{\lambda_{\raisebox{-1pt}{\tiny{$\Mcal$}}}^{\downarrow i}\}_{i = 0, \hdots , d_{\raisebox{-1pt}{\tiny{$\Mcal$}}}-1}$. In order to cool the target perfectly, with the restriction that the target must be cooled as much as possible, at an energy cost that saturates the Landauer limit, the machine eigenvalues must satisfy
\begin{align}\label{eq:maineigenvaluesum} 
    \sum_{i=0}^{\frac{d_{\raisebox{-1pt}{\tiny{$\Mcal$}}}}{2}-1} \lambda_{\raisebox{-1pt}{\tiny{$\Mcal$}}}^{\downarrow i} \to 1, \quad
    \sum_{i=\frac{d_{\raisebox{-1pt}{\tiny{$\Mcal$}}}}{2}}^{d_{\raisebox{-1pt}{\tiny{$\Mcal$}}}-1} \lambda_{\raisebox{-1pt}{\tiny{$\Mcal$}}}^{\downarrow i} \to 0,
\end{align} 
    and
\begin{align}\label{eq:maingenerictermeigenvalue} 
    \frac{\frac{1}{2} \left(\lambda_{\raisebox{-1pt}{\tiny{$\Mcal$}}}^{\downarrow \lfloor \frac{i}{2} \rfloor} + \lambda_{\raisebox{-1pt}{\tiny{$\Mcal$}}}^{\downarrow  \frac{d_{\raisebox{-1pt}{\tiny{$\Mcal$}}}}{2}  + \lfloor \frac{i}{2} \rfloor}\right)}{\lambda_{\raisebox{-1pt}{\tiny{$\Mcal$}}}^{\downarrow i}} \to 1
\end{align}
for all $i \in \{ 0, \hdots, d_{\raisebox{-1pt}{\tiny{$\Mcal$}}}-1\}$, where $\lfloor \cdot \rfloor$ denotes the floor function and $\rightarrow$ denotes that the condition is satisfied asymptotically, \textup{i.e.}, as $d_{\raisebox{-1pt}{\tiny{$\Mcal$}}} \to \infty$\footnote{Strictly speaking, in the limit $d_{\raisebox{-1pt}{\tiny{$\Mcal$}}} \to \infty$ the conditions in Eq.~\eqref{eq:maingenerictermeigenvalue} must only be satisfied for almost all $i$, i.e., for all but a small subset that contributes negligibly to the relative entropy, as we discuss below.}.
\end{thm}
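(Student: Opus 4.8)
The plan is to specialise to a qubit target ($d_{\Scal}=2$, as dictated by the $\lfloor i/2\rfloor$ and $d_{\Mcal}/2$ appearing in the claim) and to feed the explicit optimally-cooling, minimal-energy transformation derived above [Eqs.~\eqref{eq:systemreducedoptimallycool}--\eqref{eq:globaloptimallycool}] the initial spectrum dictated by a maximally mixed target. Because $\varrho_{\Scal}=\tfrac12\mathbbm{1}$, the initial joint spectrum is just each machine eigenvalue halved and doubled in multiplicity; sorting it nonincreasingly gives the pairing $\lambda_{\Scal\Mcal}^{\downarrow k}=\tfrac12\lambda_{\Mcal}^{\downarrow\lfloor k/2\rfloor}$. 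I would record this identity first, since it propagates through both parts of the statement.

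For Eq.~\eqref{eq:maineigenvaluesum} I would simply read off the post-cooling ground-state population of the target. By the maximal-cooling rule it is the sum of the $d_{\Mcal}$ largest joint eigenvalues, namely the values $\tfrac12\lambda_{\Mcal}^{\downarrow i}$ for $i=0,\dots,d_{\Mcal}/2-1$, each counted twice, hence $\sum_{i=0}^{d_{\Mcal}/2-1}\lambda_{\Mcal}^{\downarrow i}$. Perfect cooling forces this to tend to unity, which is the first claim; the complementary tail sum then tends to zero by normalisation.

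For Eq.~\eqref{eq:maingenerictermeigenvalue} I would invoke the Landauer equality~\eqref{eq:landauerequality}. In the perfect-cooling limit the target becomes pure, so the global state factorises and $I(\Scal:\Mcal)\to0$; saturation of the bound therefore reduces to $D(\varrho_{\Mcal}^\prime\|\varrho_{\Mcal})\to0$. From the passive final marginal, its $j$-th eigenvalue (for even $d_{\Mcal}$) is $\lambda_{\Mcal}^{\prime\,j}=\tfrac12(\lambda_{\Mcal}^{\downarrow\lfloor j/2\rfloor}+\lambda_{\Mcal}^{\downarrow d_{\Mcal}/2+\lfloor j/2\rfloor})$, and both machine states are diagonal in the same energy eigenbasis, so the relative entropy is the classical expression $\sum_j\lambda_{\Mcal}^{\prime\,j}\log(\lambda_{\Mcal}^{\prime\,j}/\lambda_{\Mcal}^{\downarrow j})$. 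Setting $r_j:=\lambda_{\Mcal}^{\prime\,j}/\lambda_{\Mcal}^{\downarrow j}$---which is exactly the left-hand side of Eq.~\eqref{eq:maingenerictermeigenvalue}---and using $\sum_j(\lambda_{\Mcal}^{\prime\,j}-\lambda_{\Mcal}^{\downarrow j})=0$, I would rewrite the divergence as \[ D(\varrho_{\Mcal}^\prime\|\varrho_{\Mcal})=\sum_j\lambda_{\Mcal}^{\downarrow j}\,f(r_j),\qquad f(r):=r\log r-(r-1)\ge0, \] where $f$ vanishes only at $r=1$.

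Since every summand $\lambda_{\Mcal}^{\downarrow j}f(r_j)$ is non-negative, $D\to0$ forces each term to vanish; on the levels whose Gibbs weight $\lambda_{\Mcal}^{\downarrow j}$ stays bounded away from zero this yields $r_j\to1$, i.e.\ Eq.~\eqref{eq:maingenerictermeigenvalue}. The part I expect to demand the most care is precisely the footnote's ``almost all $i$'' caveat: a level with a large ratio $r_j$ but vanishingly small weight contributes negligibly to $D$, so pointwise convergence of the ratio can genuinely fail on a weight-negligible set. I would make this rigorous by splitting the index set into a ``bulk'' carrying non-negligible weight (where $r_j\to1$ is enforced) and a residual set whose total weight tends to zero, then checking that the residual set affects neither the target populations nor $D$ in the limit. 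The remaining steps are routine: confirming the interchange of the $\epsilon\to0$ and $d_{\Mcal}\to\infty$ limits and noting that odd $d_{\Mcal}$ follows from the identical counting with one unpaired level, which I would relegate to a remark.
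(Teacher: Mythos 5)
Your proposal is correct and follows essentially the same route as the paper's proof: reduce to a qubit target, use the doubled-and-paired machine spectrum under the maximal-cooling permutation to read off the ground-state population (giving Eq.~\eqref{eq:maineigenvaluesum}), and then reduce Landauer saturation via the equality~\eqref{eq:landauerequality} to the vanishing of $D(\varrho_{\Mcal}^\prime\|\varrho_{\Mcal})$ over the passive final marginal (giving Eq.~\eqref{eq:maingenerictermeigenvalue}). Your decomposition $D=\sum_j\lambda_{\Mcal}^{\downarrow j}f(r_j)$ with $f(r)=r\log r-(r-1)\geq 0$ is a marginally cleaner way of enforcing the pointwise ratio conditions and the ``almost all $i$'' caveat than the paper's own treatment, which argues informally in the proof and defers the rigorous handling of the weight-negligible exceptional set to a subsequent concavity bound of the form $D\geq\tfrac{1}{2}N_{\pm}\theta^{2}$, but the substance of the two arguments is the same.
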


\begin{proof}
We consider a qubit for simplicity, but the generalisation to cooling an arbitrary-dimensional maximally mixed state is straightforward. The initial joint spectrum of the system and machine is
\begin{align}\label{eq:maximallymixedglobalspectrum}
    \boldsymbol{\lambda}_{\raisebox{-1pt}{\tiny{$\Scal \Mcal$}}} = \tfrac{1}{2} \{ \boldsymbol{\lambda}_{\raisebox{-1pt}{\tiny{$\Mcal$}}}^{\downarrow}, \boldsymbol{\lambda}_{\raisebox{-1pt}{\tiny{$\Mcal$}}}^{\downarrow}\} = \tfrac{1}{2} \{ \lambda_{\raisebox{-1pt}{\tiny{$\Mcal$}}}^{\downarrow 0}, \lambda_{\raisebox{-1pt}{\tiny{$\Mcal$}}}^{\downarrow 1} , \hdots, \lambda_{\raisebox{-1pt}{\tiny{$\Mcal$}}}^{\downarrow d_{\raisebox{-1pt}{\tiny{$\Mcal$}}}-1},  \lambda_{\raisebox{-1pt}{\tiny{$\Mcal$}}}^{\downarrow 0}, \lambda_{\raisebox{-1pt}{\tiny{$\Mcal$}}}^{\downarrow 1} , \hdots, \lambda_{\raisebox{-1pt}{\tiny{$\Mcal$}}}^{\downarrow d_{\raisebox{-1pt}{\tiny{$\Mcal$}}}-1}\}.
\end{align}
As each $\lambda_{\raisebox{-1pt}{\tiny{$\Mcal$}}}^{\downarrow i} = \tfrac{1}{\mathcal{Z}_{\raisebox{-1pt}{\tiny{$\Mcal$}}}(\beta, H_{\raisebox{-1pt}{\tiny{$\Mcal$}}})} e^{-\beta \omega_i}$ for any thermal state with Hamiltonian $H_{\raisebox{-1pt}{\tiny{$\Mcal$}}} = \sum_{i} \omega_i \ket{i}\!\bra{i}_{\raisebox{-1pt}{\tiny{$\Mcal$}}}$ written with respect to nondecreasing energy eigenvalues, it follows that the globally ordered spectrum is
\begin{align}\label{eq:appdgloballyorderedfinalspectrum}
     \boldsymbol{\lambda}_{\raisebox{-1pt}{\tiny{$\Scal \Mcal$}}}^{\downarrow} = \tfrac{1}{2} \{ \lambda_{\raisebox{-1pt}{\tiny{$\Mcal$}}}^{\downarrow 0}, \lambda_{\raisebox{-1pt}{\tiny{$\Mcal$}}}^{\downarrow 0}, \lambda_{\raisebox{-1pt}{\tiny{$\Mcal$}}}^{\downarrow 1} , \lambda_{\raisebox{-1pt}{\tiny{$\Mcal$}}}^{\downarrow 1} , \hdots, \lambda_{\raisebox{-1pt}{\tiny{$\Mcal$}}}^{\downarrow d_{\raisebox{-1pt}{\tiny{$\Mcal$}}}-1}, \lambda_{\raisebox{-1pt}{\tiny{$\Mcal$}}}^{\downarrow d_{\raisebox{-1pt}{\tiny{$\Mcal$}}}-1}\}.
\end{align}
Expressing the global states with respect to the product of local energy eigenbases, we have that the initial joint state is $\tfrac{\mathbbm{1}_{\raisebox{-1pt}{\tiny{$\Scal$}}}}{2} \otimes \tau_{\raisebox{-1pt}{\tiny{$\Mcal$}}}(\beta,H_{\raisebox{-1pt}{\tiny{$\Mcal$}}}) = \mathrm{diag}(\boldsymbol{\lambda_{\raisebox{-1pt}{\tiny{$\Scal \Mcal$}}}})$ [see Eq.~\eqref{eq:maximallymixedglobalspectrum}] and the unitary that cools the target as much as possible at minimum energy cost is the one achieving the globally passive final joint state $\varrho_{\raisebox{-1pt}{\tiny{$\Scal \Mcal$}}}^\prime = \mathrm{diag}(\boldsymbol{\lambda}_{\raisebox{-1pt}{\tiny{$\Scal \Mcal$}}}^{\downarrow})$. This leads to the following reduced states
\begin{align}
    \varrho_{\raisebox{-1pt}{\tiny{$\Scal$}}}^\prime &= \left(\sum_{i=0}^{\frac{d_{\raisebox{-1pt}{\tiny{$\Mcal$}}}}{2} - 1} \lambda_{\raisebox{-1pt}{\tiny{$\Mcal$}}}^{\downarrow i}\right) \ket{0}\!\bra{0}_{\raisebox{-1pt}{\tiny{$\Scal$}}} + \left(\sum_{i=\frac{d_{\raisebox{-1pt}{\tiny{$\Mcal$}}}}{2}}^{d_{\raisebox{-1pt}{\tiny{$\Mcal$}}}-1} \lambda_{\raisebox{-1pt}{\tiny{$\Mcal$}}}^{\downarrow i}\right) \ket{1}\!\bra{1}_{\raisebox{-1pt}{\tiny{$\Scal$}}}, \\
    \varrho_{\raisebox{-1pt}{\tiny{$\Mcal$}}}^\prime &= \frac{1}{2} \left( \lambda_{\raisebox{-1pt}{\tiny{$\Mcal$}}}^{\downarrow 0} + \lambda_{\raisebox{-1pt}{\tiny{$\Mcal$}}}^{\downarrow \frac{d_{\raisebox{-1pt}{\tiny{$\Mcal$}}}}{2}}\right) \ket{0}\!\bra{0}_{\raisebox{-1pt}{\tiny{$\Mcal$}}} + \frac{1}{2} \left( \lambda_{\raisebox{-1pt}{\tiny{$\Mcal$}}}^{\downarrow 0} + \lambda_{\raisebox{-1pt}{\tiny{$\Mcal$}}}^{\downarrow \frac{d_{\raisebox{-1pt}{\tiny{$\Mcal$}}}}{2}}\right) \ket{1}\!\bra{1}_{\raisebox{-1pt}{\tiny{$\Mcal$}}} + \notag \\
    &+ \frac{1}{2} \left( \lambda_{\raisebox{-1pt}{\tiny{$\Mcal$}}}^{\downarrow 1} + \lambda_{\raisebox{-1pt}{\tiny{$\Mcal$}}}^{\downarrow \frac{d_{\raisebox{-1pt}{\tiny{$\Mcal$}}}}{2}  + 1}\right) \ket{2}\!\bra{2}_{\raisebox{-1pt}{\tiny{$\Mcal$}}} + \frac{1}{2} \left( \lambda_{\raisebox{-1pt}{\tiny{$\Mcal$}}}^{\downarrow 1} + \lambda_{\raisebox{-1pt}{\tiny{$\Mcal$}}}^{\downarrow  \frac{d_{\raisebox{-1pt}{\tiny{$\Mcal$}}}}{2} + 1}\right) \ket{3}\!\bra{3}_{\raisebox{-1pt}{\tiny{$\Mcal$}}} + \hdots  \label{eq:finalmachinestateoptimalenergy}
\end{align}
Intuitively, the reduced target state has the larger half of the initial machine eigenvalues in the ground state and the smaller half in the excited state; the reduced machine state has the sum of the largest elements from each of these halves in its ground state, the next largest element from each half (which, in this case, is equal to the first) in its first excited state, and so forth. Let us denote the spectrum of the final state of the machine by $\boldsymbol{\lambda}_{\raisebox{-1pt}{\tiny{$\Mcal$}}}^{\prime \downarrow} := \{\lambda_{\raisebox{-1pt}{\tiny{$\Mcal$}}}^{\prime \downarrow 0} , \lambda_{\raisebox{-1pt}{\tiny{$\Mcal$}}}^{\prime \downarrow 1}, \hdots, \lambda_{\raisebox{-1pt}{\tiny{$\Mcal$}}}^{\prime \downarrow d_{\raisebox{-1pt}{\tiny{$\Mcal$}}}-1} \} = \tfrac{1}{2}\{ \lambda_{\raisebox{-1pt}{\tiny{$\Mcal$}}}^{\downarrow 0} + \lambda_{\raisebox{-1pt}{\tiny{$\Mcal$}}}^{\downarrow \frac{d_{\raisebox{-1pt}{\tiny{$\Mcal$}}}}{2} }, \lambda_{\raisebox{-1pt}{\tiny{$\Mcal$}}}^{\downarrow 0} + \lambda_{\raisebox{-1pt}{\tiny{$\Mcal$}}}^{\downarrow \frac{d_{\raisebox{-1pt}{\tiny{$\Mcal$}}}}{2} }, \hdots, \lambda_{\raisebox{-1pt}{\tiny{$\Mcal$}}}^{\downarrow \frac{d_{\raisebox{-1pt}{\tiny{$\Mcal$}}}}{2} + 1} + \lambda_{\raisebox{-1pt}{\tiny{$\Mcal$}}}^{\downarrow d_{\raisebox{-1pt}{\tiny{$\Mcal$}}}-1} \}$. Importantly, by construction, the reduced state of the final machine has its local eigenvalues in nonincreasing order, i.e., it is energetically passive. 

We therefore have the final reduced states of the protocol that cools the initially maximally mixed target as much as possible at minimal energy cost, in particular with minimal heat dissipation by the machine, given the structural resources at hand. We can now analyse the properties that are required to saturate the Landauer limit by considering the terms on the r.h.s. of Eq.~\eqref{eq:landauerequality} for any fixed initial inverse temperature of the machine $\beta \geq 0$. 

First note that cooling the target system by any amount fixes the change in entropy of the target system, so the first term is irrelevant. The second term concerns the mutual information built up between the target system and machine. In general, this is nonvanishing, although one can achieve any desired amount of cooling without generating such correlations (as per our constructions). Furthermore, in the case where one wants to consider attaining a perfectly cool final state, as we do here, the final reduced state of the target is approximately pure and so $I(S:M)_{\varrho_{\raisebox{-1pt}{\tiny{$\Scal \Mcal$}}}^\prime} \to 0$. In terms of the reduced states above, this means that $\sum_{i=0}^{\frac{d_{\raisebox{-1pt}{\tiny{$\Mcal$}}}}{2}-1} \lambda_{\raisebox{-1pt}{\tiny{$\Mcal$}}}^{\downarrow i} \to 1$ and $\sum_{i=\frac{d_{\raisebox{-1pt}{\tiny{$\Mcal$}}}}{2}}^{d_{\raisebox{-1pt}{\tiny{$\Mcal$}}}-1} \lambda_{\raisebox{-1pt}{\tiny{$\Mcal$}}}^{\downarrow i} \to 0$, which can occur only if the largest half of energy eigenvalues of the machine, i.e., $\omega_{i}$ for all $i \geq \frac{d_{\raisebox{-1pt}{\tiny{$\Mcal$}}}}{2}$, diverge (since the summation contains only non-negative summands).

The final term that must be minimised to saturate the Landauer limit is the relative entropy of the final with respect to the initial machine state, $D(\varrho_{\raisebox{-1pt}{\tiny{$\Mcal$}}}^\prime \| \varrho_{\raisebox{-1pt}{\tiny{$\Mcal$}}})$. Here one can already see that an infinite-dimensional machine is required to saturate the Landauer bound: from Ref.~\cite{Reeb_2014}, $D(\varrho_{\raisebox{-1pt}{\tiny{$\Mcal$}}}^\prime \| \varrho_{\raisebox{-1pt}{\tiny{$\Mcal$}}}) \geq f(\Delta S_{\raisebox{-1pt}{\tiny{$\Mcal$}}}, d_{\raisebox{-1pt}{\tiny{$\Mcal$}}})$, where $f$ is a dimension-dependant function of the entropy difference of the machine that exhibits non-negative correction terms that vanish only in the limit $d_{\raisebox{-1pt}{\tiny{$\Mcal$}}} \to \infty$. The relative entropy vanishes iff $\varrho_{\raisebox{-1pt}{\tiny{$\Mcal$}}} = \varrho_{\raisebox{-1pt}{\tiny{$\Mcal$}}}^\prime$; moreover, by Pinsker's inequality one has $\tfrac{1}{2}\| \varrho_{\raisebox{-1pt}{\tiny{$\Mcal$}}} - \varrho_{\raisebox{-1pt}{\tiny{$\Mcal$}}}^\prime\|_1^2 \leq D(\varrho_{\raisebox{-1pt}{\tiny{$\Mcal$}}}\| \varrho_{\raisebox{-1pt}{\tiny{$\Mcal$}}}^\prime)$, so one can bound the trace distance between the initial and final state of the machine for any desired value of the relative entropy. Although $\varrho_{\raisebox{-1pt}{\tiny{$\Mcal$}}} = \varrho_{\raisebox{-1pt}{\tiny{$\Mcal$}}}^\prime$ implies a trivial process that cannot cool the (initially thermal) target system, as our protocols that saturate the Landauer limit demonstrate, there are processes that asymptotically display the behaviour $ \varrho_{\raisebox{-1pt}{\tiny{$\Mcal$}}}^\prime \to \varrho_{\raisebox{-1pt}{\tiny{$\Mcal$}}}$ \emph{and} cool the target system. For the asymptotic machine states to converge, in particular, their eigenvalues must become approximately equal asymptotically. Demanding this on the spectrum in Eq.~\eqref{eq:finalmachinestateoptimalenergy} leads to a generic term that must be asymptotically satisfied of the form:
\begin{align}\label{eq:lemmagenerictermeigenvalue} 
    \frac{\frac{1}{2} \left(\lambda_{\raisebox{-1pt}{\tiny{$\Mcal$}}}^{\downarrow \lfloor \frac{i}{2} \rfloor} + \lambda_{\raisebox{-1pt}{\tiny{$\Mcal$}}}^{\downarrow  \frac{d_{\raisebox{-1pt}{\tiny{$\Mcal$}}}}{2}  + \lfloor \frac{i}{2} \rfloor}\right)}{\lambda_{\raisebox{-1pt}{\tiny{$\Mcal$}}}^{\downarrow i}} \to 1 \qquad \forall \; i \in \{ 0, \hdots, d_{\raisebox{-1pt}{\tiny{$\Mcal$}}}-1\}.
\end{align}
\end{proof}
\noindent In order to achieve perfect cooling at the Landauer limit, one thus must simultaneously satisfy the conditions outlined in Theorem~\ref{thm:maineigenvaluecondition}. In other words, to minimise the relative-entropy term with the additional constraints $\sum_{i=0}^{\frac{d_{\raisebox{-1pt}{\tiny{$\Mcal$}}}}{2}-1} \lambda_{\raisebox{-1pt}{\tiny{$\Mcal$}}}^{\downarrow i} \to 1$ and $\sum_{i=\frac{d_{\raisebox{-1pt}{\tiny{$\Mcal$}}}}{2}}^{d_{\raisebox{-1pt}{\tiny{$\Mcal$}}}-1} \lambda_{\raisebox{-1pt}{\tiny{$\Mcal$}}}^{\downarrow i} \to 0$. The first thing to note is that since the eigenvalues $\lambda_{\raisebox{-1pt}{\tiny{$\Mcal$}}}^{\downarrow i}$ contribute to different sums depending on whether $i$ is in the larger half $\{ 0, \hdots, \tfrac{d_{\raisebox{-1pt}{\tiny{$\Mcal$}}}}{2}-1\}$ or smaller half $\{ \tfrac{d_{\raisebox{-1pt}{\tiny{$\Mcal$}}}}{2}, \hdots, d_{\raisebox{-1pt}{\tiny{$\Mcal$}}}\}$, one cannot have $\lambda_{\raisebox{-1pt}{\tiny{$\Mcal$}}}^{\downarrow \frac{d_{\raisebox{-1pt}{\tiny{$\Mcal$}}}}{2}  + \lfloor \frac{i}{2} \rfloor } = \lambda_{\raisebox{-1pt}{\tiny{$\Mcal$}}}^{\downarrow \lfloor \frac{i}{2} \rfloor} \; \forall \;i$ (i.e., a completely degenerate machine), since then both summations would be over identical values and there is no way for them to converge to distinct values. This precludes the trivial solution that satisfies the constraints of Eq.~\eqref{eq:maingenerictermeigenvalue} alone, namely the maximally mixed machine state, which cannot be used to perform any cooling [as, in particular, it does not satisfy the constraints of Eq.~\eqref{eq:maineigenvaluesum}]. For the conditions to be simultaneously satisfied, we intuitively require that, although they must be distinct, for each $i$ both $\lambda_{\raisebox{-1pt}{\tiny{$\Mcal$}}}^{\downarrow \lfloor \frac{i}{2} \rfloor}$ and $\lambda_{\raisebox{-1pt}{\tiny{$\Mcal$}}}^{\downarrow \frac{d_{\raisebox{-1pt}{\tiny{$\Mcal$}}}}{2} + \lfloor \frac{i}{2} \rfloor }$ become ``close'' to each other, but with a difference that decays rapidly as $d_{\raisebox{-1pt}{\tiny{$\Mcal$}}} \to \infty$, such that in the infinite-dimensional limit the larger ``half'' of the eigenvalues sum to one and the smaller ``half'' sum to zero. A subtle point to note is that because the relative entropy involves the ratio of final to original eigenvalues it is not enough that the absolute difference $|\lambda^{\prime \downarrow i}_{\raisebox{-1pt}{\tiny{$\Mcal$}}} - \lambda^{\downarrow i}_{\raisebox{-1pt}{\tiny{$\Mcal$}}}|$ goes to zero, as in the infinite $d_{\raisebox{-1pt}{\tiny{$\Mcal$}}}$ limit, it is possible for this to happen for all of the eigenvalues approaching zero without the ratios of final to initial eigenvalues approaching unity (and hence the relative entropy not vanishing). One manner of satisfying such a constraint, as evidenced by the construction we proceed with next, is for the ratios of final to initial eigenvalues go to unity for all but a small number energy levels, with the population in this exceptional subspace going to zero in the infinite $d_{\raisebox{-1pt}{\tiny{$\Mcal$}}}$ limit (along with the ratios not diverging within said subspace).

The natural question that arises here is whether or not it is possible to satisfy these constraints concurrently. (Note that none of the cooling protocols provided throughout this paper use the max-cooling operation, so do not necessarily serve as examples.) To this end, we now construct a family of machine Hamiltonians $H_{\raisebox{-1pt}{\tiny{$\Mcal$}}}$ of increasing dimension that in the limit $d_{\raisebox{-1pt}{\tiny{$\Mcal$}}} \rightarrow \infty$ manages to attain both perfect cooling of a maximally mixed qubit and the Landauer limit for the energy cost using the maximal cooling operation discussed above. The form of the Hamiltonian is instructive regarding the complexity requirements for perfect cooling at the Landauer limit. The construction is inspired by the infinite-dimensional Hamiltonian found in Ref.~\cite{Reeb_2014} (Appendix D), therein used to perfectly cool a qubit with energy cost arbitrarily close to the Landauer limit. Their construction already begins with infinitely many machine eigenvalues, as well as infinitely many of them corresponding to diverging energy levels. In the following, we demonstrate that one can arbitrarily closely attain perfect cooling and the Landauer limit with finite-dimensional Hamiltonians, and by taking the limit $d_{\raisebox{-1pt}{\tiny{$\Mcal$}}} \rightarrow \infty$, recover the result of Ref.~\cite{Reeb_2014}.

The Hamiltonian of the machine is $d_{\raisebox{-1pt}{\tiny{$\Mcal$}}} := 2^{N+1}$ dimensional,
\begin{align}
    H_{\raisebox{-1pt}{\tiny{$\Mcal$}}} &= \sum_{n=0}^N \sum_{j=1}^{2^n} \bigg( n \Delta \ket{n;j}\!\bra{n;j}_{\raisebox{-1pt}{\tiny{$\Mcal$}}} \bigg) + N\Delta \ket{N;2^N\!\!+\!\!1}\!\bra{N;2^N\!\!+\!\!1}_{\raisebox{-1pt}{\tiny{$\Mcal$}}}
\end{align}
Here, each energy eigenvalue labelled by $n$ is  $2^n$-fold degenerate. Thus the ground state is unique, the first excited state is twofold degenerate, the second excited state fourfold degenerate, and so on, with the degeneracy doubling every energy level. In order to make the Hamiltonian of even dimensionality for convenience, we add an extra degenerate state to the final level [which makes this level $(2^N+1)$-fold degenerate]. Also note that the Hamiltonian is equally spaced with energy gap $\Delta$. In the following, we use the index $n$ to denote any one of the degenerate states in the $n^{\textup{th}}$ energy level from $n=0$ to $n=N$, and the index $i$ to denote individual energy eigenstates from $i=1$ to $i=2^{N+1}$ (note that in contrast to the previous section, we are here beginning with $i=1$ in order to simplify some future notation). With these indices, the eigenvalues are related by
\begin{alignat}{2}
    \lambda_{\raisebox{-1pt}{\tiny{$\Mcal$}}}^{\downarrow i} &=  e^{-\beta \Delta} \lambda_{\raisebox{-1pt}{\tiny{$\Mcal$}}}^{\downarrow \lfloor \frac{i}{2} \rfloor} \qquad &&\forall \; i \in \{ 2, \hdots, d_{\raisebox{-1pt}{\tiny{$\Mcal$}}} - 1\}, \label{eq:rwcooling1}\\
    \lambda_{\raisebox{-1pt}{\tiny{$\Mcal$}}}^{\downarrow n} &=  e^{-\beta \Delta} \lambda_{\raisebox{-1pt}{\tiny{$\Mcal$}}}^{\downarrow n-1} \qquad &&\forall \; n \in \{ 1, \hdots, N\}. \label{eq:rwcooling2}
\end{alignat}
\noindent We introduce a parameter $\epsilon$ to express the Gibbs ratio as
\begin{align}
    e^{-\beta \Delta} &= \frac{1 - \epsilon}{2},
\end{align}
where $0<\epsilon<1$, and we eventually take the limit $\epsilon \rightarrow 0$ appropriately as the dimension diverges. Note that this constrains the Gibbs ratio to be smaller than $\tfrac{1}{2}$, which in turn ensures that the total population over all of the degenerate eigenstates in the $n^{\textup{th}}$ level is smaller than that in the $(n-1)^{\textup{th}}$ level (as it has twice the number of eigenstates, but less than half the population in each). If this constraint failed to hold, then in the asymptotic limit, all of the population would lie in energy levels that diverge.

We now consider using this machine to cool a maximally mixed qubit target. The final ground-state population of the qubit under the maximal cooling operation is the sum over the larger half of the eigenvalues of the machine, corresponding to the eigenvalues from $i=1$ to $i=2^N$ (equivalently, from $n=0$ to $n=N-1$ plus a single eigenvalue from the $n=N$ energy level), and is thus given by
\begin{align}
    p_0^\prime &= \frac{1}{\mathcal{Z}_{\raisebox{-1pt}{\tiny{$\Mcal$}}}} \left( \sum_{n=0}^{N-1} 2^n \left( \frac{1-\epsilon}{2} \right)^n + \left( \frac{1-\epsilon}{2} \right)^N \right), \\
    \text{where} \quad \mathcal{Z}_{\raisebox{-1pt}{\tiny{$\Mcal$}}} &= \sum_{n=0}^{N} 2^n \left( \frac{1-\epsilon}{2} \right)^n + \left( \frac{1-\epsilon}{2} \right)^N \noindent
\end{align}
is the partition function of the machine. The geometric series above evaluates to
\begin{align}
    p_0^\prime &= \left( 1 + \frac{\epsilon (1 - \epsilon)^N}{1 - (1-\epsilon)^N + \epsilon (1-\epsilon)^N 2^{-N}} \right)^{-1}.
\end{align}
As an ansatz, supposing that $\epsilon$ scales inversely with $N$ as $\epsilon := \tfrac{\theta}{N}$ leads to the simplification $(1-\epsilon)^N \rightarrow e^{-\theta}$ as $d_{\raisebox{-1pt}{\tiny{$\Mcal$}}}$ (and hence $N$) diverges. The asymptotic behaviour of the ground-state population is thus
\begin{align}
    p_0^\prime &= 1 - \frac{1}{N} \left( \frac{\theta}{e^{\theta} - 1} \right) + O \left( \frac{1}{N^2} \right),\label{eq:asymptoticgroundpop}
\end{align}
and so $p_0^\prime \rightarrow 1$ in the $N \rightarrow \infty$ limit.

We now move to calculate the energy cost. Rather than considering the optimal max-cooling operation described above, we consider a slight modification in order to make the connection to the construction in Ref.~\cite{Reeb_2014} clear as well as to simplify notation. Nonetheless, the energy cost of this modified protocol upper bounds that of the max-cooling operation (for the same achieved ground-state population), and so showing that the Landauer limit is attained for the modified protocol implies that it would be too for the max-cooling protocol. The modification is simply to relabel the smallest eigenvalue of the machine $\lambda^{ 2^{N+1}}_{\raisebox{-1pt}{\tiny{$\Mcal$}}}$ as $\lambda^{0}_{\raisebox{-1pt}{\tiny{$\Mcal$}}}$, and treat it as the ground-state eigenvalue in the max-cooling operation. For general machine states, such a switch would lead to less cooling (if the same unitary were applied), but in this case it does not because the sum of the first half of the machine eigenvalues, from $i=0$ to $i=2^N-1$, is the same as the original sum from $i=1$ to $i=2^N$, due to the relabelling $\lambda_0 = \lambda_{2^N}$, since they are both eigenvalues of states corresponding the maximum excited energy level of the machine spectrum. The spectrum of the final state of the machine is then given by
\begin{align}
    \lambda_{\raisebox{-1pt}{\tiny{$\Mcal$}}}^{\prime\downarrow i} &=  \frac{1}{2} \left(  \lambda_{\raisebox{-1pt}{\tiny{$\Mcal$}}}^{\downarrow \lfloor \frac{i}{2} \rfloor} + \lambda_{\raisebox{-1pt}{\tiny{$\Mcal$}}}^{\downarrow \lfloor \frac{i}{2} \rfloor + \frac{d_{\raisebox{-1pt}{\tiny{$\Mcal$}}}}{2}} \right) \qquad \forall \; i \in \{ 0, \hdots, d_{\raisebox{-1pt}{\tiny{$\Mcal$}}} - 1\},
\end{align}
which leads to
\begin{align}
    \lambda^{\prime \downarrow 0}_{\raisebox{-1pt}{\tiny{$\Mcal$}}} &= \frac{1}{2} \left( \lambda^{\downarrow 0}_{\raisebox{-1pt}{\tiny{$\Mcal$}}} + \lambda^{\downarrow 2^N}_{\raisebox{-1pt}{\tiny{$\Mcal$}}} \right) = \lambda^{\downarrow 0}_{\raisebox{-1pt}{\tiny{$\Mcal$}}}, \qquad 
    \lambda^{\prime \downarrow 1}_{\raisebox{-1pt}{\tiny{$\Mcal$}}} = \frac{1}{2} \left( \lambda^{\downarrow 0}_{\raisebox{-1pt}{\tiny{$\Mcal$}}} + \lambda^{\downarrow 2^N}_{\raisebox{-1pt}{\tiny{$\Mcal$}}} \right) = \lambda^{\downarrow 0}_{\raisebox{-1pt}{\tiny{$\Mcal$}}}, \notag \\
    \lambda^{\prime \downarrow i}_{\raisebox{-1pt}{\tiny{$\Mcal$}}} &= \frac{1}{2} \left(  \lambda_{\raisebox{-1pt}{\tiny{$\Mcal$}}}^{\downarrow \lfloor \frac{i}{2} \rfloor} + \lambda_{\raisebox{-1pt}{\tiny{$\Mcal$}}}^{\downarrow \lfloor \frac{i}{2} \rfloor + \frac{d_{\raisebox{-1pt}{\tiny{$\Mcal$}}}}{2}} \right) \qquad \forall \; i \in \{ 2, \hdots, d_{\raisebox{-1pt}{\tiny{$\Mcal$}}} - 1\} \notag \\
    &= \frac{1}{2} \left( \frac{2}{1-\epsilon} \lambda^{\downarrow i}_{\raisebox{-1pt}{\tiny{$\Mcal$}}} + \lambda^{n=N}_{\raisebox{-1pt}{\tiny{$\Mcal$}}} \right) = \frac{1}{2} \frac{1}{\mathcal{Z}_{\raisebox{-1pt}{\tiny{$\Mcal$}}}} \left[ \left( \frac{1-\epsilon}{2} \right)^{n-1} + \left( \frac{1-\epsilon}{2} \right)^N \right],
\end{align}
where we observe that the index $\lfloor \tfrac{i}{2} \rfloor + \tfrac{d_{\raisebox{-1pt}{\tiny{$\Mcal$}}}}{2}$ corresponds to the largest energy level of the machine for all $i$, and we use Eq.~\eqref{eq:rwcooling1} for the spectrum of initial eigenvalues. Using the index $n$ instead to denote a generic eigenvalue of the $n^{\textup{th}}$ energy level, we have the simpler expression
\begin{align}
    \lambda^{\prime \downarrow n}_{\raisebox{-1pt}{\tiny{$\Mcal$}}} &= \frac{1}{2} \left( \lambda^{\downarrow n-1}_{\raisebox{-1pt}{\tiny{$\Mcal$}}} + \lambda^{\downarrow N}_{\raisebox{-1pt}{\tiny{$\Mcal$}}} \right), \qquad \forall \; n \in \{1,2, \hdots, N\}.
\end{align}
The energy cost can now be simply calculated from the difference in the average energy of the machine state,
\begin{align}
     \Delta E_{\raisebox{-1pt}{\tiny{$\Mcal$}}} &= \sum_{i=0}^{d_{\raisebox{-1pt}{\tiny{$\Mcal$}}} - 1} \left( \lambda^{\prime \downarrow i}_{\raisebox{-1pt}{\tiny{$\Mcal$}}} - \lambda^{\downarrow i}_{\raisebox{-1pt}{\tiny{$\Mcal$}}} \right) \omega_i,
\end{align}
where we denote the $i^\textup{th}$ energy eigenvalue by $\omega_i$. $\lambda^{\downarrow 0}_{\raisebox{-1pt}{\tiny{$\Mcal$}}}$ is unchanged, and although $\lambda^{\downarrow 1}_{\raisebox{-1pt}{\tiny{$\Mcal$}}}$ does change, $\omega_1=0$ corresponds to the ground state and thus this eigenvalue change does not affect the energy cost. We can thus express the energy cost in terms of the index $n$ instead, starting from $n=1$ (corresponding to $i=2$ onward), as
\begin{align}
    \Delta E_{\raisebox{-1pt}{\tiny{$\Mcal$}}} &= \sum_{n=1}^{N} \left( \lambda^{\prime \downarrow n}_{\raisebox{-1pt}{\tiny{$\Mcal$}}} - \lambda^{ \downarrow n}_{\raisebox{-1pt}{\tiny{$\Mcal$}}} \right) \omega_n = \frac{1}{\beta} \left[ 1 - \frac{2 (1-\epsilon)^N}{1 - (1-\epsilon)^N + (1 - 2^{-N}) (1-\epsilon)^N \epsilon} \right] \log \left( \frac{2}{1-\epsilon} \right).
\end{align}
As we did above, we parameterise $\epsilon = \tfrac{\theta}{N}$. The asymptotic behaviour of the energy cost is then
\begin{align}
    \beta \Delta E_{\raisebox{-1pt}{\tiny{$\Mcal$}}} &= \log (2) + \frac{1}{N} \left( 1 - \frac{2 \log (2)}{e^{\theta} - 1} \right) \theta + O \left( \frac{1}{N^2} \right),\label{eq:asymptoticworkcost}
\end{align}
or in terms of the decrease in entropy of the system,
\begin{align}
    \beta \Delta E_{\raisebox{-1pt}{\tiny{$\Mcal$}}} &= \tilde{\Delta} S_{\raisebox{-1pt}{\tiny{$\Mcal$}}} + \frac{\log N}{N} \left( \frac{\theta}{e^{\theta} - 1} \right) + O \left( \frac{1}{N} \right). 
\end{align}
Combining \eqref{eq:asymptoticgroundpop} and \eqref{eq:asymptoticworkcost}, we have that in the limit $N \rightarrow \infty$, which is also $d_{\raisebox{-1pt}{\tiny{$\Mcal$}}} \rightarrow \infty$, the ground-state population approaches $1$---corresponding to perfect cooling---and the energy cost approaches $\beta^{-1} \log (2)$, which is the Landauer limit for the perfect erasure of a maximally mixed qubit.

To connect this construction to the constraints of Eq.~\eqref{eq:maingenerictermeigenvalue}, note that in the limit $N\rightarrow \infty$ (recalling that $\epsilon = \frac{\theta}{N}$),
\begin{align}
    \frac{\lambda^{\prime \downarrow n}}{\lambda^{\downarrow n}} &= \lim_{N\rightarrow \infty} \frac{ \frac{1}{2} \left[ \left( \frac{1-\epsilon}{2} \right)^{n-1} + \left( \frac{1-\epsilon}{2} \right)^N \right]}{\left( \frac{1-\epsilon}{2} \right)^n} = \lim_{N \rightarrow \infty} \left[ \frac{1}{1-\epsilon} + \frac{1}{2^{N-n+1}} \left( \frac{e^{-\theta}}{(1-\epsilon)^n} \right) \right] = 1,
\end{align}
for all $n \geq 1$, leaving only the ground-state eigenvalue (corresponding to $n=0$ and $i=1$) not satisfying the condition. However, this term is actually a negative contribution to the relative entropy as this eigenvalue decreases, and in any case can be verified independently to approach zero.

To see this, note that a necessary condition that ensures the contribution of any set of eigenvalues that do not satisfy Eq.~\eqref{eq:maingenerictermeigenvalue} to the relative entropy to be negligible is that the total population of the relevant subspace is vanishingly small. Writing the relative entropy between two states in terms of their eigenvalues, we have $D(\varrho^\prime\|\varrho) = \sum_{n} \lambda^\prime_n \log\left( \tfrac{\lambda^\prime_n}{\lambda_n}\right)$, which we split up into two sets: $S_{0}$ containing all $n$ for which Eq.~\eqref{eq:maingenerictermeigenvalue} is satisfied and $S_{\pm}$ containing the all $n$ for which Eq.~\eqref{eq:maingenerictermeigenvalue} is not satisfied. The contribution of the first term to the relative entropy is asymptotically zero, so we are left with $D(\varrho^\prime\|\varrho) = \sum_{n\in S_{\pm}} \lambda^\prime_n \log\left( \tfrac{\lambda^\prime_n}{\lambda_n}\right)$. For each term in the sum here, one can write $\lambda^\prime_n = \lambda(1+\Delta_n)$ with the condition $|\Delta_n| \geq \theta > 0$ for some $\theta$, i.e., the ratio of eigenvalues is bounded away from unity (on either side) by at least $\theta$. This leads to the expression
\begin{align}\label{eq:appdrelativeentropydeviation-start}
    D(\varrho^\prime\|\varrho) = - \sum_{n\in S_{\pm}} \lambda^\prime_n \log(1+\Delta_n) = - N_{\pm} \sum_{n\in S_{\pm}} p_n \log(1+\Delta_n),
\end{align}
where we renormalise the eigenvalues (which here correspond to a subnormalised probability distribution) by writing $\lambda_n^\prime = N_{\pm} p_n$, with $N_{\pm} := \sum_{n\in S_{\pm}} \lambda^\prime_n$ being the total population of the subspace $S_{\pm}$ and $\{p_n\}$ here forming a probability distribution. Note that the ratio of eigenvalues going to unity in the $S_0$ subspace implies that the total populations of initial and final eigenvalues in this subspace are equal, i.e., $\sum_{n \in S_0} \lambda_n = \sum_{n \in S_0} \lambda_n^\prime$, which in turn implies that the same is true for the $S_{\pm}$ subspace, leading to $\sum_{n \in S_{\pm}} p_n \Delta_n = 0$.

We argue from the concavity of the logarithm function that
\begin{align}
    \tfrac{1}{2} \log(1+\theta) + \tfrac{1}{2} \log(1-\theta) \geq \sum_{n\in S_{\pm}} p_n \log(1+\Delta_n).
\end{align}
Visualising the graph of the function $y=\log(1+x)$, the latter expression above must evaluate to a point that lies within the intersection of the convex hull of $(\Delta_n, \log(1+\Delta_n) )$ and the linear equality $\sum_{n \in S_{\pm}} p_n \Delta_n = 0$, the latter of which is the line $x=0$. By the concavity of the logarithm, the aforementioned convex hull lies entirely below the line segment connecting $(1-\theta, \log(1-\theta))$ to $(1+\theta, \log(1+\theta))$, and thus the expression is upper bounded by the intersection of this line segment with $x=0$, which is precisely the l.h.s. of the inequality above. Thus we have the inequality
\begin{align}\label{eq:appdrelativeentropydeviation-end}
    D(\varrho^\prime\|\varrho) \geq -N_{\pm} \left[ \tfrac{1}{2} \log(1+\theta) + \tfrac{1}{2} \log(1-\theta) \right] = -\frac{N_{\pm}}{2}\log(1-\theta^2) \geq \frac{N_{\pm}}{2} \theta^2,
\end{align}
where we use $\log(1-\theta^2) \leq - \theta^2$ for all $\theta \in [-1,1]$. As $\theta > 0$, the only way that this contribution to the relative entropy by the eigenvalues that do not satisfy Eq.~\eqref{eq:maingenerictermeigenvalue} can be asymptotically negligible is if the total population of their associated subspace $N_{\pm}$ goes to zero.

Finally note that, as mentioned in the main text, the above result pertains to the restricted setting where the target system is cooled as much as possible. However, this is not the only way to approach perfect cooling at the Landauer cost: instead of the largest half of global eigenvalues being placed into the ground-state subspace of the target system, any amount of them such that their sum is sufficiently close to one would suffice. Although it is complicated to derive an exact set of conditions that would need to be satisfied in such cases (since it depends upon exactly which eigenvalues are permuted to which subspaces), the fact that fine-tuned control over particular degrees of freedom is required remains. Lastly, note that even in the restricted setting of cooling the target as much as possible, the situation becomes even more complicated when considering target systems that begin at a finite temperature. Here, the choice of which global eigenvalues should be permuted to which subspaces to cool the system as much as possible at minimal energy cost depends on the microscopic structure of both the system and machine. This means that one can no longer determine the final eigenvalue distributions of the reduced states in terms of the initial machine eigenvalues alone, as we were able to do for the maximally mixed state. In turn, one can no longer derive a condition on properties of the machine itself, independently of the target system. Nonetheless, again, the key message that cooling at minimal energy cost requires fine-tuned control to access precisely distributed populations still holds true. We leave the further exploration of such scenarios, for instance constructing optimal machines for particular initial target systems, to future work. 

\subsection{Energy-Gap Variety as a Notion of Control Complexity}
\label{app:energygapvariety}

The insights drawn above regarding sufficient conditions for cooling a system at the Landauer limit lead us to propose a more nuanced notion of control complexity than the preliminary effective dimension that satisfies the natural desiderata outlined in the main text. In particular, here we demonstrate that the \emph{energy-gap variety} (see Definition~\ref{def:energygapvariety}) provides a good measure of control complexity, both from a theoretical, thermodynamic standpoint as well as a practical one. 

Firstly, it is quite clear that coupling a system to a diverging number of distinct machine energy gaps is a difficult task to achieve in almost any conceivable physical platform, especially when the energy gaps are closely spaced; thus, this definition indeed corresponds to our intuitive understanding of ``complex'' as an operation that is inherently difficult to perform in practice. Secondly, from all of the optimal cooling protocols that we outline in this paper, we see that, in contrast to the effective dimension, having a diverging energy-gap variety that densely covers an appropriate interval is sufficient for saturating the Landauer limit, thereby making it a better quantifier of control complexity. The remaining point is to show that its divergence is necessary to cool a system to the ground state using a single control operation with energy cost saturating the Landauer bound, so that it is fully consistent also with Nernst's unattainability principle. We argue that this is indeed the case below by proving Theorem~\ref{thm:energygapvariety-main}.

\textbf{Proof:} First of all, note that how cold the final system state can be made is bounded by the inequality:
\begin{align}\label{eq:appgeneralpuritybound-2}
    \lambda_{\textup{min}}(\varrho_{\raisebox{-1pt}{\tiny{$\Scal$}}}^\prime) \geq e^{-\beta\, \omega_{\raisebox{0pt}{\tiny{$\Mcal$}}}^{\textup{max}}} \lambda_{\textup{min}}(\varrho_{\raisebox{-1pt}{\tiny{$\Scal$}}}),
\end{align}
where $\lambda_{\textup{min}}$ denotes the minimal eigenvalue. For a pure final system state, the l.h.s. of the above equation goes to 0; thus, for any nontrivial initial system state [i.e., such that $\lambda_{\textup{min}}(\varrho_{\raisebox{-1pt}{\tiny{$\Scal$}}})>0$] and finite temperature $\beta < \infty$, we must have $\omega_{\raisebox{0pt}{\tiny{$\Mcal$}}}^{\textup{max}} \to \infty$. This determines the upper limit of the required interval of energy gaps. The lower limit of the required interval comes from the fact that the only subspaces of the machine that are relevant for cooling the target system are those associated to energy gaps that are at least as large as the smallest energy gap of the target, $\omega_0$~\cite{Clivaz_2019E}.

Next, recall the equality form of the Landauer limit, which holds true for \emph{any} global unitary transformation with a thermal machine:
\begin{align}\label{eq:landauerequality-appd}
    \beta \Delta E_{\raisebox{-1pt}{\tiny{$\Mcal$}}} = \widetilde{\Delta} S_{\raisebox{-1pt}{\tiny{$\Scal$}}} + I(\Scal: \Mcal)_{\varrho_{\raisebox{-1pt}{\tiny{$\Scal \Mcal$}}}^\prime} + D(\varrho_{\raisebox{-1pt}{\tiny{$\Mcal$}}}^\prime \| \varrho_{\raisebox{-1pt}{\tiny{$\Mcal$}}}),
\end{align}
Cooling the target system to a pure state necessitates that the final system and machine are uncorrelated and we therefore have $I(\Scal: \Mcal)_{\varrho_{\raisebox{-1pt}{\tiny{$\Scal \Mcal$}}}^\prime} = 0$ for the optimal process. We thus need to focus on minimising the relative-entropy term, which we do in the following steps. 

Consider for simplicity the target system to be a qubit initially in the maximally mixed state. A generic cooling machine should be able to cool \emph{any} system state, include the maximally mixed one; therefore the following insights pertaining to this special case apply generically. In this case, the initial joint spectrum of the system and machine is given by Eq.~\eqref{eq:maximallymixedglobalspectrum}. Cooling the target system to the ground state necessitates taking a subset $\mathcal{A}$ of these global eigenvalues such that $\sum_{i \in \mathcal{A}} \lambda_{\raisebox{-1pt}{\tiny{$\Mcal$}}}^{\downarrow i} = 1 - \epsilon$ for arbitrarily small $\epsilon$ and placing these into the ground-state subspace of the target, with the remaining (small) $\epsilon$ amount of population contributing only to any higher-energy eigenstates [this is essentially a generalisation of the conditions put forth in Eqs. \eqref{eq:maineigenvaluesum}, accounting for an arbitrarily small cooling error]. 

As discussed previously, there are many possible ways to achieve such a configuration, but there is a \emph{unique} one that minimises the total energy cost of doing so: namely, that in which the reduced final state of the machine is rendered passive. This is because if one compares two protocols achieving the same cooling for the target system, one in which the final machine is passive and any other in which it is not, then the former protocol has the smaller energy cost since a positive amount of energy can be (unitarily) extracted from the latter machine in order to render it passive. 

Thus, for any protocol saturating the Landauer limit, the final machine state must be arbitrarily close to a passive state, which implies that it must be approximately diagonal in the local machine energy eigenbasis with the globally ordered spectrum as per Eq.~\eqref{eq:appdgloballyorderedfinalspectrum}. Moreover, in order to minimise the relative-entropy term and therefore saturate the Landauer limit, the final machine state must be arbitrarily close to the initial (i.e., thermal) machine state; following the argumentation put forth in the previous Appendix, this leads to the set of conditions outlined in Eq.~\eqref{eq:lemmagenerictermeigenvalue}, which must be satisfied up to arbitrary precision.

Since we have the exact relationship between the initial and final machine eigenvalues, the contribution to the energy cost from the relative-entropy term can be calculated explicitly, i.e., $D(\varrho^\prime\|\varrho) = \sum_{n} \lambda^\prime_n \log\left( \tfrac{\lambda^\prime_n}{\lambda_n}\right)$. Following the argumentation from Eq.~\eqref{eq:appdrelativeentropydeviation-start} until Eq.~\eqref{eq:appdrelativeentropydeviation-end} in the previous appendix, we see that by assuming a finite deviation from \emph{any} of the conditions of Eq.~\eqref{eq:lemmagenerictermeigenvalue}, i.e., writing $\lambda_n^\prime = \lambda_n(1+\Delta_n)$ with $|\Delta_n| \geq \theta > 0$ for some $\theta$, one can derive a lower bound on the relative entropy:
\begin{align}\label{eq:appdlowerboundrelativeentropysimple}
    D(\varrho^\prime\|\varrho) \geq  \frac{N_{\pm}}{2} \theta^2,
\end{align}
where $N_{\pm}$ is the total population of the subspaces corresponding to the terms in the sum such that $\tfrac{\lambda^\prime_n}{\lambda_n}$ differs from unity by at least $\theta$. In other words, these are the relevant additional contributions to the energy cost; whenever $N_{\pm}$ is nonzero, the Landauer bound \emph{cannot} be approached arbitrarily closely.

The final piece is to relate the machine eigenvalues to its energy-gap spectrum, which can be done straightforwardly due to the initial thermality of the machine, i.e., $\lambda_{\raisebox{-1pt}{\tiny{$\Mcal$}}}^{\downarrow i} = e^{-\beta \omega_i}/\mathcal{Z}_{\raisebox{-1pt}{\tiny{$\Mcal$}}}(\beta, H_{\raisebox{-1pt}{\tiny{$\Mcal$}}})$. We now argue that if there is ever a finite ``jump'' in the energy-gap structure of the machine, then one cannot achieve both a ground-state population of the target that is arbitrarily close to unity \emph{and} have $N_{\pm}$ be arbitrarily close to zero concurrently. Suppose now that one has a machine with a dense energy-gap structure from $\omega_0$ up until some (finite) $\omega_a$, followed by a finite jump until the energy level $\omega_a + \Omega$ (for some strictly finite $\Omega>0$), and then again a dense set of energy gaps throughout the interval $[\omega_a + \Omega, \infty)$. Then, one can utilise the energy-gap structure in the ``lower band'' $[\omega_0,\omega_a)$ in an optimal fashion in order to cool the target system to a minimum temperature (set by $\omega_a$) at Landauer energy cost~\cite{Clivaz_2019L,Clivaz_2019E}. However, assuming that the jump in the energy-gap structure begins at some finite $\omega_a$, then there is always a finite amount of population in the machine that is supported on the energy levels corresponding to the ``upper band'' $[\omega_a + \Omega, \infty)$. To cool the target to arbitrarily close to the ground state, one must therefore access this population and transfer it to the ground-state subspace of the target; this precisely corresponds to the $N_{\pm}$ that contributes to the excess energy cost in a non-negligible manner for finite population exchanges. In particular, we have the bound $N_\pm \geq \textup{min}({\frac{e^{-\beta \omega_a}}{1+e^{-\beta \omega_a}},\frac{1}{1+e^{-\beta \omega_a}}})$. Thus, whenever $\omega_a$ takes a finite value, $N_\pm$ is a strictly positive number. The only way that the relative-entropy term can vanish then is if $\theta$ vanishes; however, this can occur only if $\Omega \to 0$, because for any finite $\Omega$, the ratio $\frac{\lambda^\prime_n}{\lambda_n}$ for at least one value of $n$ differs from 1 by a finite amount as argued above, which finally leads to a nonzero lower bound in Eq.~\eqref{eq:appdlowerboundrelativeentropysimple} and implies that the Landauer limit cannot be saturated. In other words, the endpoints of the lower and upper energy gap intervals considered above must coincide (up to arbitrary precision) in order to saturate the Landauer bound. This implies that the energy-gap variety must diverge and moreover, since the above logic holds for arbitrary $\omega_\alpha$, which can be smoothly varied as a parameter, it follows that the diverging number of energy gaps must additionally approximately densely cover the interval in question. 

%%%%%%%%%%%%%%%%%%%%%%%%%%%%%%%%%%%%%%%%%%%%%%%%%%%%%%%%%%%%%%%%

\section{Diverging Time and Diverging Control Complexity Cooling Protocols for Harmonic Oscillators}
\label{app:harmonicoscillators}

We now analyse the case of cooling infinite-dimensional quantum systems in detail. More specifically, we consider ensembles of harmonic oscillators. For the sake of completeness, we first briefly present some key concepts that will become relevant throughout this analysis. Following this, in Appendix~\ref{app:hodivergingtimegaussian}, we construct a protocol that achieves perfect cooling at the Landauer limit using a diverging number of Gaussian operations. Although such operations are typically considered to be relatively ``simple'' both when it comes to experimental implementation and theoretical description, according to the effective dimension notion of control complexity that we have shown must necessarily diverge to cool at the Landauer limit [see Eq.~\eqref{eq:maineffectivedimension}], such Gaussian operations have infinite control complexity. Subsequently, in Appendix~\ref{app:hodivergingtimenongaussian}, we consider the task of perfect cooling with diverging time but restricting the individual operations to be of finite control complexity. In particular, note that such operations are non-Gaussian in general. Here, we present a protocol that approaches perfect cooling of the target system as the number of operations diverges, with finite energy cost---albeit not at the Landauer limit. Whether or not a similar protocol exists that also saturates the Landauer bound remains an open question. Finally, in Appendix~\ref{app:hodivergingcontrolcomplexitygaussian}, we reconsider the protocol from Appendix~\ref{app:hodivergingtimenongaussian} in terms of a single transformation, i.e., unit time. By explicitly constructing the joint unitary transformation that is applied throughout the entire protocol, we show this to be a multimode Gaussian operation acting on a diverging number of harmonic oscillators. The key message to be taken away from these protocols is that, while the distinction between Gaussian and non-Gaussian operations is a significant one in terms of experimental feasibility, and it certainly plays a role regarding the task of cooling---in particular, the energy cost incurred---these concepts alone cannot be used to characterise a notion of control complexity that must diverge to approach perfect cooling at the Landauer limit. On the other hand, the effective dimension of the machine used does precisely that; however, in a manner that is far from sufficient (for the case of harmonic oscillators), as even a single two-mode swap, which cannot cool perfectly at Landauer cost, would have infinite control complexity. Indeed, a more nuanced characterisation of control complexity in the infinite-dimensional setting, which takes more structure regarding the operations and energy levels into account, remains an open problem to be addressed. 

\subsection{Preliminaries}
\label{app:hopreliminaries}

We consider ensembles of $N$ harmonic oscillators (i.e., infinite-dimensional systems consisting of $N$ bosonic modes), which are associated to a tensor product Hilbert space $\mathcal{H}_{\textup{tot}}=\bigotimes_{j=1}^N \mathcal{H}_{j}$ and (respectively: lowering, raising) mode operators $\{a_k$ , \,$a_k^{\dagger}\}$ satisfying the bosonic commutation relations:
\begin{equation}
    [a_k,a_{k'}^{\dagger}]=\delta_{kk'} , \quad \quad  [a_k,a_{k'}]=0, \quad \quad \quad \forall \; k,k'=1,2,\hdots,N. \label{eq:bosonicommrelation}
\end{equation}
The free Hamiltonian of any such system can be written as $H_{\textup{tot}}=\sum_{k=1}^N \omega_k a_{k}^{\dagger}a_{k}$, where $\omega_{k}$ represents the energy gap of the $k$-th mode (in units where $\hbar=1$). Position- and momentum-like operators for each mode can be defined as follows (for simplicity, we use the rescaled version below where the $\omega_k$ are omitted from the prefactors)
\begin{equation}
    q_k:=\frac{1}{\sqrt{2}}(a_{k}+a_{k}^{\dagger}), \quad \quad p_{k}:=\frac{1}{i \sqrt{2}}(a_{k}-a_{k}^{\dagger}).
\end{equation}
As a consequence of the commutation relations in Eq.~\eqref{eq:bosonicommrelation}, the generalised position and momentum operators satisfy the canonical commutation relations
\begin{equation}
    \left[ q_{k},\,p_l\right]=\, i\delta_{kl}. 
\end{equation}
To simplify notation, one may further introduce the vector of quadrature operators $\mathds{X}:=( q_{1},\, p_{1},\hdots,\,  q_{N},\, p_{N})$; then, the commutation relations can be expressed succinctly as
\begin{equation}
     \left[ \mathds{X}_{k},\,\mathds{X}_{l}\right]=\, i \Omega_{kl}, 
\end{equation}
where the $\Omega_{kl} $ are the components of the symplectic form
\begin{equation}
    \Omega =\bigoplus_{j=1}^N \Omega_{j}, \quad \quad \Omega_{j}=\begin{bmatrix}
    0 & 1\\
    -1 & 0
    \end{bmatrix}.
\end{equation}
The density operator associated to $N$ harmonic oscillators can be written in the so-called \emph{phase-space representation} as
\begin{align}
    \varrho = \frac{1}{(2 \pi)^N} \int \chi(\Omega \xi) \mathcal{W}(- \Omega \xi) \, d^{2N} \xi,
\end{align}
where $\mathcal{W}(\xi) := e^{i \xi^{T} \mathds{X}}$ is the Weyl operator and $\chi(\xi) := \tr{\varrho \mathcal{W}(\xi)}$ is called the characteristic function.

Throughout our analysis, we see that a particular class of states and operations, namely those that are known as \emph{Gaussian}, are of particular importance. A Gaussian state is one for which the characteristic function is Gaussian
\begin{align}
    \chi(\xi) = e^{-\frac{1}{4} \xi^T \Gamma \xi + i \overline{\mathds{X}}^{T} \xi}.
\end{align}
Here, $\overline{\mathds{X}} := \left<\mathds{X}\right>_\varrho$ is the \emph{displacement vector} or \emph{vector of first moments}, and $\Gamma$ is a real symmetric matrix that collects the \emph{second statistical moments} of the quadratures, which is known as the \emph{covariance matrix}. Its entries are given by
\begin{equation}
     \Gamma_{mn}:=\left<\mathds{X}_{m}\mathds{X}_{n}+\mathds{X}_{n}\mathds{X}_{m} \right>_\varrho-2\left<\mathds{X}_{n}\right>_\varrho\left<\mathds{X}_{m} \right>_\varrho.
\end{equation}
We see that any Gaussian state is thus uniquely determined by its first and second moments. As an example of specific interest here, we recall that any thermal state $\tau$ of a harmonic oscillator with frequency $\omega$ is a Gaussian state and has vanishing first moments, $\overline{\mathds{X}}=0$. Here and throughout this article, we are assuming that the infinite-dimensional thermal state is well defined (see, e.g., Ref.~\cite{Thirring_2002} for discussion). The covariance matrix of a thermal state is proportional to the $2\times2$ identity, and given by $ \Gamma[\tau(\beta, H)]= \coth{\left(\frac{\beta \omega}{2}\right)}\,\openone_{2}$. 

Gaussian operations are transformations that map the set of Gaussian states onto itself. Such operations, which include, e.g., beam-splitting and phase-space displacement, are generally considered to be relatively easily implementable in the laboratory. Although nonunitary Gaussian operations exist as well, all of the examples mentioned above are Gaussian unitaries. Such Gaussian unitaries are generated by Hamiltonians that are at most quadratic in the raising and lowering operators. Conversely, any Hamiltonian that can be expressed as a polynomial of at most second order in the mode operators generates a Gaussian unitary. Any unitary Gaussian transformation can be represented by an affine map $(M, \kappa)$,
 \begin{equation}
     \mathds{X} \mapsto M \mathds{X}+\kappa ,
 \end{equation}
where $\kappa \in \mathds{R}^{2N}$ is a displacement vector in the phase-space representation and $M$ is a symplectic $2N \times 2N$ matrix that leaves the symplectic form $\Omega$ invariant, i.e., 
\begin{equation}
    M \, \Omega \, M^T=  \Omega.
\end{equation}
Under such a mapping, the first and second moments transform according to
\begin{align}\label{eq:gaussianaffinemoments}
    \overline{\mathds{X}} \mapsto M \overline{\mathds{X}} + \kappa, \quad \quad
    \Gamma \mapsto M \Gamma M^T.
\end{align}
Lastly, note that the energy of a Gaussian state $\varrho_G$ with respect to its free Hamiltonian $H = \sum_k \omega_k a_k^\dagger a_k$ can be calculated in terms of the first and second moments as follows~\cite{Friis2018}
\begin{equation}
    E(\varrho_{G})=\sum_{k} \omega_{k}\left(\frac{1}{4}\tr{\Gamma^{(k)}-\,2}\, +\frac{1}{2}\,||\overline{\mathds{X}}^{(k)}||^2\right), \label{eq:gaussianenergy}
\end{equation}
where $\| \cdot \|$ denotes the Euclidean norm. Here, $\Gamma^{(k)}$ indicates the  $(2 \times 2)$ submatrix of the full covariance matrix $\Gamma$ corresponding to the reduced state of the $k^{\textup{th}}$ mode. Similarly $\overline{\mathds{X}}^{(k)}$ denotes the two-component subvector of first moments for the $k^{\textup{th}}$ mode of the displacement vector $\overline{\mathds{X}}$.\\

\subsection{Diverging-Time Cooling Protocol for Harmonic Oscillators}
\label{app:hodivergingtime}

\subsubsection{Diverging-Time Protocol using Gaussian Operations (with Diverging Control Complexity)}
\label{app:hodivergingtimegaussian}

We now consider a simple protocol for lowering the temperature of a single-mode system within the coherent-control paradigm using a single harmonic oscillator machine. This protocol will form the basic step of a protocol for achieving perfect cooling at the Landauer limit using diverging time, which we subsequently present.

In the situation we consider here, the target system $\Scal$ to be cooled is a harmonic oscillator with frequency $\omega_{\raisebox{0pt}{\tiny{$\Scal$}}}$ interacting with a harmonic oscillator machine $\Mcal$ at frequency $\omega_{\raisebox{0pt}{\tiny{$\Mcal$}}} \geq \omega_{\raisebox{0pt}{\tiny{$\Scal$}}}$ via a (non-energy-conserving) unitary acting on the joint system $\Scal\Mcal$ initialised as a tensor product of thermal states $\tau_{\raisebox{-1pt}{\tiny{$\Scal$}}}(\beta, H_{\raisebox{-1pt}{\tiny{$\Scal$}}}) \otimes \tau_{\raisebox{-1pt}{\tiny{$\Mcal$}}}(\beta,H_{\raisebox{-1pt}{\tiny{$\Mcal$}}}) $ at inverse temperature $\beta$ with respect to their local Hamiltonians $H_{\raisebox{-1pt}{\tiny{$\Scal$}}}$ and $H_{\raisebox{-1pt}{\tiny{$\Mcal$}}}$, respectively. 
The joint covariance matrix of the system and machine modes is block diagonal since the initial state is of product form, i.e.,
\begin{equation}
    \Gamma[\tau_{\raisebox{-1pt}{\tiny{$\Scal$}}}(\beta, H_{\raisebox{-1pt}{\tiny{$\Scal$}}}) \otimes \tau_{\raisebox{-1pt}{\tiny{$\Mcal$}}}(\beta, H_{\raisebox{-1pt}{\tiny{$\Mcal$}}})]=  \Gamma[\tau_{\raisebox{-1pt}{\tiny{$\Scal$}}}(\beta, H_{\raisebox{-1pt}{\tiny{$\Scal$}}}) ]\oplus  \Gamma[ \tau_{\raisebox{-1pt}{\tiny{$\Mcal$}}}(\beta, H_{\raisebox{-1pt}{\tiny{$\Mcal$}}})],
\end{equation}
and the $2\times2$ blocks of the individual modes are also diagonal, 
with the explicit expression $ \Gamma[\tau_{X}(\beta, H_{\raisebox{-1pt}{\tiny{$\Xcal$}}})]= \coth{\left(\frac{\beta \omega_{\raisebox{0pt}{\tiny{$\Xcal$}}}}{2}\right)}\,\openone_{2}$. 

In this setting, it has been shown that the minimum reachable temperature of the target system is given by $T_{\textup{min}}=\frac{\omega_{\raisebox{0pt}{\tiny{$\Scal$}}}}{\omega_{\raisebox{0pt}{\tiny{$\Mcal$}}}}\,T$ (for the case $ \omega_{\raisebox{0pt}{\tiny{$\Mcal$}}} \geq \omega_{\raisebox{0pt}{\tiny{$\Scal$}}}$)~\cite{Clivaz_2019E}. The non-energy-conserving unitary transformation that achieves this is of the form
\begin{equation} \label{eq:swapwithi}
    U=e^{-i\frac{\pi}{2}( a^{\dagger} b+\, a b^{\dagger})},
\end{equation}
where the operators $a~(a^{\dagger})$ and $b~(b^{\dagger})$ denote the annihilation (creation) operators of the target system and machine, respectively. This beam-splitter-like unitary acts as a \texttt{SWAP} with a relative phase factor imparted on the resultant state; nonetheless, this phase is irrelevant at the level of the covariance matrix, which fully characterises the (Gaussian) thermal states considered, and transforms it according to a standard swapping of the systems. After acting with such a \texttt{SWAP} operator, which is a Gaussian operation, the first moment remains vanishing and the covariance matrix transforms as [see Eq.~\eqref{eq:gaussianaffinemoments}]
\begin{align}
    \begin{bmatrix}
    \coth{\left( \frac{\beta \omega_{\raisebox{0pt}{\tiny{$\Scal$}}}}{2} \right)\,\openone_2}& 0\\
    0& \coth{\left( \frac{\beta \omega_{\raisebox{0pt}{\tiny{$\Mcal$}}}}{2} \right)} \,\openone_2
    \end{bmatrix}\, \overset{\textup{SWAP}}{\longmapsto }\,  \begin{bmatrix}
    \coth{\left( \frac{\beta \omega_{\raisebox{0pt}{\tiny{$\Mcal$}}}}{2} \right)} \,\openone_2& 0\\
    0& \coth{\left( \frac{\beta \omega_{\raisebox{0pt}{\tiny{$\Scal$}}}}{2} \right)\,\openone_2}
\label{cov. transformation}    \end{bmatrix}.
\end{align}
This means that both the output target system and machine are thermal states at different temperatures $T'_{\raisebox{-1pt}{\tiny{$\Scal$}}}=\frac{\omega_{\raisebox{0pt}{\tiny{$\Scal$}}}}{\omega_{\raisebox{0pt}{\tiny{$\Mcal$}}}}\,T $ and $T'_{\raisebox{-1pt}{\tiny{$\Mcal$}}}=\frac{\omega_{\raisebox{0pt}{\tiny{$\Mcal$}}}}{\omega_{\raisebox{0pt}{\tiny{$\Scal$}}}}\,T$. Making use of Eq.~\eqref{eq:gaussianenergy}, we can calculate the energy change for the system and machine as
\begin{align}
    \Delta E_{\raisebox{-1pt}{\tiny{$\Scal$}}} &= E\left[\tau_{\raisebox{-1pt}{\tiny{$\Scal$}}}\left(\frac{\omega_{\raisebox{0pt}{\tiny{$\Mcal$}}}}{\omega_{\raisebox{0pt}{\tiny{$\Scal$}}}}\beta, H_{\raisebox{-1pt}{\tiny{$\Scal$}}}\right) \right]-\, E\left[\tau_{\raisebox{-1pt}{\tiny{$\Scal$}}}(\beta, H_{\raisebox{-1pt}{\tiny{$\Scal$}}})\right]= \frac{\omega_{\raisebox{0pt}{\tiny{$\Scal$}}}}{2}\,\left[\coth{\left(\frac{\beta \omega_{\raisebox{0pt}{\tiny{$\Mcal$}}} }{2}\right)}-\coth{\left(\frac{\beta \omega_{\raisebox{0pt}{\tiny{$\Scal$}}} }{2}\right)}\right],\nonumber\\
    \Delta E_{\raisebox{-1pt}{\tiny{$\Mcal$}}} &= E\left[\tau_{\raisebox{-1pt}{\tiny{$\Mcal$}}}\left(\frac{\omega_{\raisebox{0pt}{\tiny{$\Scal$}}}}{\omega_{\raisebox{0pt}{\tiny{$\Mcal$}}}}\beta, H_{\raisebox{-1pt}{\tiny{$\Mcal$}}} \right)\right]-\, E\left[\tau_{\raisebox{-1pt}{\tiny{$\Mcal$}}}(\beta, H_{\raisebox{-1pt}{\tiny{$\Mcal$}}})\right]= \frac{\omega_{\raisebox{0pt}{\tiny{$\Mcal$}}}}{2}\,\left[\coth{\left(\frac{\beta \omega_{\raisebox{0pt}{\tiny{$\Scal$}}} }{2}\right)}-\coth{\left(\frac{\beta \omega_{\raisebox{0pt}{\tiny{$\Mcal$}}} }{2}\right)}\right].
    \label{EnergyExch.2moeds}
\end{align}
The total energy cost associated to this \texttt{SWAP} operation is thus
\begin{align}
     \Delta E_{\raisebox{-1pt}{\tiny{$\Scal \Mcal$}}}
     \,=\,
     \Delta E_{\raisebox{-1pt}{\tiny{$\Scal$}}} + \Delta E_{\raisebox{-1pt}{\tiny{$\Mcal$}}}
    \,=\,
    \frac{(\omega_{\raisebox{0pt}{\tiny{$\Mcal$}}}-\omega_{\raisebox{0pt}{\tiny{$\Scal$}}})}{2}\,\left[\coth{\left(\frac{\beta \omega_{\raisebox{0pt}{\tiny{$\Scal$}}}}{2}\right)}-\coth{\left(\frac{\beta \omega_{\raisebox{0pt}{\tiny{$\Mcal$}}} }{2}\right)}\right]
     \,=\,
     (\omega_{\raisebox{0pt}{\tiny{$\Mcal$}}}-\omega_{\raisebox{0pt}{\tiny{$\Scal$}}})\,\frac{e^{-\beta \omega_{\raisebox{0pt}{\tiny{$\Scal$}}}} \,(1-e^{-\beta( \omega_{\raisebox{0pt}{\tiny{$\Mcal$}}}-\omega_{\raisebox{0pt}{\tiny{$\Scal$}}})})}{(1-e^{-\beta\omega_{\raisebox{0pt}{\tiny{$\Scal$}}}})(1-e^{-\beta \omega_{\raisebox{0pt}{\tiny{$\Mcal$}}}})}.
     \label{freeEn bosonic modes}
\end{align}
Note that this form is similar to that for finite-dimensional systems with equally spaced Hamiltonian [cf., Eq.~\eqref{eq:tot energy d}]; the dimension-dependent term vanishes as $d \to \infty$, simplifying the expression in the infinite-dimensional case.

With this simple protocol for lowering the temperature of a harmonic oscillator target using a single harmonic oscillator machine at hand, we are now in a position to describe an energy-optimal (in the sense of saturating the Landauer bound) cooling protocol when a diverging number of operations, i.e., diverging time, is permitted. In other words, we now show how to achieve perfect cooling with minimal energy at the expense of requiring diverging time, i.e., infinitely many steps of finite duration. As mentioned above, in this specific protocol, the control complexity as per Eq.~\eqref{eq:maineffectivedimension} is infinite in each of these infinitely many steps. As we argue after having presented the protocol, this is an artefact of the simple structure of the Gaussian operations used. Indeed, we later present a non-Gaussian diverging-time protocol for cooling a single harmonic oscillator to the ground state using finite control complexity in each of the infinitely many steps, and at an overall finite (albeit not minimal, i.e., not at the Landauer limit) energy cost. Before presenting this non-Gaussian protocol, let us now discuss the details of the Gaussian diverging-time protocol for cooling at the Landauer limit.

We consider a harmonic oscillator with the frequency $\omega_{\raisebox{0pt}{\tiny{$\Scal$}}}$ as the target system and the machine to comprise $N$ harmonic oscillators, where the $n^{\textup{th}}$ oscillator has frequency $\omega_{M_n}= \omega_{\raisebox{0pt}{\tiny{$\Scal$}}} +\, n\,\epsilon$. In addition, we assume that all modes are initially uncorrelated and in thermal states at the same inverse temperature $\beta$ with respect to their free Hamiltonians, i.e., the target system is $\tau_{\raisebox{-1pt}{\tiny{$\Scal$}}}(\beta,H_{\raisebox{-1pt}{\tiny{$\Scal$}}})$ and the multimode thermal machine is $\tau_{\raisebox{-1pt}{\tiny{$\Mcal$}}}(\beta, H_{\raisebox{-1pt}{\tiny{$\Mcal$}}}) = \bigotimes_{n=1}^N\,\tau_{\raisebox{-1pt}{\tiny{$\Mcal$}$_n$}}(\beta, H_{\raisebox{-1pt}{\tiny{$\Mcal$}$_n$}})$. 

In this case, the cooling process is divided into $N$ time steps. During each step, there is an interaction between the target system and one of the harmonic oscillators in the machine. Here, we assume that at the $n^{\textup{th}}$ time step, the target system interacts only with the $n^{\textup{th}}$ harmonic oscillator, which has frequency $\omega_{\raisebox{0pt}{\tiny{$\Scal$}}} +\, n\,\epsilon$. To obtain the minimum temperature for the target system, we perform the previously outlined cooling process at each step, which is given by swapping the corresponding two modes. Using Eq.~\eqref{cov. transformation}, the covariance matrix transformation of the two-mode process at the first time step takes the form
\begin{align}
 \Gamma^{(1)}(\tau_{\raisebox{-1pt}{\tiny{$\Scal$}}}(\beta) \otimes \tau_{\raisebox{-1pt}{\tiny{$\Mcal$}$_1$}}(\beta))&= \begin{bmatrix}
    \coth{\left(\frac{\beta \omega_{\raisebox{0pt}{\tiny{$\Scal$}}}}{2}\right)}\openone_2& 0\\
    0& \coth{\left(\frac{\beta (\omega_{\raisebox{0pt}{\tiny{$\Scal$}}}+\epsilon)}{2}\right)}\,\openone_2
    \end{bmatrix}\, 
    \overset{\textup{SWAP}}{\longmapsto }\, \Gamma^{(1)}_{\textup{opt}}&=\, \begin{bmatrix}
    \coth{\left(\frac{\beta (\omega_{\raisebox{0pt}{\tiny{$\Scal$}}}+\epsilon)}{2}\right)}\,\openone_2& 0\\
    0& \coth{\left(\frac{\beta \omega_{\raisebox{0pt}{\tiny{$\Scal$}}}}{2}\right)}\,\openone_2
\label{cov. transformation sm}    \end{bmatrix}.
\end{align}
By repeating this process on each of the harmonic oscillators in the machine, after the $(n-1)^{\textup{th}}$ step, the $2\times 2$ block corresponding to the target system $\Scal$ in the covariance matrix is given by $\coth{\left(\frac{\beta (\omega_{\raisebox{0pt}{\tiny{$\Scal$}}}+(n-1)\epsilon)}{2}\right)}\,\openone_2$. Therefore, one can show inductively that the covariance matrix transformation associated to the $n^{\textup{th}}$ interaction is given by
\begin{align}
 \Gamma^{(n)}(\tau_{\raisebox{-1pt}{\tiny{$\Scal$}}}(\beta) \otimes \tau_{\raisebox{-1pt}{\tiny{$\Mcal$}$_n$}}(\beta))=& \begin{bmatrix}
    \coth{\left(\frac{\beta (\omega_{\raisebox{0pt}{\tiny{$\Scal$}}}+(n-1)\epsilon)}{2}\right)}\openone_2& 0\\
    0& \coth{\left(\frac{\beta (\omega_{\raisebox{0pt}{\tiny{$\Scal$}}}+n\epsilon)}{2}\right)}\,\openone_2
    \end{bmatrix}\, \notag \\ 
    \overset{\textup{SWAP}}{\longmapsto }\, \Gamma^{(n)}_{\textup{opt}}=&\, \begin{bmatrix}
    \coth{\left(\frac{\beta (\omega_{\raisebox{0pt}{\tiny{$\Scal$}}}+n\epsilon)}{2}\right)}\,\openone_2& 0\\
    0& \coth{\left(\frac{\beta (\omega_{\raisebox{0pt}{\tiny{$\Scal$}}}+(n-1)\epsilon)}{2}\right)}\,\openone_2
\label{cov. transformation sm 2}    \end{bmatrix}.
\end{align} 
Based on this process, after $N$ steps (i.e., after the system has interacted with all $N$ harmonic oscillators), the minimal achievable temperature of the target system is $T^{(N)}_{\textup{min}} = \frac{\omega_{\raisebox{0pt}{\tiny{$\Scal$}}}}{\omega_{\raisebox{0pt}{\tiny{$\Scal$}}} + N \epsilon} T$. Moreover, by using Eq.~(\ref{EnergyExch.2moeds}), one can calculate the energy changes of the target system and the machine at each time step as
\begin{align}
    \Delta E^{(n)}_{\raisebox{-1pt}{\tiny{$\Scal$}}}&=  \frac{\omega_{\raisebox{0pt}{\tiny{$\Scal$}}}}{2}\,\left[\coth{\left(\frac{\beta (\omega_{\raisebox{0pt}{\tiny{$\Scal$}}}+n\epsilon)}{2}\right)}-\coth{\left(\frac{\beta (\omega_{\raisebox{0pt}{\tiny{$\Scal$}}}+(n-1)\epsilon)}{2}\right)}\right],\notag\\
    \Delta E^{(n)}_{\raisebox{-1pt}{\tiny{$\Mcal$}$_n$}} &=  \frac{(\omega_{\raisebox{0pt}{\tiny{$\Scal$}}}+n\epsilon)}{2}\,\left[\coth{\left(\frac{\beta (\omega_{\raisebox{0pt}{\tiny{$\Scal$}}}+(n-1)\epsilon)}{2}\right)}-\coth{\left(\frac{\beta (\omega_{\raisebox{0pt}{\tiny{$\Scal$}}}+n\epsilon)}{2}\right)}\right].
    \label{EnergyExch.2moeds step m}  
\end{align}
The total energy change for the target system during the overall process (i.e., throughout the $N$ steps) is thus given by 
\begin{align}
    \Delta E_{\raisebox{-1pt}{\tiny{$\Scal$}}} &=\sum_{n=1}^{N}\Delta E^{(n)}_{\raisebox{-1pt}{\tiny{$\Scal$}}}
    =  \sum_{n=1}^{N}\frac{\omega_{\raisebox{0pt}{\tiny{$\Scal$}}}}{2}\,\left[\coth{\left(\frac{\beta (\omega_{\raisebox{0pt}{\tiny{$\Scal$}}}+n\epsilon)}{2}\right)}-\coth{\left(\frac{\beta (\omega_{\raisebox{0pt}{\tiny{$\Scal$}}}+(n-1)\epsilon)}{2}\right)}\right]\nonumber\\
    &=\frac{\omega_{\raisebox{0pt}{\tiny{$\Scal$}}}}{2}\,\left[\coth{\left(\frac{\beta (\omega_{\raisebox{0pt}{\tiny{$\Scal$}}}+N\epsilon)}{2}\right)}-\coth{\left(\frac{\beta \omega_{\raisebox{0pt}{\tiny{$\Scal$}}}}{2}\right)}\right]= \omega_{\raisebox{0pt}{\tiny{$\Scal$}}}\,\left[\frac{e^{-\beta (\omega_{\raisebox{0pt}{\tiny{$\Scal$}}}+N\epsilon)}}{1-e^{-\beta (\omega_{\raisebox{0pt}{\tiny{$\Scal$}}}+N\epsilon)}}-\frac{e^{-\beta \omega_{\raisebox{0pt}{\tiny{$\Scal$}}}}}{1-e^{-\beta \omega_{\raisebox{0pt}{\tiny{$\Scal$}}}}}\right].
    \label{TotEnergyExch S}  
\end{align}
Here, we write $\coth{(x)} = 1+(2e^{-2x})/(1-e^{-2x})$. Similarly, one can obtain the total energy change of the overall machine
\begin{align}
   \Delta E_{\raisebox{-1pt}{\tiny{$\Mcal$}}} &=\sum_{n=1}^{N}\Delta E^{(n)}_{\raisebox{-1pt}{\tiny{$\Mcal$}$_n$}}=  \sum_{n=1}^{N}\frac{\omega_{\raisebox{0pt}{\tiny{$\Scal$}}}+ n\epsilon}{2}\,\left[\coth{\left(\frac{\beta (\omega_{\raisebox{0pt}{\tiny{$\Scal$}}}+(n-1)\epsilon)}{2}\right)}-\coth{\left(\frac{\beta (\omega_{\raisebox{0pt}{\tiny{$\Scal$}}}+n\epsilon)}{2}\right)}\right]\nonumber\\
     &=  \sum_{n=1}^{N}(\omega_{\raisebox{0pt}{\tiny{$\Scal$}}}+ n\epsilon)\,\left[\frac{e^{-\beta (\omega_{\raisebox{0pt}{\tiny{$\Scal$}}}+(n-1)\epsilon)}}{1-e^{-\beta (\omega_{\raisebox{0pt}{\tiny{$\Scal$}}}+(n-1)\epsilon)}}-\frac{e^{-\beta (\omega_{\raisebox{0pt}{\tiny{$\Scal$}}}+n\epsilon)}}{1-e^{-\beta (\omega_{\raisebox{0pt}{\tiny{$\Scal$}}}+n\epsilon)}}\right].
    \label{eqTotEnergyExchMach.}  
\end{align}
It is straightforward to check that the total energy change, i.e., the sum of Eqs.~\eqref{TotEnergyExch S} and \eqref{eqTotEnergyExchMach.}, is equal to the energy cost obtained in Eq.~\eqref{eq:tot energy d} with $d \to \infty$. In particular, this can be seen by considering the second line of Eq.~\eqref{eq:tot energy d}, where the second term in round parenthesis vanishes as $d \to \infty$ for any value of $N$. Thus, when the number of operations diverges $N\to \infty$ and $\epsilon =(\omega_{\textup{max}}-\omega_{\raisebox{0pt}{\tiny{$\Scal$}}})/N \to 0$, where $\omega_{\textup{max}} := \frac{\beta_{\textup{max}}}{\beta}\omega_{\raisebox{0pt}{\tiny{$\Scal$}}}$ is the maximum frequency of the machines, the heat dissipated by the machines throughout the process saturates the Landauer bound and is therefore energetically optimal. Moreover, by taking $\omega_{\textup{max}} \to \infty$ one approaches perfect cooling. 

At this point, a comment on the notion of control complexity is in order. According to Eq.~\eqref{eq:maineffectivedimension}, the effective dimension of the machine in the protocol we consider here diverges in addition to time. Indeed, the notion of control complexity thusly defined diverges for \emph{any} Gaussian operation acting on the machine, in particular, it diverges for any single one of the infinitely many steps of the protocol, as each operation is a two-mode Gaussian operation. At first glance, this appears to be in contrast to the common conception that Gaussian operations are typically easily implementable (cf. Refs.~\cite{BrownFriisHuber2016,Friis2018}). However, an alternative way of interpreting this protocol is that, exactly because of the simple structure of Gaussian operations, reaching the ground state at finite energy cost requires a diverging number of two-mode Gaussian unitaries, and thus divergingly many modes on which to act (see also Appendix~\ref{app:hodivergingcontrolcomplexitygaussian}). In fact, if non-Gaussian unitaries are employed, then the ground state can be approached at finite energy cost using just a single harmonic oscillator machine, as we now show. 

\subsubsection{Diverging-Time Protocol using Non-Gaussian Operations (with Finite Control Complexity)}
\label{app:hodivergingtimenongaussian}

We now consider a protocol for cooling a single harmonic oscillator at frequency $\omega_{\raisebox{0pt}{\tiny{$\Scal$}}}$ to the ground state using a diverging amount of time, but requiring only a finite overall energy input as well as finite control complexity in each of the diverging number of steps of the protocol. In this protocol, the machine $\Mcal$ is also represented by a single harmonic oscillator whose frequency matches that of the target oscillator that is to be cooled, $\omega_{\raisebox{0pt}{\tiny{$\Mcal$}}}=\omega_{\raisebox{0pt}{\tiny{$\Scal$}}}=:\omega$. The initial states of both the target system $\Scal$ and machine $\Mcal$ are assumed to be thermal at the same inverse temperature $\beta$, and are hence both described by thermal states of the form
\begin{align}
    \tau(\beta) &=\,\frac{
    e^{-\beta H}}{\tr{e^{-\beta H}}}
        \,=\,
    \sum\limits_{n=0}^{\infty}
    e^{-\beta\omega\nr n}
    (1-e^{-\beta\omega})
    \,\ket{n}\!\bra{n}
    \,=\,
    \sum\limits_{n=0}^{\infty}
    p_{n}
    \,\ket{n}\!\bra{n}{\raisebox{-1pt}{\tiny{$\Scal \Mcal$}}},
\end{align}
where the Hamiltonian $H$ is given by $H=\sum_{n=0}^{\infty}n\omega\,\ket{n}\!\bra{n}$ and the $p_{n}=e^{-\beta\omega\nr n}(1-e^{-\beta\omega})$ are the eigenvalues of $\tau$. The joint initial state is a product state that we can then write as
\begin{align}
\tau_{\raisebox{-1pt}{\tiny{$\Scal$}}}(\beta)\otimes\tau_{\raisebox{-1pt}{\tiny{$\Mcal$}}}(\beta)  &=\,\sum\limits_{m,n=0}^{\infty}
    p_{m}p_{n}
    \,\ket{m}\!\bra{m}_{\raisebox{-1pt}{\tiny{$\Scal$}}}\otimes\ket{n}\!\bra{n}_{\raisebox{-1pt}{\tiny{$\Mcal$}}}
    \,=\,\sum\limits_{m,n=0}^{\infty}
    \tilde{p}_{m+n}
    \,\ket{m,n}\!\bra{m,n},
\end{align}
where we define $\tilde{p}_{k}:=e^{-\beta\omega\nr k}(1-e^{-\beta\omega})^{2}$. We then note that the eigenvalues $\tilde{p}_{k}$ of the joint initial state have degeneracy $k+1$. For instance, the largest value $\tilde{p}_{0}=p_{0}p_{0}$, corresponding to both the system and machine being in the ground state, is the single largest eigenvalue, but there are two eigenstates, $\ket{0,1}$ and $\ket{1,0}$, corresponding to the second largest eigenvalue $\tilde{p}_{1}$, three states, $\ket{0,2}$, $\ket{1,1}$, and $\ket{2,0}$ for the third largest eigenvalue $\tilde{p}_{2}$, and so forth. Obviously, not all of these eigenvalues correspond to eigenstates for which the target system is in the ground state. 

In order to increase the ground-state population of the target system oscillator, we can now apply a sequence of `two-level' unitaries, i.e., unitaries that act only on a subspace spanned by two particular eigenstates and exchange their respective populations. The two-dimensional subspaces are chosen such that one of the two eigenstates corresponds to the system $\Scal$ being in the ground state, $\ket{0,k}$, while the other eigenstate corresponds to $\Scal$ being in an excited state, $\ket{i\neq0,j}$. In addition, these pairs of levels are selected such that, at the time the unitary operation is to be performed, the population of $\ket{0,k}$ is smaller than that of $\ket{i\neq0,j}$, such that the two-level exchange increases the ground-state population of $\Scal$ at each step. 

More specifically, at the $k^\text{th}$ step of this sequence, the joint system $\Scal\Mcal$ is in the state $\varrho\suptiny{0}{0}{(k)}_{\raisebox{-1pt}{\tiny{$\Scal\Mcal$}}}$ and one determines the set $\Omega_{k}$ of index pairs $(i\neq 0,j)$ such that $\tilde{p}_{k}<\bra{i,j}\varrho\suptiny{0}{0}{(k)}_{\raisebox{-1pt}{\tiny{$\Scal\Mcal$}}}\ket{i,j}$, i.e., the set of eigenstates for which $\Scal$ is not in the ground state and which have a larger associated population (at the beginning of the $k^\text{th}$ step) than $\ket{0,k}$. One then determines an index pair $(m_{k},n_{k})$ for which this population is maximal, i.e., $\bra{m_{k},n_{k}}\varrho_{\raisebox{-1pt}{\tiny{$\Scal\Mcal$}}}\suptiny{0}{0}{(k)}\ket{m_{k},n_{k}}=\max\{\bra{i,j}\varrho_{\raisebox{-1pt}{\tiny{$\Scal\Mcal$}}}\suptiny{0}{0}{(k)}\ket{i,j}|(i,j)\in \Omega_{k}\}$, and performs the unitary
\begin{align}
    U^{(k)}_{\raisebox{-1pt}{\tiny{$\Scal\Mcal$}}}   &=\,\openone_{\raisebox{-1pt}{\tiny{$\Scal\Mcal$}}}-\ket{0,k}\!\bra{0,k}-\ket{m_{k},n_{k}}\!\bra{m_{k},n_{k}}
    +\Bigl(
    \ket{0,k}\!\bra{m_{k},n_{k}}+
    \ket{m_{k},n_{k}}\!\bra{0,k}
    \Bigr).
\end{align}
If there is no larger population that is not already in the subspace of the ground state of the target system, i.e., when $\Omega_{k}=\emptyset$, which is only the case for the first step ($k=1$), then no unitary is performed. After the $k^\text{th}$ step, the joint state $\varrho\suptiny{0}{0}{(k+1)}_{\raisebox{-1pt}{\tiny{$\Scal\Mcal$}}}$ is still diagonal in the energy eigenbasis, and the subspace of the joint Hilbert spaces for which $\Scal$ is in the ground state is populated with the $k+1$ largest eigenvalues $\tilde{p}_{i}$ in nonincreasing order with respect to nondecreasing energy eigenvalues of the subspace's basis vectors $\ket{0,i}$. That is, for all $i\in\{0,1,2,\ldots,k\}$ and for all $j\in\mathbbm{N}$ with $j>i$, we have $\bra{0,i}\varrho_{\raisebox{-1pt}{\tiny{$\Scal\Mcal$}}}\suptiny{0}{0}{(k+1)}\ket{0,i}\geq \bra{0,j}\varrho_{\raisebox{-1pt}{\tiny{$\Scal\Mcal$}}}\suptiny{0}{0}{(k+1)}\ket{0,j}$.

Since the Hilbert spaces of both $\Scal$ and $\Mcal$ are infinite dimensional, we can continue with such a sequence of two-level exchanges indefinitely, starting with $k=1$ and continuing step by step as $k\rightarrow\infty$. Here we note that the choice of $(m_{k},n_{k})$ is generally not unique at the $k$-th step, but as $k\rightarrow\infty$, the resulting final state is independent of the particular choices of $(m_{k},n_{k})$ made along the way. In particular, in a fashion that is reminiscent of the famed Hilbert hotel paradox (see, e.g., Ref.~\cite[p.~17]{Gamow1947}), this sequence places \emph{all} of the infinitely many eigenvalues $\tilde{p}_{k}$ of the joint state of $\Scal\Mcal$ (which must hence sum to one) into the subspace where $\Scal$ is in the ground state. In other words, in the limit of infinitely many steps, the population of the ground-state subspace evaluates to
\begin{align}
    \sum\limits_{k=0}^{\infty}(k+1)\tilde{p}_{k}\,=\,
    \sum\limits_{k=0}^{\infty}(k+1)\,e^{-\beta\omega\nr k}\,(1-e^{-\beta\omega})^{2}\,=\,1,
\end{align}
where we take into account the $(k+1)$-fold degeneracy of the $k^\text{th}$ eigenvalue $\tilde{p}_{k}$. We thus have $\lim_{k\rightarrow\infty}\ptr{\Mcal}{\varrho_{\raisebox{-1pt}{\tiny{$\Scal\Mcal$}}}\suptiny{0}{0}{(k)}}=\ket{0}\!\bra{0}_{\raisebox{-1pt}{\tiny{$\Scal$}}}$, the reduced state of the system is asymptotically the pure state $\ket{0}_{\raisebox{-1pt}{\tiny{$\Scal$}}}$. 

As per our requirement on the structural complexity (see Appendix~\ref{app:conditionsstructuralcontrolcomplexity}), the Hilbert space of the machine required to achieve this is infinite-dimensional, and since each step of the protocol is assumed to take a finite amount of time, the overall time for reaching the ground state diverges. At the same time, the control complexity for each individual step is finite, since each $U_{k}$ acts nontrivially only on a two-dimensional subspace. To see that also the energy cost for this protocol is finite, we first note that the protocol results in a final state of the machine that is diagonal in the energy eigenbasis $\ket{n}_{\raisebox{-1pt}{\tiny{$\Mcal$}}}$, with probability weights $\tilde{p}_{k}$ decreasing (but not strictly) with increasing energy. Due to the degeneracy of the eigenvalues $\tilde{p}_{k}$, each one appears $(k+1)$ times on the diagonal (w.r.t. the energy eigenbasis) of the resulting machine state, populating adjacent energy levels. The label $n(k)$ of the lowest energy level that is populated by a particular value $\tilde{p}_{k}$ can be calculated as 
\begin{align}
    \tilde{n}(k):=\sum\limits_{n=0}^{k-1}(n+1)\,=\,\tfrac{1}{2}k(k+1),
\end{align}
while the largest energy populated by $\tilde{p}_{k}$ is given by $\tilde{n}(k+1)-1$. With this, we calculate the energy of the machine after the protocol, which evaluates to
\begin{align}
    \frac{E_{\raisebox{-1pt}{\tiny{$\Mcal$}}}^{\raisebox{0pt}{\tiny{final}}}}{\omega} &=\,\sum\limits_{k=1}^{\infty}e^{-\beta\omega\nr k}\,(1-e^{-\beta\omega})^{2}\,\sum\limits_{n=\tilde{n}(k)}^{\tilde{n}(k+1)-1}n
    =\,\sum\limits_{k=1}^{\infty}e^{-\beta\omega\nr k}\,(1-e^{-\beta\omega})^{2}\,\tfrac{1}{2}k(k+1)(k+2)\,=\,
    \tfrac{3}{4}\operatorname{cosech}^{2}\bigl(\tfrac{\beta\omega}{2}\bigr).
\end{align}
Since the energy of the initial thermal state is given by
\begin{align}
    \frac{E\left[\tau(\beta)\right]}{\omega}   &=\,\sum\limits_{n=0}^{\infty}
    n\,e^{-\beta\omega\nr n}\,(1-e^{-\beta\omega})\,=\,\frac{e^{-\beta\omega}}{1-e^{-\beta\omega}},
\end{align}
we thus arrive at the energy cost
\begin{align}
    \frac{\Delta E_{\raisebox{-1pt}{\tiny{$\Mcal$}}}}{\omega}  &=\,\frac{E_{\raisebox{-1pt}{\tiny{$\Mcal$}}}^{\raisebox{0pt}{\tiny{final}}}-E\left[\tau(\beta)\right]}{\omega} \,=\,
    \frac{e^{-\beta\omega}(2+e^{-\beta\omega})}{(1-e^{-\beta\omega})^{2}}.
    \label{eq:DeltaE harmonic osci nonGaussian infinite control protocol}
\end{align}
We thus see that this energy cost is finite for all finite initial temperatures (although note that the energy cost diverges when $\beta\rightarrow0$). 

However, as we show next, the energy cost for attaining the ground state is not minimal, i.e., the protocol achieves perfect cooling (with finite energy and control complexity, but infinite time) but not at the Landauer limit. To see this, we first observe that the entropy of the final pure state of the system $\Scal$ vanishes, such that $\widetilde{\Delta} S_{\raisebox{-1pt}{\tiny{$\Scal$}}}=S\left[\tau(\beta)\right]$. Evaluating this entropy, one obtains
\begin{align}
    S\left[\tau(\beta)\right]   &=-\tr{\tau\log(\tau)}
    =
    -\sum\limits_{n=0}^{\infty}e^{-\beta\omega\nr n}
    (1\nl-\nl e^{-\beta\omega})\log\left[e^{-\beta\omega\nr n}(1\nl -\nl e^{-\beta\omega})\right]
    =
    -\sum\limits_{n=0}^{\infty}e^{-\beta\omega\nr n}(1\nl-\nl e^{-\beta\omega})\left[-\beta\omega\nr n+\log(1\nl-\nl e^{-\beta\omega})\right]\nonumber\\
    &=\,\frac{\beta\omega e^{-\beta\omega}}{1-e^{-\beta\omega}}\,+\,
    \beta\omega\,+\,\log\Bigl(\frac{ e^{-\beta\omega}}{1-e^{-\beta\omega}}\Bigr)
    \,=\,
    \frac{\beta\omega}{1-e^{-\beta\omega}}\,+\,\log\Bigl(\frac{ e^{-\beta\omega}}{1-e^{-\beta\omega}}\Bigr).
    \label{eq:thermal state entropy harmonic osci}
\end{align}
Using the results from Eqs.~\eqref{eq:DeltaE harmonic osci nonGaussian infinite control protocol} and~\eqref{eq:thermal state entropy harmonic osci}, we can thus compare the expressions for $\beta\Delta E_{\raisebox{-1pt}{\tiny{$\Mcal$}}}$ and  $\widetilde{\Delta} S_{\raisebox{-1pt}{\tiny{$\Scal$}}}$, and we find that $\beta\Delta E_{\raisebox{-1pt}{\tiny{$\Mcal$}}}-\widetilde{\Delta} S_{\raisebox{-1pt}{\tiny{$\Scal$}}}>0$ for all nonzero initial temperatures. The origin of this difference is easily identified: although the protocol results in an uncorrelated final state because the system is left in a pure state, that is, $I(\Scal: \Mcal)_{\varrho_{\raisebox{-1pt}{\tiny{$\Scal \Mcal$}}}^\prime}=0$, the last term $D(\varrho_{\raisebox{-1pt}{\tiny{$\Mcal$}}}^\prime \| \tau_{\raisebox{-1pt}{\tiny{$\Mcal$}}})$ in Eq.~\eqref{eq:landauerequality} is nonvanishing for nonzero temperatures because the protocol does not result in a thermal state of the machine. 

With this, we thus show that perfect cooling is indeed possible using a finite energy cost and a finite control complexity in every one of infinitely many steps (thus using diverging time). As we have seen, the structural requirement of an infinite-dimensional effective machine Hilbert space can be met by realising $\Mcal$ as a single harmonic oscillator. Although the presented protocol does not minimise the energy cost to saturate the Landauer bound, we cannot at this point conclusively say that it is not possible to do so in this setting. However, we suspect that a more complicated energy-level structure of the machine is necessary.

Finally, let us comment again on the notion of control complexity in terms of effective machine dimension as opposed to the notion of complexity that is often (loosely) associated with the distinction between Gaussian and non-Gaussian operations. As we see from the protocols presented here, the concept of control complexity based on the nontrivially accessed Hilbert-space dimension of the machine indeed captures the resource that must diverge in order to reach the ground state, while the intuition of complexity associated with (non)-Gaussian operations, albeit valid as a characterisation of a certain practical difficulty in realising such operations, seems to be irrelevant for determining if the ground state can be reached. In the protocol presented in this section, non-Gaussian operations with finite control complexity are used in each step to reach the ground state. Infinitely many steps (i.e., diverging time) could then be traded for a single (also non-Gaussian) operation of infinite control complexity, performed in unit time. In the previous protocol based on Gaussian operations (Appendix~\ref{app:hodivergingtimegaussian}), the control complexity diverges in every single step of the cooling protocol, but only when there are infinitely many such steps (diverging time) or one operation in unit time on infinitely many modes (see below), can we reach the ground state. However, in the latter case, the operation, although acting on a diverging number of harmonic oscillators, remains Gaussian, as we now show explicitly. 

\subsection{Diverging Control Complexity Cooling Protocol for Harmonic Oscillators}
\label{app:hodivergingcontrolcomplexitygaussian}

Here we give a protocol for perfectly cooling a harmonic oscillator in unit time and with the minimum energy cost, but with diverging control complexity. In accordance with Theorem~\ref{thm:variety}, the machines used to cool the target system will likewise be harmonic oscillators. Let the operators $a~(a^{\dagger})$ and $b_{k}~(b_{k}^{\dagger})$, respectively, denote the annihilation (creation) operators of the target system and a machine subsystem labelled $k$. We then consider the the unitary transformation in Eq.~\eqref{eq:swapwithi}, namely
\begin{equation}
    U_{k}:= e^{i\frac{\pi}{2}(a^{\dag}b_k +ab_{k}^\dag)} .
\end{equation}
One can then apply the diverging-time cooling protocol from Appendix~\ref{app:hodivergingtimegaussian} to cool the system to the ground state at the Landauer limit via the total unitary transformation
\begin{align}
    U_{\textup{tot}} := \lim_{N \to \infty} U_{(N)}, \qquad \text{ with } \qquad U_{(N)}:= \prod_{k=1}^N U_k . \label{eq:total unitary} 
\end{align}
We now seek the Hamiltonian that generates $U_{\textup{tot}}$. First note that $U_{(N)}aU_{(N)}^\dag =i b_{1}$ and 
\begin{equation}
    U_{(N)}b_{k}U_{(N)}^\dag = \begin{cases}
    -b_{k+1}, & \text{for } k < N \\
    i a, & \text{for } k = N \\
    b_{k}, & \text{for } k > N 
    \end{cases},
\end{equation}
which can be proven by induction. In contrast with Appendix~\ref{app:hodivergingtimegaussian}, here we use the complex representation of the symplectic group to describe the transformation, i.e., the set of matrices $S$ satisfying $SKS^\dag =K$, where $K:=\openone_{N} \oplus (-\openone_{N})$. Gathering the raising and lowering operators of the target system and the first $N$ machines into the vector $\vec{\xi}:=\bigl(\begin{matrix} a & b_{1} & b_{2} & \ldots & b_{N} & a^{\dag} & b_{1}^{\dag} & b_{2}^{\dag} & \ldots &  b_{N}^{\dag}\end{matrix}\bigr)^\mathrm{T}$, we can write the transformation above as $U_{(N)}\vec{\xi} \, U_{(N)}^\dag = S^\mathrm{T} \vec{\xi}$~\cite{AdessoRagyLee2014}, where
\begin{equation}
    S=\begin{pmatrix}
    \alpha_{(N)} & 0 \\
    0 & \alpha_{(N)} 
    \end{pmatrix}, \qquad \text{ with } \qquad \alpha_{(N)} := 
    \begin{pmatrix}
    0 & 0 & 0 & \ldots & 0 & i \\
    i & 0 & 0 & \ldots & 0 & 0 \\
    0 & -1 & 0 & \ldots & 0 & 0 \\
    0 & 0 & -1 & \ldots & 0 & 0 \\
    \vdots & \vdots & \vdots & \ddots & \vdots & \vdots \\
    0 & 0 & 0 & \ldots & -1 & 0
    \end{pmatrix} . \label{eq:alphamatrix}
\end{equation}
Now, defining the matrix of Hamiltonian coefficients $h_{(N)}$ implicitly by $U_{(N)} =: \exp(-i \vec{\xi}^{\,\dagger} \cdot h_{(N)} \cdot \vec{\xi})$, we have that $S=\exp(-i K h_{(N)})$~\cite{AdessoRagyLee2014}, i.e., $h_{(N)}=i K \log(S^\mathrm{T})=i K \log(S)^\mathrm{T}$, where we take the principal logarithm. To calculate this, we must diagonalise the matrix $\alpha_{(N)}$ in Eq.~\eqref{eq:alphamatrix}. The eigenvalues of $\alpha_{(N)}$ are
\begin{equation}
    \lambda_{k}:=-e^{-i\pi \frac{2k-1}{N+1}}, \qquad \text{ with } \qquad k\in\lbrace 1,2,\ldots,N+1\rbrace ,
\end{equation}
i.e., the negative of the $(N+1)^\mathrm{th}$ roots of $-1$, and it is diagonalised by the unitary matrix $V$ constructed from the eigenvectors $\vec{v}_{k}$:
\begin{equation}
    V:=\begin{pmatrix}
    \vec{v}_{1} & \vec{v}_{2} & \vec{v}_{3} & \ldots & \vec{v}_{N+1} 
    \end{pmatrix}
\qquad \text{ with } \qquad
    \vec{v}_{k}:= \frac{-1}{\sqrt{N+1}} \begin{pmatrix}
    i (-\lambda_{k})^{-1} \\ (-\lambda_{k})^{-2} \\ (-\lambda_{k})^{-3} \\ \vdots \\ (-\lambda_{k})^{-(N+1)} 
    \end{pmatrix} .
\end{equation}
Specifically, $\alpha_{(N)}=VDV^\dag$, where $D:=\diag ( \lambda_{1},\lambda_{2},\ldots,\lambda_{N+1} )$, and thus
\begin{equation}
    h_{(N)}^\mathrm{T} =i K \log \begin{pmatrix}
     V D V^\dag & 0 \\ 0 & V D V^\dag
    \end{pmatrix} = i K \begin{pmatrix}
    V & 0 \\ 0 & W
    \end{pmatrix} \begin{pmatrix}
    \log (D)  & 0 \\ 0 & \log (D)
    \end{pmatrix} \begin{pmatrix}
    V^\dag & 0 \\ 0 & V^\dag
    \end{pmatrix} =: \begin{pmatrix}
    A & 0 \\ 0 & -A
    \end{pmatrix} 
\end{equation}
for some matrix $A$. By direct calculation, one finds that
\begin{equation} \label{eq:AmatrixInfComplex}
    A_{jk} = i^{\delta_{j1}}i^{\delta_{k1}}\frac{\pi}{(N+1)^2} \sum_{p=1}^{N+1} \, (2p-2-N)e^{-i\pi\frac{2p-1}{N+1}(j-k)}.
\end{equation}
Now, considering the identity
\begin{equation}
    \sum_{p=1}^{N+1} \, e^{i \theta p} =  \frac{e^{i \theta (N+1)}-1}{1-e^{i \theta}}
\end{equation}
for $\theta \in \mathbb{R}$, as well as its derivative with respect to $\theta$, one can calculate the sum in Eq.~\eqref{eq:AmatrixInfComplex}. We then have
\begin{equation}
    \lim_{N\to\infty} A_{jk} = \begin{cases}
    0, & \text{for } j = k \\
   i i^{\delta_{j1}}i^{\delta_{k1}} \frac{1}{j-k}, & \text{for } j \neq k 
    \end{cases}.
\end{equation}
Then, finally, we have that $U_{\textup{tot}} = e^{-i H_{\textup{tot}}}$, where $H_{\textup{tot}}= \lim_{N\to\infty} \left( \vec{v}^{\,\dagger} \cdot h_{(N)} \cdot \vec{v} \right)$, i.e.,
\begin{equation}
   H_{\textup{tot}} = - \sum_{j=2}^{\infty} \left( \frac{1}{j-1} b_{j}^\dag a + \mathrm{H.c.} \right) + \sum_{j,k=1; \, j\neq k}^{\infty} \frac{i}{j-k} b_{j}^\dag b_{k}   . \label{eq:complexHamiltonian}
\end{equation}
Thus, the system is cooled to the ground state at an energy cost saturating the Landauer bound, and in unit time, but via a procedure that implements a multimode Gaussian unitary on a diverging number of modes.

%%%%%%%%%%%%%%%%%%%%%%%%%%%%%%%%%%%%%%%%%%%%%%%%

\section{Cooling Protocols in the Incoherent-Control Paradigm}
\label{app:coolingprotocolsincoherentcontrolparadigm}

In this section, we investigate the required resources to cool the target system within the incoherent-control paradigm. For simplicity, we consider only the finite-dimensional setting. Here, we have a qudit target system $\Scal$ interacting resonantly (i.e., in an energy-conserving manner) with a qudit machine $\Mcal$, which is partitioned into one part, $\Ccal$, in thermal contact with the ambient environment at inverse temperature $\beta$ and another part, $\Hcal$, in contact with a hot bath at inverse temperature $\beta_{\raisebox{-1pt}{\tiny{$H$}}} < \beta$. The Hamiltonians for each subsystem are $H_{\raisebox{-1pt}{\tiny{$\Xcal$}}}=\sum_{n=0}^{d_X-1} n\, \omega_{\raisebox{0pt}{\tiny{$\Xcal$}}} \ket{n}\!\bra{n}_{\raisebox{-1pt}{\tiny{$\Xcal$}}}$; the energy resonance condition enforces that $\omega_{\raisebox{0pt}{\tiny{$\Hcal$}}}=\omega_{\raisebox{0pt}{\tiny{$\Ccal$}}}-\omega_{\raisebox{0pt}{\tiny{$\Scal$}}}$. For the most part in this section, we focus on equally spaced Hamiltonians for simplicity; we comment specifically whenever we consider otherwise.

In order to cool the target system, we aim to compress as much population as possible into the its lowest energy eigenstates via interactions that are restricted to the energy-degenerate subspaces of the joint $\Scal\Ccal\Hcal$ system. Thus we are restricted to global energy-conserving unitaries $U_{\raisebox{-1pt}{\tiny{EC}}}$ that satisfy
\begin{equation}
    [{H}_{\raisebox{-1pt}{\tiny{$\Scal$}}}+H_{\raisebox{-1pt}{\tiny{$\Ccal$}}}+H_{\raisebox{-1pt}{\tiny{$\Hcal$}}}, U_{\raisebox{-1pt}{\tiny{EC}}}]=0.
    \label{eq:tripartite commutator UH0}
\end{equation}
In Ref.~\cite{Clivaz_2019E}, it was shown that for the case where all three subsystems are qubits, the optimal global unitary in this setting (inasmuch as they render the target system in the coldest state possible given the restrictions) is
\begin{align}
    U_{\raisebox{-1pt}{\tiny{EC}}}=\ket{0,1,0}\! \bra{1,0,1}_{\raisebox{-1pt}{\tiny{$\Scal \Ccal \Hcal$}}}+\ket{1,0,1}\! \bra{0,1,0}_{\raisebox{-1pt}{\tiny{$\Scal \Ccal \Hcal$}}}+\bar{\openone},
    \label{eq:tripartite unitary op qubit}
\end{align}
where $\bar{\openone}$ denotes the identity matrix on all subspaces that are not energy degenerate. Considering the generalisation to qudit subsystems, it is straightforward to see that, for equally spaced Hamiltonians, the optimal global unitaries must be of the form 
\begin{equation}
    U_{\raisebox{-1pt}{\tiny{EC}}}=\Bigg[\sum_{m,n,l=0}^{d-2}\ket{m,n+1,l}\! \bra{m+1,n,l+1}_{\raisebox{-1pt}{\tiny{$\Scal \Ccal \Hcal$}}}+\ket{m+1,n,l+1}\! \bra{m,n+1,l}_{\raisebox{-1pt}{\tiny{$\Scal \Ccal \Hcal$}}}\Bigg]+\bar{\openone}.
    \label{eq:tripartite unitary op}
\end{equation}
For the most general case where the Hamiltonians of each subsystem are arbitrary, it is not possible to write down a generic form of the optimal unitary, since the energy-resonant transitions that lead to cooling the target now depend on the microscopic details of the energetic structure. Nonetheless, in Appendix~\ref{app:incoherentcoolingfinitetemperature}, we provide a protocol (i.e., not the unitary \emph{per se}, but a sequence of steps) in this setting that attains perfect cooling and saturates the Carnot-Landauer limit.

Intuitively, the above types of unitaries simply reshuffle populations that are accessible through resonant transitions. For the purpose of cooling, one wishes to do this in such a way that the largest population is placed in the lowest energy eigenstate of the target system, the second largest in the second lowest energy eigenstate, and so on (in line with the optimal unitaries in the coherent-control setting); indeed, on the energy-degenerate subspaces accessible, such unitaries act precisely in this way. It is straightforward to show that interactions of this form satisfy Eq.~\eqref{eq:tripartite commutator UH0}.

For the sake of simplicity, we now focus on the case where all systems are qubits, although the results generalise to the qudit setting. Consider the initial joint state $\varrho_{\raisebox{-1pt}{\tiny{$\Scal \Ccal \Hcal$}}}=\sum_{m,n,l=0}^1\, p_{mnl}\ket{m,n,l}\!\bra{m,n,l}_{\raisebox{-1pt}{\tiny{$\Scal \Ccal \Hcal$}}}$. By applying a unitary $U_{\raisebox{-1pt}{\tiny{EC}}}$ of the form given in Eq.~\eqref{eq:tripartite unitary op}, the post-transformation joint state is
\begin{equation}
    \varrho_{\raisebox{-1pt}{\tiny{$\Scal \Ccal \Hcal$}}}^\prime=U_{\raisebox{-1pt}{\tiny{EC}}}\varrho_{\raisebox{-1pt}{\tiny{$\Scal \Ccal \Hcal$}}} U_{\raisebox{-1pt}{\tiny{EC}}}^{\dagger}=\varrho_{\raisebox{-1pt}{\tiny{$\Scal \Ccal \Hcal$}}}+ \Delta p \, \ket{0,1,0}\!\bra{0,1,0}_{\raisebox{-1pt}{\tiny{$\Scal \Ccal \Hcal$}}} -\Delta p\, \ket{1,0,1}\!\bra{1,0,1}_{\raisebox{-1pt}{\tiny{$\Scal \Ccal \Hcal$}}},
\end{equation}
where $\Delta p := p_{101}-p_{010}$ indicates the amount of population that has been transferred from the excited state of the target system to the ground state throughout the interaction. Naturally, in order to cool the target system, $\Delta p \geq 0$, i.e., the initial population $p_{101}$ must be at least as large as $p_{010}$. 

Due to the energy-conserving nature of the global interaction, the energy exchanged between the subsystems throughout a single such interaction, $\Delta E_{\raisebox{-1pt}{\tiny{$\Xcal$}}}= \tr{H_{\raisebox{-1pt}{\tiny{$\Xcal$}}} (\varrho_{\raisebox{-1pt}{\tiny{$\Xcal$}}}^{\prime}-\varrho_{\raisebox{-1pt}{\tiny{$\Xcal$}}})}$, can be calculated via
\begin{align}
    \Delta E_{\raisebox{-1pt}{\tiny{$\Scal$}}}= -\omega_{\raisebox{0pt}{\tiny{$\Scal$}}} \Delta p, ~~~~~~ \Delta E_{\raisebox{-1pt}{\tiny{$\Ccal$}}}= \omega_{\raisebox{0pt}{\tiny{$\Ccal$}}} \Delta p, ~~~~~\Delta E_{\raisebox{-1pt}{\tiny{$\Hcal$}}}= -\omega_{\raisebox{0pt}{\tiny{$\Hcal$}}} \Delta p.
    \label{eq:energy exchange INc}
\end{align}
Thus, for a fixed energy-level structure of all subsystems (i.e., given the local Hamiltonians), one requires only knowledge of the pre- and post-transformation state of any one of the subsystems to calculate the energy change for all of them.

\subsection{Diverging Energy: Proof of Theorem~\ref{thm:infenergyauto}}
\label{app:divergingenergyincoherentnogotheorem}

The first thing to note is that in the incoherent-control paradigm, even when one allows for the energy cost, i.e., the heat drawn from the hot bath, to be diverging, it is not possible to perfectly cool the target system, as presented in Theorem~\ref{thm:infenergyauto}. The intuition behind this result is that the target system can interact only with energy-degenerate \emph{subspaces} of the hot and cold machine subsystems. The optimal transformation that one can do here to achieve cooling is to transfer the highest populations of any such subspace to the lowest energy eigenstate of the target system; however, any such subspace has population strictly less than one for any $0 \leq \beta_{\raisebox{-1pt}{\tiny{$H$}}} \leq \beta < \infty$ independently of the energy structure. Moreover, the difference from one can be bounded by a finite amount that does not vanish independent of the energy-level structure of any machine of finite dimension. This makes it impossible to attain a subspace population of one even as the energy cost diverges for any fixed and finite control complexity. It follows that the ground-state population of the target system can never reach unity in a single operation of finite control complexity and hence perfect cooling cannot be achieved. 

Precisely, we show the following. Let $\Scal$ be a finite-dimensional system of dimension $d_{\raisebox{-1pt}{\tiny{$\Scal$}}}$ with associated Hamiltonian with finite but otherwise arbitrary energy gaps $H_{\raisebox{-1pt}{\tiny{$\Scal$}}} = \sum_{i=0}^{d_{\raisebox{-1pt}{\tiny{$\Scal$}}} - 1} \omega_{\raisebox{0pt}{\tiny{$\Scal$}}}^i \ket{i}\!\bra{i}_{\raisebox{-1pt}{\tiny{$\Scal$}}}$, and let $d_{\raisebox{-1pt}{\tiny{$\Ccal$}}}$ and $d_{\raisebox{-1pt}{\tiny{$\Hcal$}}}$ be integers denoting the dimensions of the cold and hot parts of the machine respectively. Then it is impossible to cool the system $\Scal$ in the incoherent-control paradigm, i.e., using energy-conserving unitaries involving $\Ccal$ and $\Hcal$ at some initial inverse temperatures $\beta, \beta_{\raisebox{-1pt}{\tiny{$H$}}}$ respectively, arbitrarily close to the ground state. Note that, in particular, this result holds irrespective of the energy-level structure of $\Ccal$ and $\Hcal$ and no matter how much energy is drawn from the hot bath as a resource. 

In order to set notation for the following, we assume $\omega_{\raisebox{0pt}{\tiny{$\Xcal$}}}^i \geq \omega_{\raisebox{0pt}{\tiny{$\Xcal$}}}^j$ for $i \geq j$ and $\omega_{\raisebox{0pt}{\tiny{$\Xcal$}}}^0=0$, where $\omega_{\raisebox{0pt}{\tiny{$\Xcal$}}}^i$ denotes the $i^{\text{th}}$ energy eigenvalue of system $\Xcal$ with $\Xcal \in \{ \Scal, \Ccal, \Hcal\}$. We also assume the initial states on $\Scal$ and $\Ccal$ to be thermal at inverse temperature $\beta$, and $\Hcal$ is assumed to be initially in a thermal state at inverse temperature $\beta_{\raisebox{-1pt}{\tiny{$H$}}} \leq \beta$. We denote by $p_{\raisebox{-1pt}{\tiny{$\Xcal$}}}^i$ the $i^{\text{th}}$ population of system $\Xcal$ in a given state, i.e., $p_{\raisebox{-1pt}{\tiny{$\Xcal$}}}^i= \bra{i} \varrho_{\raisebox{-1pt}{\tiny{$\Xcal$}}} \ket{i}$, where $\ket{i}$ denotes the $i^{\text{th}}$ energy eigenstate of $\varrho_{\raisebox{-1pt}{\tiny{$\Xcal$}}}$. We also write $p_{ijk} := p_{\raisebox{-1pt}{\tiny{$\Scal$}}}^i p_{\raisebox{-1pt}{\tiny{$\Ccal$}}}^j p_{\raisebox{-1pt}{\tiny{$\Hcal$}}}^k$.

The intuition behind the proof is as follows. The global ground-state level of the joint hot-and-cold machine has some nonzero initial population for any finite-dimensional machine; in particular it can always be lower bounded by $\tfrac{1}{d_\Ccal d_\Hcal}$ for any Hamiltonians and initial temperatures, which is strictly greater than zero as long as the dimensions remain finite. Fixing the control complexity of any protocol considered here to be finite in value thus implies a lower bound on the initial ground-state population of the total machine that is larger than zero by a finite amount. Depending on the energy-level structure of the hot and cold parts of the machine, there may be other nonzero initial populations, but in order to cool the target system $\Scal$ perfectly, at least all of the previously mentioned populations must be transferred into spaces spanned by energy eigenstates of the form $\ket{0 j k}_{{\raisebox{-1pt}{\tiny{$\Scal \Ccal \Hcal$}}}}$. This intuition is formalised via Lemma~\ref{lemma:necdeg}, where we show that independent of the energy structure of $\Ccal$ and $\Hcal$, one must be able to make such transfers of population in order to perfectly cool $\Scal$. However, in order to make such transfers in an energy-conserving manner, all energy eigenstates of the form $\ket{i00}_{\raisebox{-1pt}{\tiny{$\Scal \Ccal \Hcal$}}}$ must be degenerate with some of the form $\ket{0jk}_{\raisebox{-1pt}{\tiny{$\Scal \Ccal \Hcal$}}}$. This degeneracy condition, in turn, also implies that every energy eigenstate of the form $\ket{0 j k}_{{\raisebox{-1pt}{\tiny{$\Scal \Ccal \Hcal$}}}}$ has an associated initial population $p_{0jk}$ that is nonvanishing for all machines of finite dimension (i.e., for all protocols with finite control complexity). Thus, upon transferring some population $p_{i00}$ \emph{into} the subspace spanned by $\ket{0jk}_{\raisebox{-1pt}{\tiny{$\Scal \Ccal \Hcal$}}}$, i.e., one of a relevant form for the population to contribute to the final ground-state population of the target, one inevitably transfers some finite amount of population \emph{away} from the relevant space and into $\ket{i00}_{\raisebox{-1pt}{\tiny{$\Scal \Ccal \Hcal$}}}$, which does not contribute to the final ground-state population of the target. We formalise this intuition in the discussion following Lemma~\ref{lemma:necdeg}. In this way, no matter what one does, there is always a finite amount of population, which is lower bounded by some strictly positive number due to the constraint on control complexity, that does not contribute to the final ground-state population of the target, implying that perfect cooling is not possible.

The formal proof occurs in two steps. We first show that some specific degeneracies in the joint $\Scal \Ccal \Hcal$ system must be present in order to be able to even potentially cool $\Scal$ arbitrarily close to the ground state. We then prove that, given the above degeneracies, one cannot cool the system $\Scal$ beyond a fixed ground-state population that is independent of the energy structure of $\Ccal$ and $\Hcal$; in particular, one can draw as much energy from the hot bath as they like and still do no better. We begin with the following lemma.

\begin{lem}\label{lemma:necdeg}
Given $\Scal$, $d_\Ccal$, and $d_\Hcal$ as above, one can reach a final ground-state population of the system $\Scal$ arbitrarily close to one in the incoherent-control setting only if each $\ket{i00}{\raisebox{-1pt}{\tiny{$\Scal \Ccal \Hcal$}}}$, where $i \in \{1, \dots, d_{\raisebox{-1pt}{\tiny{$\Scal$}}}-1\}$, energy eigenstate is degenerate with at least one $\ket{0jk}{\raisebox{-1pt}{\tiny{$\Scal \Ccal \Hcal$}}}$ energy eigenstate, where $j \in \{0, \dots d_{\raisebox{-1pt}{\tiny{$\Ccal$}}}-1\}, k \in \{0, \dots d_{\raisebox{-1pt}{\tiny{$\Hcal$}}}-1\}$.
\end{lem}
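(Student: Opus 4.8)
The plan is to prove the contrapositive of the stated necessary condition. Assuming that some target-excited eigenstate $\ket{i_*00}_{\Scal\Ccal\Hcal}$ (with $i_* \geq 1$) fails to be degenerate with \emph{any} state of the form $\ket{0jk}_{\Scal\Ccal\Hcal}$, I will exhibit a fixed upper bound strictly below one on the ground-state population of $\Scal$ achievable by any admissible interaction, with the bound depending only on the fixed data $(\beta, H_{\Scal})$ and the dimensions $d_{\Ccal}, d_{\Hcal}$, and in particular independent of $\beta_H$ and of the energy-level structure of $\Ccal$ and $\Hcal$. This uniformity is precisely what the subsequent proof of Theorem~\ref{thm:infenergyauto} will need.

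First I would exploit the energy-conservation constraint $[H_{\Scal}+H_{\Ccal}+H_{\Hcal}, U_{\textup{EC}}]=0$: since $U_{\textup{EC}}$ commutes with the total Hamiltonian, it is block diagonal with respect to the eigenspaces of $H_{\Scal}+H_{\Ccal}+H_{\Hcal}$ and hence maps each total-energy eigenspace onto itself. Because the initial joint state is diagonal in the joint energy eigenbasis (a product of thermal states), it follows that the total population carried by any fixed total-energy block is an invariant of $U_{\textup{EC}}$.

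Next I would focus on the block of total energy $E_* := \omega_{\Scal}^{i_*}$. Its eigenstates are exactly the $\ket{i'j'k'}$ obeying $\omega_{\Scal}^{i'}+\omega_{\Ccal}^{j'}+\omega_{\Hcal}^{k'}=E_*$, and a target-ground state $\ket{0j'k'}$ lies in this block iff $\omega_{\Ccal}^{j'}+\omega_{\Hcal}^{k'}=\omega_{\Scal}^{i_*}$, i.e.\ iff it is degenerate with $\ket{i_*00}$; by the standing assumption no such state exists, so the block consists entirely of target-excited eigenstates. Yet the block carries strictly positive initial population, since $\ket{i_*00}$ itself lies in it with thermal weight $p_{i_*00}=p_{\Scal}^{i_*}p_{\Ccal}^{0}p_{\Hcal}^{0}>0$; using that the ground-state Gibbs weight of a finite-dimensional thermal state is at least the inverse of its dimension, this admits the uniform lower bound $p_{i_*00}\geq p_{\Scal}^{i_*}/(d_{\Ccal}d_{\Hcal})$. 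Since the invariant block population can only be reshuffled among target-excited states, the final excited population of $\Scal$ is at least $p_{i_*00}$, so its ground-state population is at most $1-p_{i_*00}<1$, which is bounded away from one independently of how much energy is drawn from the hot bath. This establishes the contrapositive.

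The step requiring the most care is the claim that the \emph{entire} block population remains trapped among target-excited states: this relies on the initial state being diagonal in the joint energy eigenbasis—so that the conserved quantities are block \emph{populations}, not merely amplitudes confined to a block—together with the fact that the block contains no target-ground eigenstate at all, rather than merely being disconnected from $\ket{i_*00}$. I would also flag the mild subtlety of possible degeneracies within the spectrum of $\Scal$ itself, which could make the identification of the block ambiguous; the generic case $\omega_{\Scal}^{i}>0$ for $i\geq1$ is immediate, and the degenerate case is handled by grouping coincident target levels before running the same argument.
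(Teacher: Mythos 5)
Your proposal is correct and follows essentially the same route as the paper's proof: both pass to the contrapositive, use that $[U_{\textup{EC}},H_{\raisebox{-1pt}{\tiny{$\Scal$}}}+H_{\raisebox{-1pt}{\tiny{$\Ccal$}}}+H_{\raisebox{-1pt}{\tiny{$\Hcal$}}}]=0$ forces block-diagonality over total-energy eigenspaces so that the population starting in $\ket{i^*00}$ is trapped in a block containing no $\ket{0jk}$ state, and then bound that trapped population below by $e^{-\beta \omega_{\raisebox{0pt}{\tiny{$\Scal$}}}^{i^*}}/(\mathcal{Z}_{\raisebox{-1pt}{\tiny{$\Scal$}}} d_{\raisebox{-1pt}{\tiny{$\Ccal$}}} d_{\raisebox{-1pt}{\tiny{$\Hcal$}}})$ via $\mathcal{Z}_{\raisebox{-1pt}{\tiny{$\Xcal$}}}\leq d_{\raisebox{-1pt}{\tiny{$\Xcal$}}}$, giving a bound on the ground-state population strictly below one and independent of the machine's energy structure and of $\beta_{\raisebox{-1pt}{\tiny{$H$}}}$. (Your two flagged subtleties are harmless: block populations are conserved under block-diagonal unitaries for any initial state, and if $\omega_{\raisebox{0pt}{\tiny{$\Scal$}}}^{i^*}=0$ then $\ket{i^*00}$ is degenerate with $\ket{000}$, so the contrapositive hypothesis already excludes that case.)
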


\begin{proof}
Suppose that there exists an $i^* \in \{1, \dots, d_{\raisebox{-1pt}{\tiny{$\Scal$}}}-1\}$  such that $\ket{i^*00}_{\raisebox{-1pt}{\tiny{$\Scal \Ccal \Hcal$}}}$ is not degenerate with any $\ket{0jk}_{\raisebox{-1pt}{\tiny{$\Scal \Ccal \Hcal$}}}$, where $j \in \{0, \dots d_{\raisebox{-1pt}{\tiny{$\Ccal$}}}-1\}, k \in \{0, \dots d_{\raisebox{-1pt}{\tiny{$\Hcal$}}}-1\}$. We show that, then, one cannot cool $\Scal$ arbitrarily close to zero. 
	
Let $B_{i}$ denote the degenerate subspace of the total Hamiltonian $H_{\raisebox{-1pt}{\tiny{$\Scal$}}}+H_{\raisebox{-1pt}{\tiny{$\Ccal$}}}+H_{\raisebox{-1pt}{\tiny{$\Hcal$}}}$, where $H_{\raisebox{-1pt}{\tiny{$\Xcal$}}}$ denotes the Hamiltonian of system $\Xcal \in \{ \Scal, \Ccal, \Hcal\}$, that contains the eigenvector $\ket{i00}_{\raisebox{-1pt}{\tiny{$\Scal \Ccal \Hcal$}}}$. Then, any energy-conserving unitary $U_{\raisebox{-1pt}{\tiny{EC}}}$ used to cool the system in the incoherent-control paradigm must act within such $B_i$ subspaces, i.e., $U_{\raisebox{-1pt}{\tiny{EC}}} = \bigoplus_{i} U_{B_i}$ (this is a direct consequence of $[U_{\raisebox{-1pt}{\tiny{EC}}},H_{\raisebox{-1pt}{\tiny{$\Scal$}}}+H_{\raisebox{-1pt}{\tiny{$\Ccal$}}}+H_{\raisebox{-1pt}{\tiny{$\Hcal$}}}]=0$, see, e.g., Lemma 5 of Ref.~\cite{Clivaz2020Thesis}). This means, in particular, that the initial population of $\ket{i^* 00}_{\raisebox{-1pt}{\tiny{$\Scal \Ccal \Hcal$}}}$ can only be distributed within $B_{i^*}$, and as no eigenvector of the form $\ket{0jk}{\raisebox{-1pt}{\tiny{$\Scal \Ccal \Hcal$}}}$ is contained in $B_{i^*}$ by assumption, it can never contribute to the final ground-state population of $\Scal$, which we denote $\widetilde{p}_{\raisebox{-1pt}{\tiny{$\Scal$}}}^{0}$. So we have
\begin{equation}
	\widetilde{p}_{\raisebox{-1pt}{\tiny{$\Scal$}}}^0 \leq 1- p_{i^*00}.
\end{equation}
Now, as for $\Xcal \in \{ \Ccal, \Hcal \}$, with any $\{\omega_{\raisebox{0pt}{\tiny{$\Xcal$}}}^i\}$ such that each $\omega_{\raisebox{0pt}{\tiny{$\Xcal$}}}^i \geq 0$ with $\omega_{\raisebox{0pt}{\tiny{$\Xcal$}}}^0 = 0$ and any inverse temperature $\beta \geq 0$, we have for the partition function $\mathcal{Z}_{\raisebox{-1pt}{\tiny{$\Scal$}}}$ that
\begin{equation}
\mathcal{Z}_{\raisebox{-1pt}{\tiny{$\Xcal$}}} = 1 + e^{-\beta \omega_{\raisebox{0pt}{\tiny{$\Xcal$}}}^1} + \dots + e^{-\beta \omega_{\raisebox{0pt}{\tiny{$\Xcal$}}}^{d_X-1}} \leq d_\Xcal,
\end{equation}
and so we have the following bound on the initial populations associated to each eigenvector $\ket{i00}_{\raisebox{-1pt}{\tiny{$\Scal \Ccal \Hcal$}}}$
\begin{equation}
p_{i00} = \frac{e^{-\beta \omega_{\raisebox{0pt}{\tiny{$\Scal$}}}^i}}{\mathcal{Z}_{\raisebox{-1pt}{\tiny{$\Scal$}}} \mathcal{Z}_{\raisebox{-1pt}{\tiny{$\Ccal$}}} \mathcal{Z}_{\raisebox{-1pt}{\tiny{$\Hcal$}}}} \geq \frac{e^{-\beta \omega_{\raisebox{0pt}{\tiny{$\Scal$}}}^i}}{\mathcal{Z}_{\raisebox{-1pt}{\tiny{$\Scal$}}} d_{\raisebox{-1pt}{\tiny{$\Ccal$}}} d_{\raisebox{-1pt}{\tiny{$\Hcal$}}}} > 0 \quad \quad \forall \, i \in \{ 1, \hdots , d_{\raisebox{-1pt}{\tiny{$\Scal$}}} - 1 \}.
\end{equation}
Combining the above, we have that
\begin{equation}
\widetilde{p}_{\raisebox{-1pt}{\tiny{$\Scal$}}}^0 \leq 1-\frac{e^{-\beta \omega_{\raisebox{0pt}{\tiny{$\Scal$}}}^{i^*}}}{\mathcal{Z}_{\raisebox{-1pt}{\tiny{$\Scal$}}} d_{\raisebox{-1pt}{\tiny{$\Ccal$}}} d_{\raisebox{-1pt}{\tiny{$\Hcal$}}}}<1.
\end{equation}
So as desired, we show that one cannot cool beyond $1-\frac{e^{-\beta \omega_{\raisebox{0pt}{\tiny{$\Scal$}}}^{i^*}}}{\mathcal{Z}_{\raisebox{-1pt}{\tiny{$\Scal$}}} d_{\raisebox{-1pt}{\tiny{$\Ccal$}}} d_{\raisebox{-1pt}{\tiny{$\Hcal$}}}}$, a bound strictly smaller than 1 for any finite-dimensional machine (i.e., for any protocol using only finite control complexity) and independent of the energies of $\Ccal$ and $\Hcal$.
\end{proof}

\noindent We can now proceed to the second step of the proof of Theorem~\ref{thm:infenergyauto}. 
\begin{proof}
To this end, consider any $i^* \in \{ 1, \dots, d_{\raisebox{-1pt}{\tiny{$\Scal$}}}-1\}$. If $\ket{i^*00}_{\raisebox{-1pt}{\tiny{$\Scal \Ccal \Hcal$}}}$ is not degenerate with any $\ket{0jk}{\raisebox{-1pt}{\tiny{$\Scal \Ccal \Hcal$}}}$, our assertion is proven by Lemma~\ref{lemma:necdeg}. On the other hand, if there is a $j^* \in \{0,\dots, d_{\raisebox{-1pt}{\tiny{$\Ccal$}}}-1\}$ and a $k^* \in \{0, \dots, d_{\raisebox{-1pt}{\tiny{$\Hcal$}}}-1\}$ for which $\ket{i^*00}_{\raisebox{-1pt}{\tiny{$\Scal \Ccal \Hcal$}}}$ and $\ket{0j^*k^*}_{\raisebox{-1pt}{\tiny{$\Scal \Ccal \Hcal$}}}$ are degenerate, then $B_{i^*}$, the degenerate subspace containing $\ket{i^*00}_{\raisebox{-1pt}{\tiny{$\Scal \Ccal \Hcal$}}}$, also contains $\ket{0j^* k^*}$. Now $B_{i^*}$ may also contain other eigenvectors of the form $\ket{0jk}_{\raisebox{-1pt}{\tiny{$\Scal \Ccal \Hcal$}}}$, i.e., some other $\ket{0j'k'}_{\raisebox{-1pt}{\tiny{$\Scal \Ccal \Hcal$}}}$ with $j' \in \{0,\dots,d_{\raisebox{-1pt}{\tiny{$\Ccal$}}}-1\}, k' \in \{0, \dots, d_{\raisebox{-1pt}{\tiny{$\Hcal$}}}-1\}$. Crucially, each such eigenvector in $B_{i^*}$ must have an associated minimal amount of initial population as long as the machine is finite dimensional. Indeed, for any such $\ket{0 j^* k^*}{\raisebox{-1pt}{\tiny{$\Scal \Ccal \Hcal$}}}$ in $B_{i^*}$, we have the condition $\omega_{\raisebox{0pt}{\tiny{$\Ccal$}}}^{j^*} + \omega_{\raisebox{0pt}{\tiny{$\Hcal$}}}^{k^*}=\omega_{\raisebox{0pt}{\tiny{$\Scal$}}}^{i^*}$ and so $\omega_{\raisebox{0pt}{\tiny{$\Ccal$}}}^{j^*} \leq \omega_{\raisebox{0pt}{\tiny{$\Scal$}}}^{i^*}$, $\omega_{\raisebox{0pt}{\tiny{$\Hcal$}}}^{k^*} \leq \omega_{\raisebox{0pt}{\tiny{$\Scal$}}}^{i^*}$, implying that $ \beta \omega_{\raisebox{0pt}{\tiny{$\Ccal$}}}^{j^*} \leq \beta \omega_{\raisebox{0pt}{\tiny{$\Scal$}}}^{i^*}$ and $\beta_H \omega_{\raisebox{0pt}{\tiny{$\Hcal$}}}^{k^*} \leq \beta \omega_{\raisebox{0pt}{\tiny{$\Scal$}}}^{i^*}$. Thus we have the bound 
\begin{equation}\label{eq:minimalpopbound}
p_{0j^*k^*} = \frac{e^{-\beta \omega_{\raisebox{0pt}{\tiny{$\Ccal$}}}^{j^*}} e^{-\beta_H \omega_{\raisebox{0pt}{\tiny{$\Hcal$}}}^{k^*}}}{\Zcal_{\raisebox{-1pt}{\tiny{$\Scal$}}} \Zcal_{\raisebox{-1pt}{\tiny{$\Ccal$}}} \Zcal_{\raisebox{-1pt}{\tiny{$\Hcal$}}}} \geq \frac{e^{-2 \beta \omega_{\raisebox{0pt}{\tiny{$\Scal$}}}^{i^*}}}{\Zcal_{\raisebox{-1pt}{\tiny{$\Scal$}}} \Zcal_{\raisebox{-1pt}{\tiny{$\Ccal$}}} \Zcal_{\raisebox{-1pt}{\tiny{$\Hcal$}}}} \geq \frac{e^{-2 \beta \omega_{\raisebox{0pt}{\tiny{$\Scal$}}}^{i^*}}}{\Zcal_{\raisebox{-1pt}{\tiny{$\Scal$}}} d_{\raisebox{-1pt}{\tiny{$\Ccal$}}} d_{\raisebox{-1pt}{\tiny{$\Hcal$}}}}=: q_{i^*}.
\end{equation}
Now, take any particular $i^* \in \{ 1, \hdots, d_{\raisebox{-1pt}{\tiny{$\Scal$}}} -1\}$ and let $\pi_{i^*}$ be the dimension of $B_{i^*}$, $\mu$ the number of energy eigenstates of the form $\ket{0jk}_{\raisebox{-1pt}{\tiny{$\Scal \Ccal \Hcal$}}}$ that $B_{i^*}$ contains and $\nu=\pi-\mu$ the number of energy eigenstates of the form $\ket{ijk}{\raisebox{-1pt}{\tiny{$\Scal \Ccal \Hcal$}}}$, where $i \neq 0$, that $B_{i^*}$ contains. So
\begin{equation}
B_{i^*} = \text{span} \{ \ket{0jk}, \ket{0j_2k_2}, \dots, \ket{0j_\mu k_\mu}, \ket{i^*00}, \ket{i_2 \ell_2 m_2}, \dots, \ket{i_\nu \ell_\nu m_\nu}\}.
\end{equation}
Let $\boldsymbol{v}=\{p_{0jk},p_{0j_2l_2}, \dots, p_{0j_\mu k_\mu},p_{i^*00},p_{i_2 \ell_2 m_2}, \dots, p_{i_\nu \ell_\nu m_\nu}\}$ be the vector of initial populations associated to the eigenvectors of $B_{i^*}$, and $\boldsymbol{v}^{\uparrow}$ be the vector whose components are those of $\boldsymbol{v}$ arranged in nondecreasing order. Using Schur's theorem \cite{2011Marshall}, we know that after applying any unitary transformation $U_{B_{i^*}}$ on the relevant energy-degenerate subspace, then the vector of transformed populations, $\boldsymbol{\widetilde{v}}$, is majorised by $\boldsymbol{v}$. In particular, labelling the vector elements by $\boldsymbol{v}_\alpha$, we have 
\begin{equation}\label{eqs:nocontributing}
\widetilde{p}_{i^* 00} + \sum_{\alpha=2}^\nu \widetilde{p}_{i_\alpha \ell_\alpha m_\alpha} \geq \sum_{\alpha=1}^\nu \boldsymbol{v}_{\alpha}^{\uparrow}.
\end{equation}
We now claim that $\sum_{\alpha=1}^\nu \boldsymbol{v}_{\alpha}^{\uparrow} \geq q_{i^*}$ from Eq.~\eqref{eq:minimalpopbound}. Indeed, as $\boldsymbol{v}$ has at most $\nu-1$ elements that do not belong to the set $A:=\{p_{0jk},p_{0j_2k_2}, \dots, p_{0j_\mu k_\mu},p_{i^*00}\}$, at least one element of $A$ must contribute to the sum $\sum_{\alpha=1}^\nu\boldsymbol{v}^{\uparrow}_\alpha$. Let $x$ be that element. As $\boldsymbol{v}^{\uparrow}_\alpha \geq 0$ for all $\alpha=1,\dots, \pi = \mu + \nu$, we have 
\begin{equation}
\sum_{\alpha=1}^{\nu}\boldsymbol{v}^{\uparrow}_\alpha \geq x.
\end{equation}
Now as $p_{0j_\gamma k_\gamma} \geq q_{i^*}$ for all $\gamma=2, \dots, \mu$, we have
\begin{equation}
x \geq \min (q_{i^*}, p_{i^*00})= q_{i^*},
\end{equation} 
where $p_{i^*00} \geq q_{i^*}$ can be seen from Eq.~\eqref{eq:minimalpopbound}, as claimed. 

As the l.h.s. of Eq.~\eqref{eqs:nocontributing} represents the amount of population in the subspace $B_{i^*}$ that does \emph{not} contribute to the final ground-state population of the target system, we have 
\begin{align}
\widetilde{p}_{\raisebox{-1pt}{\tiny{$\Scal$}}}^0 &\leq 1- \left( \widetilde{p}_{i^* 00} + \sum_{\alpha=2}^\nu \widetilde{p}_{i_\alpha \ell_\alpha m_\alpha} \right)
 \leq 1- q_{i^*} = 1- \frac{e^{-2 \beta \omega_{\raisebox{0pt}{\tiny{$\Scal$}}}^{i^*}}}{\Zcal_{\raisebox{-1pt}{\tiny{$\Scal$}}} d_{\raisebox{-1pt}{\tiny{$\Ccal$}}} d_{\raisebox{-1pt}{\tiny{$\Hcal$}}}}.
\end{align} 
So, for any finite-dimensional machine, one cannot cool the system $\Scal$ beyond $1-\frac{e^{-\beta \omega_{\raisebox{0pt}{\tiny{$\Scal$}}}^{i^*}}}{\Zcal_{\raisebox{-1pt}{\tiny{$\Scal$}}} d_{\raisebox{-1pt}{\tiny{$\Ccal$}}} d_{\raisebox{-1pt}{\tiny{$\Hcal$}}}}$, a bound strictly smaller than 1 and independent of the energy structure of $\Ccal$ and $\Hcal$, as desired.
\end{proof}

As a concrete example, consider the case where all systems are qubits. The initial joint state is
\begin{align}
    \varrho_{\raisebox{-1pt}{\tiny{$\Scal \Ccal \Hcal$}}}^{(0)} = \frac{(\ketbra{0}{0} + e^{-\beta \omega_{\raisebox{0pt}{\tiny{$\Scal$}}}} \ketbra{1}{1})_{\raisebox{-1pt}{\tiny{$\Scal$}}} \otimes (\ketbra{0}{0} + e^{-\beta \omega_{\raisebox{0pt}{\tiny{$\Ccal$}}}} \ketbra{1}{1})_{\raisebox{-1pt}{\tiny{$\Ccal$}}} \otimes (\ketbra{0}{0} + e^{-\beta_{\raisebox{-1pt}{\tiny{$H$}}} \omega_{\raisebox{0pt}{\tiny{$\Hcal$}}}} \ketbra{1}{1})_{\raisebox{-1pt}{\tiny{$\Hcal$}}}}{\mathcal{Z}_{\raisebox{-1pt}{\tiny{$\Scal$}}}(\beta, \omega_{\raisebox{0pt}{\tiny{$\Scal$}}}) \mathcal{Z}_{\raisebox{-1pt}{\tiny{$\Ccal$}}}(\beta, \omega_{\raisebox{0pt}{\tiny{$\Ccal$}}}) \mathcal{Z}_{\raisebox{-1pt}{\tiny{$\Hcal$}}}(\beta_{\raisebox{-1pt}{\tiny{$H$}}}, \omega_{\raisebox{0pt}{\tiny{$\Hcal$}}})}.
\end{align}
The only energy-conserving unitary interaction that is relevant for cooling is the one that exchanges the populations in the levels spanned by $\ket{010}$ and $\ket{101}$, which have initial populations $\tfrac{e^{-\beta \omega_{\raisebox{0pt}{\tiny{$\Ccal$}}}}}{\mathcal{Z}_{\raisebox{-1pt}{\tiny{$\Scal$}}}(\beta, \omega_{\raisebox{0pt}{\tiny{$\Scal$}}}) \mathcal{Z}_{\raisebox{-1pt}{\tiny{$\Ccal$}}}(\beta, \omega_{\raisebox{0pt}{\tiny{$\Ccal$}}}) \mathcal{Z}_{\raisebox{-1pt}{\tiny{$\Hcal$}}}(\beta_{\raisebox{-1pt}{\tiny{$H$}}}, \omega_{\raisebox{0pt}{\tiny{$\Hcal$}}})}$ and $\tfrac{e^{-\beta \omega_{\raisebox{0pt}{\tiny{$\Scal$}}}} e^{-\beta_{\raisebox{-1pt}{\tiny{$H$}}} \omega_{\raisebox{0pt}{\tiny{$\Hcal$}}}}}{\mathcal{Z}_{\raisebox{-1pt}{\tiny{$\Scal$}}}(\beta, \omega_{\raisebox{0pt}{\tiny{$\Scal$}}}) \mathcal{Z}_{\raisebox{-1pt}{\tiny{$\Ccal$}}}(\beta, \omega_{\raisebox{0pt}{\tiny{$\Ccal$}}}) \mathcal{Z}_{\raisebox{-1pt}{\tiny{$\Hcal$}}}(\beta_{\raisebox{-1pt}{\tiny{$H$}}}, \omega_{\raisebox{0pt}{\tiny{$\Hcal$}}})}$ respectively, which are both strictly less than one. The necessary condition for any cooling to be possible implies that $e^{-\beta \omega_{\raisebox{0pt}{\tiny{$\Scal$}}}} e^{-\beta_{\raisebox{-1pt}{\tiny{$H$}}} \omega_{\raisebox{0pt}{\tiny{$\Hcal$}}}} \geq e^{-\beta \omega_{\raisebox{0pt}{\tiny{$\Ccal$}}}}$; now, performing the optimal cooling unitary leads to the final ground-state population of the target system
\begin{align}
    p_{\raisebox{-1pt}{\tiny{$\Scal$}}}^{\prime}(0) = \bra{0} \,\ptr{\Ccal \Hcal}{U \varrho_{\raisebox{-1pt}{\tiny{$\Scal \Ccal \Hcal$}}}^{(0)} U^\dagger}\!\ket{0}_{\raisebox{-1pt}{\tiny{$\Scal$}}} = \frac{1 + e^{-\beta_{\raisebox{-1pt}{\tiny{$H$}}} \omega_{\raisebox{0pt}{\tiny{$\Hcal$}}}}(1 + e^{-\beta \omega_{\raisebox{0pt}{\tiny{$\Scal$}}}} + e^{-\beta \omega_{\raisebox{0pt}{\tiny{$\Ccal$}}}})}{\mathcal{Z}_{\raisebox{-1pt}{\tiny{$\Scal$}}}(\beta, \omega_{\raisebox{0pt}{\tiny{$\Scal$}}}) \mathcal{Z}_{\raisebox{-1pt}{\tiny{$\Ccal$}}}(\beta, \omega_{\raisebox{0pt}{\tiny{$\Ccal$}}}) \mathcal{Z}_{\raisebox{-1pt}{\tiny{$\Hcal$}}}(\beta_{\raisebox{-1pt}{\tiny{$H$}}}, \omega_{\raisebox{0pt}{\tiny{$\Hcal$}}})} < 1.
\end{align}
Indeed, using $e^{-\beta \omega_{\raisebox{0pt}{\tiny{$\Scal$}}}} e^{-\beta_{\raisebox{-1pt}{\tiny{$H$}}} \omega_{\raisebox{0pt}{\tiny{$\Hcal$}}}} \geq e^{-\beta \omega_{\raisebox{0pt}{\tiny{$\Ccal$}}}}$,
\begin{equation}
    p_{\raisebox{-1pt}{\tiny{$\Scal$}}}^{\prime}(0) \leq \frac{1 + e^{-\beta_{\raisebox{-1pt}{\tiny{$H$}}} \omega_{\raisebox{0pt}{\tiny{$\Hcal$}}}}e^{-\beta \omega_{\raisebox{0pt}{\tiny{$\Scal$}}}} }{\mathcal{Z}_{\raisebox{-1pt}{\tiny{$\Scal$}}}(\beta, \omega_{\raisebox{0pt}{\tiny{$\Scal$}}}) \mathcal{Z}_{\raisebox{-1pt}{\tiny{$\Ccal$}}}(\beta, \omega_{\raisebox{0pt}{\tiny{$\Ccal$}}}) } \leq \frac{1}{\mathcal{Z}_{\raisebox{-1pt}{\tiny{$\Ccal$}}}(\beta, \omega_{\raisebox{0pt}{\tiny{$\Ccal$}}})} \leq 1.
\end{equation}
The second inequality is strict unless $\beta_{\mathcal H} = 0$ or $\omega_{\raisebox{0pt}{\tiny{$\Hcal$}}} = 0$. In the both cases, for equality in the first inequality, we need $\beta \omega_{\raisebox{0pt}{\tiny{$\Scal$}}} = \beta \omega_{\raisebox{0pt}{\tiny{$\Ccal$}}}$. If $\beta = 0$, then $\mathcal{Z}_{\raisebox{-1pt}{\tiny{$\Ccal$}}}(\beta, \omega_{\raisebox{0pt}{\tiny{$\Ccal$}}}) = 2$ and the last inequality is strict. If $\omega_{\raisebox{0pt}{\tiny{$\Scal$}}} = \omega_{\raisebox{0pt}{\tiny{$\Ccal$}}}$, no cooling is possible; hence $p_{\raisebox{-1pt}{\tiny{$\Scal$}}}^{\prime}(0) = p_{\raisebox{-1pt}{\tiny{$\Scal$}}}(0) < 1$. 

\subsection{Diverging Time and Diverging Control Complexity}
\label{app:incoherentdivergingtimecontrolcomplexity}
    
We now move to analyse the case where diverging time is allowed, where we wish to minimise the energy cost and control complexity throughout the protocol over a diverging number of energy-conserving interactions between the target system and the hot and cold subsystems of the machine. We again consider all three systems to be qubits, but the results generalise to arbitrary (finite) dimensions. Here, the machines and ancillas begin as thermal states with initial inverse temperatures $\beta$ and $\beta_{\raisebox{-1pt}{\tiny{$H$}}} \leq \beta$ respectively. Just as in the diverging time cooling protocol in the coherent-control setting presented in Appendix~\ref{app:divergingtimecoolingprotocolfinitedimensionalsystems}, we consider a diverging number of machines, with slightly increasing energy gaps, in a configuration such that the target system interacts with the $n^{\textup{th}}$ machine at time step $n$. Suppose that after $n$ steps of the protocol, the target qubit has been cooled to some inverse temperature $\beta_n > \beta$; equivalently, this can be expressed as a thermal state with corresponding energy gap $\omega_n=\frac{\beta_n}{\beta}\omega_{\raisebox{0pt}{\tiny{$\Scal$}}}$. We now wish to interact the target system $\tau_{\raisebox{-1pt}{\tiny{$\Scal$}}}(\beta_n, \omega_{\raisebox{0pt}{\tiny{$\Scal$}}})$ with a machine $\Mcal_{n+1}$ with slightly increased energy gaps with respect to the most recent one $\Mcal_{n}$, i.e., we increase the energy gaps of the cold subsystem $\Ccal$ from $\omega_n$ to $\omega_{n+1} = \omega_n + \epsilon_n$; the resonance condition enforces the energy gap of the hot subsystem $\Hcal$ to be similarly increased to $\omega_n + \epsilon_n - \omega_{\raisebox{0pt}{\tiny{$\Scal$}}}$. Thus, the next step of the protocol is a unitary acting on the global state
\begin{align}
    \varrho_{\raisebox{-1pt}{\tiny{$\Scal \Ccal \Hcal$}}}^{(n)}=\tau_{\raisebox{-1pt}{\tiny{$\Scal$}}}(\beta_n, \omega_{\raisebox{0pt}{\tiny{$\Scal$}}})\otimes \tau_{\raisebox{-1pt}{\tiny{$\Ccal$}}}(\beta, \omega_n+\epsilon _n) \otimes \tau_{\raisebox{-1pt}{\tiny{$\Hcal$}}}(\beta_{\raisebox{-1pt}{\tiny{$H$}}},\omega_n+\epsilon _n-\omega_{\raisebox{0pt}{\tiny{$\Scal$}}} ).
    \label{eq: initial 3qubits}
\end{align}
In order to cool the target system via said unitary, we must have that $p_{101} \geq p_{010}$ for the state in Eq.~\eqref{eq: initial 3qubits}, which implies that $\epsilon_n$ must satisfy the following condition:
\begin{align}
   e^{-\beta \omega_n-\beta_{\raisebox{-1pt}{\tiny{$H$}}}( \omega_n+\epsilon _n- \omega_{\raisebox{0pt}{\tiny{$\Scal$}}})}\,\geq\,e^{-\beta(\omega_n+ \epsilon_n)} \Rightarrow \epsilon_n\geq  \gamma (\omega_n-\omega_{\raisebox{0pt}{\tiny{$\Scal$}}}) \quad \quad \mathrm{where} \quad \gamma:=\frac{\beta_{\raisebox{-1pt}{\tiny{$H$}}}}{\beta-\beta_{\raisebox{-1pt}{\tiny{$H$}}}}.\label{eq: epsilon bound}
\end{align}
This condition is crucial. It means that if the hot subsystem $\Hcal$ is coupled to a heat bath at any finite temperature, i.e., $\beta_{\raisebox{-1pt}{\tiny{$H$}}} > 0$, $\epsilon_n$ depends linearly on the inverse temperature of the target system at the previous step $\beta_n$, and can thus not be taken to be arbitrarily small. As we now show, this condition prohibits the ability to perfectly cool the target system at the Landauer limit for the energy cost whenever the heat bath is at finite temperature.

On the other hand, for infinite-temperature heat baths, perfect cooling at the Landauer limit is seemingly achievable; here, $\beta_{\raisebox{-1pt}{\tiny{$H$}}} \to 0$ and so $\gamma \to 0$, leading to the trivial constraint $\epsilon_n \geq 0$ which allows it to be arbitrarily small, as is required. Nonetheless, the explicit construction of any protocol doing so in the incoherent-control setting is \emph{a priori} unclear, as the restriction of energy conservation makes for a fundamentally different setting from the coherent-control paradigm. We now explicitly derive the optimal diverging-time protocol to perfectly cool at the Landauer limit for an infinite-temperature heat bath, thereby proving Theorem~\ref{thm:autoinftempinftime}.

\subsection{Saturating the Landauer Limit with an Infinite-Temperature Heat Bath}
\label{app:incoherentinfiniteheatbath}

Before calculating the energy cost, we briefly discuss the attainability of the optimally cool target state. We begin with all subsystems as qubits, for the sake of simplicity, but the logic generalises to higher dimensions. In the incoherent paradigm, the target system $\Scal$ interacts with a virtual qubit of the total machine $\Mcal = \Ccal\Hcal$ that consists of the energy eigenstates $\ket{0,1}_{\Ccal\Hcal}$ and $ \ket{1,0}_{\Ccal\Hcal}$, with populations $p_{0_{\raisebox{-1pt}{\tiny{$\Ccal$}}} 1_{\raisebox{-1pt}{\tiny{$\Hcal$}}}}$ and $p_{1_{\raisebox{-1pt}{\tiny{$\Ccal$}}} 0_{\raisebox{-1pt}{\tiny{$\Hcal$}}}}$ respectively. Suppose that at step $n+1$ the cold subsystem involved in the interaction has energy gap $\omega_n + \epsilon_n$. In Ref.~\cite{Clivaz_2019E}, it is shown that by repeating the incoherent cooling process (i.e., implementing the unitary in Eq.~\eqref{eq:tripartite unitary op}) and taking the limit of infinite cycles, this scenario equivalently corresponds to the general (coherent) setting where arbitrary unitaries are permitted and the target system interacts with a virtual qubit machine with effective energy gap $\omega^{\textup{eff}}_n$ given by
\begin{align}
    e^{-\beta \omega^{\textup{eff}}_n }:=\frac{p_{1_{\raisebox{-1pt}{\tiny{$\Ccal$}}} 0_{\raisebox{-1pt}{\tiny{$\Hcal$}}}}}{p_{0_{\raisebox{-1pt}{\tiny{$\Ccal$}}} 1_{\raisebox{-1pt}{\tiny{$\Hcal$}}}}}= e^{-\beta(\omega_n+ \epsilon_n)}\, e^{\beta_{\raisebox{-1pt}{\tiny{$H$}}}( \omega_n+\epsilon_n- \omega_{\raisebox{0pt}{\tiny{$\Scal$}}})} \quad \quad \Rightarrow \omega^{\textup{eff}}_n=\omega_n +\epsilon_n-\frac{\beta_{\raisebox{-1pt}{\tiny{$H$}}}}{\beta}(\omega_n+\epsilon _n- \omega_{\raisebox{0pt}{\tiny{$\Scal$}}}).
    \label{eq: eff gap}
\end{align}
It is clear that for finite-temperature heat baths, i.e., $\beta_{\raisebox{-1pt}{\tiny{$H$}}}> 0$, the effective energy gap $\omega^{\textup{eff}}_n$ is always smaller than the energy gap of the machine at any given step, i.e., $\omega^{\textup{eff}}_n\leq  \omega_n+\epsilon _n$; on the other hand, equality holds iff the heat bath is at infinite temperature, i.e., $\beta_{\raisebox{-1pt}{\tiny{$H$}}} \to 0$. Thus, in the infinite-temperature case, given a target system beginning at some step of the protocol in the state $\varrho_{\raisebox{-1pt}{\tiny{$\Scal$}}}^*(\beta, \omega_n)$, it is possible to get close to the asymptotic state $\varrho_{\raisebox{-1pt}{\tiny{$\Scal$}}}^*(\beta,  \omega_n+\epsilon_n)$; if the temperature is finite, however, this state is not attainable (even asymptotically). Following the arguments in Appendix~\ref{app:divergingtimecoolingprotocolfinitedimensionalsystems}, i.e., considering a diverging number of machines, each of which having the part connected to the cold bath with energy gap $\omega_{\Ccal_n} = \omega_n + \epsilon_n$ and taking the limit of $\epsilon_n \to 0$, which one can \emph{only} do if the hot-bath temperature is infinite, allows one to cool perfectly in diverging time in the incoherent paradigm at the Landauer limit. 

We now calculate the energy cost explicitly for the infinite-temperature heat bath case, precisely demonstrating attainability of the Landauer limit. We use a similar approach to that described in Appendix~\ref{app:divergingtimecoolingprotocolfinitedimensionalsystems}: we have a diverging number of cold machines for each energy gap $\omega_n$, with which the target system at the $n-1^{\textup{th}}$ time step interacts; for an infinite-temperature heat bath, i.e., $\Hcal$ is in the maximally mixed state independent of its energy structure, the state of the target system at each step $\varrho_{\raisebox{-1pt}{\tiny{$\Scal$}}}^*(\beta, \omega_{n-1})$ is achievable. From Eq.~\eqref{eq:energy exchange INc}, the energy change between all subsystems for a given step of the protocol, i.e., taking $\varrho_{\raisebox{-1pt}{\tiny{$\Scal$}}}^*(\beta, \omega_{n-1}) \to \varrho_{\raisebox{-1pt}{\tiny{$\Scal$}}}^*(\beta, \omega_{n})$, can be calculated as
\begin{align}
 \Delta E_{\raisebox{-1pt}{\tiny{$\Scal$}}}^{(n)}&=\tr{{H}_{\raisebox{-1pt}{\tiny{$\Scal$}}}(\omega_{\raisebox{0pt}{\tiny{$\Scal$}}})(\varrho_{\raisebox{-1pt}{\tiny{$\Scal$}}}^* (\beta, \omega_{n})- \varrho_{\raisebox{-1pt}{\tiny{$\Scal$}}}^* (\beta, \omega_{n-1})}\nonumber\\
 \Delta E_{\raisebox{-1pt}{\tiny{$\Ccal$}}}^{(n)}&=-\tr{{H}_{\raisebox{-1pt}{\tiny{$\Ccal$}}}(\omega_{n})(\varrho_{\raisebox{-1pt}{\tiny{$\Scal$}}}^* (\beta, \omega_{n})- \varrho_{\raisebox{-1pt}{\tiny{$\Scal$}}}^* (\beta, \omega_{n-1})}\nonumber\\
 \Delta E_{\raisebox{-1pt}{\tiny{$\Hcal$}}}^{(n)}&=\tr{{H}_{\raisebox{-1pt}{\tiny{$\Hcal$}}}(\omega_{n}-\omega_{\raisebox{0pt}{\tiny{$\Scal$}}})(\varrho_{\raisebox{-1pt}{\tiny{$\Scal$}}}^* (\beta, \omega_{n})- \varrho_{\raisebox{-1pt}{\tiny{$\Scal$}}}^* (\beta, \omega_{n-1})}
  \label{eq:energy exch incoherent ind}
\end{align}
In general, i.e., for finite-temperature heat baths, we would have $\omega_n = \omega_{n-1} + \epsilon_{n-1}$, with a lower bound on $\epsilon_{n-1}$ for cooling to be possible [in accordance with Eq.~\eqref{eq: epsilon bound}]. However, for infinite-temperature heat baths, this lower bound trivialises since the energy structure of the hot-machine subsystem plays no role in its state; thus we can choose the energy gap structure for the machines as $\{\omega_n=\omega_{\raisebox{0pt}{\tiny{$\Scal$}}}+n \epsilon\}_{n=1}^N$ with $\epsilon$ arbitrarily small. Taking the limit $\epsilon \to 0$, the diverging time limit $N \to \infty$, and writing $\omega_{\raisebox{-1pt}{\tiny{$N$}}} = \omega_{\textup{max}}$ for the maximum energy gap of the cold-machine subsystems, the energy exchanged throughout the entire cooling protocol here is given by
\begin{align}
  \Delta E_{\raisebox{-1pt}{\tiny{$\Scal$}}}&=\lim_{N\to \infty}\sum_{n=1}^N\Delta E_{\raisebox{-1pt}{\tiny{$\Scal$}}}^{(n)}=\tr{{H}_{\raisebox{-1pt}{\tiny{$\Scal$}}}(\omega_{\raisebox{0pt}{\tiny{$\Scal$}}})(\varrho_{\raisebox{-1pt}{\tiny{$\Scal$}}}^* (\beta, \omega_{\textup{max}})- \varrho_{\raisebox{-1pt}{\tiny{$\Scal$}}}^* (\beta, \omega_{\raisebox{0pt}{\tiny{$\Scal$}}})}\nonumber\\
 \Delta E_{\raisebox{-1pt}{\tiny{$\Ccal$}}}&=\lim_{N\to \infty}\sum_{n=1}^N\Delta E_{\raisebox{-1pt}{\tiny{$\Ccal$}}}^{(n)}=\frac{1}{\beta} \left\{ S[\varrho_{\raisebox{-1pt}{\tiny{$\Scal$}}}^*(\beta,\omega_{\raisebox{0pt}{\tiny{$\Scal$}}})] - S[\varrho_{\raisebox{-1pt}{\tiny{$\Scal$}}}^*(\beta,\omega_{\textup{max}})] \right\} = \frac{1}{\beta}\widetilde{\Delta} S_{\raisebox{-1pt}{\tiny{$\Scal$}}} \nonumber\\
 \Delta E_{\raisebox{-1pt}{\tiny{$\Hcal$}}}&=\lim_{N\to \infty}\sum_{n=1}^N\Delta E_{\raisebox{-1pt}{\tiny{$\Hcal$}}}^{(n)}=-\Delta E_{\raisebox{-1pt}{\tiny{$\Scal$}}}-\Delta E_{\raisebox{-1pt}{\tiny{$\Ccal$}}}.
  \label{eq:energy exch incoherent}
\end{align}
Here, the expression for $\Delta E_{\raisebox{-1pt}{\tiny{$\Ccal$}}}$ can be derived using the same arguments as presented in Appendix~\ref{app:prooftheoreminftime}. In particular, the heat dissipated by the cold part of the machine, which is naturally connected to the heat sink in the incoherent setting as an infinite-temperature heat bath can be considered a work source since any energy drawn comes with no entropy change, is in accordance with the Landauer limit. It is straightforward to obtain the same result for qudit systems. Lastly, in a similar way to the other protocols we have presented, one could compress all of the diverging number of operations into a single one whose control complexity diverges, thereby trading off between time and control complexity.

\subsection{Analysis of Finite-Temperature Heat Baths}
\label{app:analysisfinitetempheatbaths}

We now return to the more general consideration of finite-temperature heat baths, i.e., $0 < \beta_{\raisebox{-1pt}{\tiny{$H$}}}\leq\beta$. In the case where $\beta_{\raisebox{-1pt}{\tiny{$H$}}}=\beta$, from Eq.~\eqref{eq: eff gap}, it is straightforward to see that for any machine energy gap $\omega_n$, the effective gap $\omega^{\textup{eff}}_n$ is equal to the gap of the target system, which means that no cooling can be achieved in the incoherent paradigm. Nonetheless, for any $\Hcal$ subsystem coupled to a heat bath of inverse temperature $\beta_{\raisebox{-1pt}{\tiny{$H$}}} < \beta$, cooling is possible. We first provide more detail regarding why cooling at the Landauer limit is not possible in this setting, before deriving the minimal energy cost in accordance with the Carnot-Landauer limit presented in Theorem~\ref{thm:main-landauer-incoherent}; in Appendix~\ref{app:incoherentcoolingfinitetemperature}, we provide explicit protocols that saturate this bound for any finite-temperature heat bath and arbitrary finite-dimensional systems and machines. 

Suppose that at some step $n$ one has the initial joint state of Eq.~\eqref{eq: initial 3qubits}, where $\epsilon_n=  \gamma (\omega_n-\omega_{\raisebox{0pt}{\tiny{$\Scal$}}})+\epsilon$ and $\omega_n=\omega_{\raisebox{0pt}{\tiny{$\Scal$}}}+n\epsilon$. Here, $\gamma$ is as in Eq.~\eqref{eq: epsilon bound}. We now wish to cool the target system to $\varrho_{\raisebox{-1pt}{\tiny{$\Scal$}}}^* (\beta, \omega_{n}+\epsilon)$. For cooling to be possible in the incoherent setting here, we need the cold-machine subsystem to have an energy gap of at least $\omega_n + \epsilon_n$; moreover, with a finite-temperature heat bath, this energy gap is insufficient to achieve the desired transformation [see Eq.~\eqref{eq: epsilon bound}]. Based on Eq.~\eqref{eq:energy exchange INc}, we can see that nonetheless, if we calculate the \emph{hypothetical} energy change in this scenario if it were possible, we can derive a lower bound for the actual energy cost incurred. Employing Eq.~\eqref{eq:energy exch incoherent ind}, we have
\begin{align}
    \Delta E_{\raisebox{-1pt}{\tiny{$\Ccal$}}}^{(n+1)}&\geq -\mathrm{tr}\{{H}_{\raisebox{-1pt}{\tiny{$\Ccal$}}}(\omega_{n}+\epsilon_n)[\varrho_{\raisebox{-1pt}{\tiny{$\Scal$}}}^* (\beta, \omega_{n}+\epsilon)- \varrho_{\raisebox{-1pt}{\tiny{$\Scal$}}}^* (\beta, \omega_n)]\}\nonumber\\
    &=-\mathrm{tr}\{{H}_{\raisebox{-1pt}{\tiny{$\Ccal$}}}[(\gamma+1)\omega_{n}-\gamma \omega_{\raisebox{0pt}{\tiny{$\Scal$}}} +\epsilon][\varrho_{\raisebox{-1pt}{\tiny{$\Scal$}}}^* (\beta, \omega_{n}+\epsilon)- \varrho_{\raisebox{-1pt}{\tiny{$\Scal$}}}^* (\beta, \omega_n)]\}\nonumber\\
     &=-\mathrm{tr}\{{H}_{\raisebox{-1pt}{\tiny{$\Ccal$}}}[(\gamma+1)\omega_{n}-\gamma \omega_{\raisebox{0pt}{\tiny{$\Scal$}}} +\epsilon+\gamma \epsilon-\gamma \epsilon][\varrho_{\raisebox{-1pt}{\tiny{$\Scal$}}}^* (\beta, \omega_{n}+\epsilon)- \varrho_{\raisebox{-1pt}{\tiny{$\Scal$}}}^* (\beta, \omega_n)]\}\nonumber\\
    &=-(\gamma+1)\mathrm{tr}\{{H}_{\raisebox{-1pt}{\tiny{$\Ccal$}}}(\omega_{n} +\epsilon)[\varrho_{\raisebox{-1pt}{\tiny{$\Scal$}}}^* (\beta, \omega_{n}+\epsilon)- \varrho_{\raisebox{-1pt}{\tiny{$\Scal$}}}^* (\beta, \omega_n)]\}+\gamma\mathrm{tr}\{{H}_{\raisebox{-1pt}{\tiny{$\Ccal$}}}(\omega_{\raisebox{0pt}{\tiny{$\Scal$}}}+\epsilon)[\varrho_{\raisebox{-1pt}{\tiny{$\Scal$}}}^* (\beta, \omega_{n}+\epsilon)- \varrho_{\raisebox{-1pt}{\tiny{$\Scal$}}}^* (\beta, \omega_n)]\}\nonumber\\
    &= (\gamma+1)\Delta E_{\raisebox{-1pt}{\tiny{$\Ccal$}}}^{* (n+1)}+\gamma \Delta E_{\raisebox{-1pt}{\tiny{$\Scal$}}}^{* (n+1)}+\gamma\mathrm{tr}\{{H}_{\raisebox{-1pt}{\tiny{$\Ccal$}}}(\epsilon)[\varrho_{\raisebox{-1pt}{\tiny{$\Scal$}}}^* (\beta, \omega_{n}+\epsilon)- \varrho_{\raisebox{-1pt}{\tiny{$\Scal$}}}^* (\beta, \omega_n)]\},
\end{align}
where we make use of the fact that for equally spaced Hamiltonians, the structure of the Hamiltonians on each subsystem take the same form [i.e., we can write, with slight abuse of notation, $H_{\raisebox{-1pt}{\tiny{$\Ccal$}}}(\omega+\omega_{\raisebox{0pt}{\tiny{$\Scal$}}})=H_{\raisebox{-1pt}{\tiny{$\Ccal$}}}(\omega)+H_{\raisebox{-1pt}{\tiny{$\Scal$}}}(\omega_{\raisebox{0pt}{\tiny{$\Scal$}}})$]. We use the star in $ \Delta E^*_{\raisebox{-1pt}{\tiny{$\Acal$}}}$ to denote the idealised energy cost [i.e., that corresponding to what would be achievable in the infinite-temperature setting; see Eq.~\eqref{eq:energy exch incoherent ind}] and the energy costs without the star to represent those for when the temperature of the heat bath is finite. The additional term $\mathrm{tr}\{{H}(\gamma\epsilon)[\varrho_{\raisebox{-1pt}{\tiny{$\Scal$}}}^* (\beta, \omega_{n}+\epsilon)- \varrho_{\raisebox{-1pt}{\tiny{$\Scal$}}}^* (\beta, \omega_n)]\}$ vanishes for $\epsilon \to 0$.

Summing up these contributions for a diverging number of steps gives the lower bound for the heat dissipated throughout the entire protocol for cooling an initial state $\tau_{\raisebox{-1pt}{\tiny{$\Scal$}}}(\beta, \omega_{\raisebox{0pt}{\tiny{$\Scal$}}})$ to some final $\tau_{\raisebox{-1pt}{\tiny{$\Scal$}}}(\beta_{\textup{max}}, \omega_{\raisebox{0pt}{\tiny{$\Scal$}}})$ is given by 
\begin{align}
    \Delta E_{\raisebox{-1pt}{\tiny{$\Ccal$}}} &= \lim_{N\to\infty} \sum_{n=1}^{N} \Delta E_{\raisebox{-1pt}{\tiny{$\Ccal$}}}^{(n+1)} \nonumber\\
    &\geq (\gamma+1) \frac{1}{\beta}\,\widetilde{\Delta} S_{\raisebox{-1pt}{\tiny{$\Scal$}}} +\,\gamma\,\Delta E_{\raisebox{-1pt}{\tiny{$\Scal$}}} \notag \\
    &= \frac{1}{\beta}\,\widetilde{\Delta} S_{\raisebox{-1pt}{\tiny{$\Scal$}}} + \gamma \left( \Delta E_{\raisebox{-1pt}{\tiny{$\Scal$}}} + \frac{1}{\beta}\,\widetilde{\Delta} S_{\raisebox{-1pt}{\tiny{$\Scal$}}}\right).
\end{align}
Note that for infinite-temperature heat baths, $\gamma \to 0$ and the usual Landauer limit is recovered; nonetheless, for finite-temperature heat baths, $\gamma > 0$ and there is an additional energy contribution, implying that the Landauer limit cannot be achieved. Moreover, note that the expression inside the parenthesis in the second term above is always non-negative, as it is the free energy difference of the system during the cooling process. Lastly, it is straightforward to show that this lower bound is equivalent to the Carnot-Landauer limit in Eq.~\eqref{eq:carnotlandauercold}, which was derived in a protocol-independent manner as the ultimate limit in the incoherent-control setting. We now present explicit protocols that saturate this bound. 

%%%%%%%%%%%%%%%%%%%%%%%%%%%%%%%%%%%%%%%%%%%%%%%%%

\section{Perfect Cooling at the Carnot-Landauer Limit in the Incoherent-Control Paradigm}
\label{app:incoherentcoolingfinitetemperature}

The precise statement that we wish to prove regarding saturation of the Carnot-Landauer limit is the following:
\begin{lem}\label{lem:qubitincoherentfinitetemp}
For any $\beta^*\geq\beta>\beta_{\raisebox{-1pt}{\tiny{$H$}}}$ and $\epsilon_{1,2} > 0$, there exists a cooling protocol in the incoherent-control setting comprising a number of unitaries of finite control complexity, which, when the number of operations diverges, cools to some final temperature $\beta^\prime$ that is arbitrarily close to the ideal temperature value $\beta^*$, i.e.,
\begin{align}
    \left| \beta^{\prime} - \beta^* \right| < \epsilon_1,
\end{align}
with an energy cost, measured as heat drawn from the hot bath, that is arbitrarily close to the ideal Carnot-Landauer limit, i.e.,
\begin{align}
    \left| \Delta E_{\raisebox{-1pt}{\tiny{$\Hcal$}}} - \eta^{-1} \widetilde{\Delta} F_{\raisebox{-1pt}{\tiny{$\Scal$}}}^{(\beta)} \right| < \epsilon_2,
\end{align}
where $\eta = 1 - \beta_{\raisebox{-1pt}{\tiny{$H$}}} / \beta$ and $\Delta F_{\raisebox{-1pt}{\tiny{$\Scal$}}}^{(\beta)} = F_{\raisebox{-1pt}{\tiny{$\beta$}}}(\varrho_{\raisebox{-1pt}{\tiny{$\Scal$}}}^\prime) - F_{\raisebox{-1pt}{\tiny{$\beta$}}}(\varrho_{\raisebox{-1pt}{\tiny{$\Scal$}}})$ is the free energy difference between the initial $\varrho_{\raisebox{-1pt}{\tiny{$\Scal$}}} = \tau_{\raisebox{-1pt}{\tiny{$\Scal$}}}(\beta, H_{\raisebox{-1pt}{\tiny{$\Scal$}}})$ and final $\varrho_{\raisebox{-1pt}{\tiny{$\Scal$}}}^\prime = \tau_{\raisebox{-1pt}{\tiny{$\Scal$}}}(\beta^*, H_{\raisebox{-1pt}{\tiny{$\Scal$}}})$ system states (w.r.t. inverse temperature $\beta$).
\end{lem}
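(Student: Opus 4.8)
The plan is to exhibit an explicit diverging-time protocol that mirrors the quasi-static construction of Appendix~\ref{app:prooftheoreminftime}, adapted to energy-conserving dynamics via the virtual-qubit mechanism of Ref.~\cite{Clivaz_2019E}, and then to verify that its heat cost meets the Carnot-Landauer bound of Theorem~\ref{thm:main-landauer-incoherent} with equality. The guiding idea is to drive the target quasi-statically through a sequence of thermal states $\tau_{\raisebox{-1pt}{\tiny{$\Scal$}}}(\beta, \omega_n)$ of increasing equivalent gap $\omega_n$, from $\omega_{\raisebox{0pt}{\tiny{$\Scal$}}}$ up to $\omega_{\textup{max}} := (\beta^*/\beta)\,\omega_{\raisebox{0pt}{\tiny{$\Scal$}}}$, so that the final target state is arbitrarily close to $\tau_{\raisebox{-1pt}{\tiny{$\Scal$}}}(\beta^*, H_{\raisebox{-1pt}{\tiny{$\Scal$}}})$, settling the temperature claim with $\beta^\prime \to \beta^*$. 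This demands a nested double limit: an inner limit of infinitely many energy-conserving swap cycles realising, at each step, the effective virtual-qubit interaction of gap $\omega^{\textup{eff}}$ [Eq.~\eqref{eq: eff gap}], and an outer quasi-static limit in which the per-step increment $\epsilon \to 0$.

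First I would specify a single step. Inverting the relation $\omega^{\textup{eff}} = \eta\,\omega_{\raisebox{-1pt}{\tiny{$\Ccal$}}} + (1-\eta)\,\omega_{\raisebox{0pt}{\tiny{$\Scal$}}}$ implicit in Eq.~\eqref{eq: eff gap}, the cold-subsystem gap required to present a prescribed effective gap is $\omega_{\raisebox{-1pt}{\tiny{$\Ccal$}}} = \eta^{-1}[\omega^{\textup{eff}} - (1-\eta)\,\omega_{\raisebox{0pt}{\tiny{$\Scal$}}}]$, with the hot-subsystem gap fixed by resonance, $\omega_{\raisebox{0pt}{\tiny{$\Hcal$}}} = \omega_{\raisebox{-1pt}{\tiny{$\Ccal$}}} - \omega_{\raisebox{0pt}{\tiny{$\Scal$}}}$. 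At step $n$, with the target (arbitrarily close to) $\tau_{\raisebox{-1pt}{\tiny{$\Scal$}}}(\beta, \omega_n)$, I would couple it to a fresh cold-and-hot machine whose gaps are tuned so that the effective gap it presents equals $\omega_{n+1} = \omega_n + \epsilon$, and apply the energy-conserving unitary of Eq.~\eqref{eq:tripartite unitary op} in the infinite-cycle sense to pump population into the target ground state. This choice automatically respects the cooling constraint $\epsilon_n \geq \gamma(\omega_n - \omega_{\raisebox{0pt}{\tiny{$\Scal$}}})$ of Eq.~\eqref{eq: epsilon bound}: the mandatory part $\gamma(\omega_n - \omega_{\raisebox{0pt}{\tiny{$\Scal$}}})$ is precisely what the finite Carnot factor forces, while the surplus $\epsilon$ is the only quantity sent to zero.

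The energy bookkeeping then follows Appendices~\ref{app:prooftheoreminftime} and~\ref{app:analysisfinitetempheatbaths}. Summing the single-step exchanges of Eq.~\eqref{eq:energy exchange INc} and passing to the Riemann-integral (quasi-static) limit, the residual term $\gamma\,\mathrm{tr}\{H_{\raisebox{-1pt}{\tiny{$\Ccal$}}}(\epsilon)[\cdots]\}$ identified there vanishes, so the lower bound on $\Delta E_{\raisebox{-1pt}{\tiny{$\Ccal$}}}$ is attained with equality, $\Delta E_{\raisebox{-1pt}{\tiny{$\Ccal$}}} \to \tfrac{1}{\beta}\widetilde{\Delta} S_{\raisebox{-1pt}{\tiny{$\Scal$}}} + \gamma(\Delta E_{\raisebox{-1pt}{\tiny{$\Scal$}}} + \tfrac{1}{\beta}\widetilde{\Delta} S_{\raisebox{-1pt}{\tiny{$\Scal$}}})$. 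Energy conservation $\Delta E_{\raisebox{-1pt}{\tiny{$\Hcal$}}} = -\Delta E_{\raisebox{-1pt}{\tiny{$\Scal$}}} - \Delta E_{\raisebox{-1pt}{\tiny{$\Ccal$}}}$, together with the two elementary identities $\Delta E_{\raisebox{-1pt}{\tiny{$\Scal$}}} + \tfrac{1}{\beta}\widetilde{\Delta} S_{\raisebox{-1pt}{\tiny{$\Scal$}}} = \Delta F_{\raisebox{-1pt}{\tiny{$\Scal$}}}^{(\beta)}$ and $1 + \gamma = \eta^{-1}$, then collapses the cost to $-\Delta E_{\raisebox{-1pt}{\tiny{$\Hcal$}}} \to (1+\gamma)\,\Delta F_{\raisebox{-1pt}{\tiny{$\Scal$}}}^{(\beta)} = \eta^{-1}\Delta F_{\raisebox{-1pt}{\tiny{$\Scal$}}}^{(\beta)}$, exactly the Carnot-Landauer value [equivalently Eq.~\eqref{eq:carnotlandauercold}]. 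For a finite number of steps and cycles both deviations are bounded below $\epsilon_1$ and $\epsilon_2$ by first fixing a small enough increment $\epsilon$ (controlling the final temperature and the residual term) and then running sufficiently many inner cycles.

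I expect the main obstacle to be the rigorous control of this nested double limit. The inner infinite-cycle limit only \emph{asymptotically} brings the target to $\tau_{\raisebox{-1pt}{\tiny{$\Scal$}}}(\beta, \omega_n)$ at each step, and one must interleave it with the outer $\epsilon \to 0$ limit without the accumulated per-step approximation errors over a diverging number of steps spoiling either the final temperature or the saturation of the bound. Concretely, the step-wise error must be shown to be summable (or uniformly controllable in $\epsilon$) so that the interchange of limits is legitimate, in the spirit of the monotone-convergence arguments of Appendix~\ref{app:prooftheoreminftime}; establishing this uniformity, rather than the algebra of the energy identities, is where the genuine work lies.
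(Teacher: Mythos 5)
Your protocol design coincides with the paper's own proof for the qubit case: stages labelled by $n$ in which the target is coupled to a cold qubit of gap $\omega_S + n\theta$ and a resonant hot qubit of gap $n\theta$ (your inverted effective-gap parametrisation is equivalent to this), the energy-conserving exchange applied with rethermalisation in between, and the heat cost squeezed between left and right Riemann sums of an integral that evaluates to $\eta^{-1}\Delta F_S^{(\beta)}$; your algebraic identities ($1+\gamma = \eta^{-1}$ and $\Delta E_S + \beta^{-1}\widetilde{\Delta} S_S = \Delta F_S^{(\beta)}$) are also correct. However, there is a genuine gap, and you have named it yourself: the quantitative error control under \emph{finitely many} repetitions per stage, which you defer, is the mathematical core of the paper's proof, not a technicality to be appealed to. Moreover, the paper never takes your nested double limit. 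It keeps every stage finite: the exchange converges geometrically, $(p_n - q_n)/(p_{n-1} - q_n) = \bigl(1 - N_V^{(n)}\bigr)^{m_n}$, and since the virtual-qubit norm is bounded away from zero uniformly in $n$ (namely $N_V^{(n)} > e^{-\beta(\omega_S + N\theta)}/4$), one may take $m_n = \bigl\lceil \log\delta / \log\bigl(1 - N_V^{(n)}\bigr)\bigr\rceil$ repetitions so that each stage ends within $\delta$ of its virtual temperature. An inductive bound then shows the accumulated population error after $N$ stages is below $\delta N$ and the excess heat above the idealised cost below $\omega_S\bigl(\tfrac{\beta^*-\beta}{\beta-\beta_H}\bigr)\delta$; choosing $\delta = 1/N^2$ makes both deviations $O(1/N)$, so a single sequence of finite protocols indexed by $N$, with an explicit bound on the total number of unitaries, settles both claims simultaneously. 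Without this (or an equivalent summability argument) your proposal establishes only the idealised limit, not the lemma.

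Second, your construction is qubit-specific: the single effective gap $\omega^{\mathrm{eff}}$ and its inversion only make sense when the target has one transition, whereas the lemma underwrites Theorem~\ref{thm:autofintempinftime} for arbitrary finite-dimensional targets. For qudits the paper needs genuinely new ingredients: a machine containing one resonant hot qubit for every pair $(i,j)$ of target levels (hence $d_S(d_S-1)/2$ of them), all tuned so that every accessible two-level subspace sits at the same virtual inverse temperature $\beta_n$; a ``max-exchange'' unitary that acts on the pair with the largest population imbalance; and a monotone-convergence argument for the relative entropy $D\bigl(\varrho_S \,\|\, \tau_S(\beta_n, H_S)\bigr)$, which strictly decreases under each exchange, to establish that each stage converges to the desired thermal state at all. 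None of this is recoverable from the qubit argument by routine bookkeeping, so the qudit case would remain open under your plan.
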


We begin by presenting the diverging-time protocol that saturates the Carnot-Landauer limit when all three subsystems $\Scal, \Ccal, \Hcal$ are qubits. The simplicity of this special case allows us to calculate precisely bounds on the number of operations required to reach any chosen error threshold. Building on this intuition, we then present the generalisation to the case where all systems are qudits. The protocols with diverging control complexity follow directly via the same line of reasoning presented in the main text.

\subsection{Qubit Case}
\label{app:incoherentqubit}

We begin with setting some notation and intuition for the proof, before expanding on mathematical details. 

\textbf{Sketch of Protocol.---}The protocol consists of the following. There are $N$ stages, each labelled by $n \in \{1,2, ... , N\}$. Each stage proceeds as follows:
\begin{itemize}
    \item A qubit with energy gap $\omega_{\raisebox{0pt}{\tiny{$\Scal$}}} + n \theta$ is taken from the cold part of the machine, and a qubit with energy gap $n \theta$ is taken from the hot part (see below). The initial state of the machine at the beginning of the $n^\textup{th}$ stage is thus $\tau_{\raisebox{-1pt}{\tiny{$\Ccal$}}}(\beta, \omega_{\raisebox{0pt}{\tiny{$\Scal$}}} + n \theta) \otimes \tau_{\raisebox{-1pt}{\tiny{$\Hcal$}}} (\beta_{\raisebox{-1pt}{\tiny{$H$}}}, n\theta)$.
    \item The energy-preserving three qubit unitary cycle in the $\{010,101\}_{\raisebox{-1pt}{\tiny{$\Scal \Ccal \Hcal$}}}$ subspace is performed [see Eq.~\eqref{eq:tripartite unitary op}], after which the cold and hot qubits are rethermalised to their respective initial temperatures.
    \item The above steps are repeated $m_n$ times.
\end{itemize}

\noindent The energy increment $\theta$ is defined as
\begin{align}\label{eq:incrementsize}
    \theta &:= \frac{\omega_{\raisebox{0pt}{\tiny{$\Scal$}}}}{N} \left( \frac{\beta^* - \beta}{\beta - \beta_{\raisebox{-1pt}{\tiny{$H$}}}} \right),
\end{align}
while the number of repetitions within each stage is given by
\begin{align}\label{eq:noofreps}
    m_n = \Bigg\lceil \frac{ \log (\delta) }{ \log (1 - N_{\raisebox{-1pt}{\tiny{$V$}}}^{(n)} ) } \Bigg\rceil.
\end{align}
$\lceil \cdot \rceil$ is the ceiling function, and $N_{\raisebox{-1pt}{\tiny{$V$}}}^{(n)}$ is the sum of the initial thermal populations in the $\{01,10\}_{\Ccal\Hcal}$ subspace of the machine, i.e.,
\begin{align}
    N_{\raisebox{-1pt}{\tiny{$V$}}}^{(n)} &:= \braket{01| \tau_{\raisebox{-1pt}{\tiny{$\Ccal$}}}(\beta, \omega_{\raisebox{0pt}{\tiny{$\Scal$}}} + n \theta) \otimes \tau_{\raisebox{-1pt}{\tiny{$\Hcal$}}} (\beta_{\raisebox{-1pt}{\tiny{$H$}}}, n\theta) | 01} + \braket{10| \tau_{\raisebox{-1pt}{\tiny{$\Ccal$}}}(\beta, \omega_{\raisebox{0pt}{\tiny{$\Scal$}}} + n \theta) \otimes \tau_{\raisebox{-1pt}{\tiny{$\Hcal$}}} (\beta_{\raisebox{-1pt}{\tiny{$H$}}}, n\theta) | 10}.
\end{align}
The parameter $\delta$ is chosen appropriately to complete the proof ($\delta = 1/N^2$ works).

The intuition for the proof is as follows. We first consider how the populations of the target system changes in the \emph{idealised} protocol where $m_n \rightarrow \infty$, so that in each stage, the system reaches the virtual temperature determined by the $\Ccal\Hcal$ qubits. We can use this ideal setting to find expressions for the final temperature and energy cost, which serves as a baseline that we wish to attain to within arbitrary precision. We then consider the protocol as constructed above with a finite number of repetitions $m_n$ in each stage, and show that its expressions for temperature and work cost are close (w.r.t. $1/N$) to the original expressions, and by taking $N$ to be sufficiently large but still finite (i.e., in the diverging time limit), we prove that the protocol can be arbitrarily close in temperature and energy cost to the ideal values.

\begin{proof}

We label the population in the \emph{excited} state of the target system at the end of stage $n$ as $p_n$. Thus $p_0$ is the initial population and $p_{\raisebox{-1pt}{\tiny{$N$}}}$ is the final population in the excited level of the target system qubit, i.e., that spanned by $\ket{1}\!\bra{1}_{\raisebox{-1pt}{\tiny{$\Scal$}}}$. We also label by $q_n$ what the corresponding population $p_n$ would hypothetically be in the limit $m_n \rightarrow \infty$. This value can be calculated by matching the temperature of the target system qubit to the temperature of the $\{01,10\}_{\Ccal \Hcal}$ virtual qubit within the machine (see Appendix G in Ref.~\cite{Clivaz_2019E}). Thus $q_n$ is defined via the Gibbs ratio
\begin{align}
    \frac{q_n}{1-q_n} &= e^{-\beta(\omega_{\raisebox{0pt}{\tiny{$\Scal$}}} + n\theta)} e^{+\beta_{\raisebox{-1pt}{\tiny{$H$}}} n\theta} = e^{-\beta \omega_{\raisebox{0pt}{\tiny{$\Scal$}}}} e^{-(\beta - \beta_{\raisebox{-1pt}{\tiny{$H$}}}) n \theta}. \label{eq:abstractpop}
\end{align}
Note that
\begin{enumerate}
    \item $\{ p_n\},\{q_n \}$ are both monotonically decreasing sequences, as each stage cools the target qubit further.
    \item $p_n > q_n$ for all $n$, as more repetitions within each stage keep cooling the target qubit further.
\end{enumerate}

To keep track of the energetic resource cost, which we take here to be the total heat drawn from the hot bath, we must sum the energetic contribution from each time the hot qubit is rethermalised to $\beta_{\raisebox{-1pt}{\tiny{$H$}}}$ after the application of the three-party cycle unitary. Due to the fact that the only manner in which the population of the hot qubit changes is due to the $\{010,101\}_{\raisebox{-1pt}{\tiny{$\Scal \Ccal \Hcal$}}}$ exchange, it follows that any population change in the hot qubit is identical to the population change in the target system qubit.

Focusing on a single stage, where the machine qubits are fixed in energy gap, the total population change in the hot qubit that must be restored by the hot bath is therefore equal to the population change in the target system throughout that stage. The heat drawn from the hot bath throughout the entire stage is therefore
\begin{align}
    \widetilde{\Delta} E_{\raisebox{-1pt}{\tiny{$\Hcal$}}}^{(n)} &= \omega_{\raisebox{0pt}{\tiny{$\Hcal$}}}^{(n)} (p_{n-1} - p_n) = n \theta (p_{n-1} - p_n).
\end{align}
With these expressions derived, we can study the properties of the abstract protocol where the number of repetitions within each stage goes to infinity: $m_n\to\infty$. First, the final temperature asymptotically achieved here is given by finding the temperature $\widetilde{\beta}$ associated with the qubit with excited-state population $q_{\raisebox{-1pt}{\tiny{$N$}}}$
\begin{align}
    \frac{q_{\raisebox{-1pt}{\tiny{$N$}}}}{1-q_{\raisebox{-1pt}{\tiny{$N$}}}} = e^{-\widetilde{\beta} \omega_{\raisebox{0pt}{\tiny{$\Scal$}}}} 
    \Rightarrow \quad e^{-\beta \omega_{\raisebox{0pt}{\tiny{$\Scal$}}}} e^{-(\beta - \beta_{\raisebox{-1pt}{\tiny{$H$}}}) N \theta} = e^{-\widetilde{\beta} \omega_{\raisebox{0pt}{\tiny{$\Scal$}}}} 
    \Rightarrow \quad \widetilde{\beta} = \beta^*,
\end{align}
where we make use of the definition of $\theta$ in Eq.~\eqref{eq:incrementsize}. We can thus identify $q_{\raisebox{-1pt}{\tiny{$N$}}} = q^*$, since it is the population associated with the ideal final temperature $\beta^*$.

We also have the following expression for the total energetic cost of the ideal protocol after $N$ stages
\begin{align}\label{eq:abstractcost}
    \widetilde{\Delta} E_{\raisebox{-1pt}{\tiny{$\Hcal$}}}^* &= \sum_{n=1}^N n \theta (q_{n-1} - q_n),
\end{align}
which can alternatively be expressed as
\begin{align}\label{eq:leftsum}
    \widetilde{\Delta} E^*_{\raisebox{-1pt}{\tiny{$\Hcal$}}}  &= \sum_{n=1}^N \left[(n-1) \theta (q_{n-1} - q_n)\right] + \theta (q_0 - q_{\raisebox{-1pt}{\tiny{$N$}}})
\end{align}
The sums appearing in the two alternative expressions are the left and right Riemann sums of the integral of the variable $y = n\theta$ integrated with respect to the variable $q$, i.e.,
\begin{align}\label{eq:freeintegral}
    I :=& - \int_{q_0}^{q^*} y \; \textup{d}q,  \notag\\
    \text{where} \quad \frac{q(y)}{1-q(y)} =& e^{-\beta \omega_{\raisebox{0pt}{\tiny{$\Scal$}}}} e^{-(\beta - \beta_{\raisebox{-1pt}{\tiny{$H$}}}) y},
\end{align}
from Eq.~\eqref{eq:abstractpop}. For $y>0$, $q(y)$ is monotonically decreasing and so the converse is also true, i.e., $y$ is monotonically decreasing w.r.t. $q(y)$. This implies that the integral is bounded by the left and right Riemann sums, so we have
\begin{align}
    \sum_{n=1}^N (n-1) \theta (q_{n-1} - q_n) \leq \int_{q_0}^{q^*} y \; \textup{d}q \leq \sum_{n=1}^N n \theta (q_{n-1} - q_n),
\end{align}
from which we can deduce that the value of $\Delta E^*_{\raisebox{-1pt}{\tiny{$\Hcal$}}}$ is itself is bounded both ways from Eqs.~\eqref{eq:abstractcost} and~\eqref{eq:leftsum}:
\begin{align}
    \int_{q_0}^{q^*} y \; \textup{d}q \leq \widetilde{\Delta} E^*_{\raisebox{-1pt}{\tiny{$\Hcal$}}} \leq \int_{q_0}^{q^*} y \; \textup{d}q + \theta (q_0 - q^*).
\end{align}
The integral itself can by expressed in terms of the free energy of the qubit target system with respect to the environment inverse temperature $\beta$. Expressing the free energy as a function of the excited-state population $q$ and differentiating w.r.t. $q$ gives
\begin{align}
    F(q) &= \braket{E}(q) - \frac{S(q)}{\beta}= q \; \omega_{\raisebox{0pt}{\tiny{$\Scal$}}} + \frac{1}{\beta} \left[ q \log (q) + (1-q) \log (1-q) \right]. \\
    \frac{\partial F}{\partial q} &= \omega_{\raisebox{0pt}{\tiny{$\Scal$}}} + \frac{1}{\beta} \log \left( \frac{q}{1-q} \right) = \left( \omega_{\raisebox{0pt}{\tiny{$\Scal$}}} + \frac{1}{\beta} \left( -\beta \omega_{\raisebox{0pt}{\tiny{$\Scal$}}} - (\beta - \beta_{\raisebox{-1pt}{\tiny{$H$}}}) y \right) \right) = -\frac{\beta - \beta_{\raisebox{-1pt}{\tiny{$H$}}}}{\beta} y.
\end{align}
Using the above expression, the definite integral in Eq.~\eqref{eq:freeintegral} amounts to
\begin{align}
    I &= \frac{1}{\eta} \left[ F(q^*) - F(q_0) \right] =: \frac{1}{\eta} \left( F^* - F_0 \right),
\end{align}
where we identify the Carnot efficiency $\eta =1 - \beta_{\raisebox{-1pt}{\tiny{$H$}}}/\beta$ and for ease of notation written $F^* := F(q^*)$ and $F_0 := F(q_0)$. Thus we can bound $\widetilde{\Delta} E^*_{\raisebox{-1pt}{\tiny{$\Hcal$}}}$ on both sides
\begin{align}
    \frac{1}{\eta} \left( F^* - F_0 \right) \leq \widetilde{\Delta} E^*_{\raisebox{-1pt}{\tiny{$\Hcal$}}} &\leq \frac{1}{\eta} \left( F^* - F_0 \right) + \theta (q_0 - q^*) \leq \frac{1}{\eta} \left( F^* - F_0 \right) + \frac{\omega_{\raisebox{0pt}{\tiny{$\Scal$}}}}{N} \left( \frac{\beta^* - \beta}{\beta - \beta_{\raisebox{-1pt}{\tiny{$H$}}}} \right), \label{eq:abstractcostdiff}
\end{align}
where the inequality in the second line follows from the fact that $\{q_n\}$ forms a decreasing sequence.

We now proceed to consider the cooling protocol with a finite number of repetitions $m_n$ within each stage. We first bound the difference between $p_n$ and $q_n$. Using the properties of the exchange unitary under repetitions~\cite{Silva_2016,Clivaz_2019E} (in particular, see Appendix G in Ref.~\cite{Clivaz_2019E}), we have that in each stage
\begin{align}\label{eq: swapreps}
    \frac{p_n - q_n}{p_{n-1}-q_n} &= \left(1 - N_{\raisebox{-1pt}{\tiny{$V$}}}^{(n)} \right)^{m_n}.
\end{align}
Thus, the population difference to the asymptotically achievable population given by the virtual temperature shrinks as a power law w.r.t. the number of repetitions. Since $0 < N_{\raisebox{-1pt}{\tiny{$V$}}}^{(n)} < 1$ (all strict inequalities), three points follow: first, the population $q_n$ can never be attained with a finite number of steps within the stage $n$; second, that every repetition cools the system further by some finite amount; third, that one can get arbitrarily close to $q_n$ by taking $m_n$ sufficiently large. In fact, by our definition of $m_n$, we have that
\begin{align} \label{eq: swapreps-delta}
    \frac{p_n - q_n}{p_{n-1}-q_n} \leq \delta.
\end{align}
From this, we can prove that
\begin{align}\label{eq:popdiff}
    p_n - q_n \leq \delta^n q_0 - \delta q_n + (1-\delta)\delta \sum_{j=1}^{n-1} \delta^{n-j-1} q_j.
\end{align}
The proof is by induction. For $n=0$, $p_0 = q_0$ (initial state), and for $n=1$, using Eq.~\eqref{eq: swapreps-delta}
\begin{align}
    p_1 - q_1 &\leq \delta ( p_0 - q_1 )\notag \\
    &= \delta (q_0 - q_1).
\end{align}
Suppose that the above statement holds true for $p_k$. Then from Eq.~\eqref{eq: swapreps-delta}
\begin{align}
    p_{k+1} - q_{k+1} &\leq \delta (p_k - q_{k+1}) \notag \\ &= \delta (p_k - q_k + q_k - q_{k+1}) \notag\\
        &\;\,\vdots \notag\\
        &\leq \delta^{k+1} q_0  - \delta q_{k+1} (1-\delta)\delta +\sum_{j=1}^{(k+1)-1} \delta^{(k+1)-j-1} q_j. 
\end{align}
With this result, we can now bound the difference between the energy cost of this finite-repetition protocol and that of the idealised one. We now proceed to prove that
\begin{align}\label{eq:generaldiffhypothesis}
    \widetilde{\Delta} E_{\raisebox{-1pt}{\tiny{$\Hcal$}}} - \widetilde{\Delta} E^*_{\raisebox{-1pt}{\tiny{$\Hcal$}}} &= \sum_{n=1}^N n\theta (p_{n-1} - p_n) - \sum_{n=1}^N n\theta (q_{n-1} - q_n) \leq \theta \left( q_0 \sum_{j=1}^{N-1} \delta^{N-j} - \sum_{j=1}^{N-1} \delta^{N-j} q_j \right).
\end{align}
We again use proof by induction. First note that we can rewrite
\begin{align}
    \sum_{n=1}^N n\theta (f_{n-1} - f_n) &= \theta \left( \sum_{n=1}^N f_{n-1} \right) - N \theta f_{\raisebox{-1pt}{\tiny{$N$}}},
\end{align}
for $f_n \in \{ p_n, q_n\}$. Therefore, we can rewrite the difference
\begin{align}
    \widetilde{\Delta} E_{\raisebox{-1pt}{\tiny{$\Hcal$}}} - \widetilde{\Delta} E^*_{\raisebox{-1pt}{\tiny{$\Hcal$}}} &= \theta  \sum_{n=1}^N \left(p_{n-1} - q_{n-1} \right) - N \theta (p_{\raisebox{-1pt}{\tiny{$N$}}} - q_{\raisebox{-1pt}{\tiny{$N$}}}) \leq \theta \left( \sum_{n=1}^N (p_{n-1} - q_{n-1}) \right),
\end{align}
since the last subtracted term is always strictly positive. Consider now the partial sum
\begin{align}
    \mathcal{E}_k &= \sum_{n=1}^k \left(p_{n-1} - q_{n-1} \right).
\end{align}
For $k=1$, $\mathcal{E}_1 = 0$, since $p_0 = q_0$. For $k=2$, we have
\begin{align}
    \mathcal{E}_1 &= (p_1 - q_1) \leq \delta (q_0 - q_1) = \left( q_0 \sum_{j=1}^{1} \delta^{2-j} - \sum_{j=1}^{1} \delta^{2-j} q_j \right),
\end{align}
which matches the hypothesis of Eq.~\eqref{eq:generaldiffhypothesis}. Assuming that the same holds true for $\mathcal{E}_k$, then for $\mathcal{E}_{k+1}$, we have
\begin{align}
    \mathcal{E}_{k+1} &= \mathcal{E}_k + (p_k - q_k)\notag  \\
        &\leq \left( q_0 \sum_{j=1}^{k-1} \delta^{k-j} - \sum_{j=1}^{k-1} \delta^{k-j} q_j \right) + \left( \delta^k q_0 + (1-\delta)\delta \sum_{j=1}^{k-1} \delta^{k-j-1} q_j - \delta q_k \right) \notag \\
        &\;\vdots \notag \\
        &= q_0\sum_{j=1}^k \delta^{k+1-j} - \sum_{j=1}^{k} \delta^{k+1-j} q_j. 
\end{align}
Then, by dropping the second sum, which is a strictly positive quantity, the difference in Eq.~\eqref{eq:generaldiffhypothesis} can be further simplified to
\begin{align}
    \widetilde{\Delta} E_{\raisebox{-1pt}{\tiny{$\Hcal$}}} - \widetilde{\Delta} E^*_{\raisebox{-1pt}{\tiny{$\Hcal$}}}  &\leq \theta q_0 \sum_{j=1}^{N-1} \delta^{N-j} = \theta q_0 \, \delta \sum_{k=0}^{N-2} \delta^k < \theta q_0 \, \delta (N-1) < \theta q_0 \, \delta  N < \omega_{\raisebox{0pt}{\tiny{$\Scal$}}} \left( \frac{\beta^* - \beta}{\beta - \beta_{\raisebox{-1pt}{\tiny{$H$}}}} \right) \delta, \label{eq:costdiff}
\end{align}
where we use that $\delta < 1$. Finally, to upper bound the number of operations required in the protocol, we bound the number of repetitions within each stage by bounding the total population of the virtual qubit spanned by the levels $\{01,10\}_{\Ccal\Hcal}$ as follows:
\begin{align}
    N_{\raisebox{-1pt}{\tiny{$V$}}}^{(n)} &= \braket{01| \tau_{\raisebox{-1pt}{\tiny{$\Ccal$}}}(\beta, \omega_{\raisebox{0pt}{\tiny{$\Scal$}}} + n \theta) \otimes \tau_{\raisebox{-1pt}{\tiny{$\Hcal$}}} (\beta_{\raisebox{-1pt}{\tiny{$H$}}}, n\theta)  | 01} + \braket{10| \tau_{\raisebox{-1pt}{\tiny{$\Ccal$}}}(\beta, \omega_{\raisebox{0pt}{\tiny{$\Scal$}}} + n \theta) \otimes \tau_{\raisebox{-1pt}{\tiny{$\Hcal$}}} (\beta_{\raisebox{-1pt}{\tiny{$H$}}}, n\theta) | 10} \notag \\
        &= \frac{e^{-\beta_{\raisebox{-1pt}{\tiny{$H$}}} n \theta} + e^{-\beta (\omega_{\raisebox{0pt}{\tiny{$\Scal$}}} + n \theta)}}{(1+e^{-\beta_{\raisebox{-1pt}{\tiny{$H$}}} n \theta})(1+e^{-\beta (\omega_{\raisebox{0pt}{\tiny{$\Scal$}}} + n \theta)})} \notag \\
        &> \frac{e^{-\beta (\omega_{\raisebox{0pt}{\tiny{$\Scal$}}} + n \theta)}}{4}. \notag \\
    \Rightarrow \quad \log \left[ 1 - N_{\raisebox{-1pt}{\tiny{$V$}}}^{(n)} \right] &< \log \left[ 1 - \frac{e^{-\beta (\omega_{\raisebox{0pt}{\tiny{$\Scal$}}} + n \theta)}}{4} \right] \notag \\
        &< - \frac{e^{-\beta (\omega_{\raisebox{0pt}{\tiny{$\Scal$}}} + n \theta)}}{4} \quad\quad\quad \text{if $x \in (0,1) \;\Rightarrow\; \log(1-x) < -x$.} \notag \\
    \Rightarrow \quad - \frac{1}{\log \left[ 1 - N_{\raisebox{-1pt}{\tiny{$V$}}}^{(n)} \right]} &< 4 e^{+\beta (\omega_{\raisebox{0pt}{\tiny{$\Scal$}}} + n \theta)}
\end{align}
Thus we can bound the number of repetitions in each stage from Eq.~\eqref{eq:noofreps}. Noting that $\log (\delta) < 0$, we have
\begin{align}
    m_n < 4\log \left( 1/\delta \right) e^{+\beta (\omega_{\raisebox{0pt}{\tiny{$\Scal$}}} + n \theta)} + 1.
\end{align}
For a crude bound, we can replace $n$ by its maximum value $N$, and sum over all the stages to find an upper bound on the total number of three-qubit exchange unitaries implemented throughout the entire protocol, which gives
\begin{align}
    M = \sum_{n=1}^N m_n &< N \left[ 4\log \left( 1/\delta \right) e^{+\beta (\omega_{\raisebox{0pt}{\tiny{$\Scal$}}} + N \theta)} + 1 \right] = N \left[ 4\log \left( 1/\delta \right) e^{ \omega_{\raisebox{0pt}{\tiny{$\Scal$}}} (\beta^* - \beta_{\raisebox{-1pt}{\tiny{$H$}}})/\eta} + 1 \right].
    \label{eq:totalreps}
\end{align}
Also, note that $\lim_{\delta \rightarrow 0} p_{\raisebox{-1pt}{\tiny{$N$}}} = q_{\raisebox{-1pt}{\tiny{$N$}}} = q^*$. More precisely, using Eq.~\eqref{eq:popdiff}, we have
\begin{align}
    p_{\raisebox{-1pt}{\tiny{$N$}}} - q^* &< \delta \left( \delta^{N-1} q_0 + (1-\delta) \sum_{j=1}^{N-1} \delta^{n-j-1} q_j - q_{\raisebox{-1pt}{\tiny{$N$}}} \right)\notag  \\
        &< \delta \left( 1 + (1-\delta)(N-1) \right) < \delta N.
\end{align}
In summary, we have the following bounds on the protocol in which each stage consists of a finite number of steps
\begin{align}
    p_{\raisebox{-1pt}{\tiny{$N$}}} - q^* &< \delta N \notag\\
    \widetilde{\Delta} E_{\raisebox{-1pt}{\tiny{$\Hcal$}}} &< \frac{1}{\eta} \left( F^* - F_0 \right) + \omega_{\raisebox{0pt}{\tiny{$\Scal$}}} \left( \frac{\beta^* - \beta}{\beta - \beta_{\raisebox{-1pt}{\tiny{$H$}}}} \right) \left( \frac{1}{N} + \delta \right),
\end{align}
where we combine Eqs.~\eqref{eq:abstractcostdiff} and \eqref{eq:costdiff} for the second expression. For simplicity, we choose $\delta = 1/N^2$, so that
\begin{align}
    p_{\raisebox{-1pt}{\tiny{$N$}}} - q^* &< \frac{1}{N} \notag\\
    \widetilde{\Delta} E_{\raisebox{-1pt}{\tiny{$\Hcal$}}} &< \frac{1}{\eta} \left( F^* - F_0 \right) + \omega_{\raisebox{0pt}{\tiny{$\Scal$}}} \left( \frac{\beta^* - \beta}{\beta - \beta_{\raisebox{-1pt}{\tiny{$H$}}}} \right) \left( \frac{2}{N} \right).
\end{align}
Thus, given any final temperature (encoded by the population $q^*$), and allowed errors $\epsilon_1$ and $\epsilon_2$ for the final population and energy cost respectively, one can always choose $N$ large enough so that both quantities are within the error threshold. Specifically, choosing $N$ as
\begin{align}
    N &= \Bigg\lceil \textup{max} \left\{ \epsilon_1^{-1}, 2 \omega_{\raisebox{0pt}{\tiny{$\Scal$}}} \left( \frac{\beta^* - \beta}{\beta - \beta_{\raisebox{-1pt}{\tiny{$H$}}}} \right) \epsilon_2^{-1} \right\} \Bigg\rceil,
\end{align}
we automatically have that $p_{\raisebox{-1pt}{\tiny{$N$}}} - q^* <\epsilon_1$ and $\Delta E_{\raisebox{-1pt}{\tiny{$\Hcal$}}} < (F^* - F_0)/\eta + \epsilon_2$. The total number of unitary operations (each of which is followed by rethermalisation of the machine) is then bounded by Eq.~\eqref{eq:totalreps}
\begin{align}
    M < N \left(8 \log [N] e^{ \omega_{\raisebox{0pt}{\tiny{$\Scal$}}} (\beta^* - \beta_{\raisebox{-1pt}{\tiny{$H$}}})/\eta} + 1 \right).
\end{align}
\noindent We can see from Theorem \ref{thm:landauer-incoherent} that the protocol is asymptotically optimal with respect to the energy extracted from the hot bath. \end{proof}

\subsection{Qudit Case}
\label{app:incoherentqudit}

The extension of the proof above to the case of qudits is nontrivial. This is because, while for qubits there is only one energy-resonant subspace that leads to cooling and hence a unique protocol [see Eq.~\eqref{eq:tripartite unitary op}] that asymptotically attains perfect cooling at the Carnot-Landauer bound, this is no longer the case for higher-dimensional systems; here, there can be a number of energy-resonant subspaces that cool the target and the question of optimality hinges crucially on the complex energy-level structure of all systems involved. Hence, it is not possible to provide a unique unitary that generates the optimal protocol independently of the subsystem Hamiltonians. Nonetheless, we slightly modify the protocol for the qubit case above to be implemented on a number of particular three-qubit subspaces of the three-qudit global state such that, at the end of each stage, the state of the target system is arbitrarily close to the (known) state, which would be achieved in an abstract protocol in the diverging-time limit. This asymptotically attainable state is precisely that which would be achieved in the coherent-control paradigm with a machine the same dimension as the joint hot-cold qudits. Thus, we first begin by presenting the necessary steps for the proof in the coherent-control setting, which we then adapt as appropriate for the incoherent setting control. Finally, summing the overall energy cost of said protocol over all stages saturates the Carnot-Landauer bound, as required. \\

\begin{proof}
\textbf{An idealised sequence of temperatures and system states.} We construct the incoherent protocol in the following manner. We seek to take the system through a sequence of thermal states starting at inverse temperature $\beta$ and ending at inverse temperature $\beta^*$ with $N$ equally spaced intermediary steps, i.e.,
\begin{align}
    \beta_n &= \beta + n \theta \left( \beta - \beta_{\raisebox{-1pt}{\tiny{$\Hcal$}}} \right), \label{eq:idealtemperatureincoherentstagen} \\
        \theta &= \frac{1}{N} \left( \frac{\beta^* - \beta}{\beta - \beta_{\raisebox{-1pt}{\tiny{$\Hcal$}}}} \right),
\end{align}
so that $\beta_{\raisebox{-1pt}{\tiny{$N$}}} = \beta^*$ by construction. This corresponds to taking the system through the following sequence of thermal states
\begin{align}
    \varrho^{(n)}_{\raisebox{-1pt}{\tiny{$\Scal$}}} &= \frac{e^{-\beta_n H_{\raisebox{-1pt}{\tiny{$\Scal$}}}}}{\mathcal{Z}_{\raisebox{-1pt}{\tiny{$\Scal$}}}(H_{\raisebox{-1pt}{\tiny{$\Scal$}}}, \beta_n)}.
\end{align}
Note that, in contrast to the coherent protocol where such a sequence can be traversed by simply swapping the target system with a sequence of appropriate machines, in the incoherent setting such a protocol is generally not possible as such swaps are not energy conserving. Nonetheless, we develop a modified protocol that is energy conserving and mimics this idealised one.

Corresponding to each step in the sequence, we define the following quantity, which we eventually show to be related to the heat drawn from the hot bath:
\begin{align}
    G^{(n)} &= - n \theta \Delta E_{\raisebox{-1pt}{\tiny{$\Scal$}}}^{(n)} = - n \theta \, \mathrm{tr} \left[ H_{\raisebox{-1pt}{\tiny{$\Scal$}}} \left( \varrho^{(n)}_{\raisebox{-1pt}{\tiny{$\Scal$}}} - \varrho^{(n-1)}_{\raisebox{-1pt}{\tiny{$\Scal$}}} \right) \right].
\end{align}
We proceed to show that the total $\sum_n G^{(n)}$ that we label \textit{the idealised heat cost $\widetilde{\Delta} E^*_{\raisebox{-1pt}{\tiny{$\Hcal$}}}$} is close to the free energy difference over the entire sequence. We have
\begin{align}
    \widetilde{\Delta} E^*_{\raisebox{-1pt}{\tiny{$\Hcal$}}} &= \sum_{n=1}^N G^{(n)} \notag \\
        &= \sum_{n=1}^N n \theta \, \mathrm{tr} \left[ H_{\raisebox{-1pt}{\tiny{$\Scal$}}} \left( \varrho_{\raisebox{-1pt}{\tiny{$\Scal$}}}^{(n-1)} - \varrho_{\raisebox{-1pt}{\tiny{$\Scal$}}}^{(n)} \right) \right] \label{eq:rightRiemannsum} \\
        &= \left\{ \sum_{n=1}^N (n-1) \theta \, \mathrm{tr} \left[ H_{\raisebox{-1pt}{\tiny{$\Scal$}}} \left( \varrho_{\raisebox{-1pt}{\tiny{$\Scal$}}}^{(n-1)} - \varrho_{\raisebox{-1pt}{\tiny{$\Scal$}}}^{(n)} \right) \right] \right\} + \theta \, \mathrm{tr} \left[ H_{\raisebox{-1pt}{\tiny{$\Scal$}}} \left( \varrho_{\raisebox{-1pt}{\tiny{$\Scal$}}}^{(0)} - \varrho_{\raisebox{-1pt}{\tiny{$\Scal$}}}^{(N)} \right) \right].\label{eq:leftRiemannsum}
\end{align}
The sums on the second and third lines above, Eqs.~\eqref{eq:rightRiemannsum} and \eqref{eq:leftRiemannsum} respectively, are the \textit{right} and \textit{left} Riemann sums corresponding to the following integral:
\begin{align}
    I &= \int_{q_i}^{q_f} q \left( - \textup{d}x \right) = \int_{q_f}^{q_i} q \, \textup{d}x, \notag\\
    \text{where} \quad n\theta &\rightarrow q, \notag\\
    \quad x &= \mathrm{tr} \left[ H_{\raisebox{-1pt}{\tiny{$\Scal$}}} \varrho_{\raisebox{-1pt}{\tiny{$\Scal$}}}(q) \right], \notag \\
    \varrho_{\raisebox{-1pt}{\tiny{$\Scal$}}}(q) &= \frac{ e^{-[\beta + q(\beta - \beta_{\raisebox{-1pt}{\tiny{$\Hcal$}}})] H_{\raisebox{-1pt}{\tiny{$\Scal$}}}} }{ \mathrm{tr}\left[ e^{-[\beta + q(\beta - \beta_{\raisebox{-1pt}{\tiny{$\Hcal$}}})] H_{\raisebox{-1pt}{\tiny{$\Scal$}}}} \right]}.\label{eq:incoherentworkint}
\end{align}
We observe that $x$ is the average energy of the thermal state of temperature $\beta + q(\beta - \beta_{\raisebox{-1pt}{\tiny{$\Hcal$}}})$, and thus $x$ and $q$ are strictly monotonically decreasing w.r.t. each other (which explains why the left and right sums are switched). It follows that the Riemann sums bound the integral
\begin{align}
    \sum_{n=1}^N (n-1) \theta \, \mathrm{tr} \left[ H_{\raisebox{-1pt}{\tiny{$\Scal$}}} \left( \varrho_{\raisebox{-1pt}{\tiny{$\Scal$}}}^{(n-1)} - \varrho_{\raisebox{-1pt}{\tiny{$\Scal$}}}^{(n)} \right) \right] \leq \int_{q_f}^{q_i} q \, \textup{d}x \leq        \sum_{n=1}^N n \theta \, \mathrm{tr} \left[ H_{\raisebox{-1pt}{\tiny{$\Scal$}}} \left( \varrho_{\raisebox{-1pt}{\tiny{$\Scal$}}}^{(n-1)} - \varrho_{\raisebox{-1pt}{\tiny{$\Scal$}}}^{(n)} \right) \right].
\end{align}
We can thus bound the idealised heat cost in both directions via
\begin{align}
    I \leq \widetilde{\Delta} E^*_{\raisebox{-1pt}{\tiny{$\Hcal$}}} \leq I + \theta \, \mathrm{tr} \left[ H_{\raisebox{-1pt}{\tiny{$\Scal$}}} \left( \varrho_{\raisebox{-1pt}{\tiny{$\Scal$}}}^{(0)} - \varrho_{\raisebox{-1pt}{\tiny{$\Scal$}}}^{(N)} \right) \right].
\end{align}
The integral in Eq.~\eqref{eq:incoherentworkint} can be shown to be equal to the change in free energy of the target system (w.r.t. inverse temperature $\beta$)
\begin{align}
    F_{\raisebox{-1pt}{\tiny{$\beta$}}}[\varrho_{\raisebox{-1pt}{\tiny{$\Scal$}}}(q)] &= \mathrm{tr} \left[ H_{\raisebox{-1pt}{\tiny{$\Scal$}}} \varrho_{\raisebox{-1pt}{\tiny{$\Scal$}}}(q) \right] + \frac{1}{\beta} \mathrm{tr} \left[ \varrho_{\raisebox{-1pt}{\tiny{$\Scal$}}}(q) \log \varrho_{\raisebox{-1pt}{\tiny{$\Scal$}}}(q) \right], \notag \\
    \frac{\textup{d}}{\textup{d}q} F_{\raisebox{-1pt}{\tiny{$\beta$}}}[\varrho_{\raisebox{-1pt}{\tiny{$\Scal$}}}(q)] &= \mathrm{tr} \left[ \left( H_{\raisebox{-1pt}{\tiny{$\Scal$}}} + \frac{\mathbbm{1}_{\raisebox{-1pt}{\tiny{$\Scal$}}} + \log \varrho_{\raisebox{-1pt}{\tiny{$\Scal$}}}(q)}{\beta} \right) \frac{\textup{d} \varrho_{\raisebox{-1pt}{\tiny{$\Scal$}}}(q)}{\textup{d}q} \right].
\end{align}
Note that $\varrho_{\raisebox{-1pt}{\tiny{$\Scal$}}}(q)$ and $\textup{d} \varrho_{\raisebox{-1pt}{\tiny{$\Scal$}}}(q)$ are both always diagonal in $H_{\raisebox{-1pt}{\tiny{$\Scal$}}}$ and full rank for all $q \in \mathbb{R}$, so we have no problems with $\log \varrho_{\raisebox{-1pt}{\tiny{$\Scal$}}}(q)$, and all of the operators in the expression are well defined and commute. Proceeding, we repeatedly use $\mathrm{tr} \left[ \textup{d} \varrho_{\raisebox{-1pt}{\tiny{$\Scal$}}}(q) \right] = \textup{d} \, \mathrm{tr} \left[ \varrho_{\raisebox{-1pt}{\tiny{$\Scal$}}}(q) \right] = 0$ and label the partition function $\mathcal{Z}(q) := \mathrm{tr} \left[ e^{-[\beta + q(\beta - \beta_{\raisebox{-1pt}{\tiny{$\Hcal$}}})] H_{\raisebox{-1pt}{\tiny{$\Scal$}}}} \right]$ to obtain
\begin{align}
    \frac{\textup{d}}{\textup{d}q} F_{\raisebox{-1pt}{\tiny{$\beta$}}}[\varrho_{\raisebox{-1pt}{\tiny{$\Scal$}}}(q)] &= \mathrm{tr} \left[ \left( H_{\raisebox{-1pt}{\tiny{$\Scal$}}} + \frac{\log \varrho_{\raisebox{-1pt}{\tiny{$\Scal$}}}(q)}{\beta} \right) \frac{\textup{d} \varrho_{\raisebox{-1pt}{\tiny{$\Scal$}}}(q)}{\textup{d}q} \right] \notag \\
        &= \mathrm{tr} \left[ \left( H_{\raisebox{-1pt}{\tiny{$\Scal$}}} - \frac{ \beta + q (\beta - \beta_{\raisebox{-1pt}{\tiny{$\Hcal$}}})}{\beta} H_{\raisebox{-1pt}{\tiny{$\Scal$}}} - \mathbbm{1}_{\raisebox{-1pt}{\tiny{$\Scal$}}} \frac{\log \mathcal{Z}(q)}{\beta} \right) \frac{\textup{d} \varrho_{\raisebox{-1pt}{\tiny{$\Scal$}}}(q)}{\textup{d}q} \right] \notag \\
        &= - q \left( 1 - \frac{\beta_{\raisebox{-1pt}{\tiny{$\Hcal$}}}}{\beta} \right) \frac{\textup{d}}{\textup{d}q} \mathrm{tr} \left[ H_{\raisebox{-1pt}{\tiny{$\Scal$}}} \varrho_{\raisebox{-1pt}{\tiny{$\Scal$}}}(q) \right] = - q \eta \frac{\textup{d}x}{\textup{d}q},
\end{align}
where we identify the Carnot efficiency $\eta$ for an engine operating between $\beta$ and $\beta_{\raisebox{-1pt}{\tiny{$\Hcal$}}}$. The integral thus simplifies to
\begin{align}
    I &= \eta^{-1} \left( F_{\raisebox{-1pt}{\tiny{$\beta$}}}[\varrho_{\raisebox{-1pt}{\tiny{$\Scal$}}}(q_f)] - F_{\raisebox{-1pt}{\tiny{$\beta$}}}[\varrho_{\raisebox{-1pt}{\tiny{$\Scal$}}}(q_i)] \right) =: \eta^{-1} \Delta F_{\raisebox{-1pt}{\tiny{$\Scal$}}}^{(\beta)}.
\end{align}
The idealised heat cost is thus bounded by
\begin{align}
    \eta^{-1} \Delta F_{\raisebox{-1pt}{\tiny{$\Scal$}}}^{(\beta)} \leq \widetilde{\Delta} E^*_{\raisebox{-1pt}{\tiny{$\Hcal$}}}  \leq \eta^{-1} \Delta F_{\raisebox{-1pt}{\tiny{$\Scal$}}}^{(\beta)} + \theta \, \mathrm{tr} \left[\Hcal_{\raisebox{-1pt}{\tiny{$\Scal$}}} \left( \varrho_{\raisebox{-1pt}{\tiny{$\Scal$}}}^{(0)} - \varrho_{\raisebox{-1pt}{\tiny{$\Scal$}}}^{(N)}\right) \right].
\end{align}
The left inequality is Landauer's bound applied to cooling a target system with Hamiltonian $H_{\raisebox{-1pt}{\tiny{$\Scal$}}}$ (see Theorem~\ref{thm:main-landauer-incoherent}), and the error term on the right can be bounded quite easily; for instance, for $\beta>0$, we have
\begin{align}
    \mathrm{tr} \left[ \Hcal_{\raisebox{-1pt}{\tiny{$\Scal$}}} \left(\varrho_{\raisebox{-1pt}{\tiny{$\Scal$}}}^{(0)} - \varrho_{\raisebox{-1pt}{\tiny{$\Scal$}}}^{(N)}\right) \right] &= \mathrm{tr} \left[ \left( \Hcal_{\raisebox{-1pt}{\tiny{$\Scal$}}} - E_{\raisebox{-1pt}{\tiny{$\Scal$}}}^{\textup{min}} \mathbbm{1}_{\raisebox{-1pt}{\tiny{$\Scal$}}} \right) \left(\varrho_{\raisebox{-1pt}{\tiny{$\Scal$}}}^{(0)} - \varrho_{\raisebox{-1pt}{\tiny{$\Scal$}}}^{(N)}\right) \right] \notag \\
    &\leq \mathrm{tr} \left[ \left( \Hcal_{\raisebox{-1pt}{\tiny{$\Scal$}}} - E_{\raisebox{-1pt}{\tiny{$\Scal$}}}^{\textup{min}} \mathbbm{1}_{\raisebox{-1pt}{\tiny{$\Scal$}}} \right) \varrho_{\raisebox{-1pt}{\tiny{$\Scal$}}}^{(0)} \right] & &\text{since $\Hcal_{\raisebox{-1pt}{\tiny{$\Scal$}}}  - E_{\raisebox{-1pt}{\tiny{$\Scal$}}}^{\textup{min}} \mathbbm{1}_{\raisebox{-1pt}{\tiny{$\Scal$}}}$ is a positive operator,} \notag \\
    &\leq \mathrm{tr} \left[ \left( \Hcal_{\raisebox{-1pt}{\tiny{$\Scal$}}} - E_{\raisebox{-1pt}{\tiny{$\Scal$}}}^{\textup{min}} \mathbbm{1}_{\raisebox{-1pt}{\tiny{$\Scal$}}} \right) \frac{\mathbbm{1}_{\raisebox{-1pt}{\tiny{$\Scal$}}}}{d_{\raisebox{-1pt}{\tiny{$\Scal$}}}} \right] \leq \frac{ \omega_{\raisebox{0pt}{\tiny{$\Scal$}}}^{\textup{max}}}{d_{\raisebox{-1pt}{\tiny{$\Scal$}}}},
\end{align}
where $\omega^{\textup{max}}_{\raisebox{-1pt}{\tiny{$\Scal$}}} := E_{\raisebox{-1pt}{\tiny{$\Scal$}}}^{\textup{max}} -  E_{\raisebox{-1pt}{\tiny{$\Scal$}}}^{\textup{min}}$ is the largest energy gap in the target system Hamiltonian and $d_{\raisebox{-1pt}{\tiny{$\Scal$}}}$ is the system dimension. We use the fact that since $\rho_{\raisebox{-1pt}{\tiny{$\Scal$}}}^{(0)}$ is a thermal state of positive temperature, its average energy is less than that of the infinite temperature thermal state, $\mathbbm{1}_{\raisebox{-1pt}{\tiny{$\Scal$}}}/d_{\raisebox{-1pt}{\tiny{$\Scal$}}}$. Since $\theta \propto 1/N$, it follows that one can always find an $N$ large enough such that the error is smaller than a given value, thereby saturating the Landauer bound.

\hrulefill

\textbf{A sequence of machine Hamiltonians to mimic the idealised sequence.} Next we construct a protocol that mimics the above sequence and obeys the global energy conservation condition imposed in the incoherent-control setting. The protocol is split into $N$ stages (like above). In each stage, the Hamiltonian of the machine is fixed. The machine here comprises to two parts: the ``cold" part and the ``hot" part. The cold part is chosen to begin in a thermal state at temperature $\beta$ of the Hamiltonian
\begin{align}
    H_{\raisebox{-1pt}{\tiny{$\Ccal$}}} &= \left( 1 + n \theta \right) H_{\raisebox{-1pt}{\tiny{$\Scal$}}}
\end{align}
At this point we note that this sequence of cold-machine states is exactly the same as in the coherent protocol, which would proceed by simply swapping the full state of target system and machine in each stage. However, that is not possible here since this is not an energy-preserving operation. To allow for energy-preserving operations, the hot part of the machine consists of $d_{\raisebox{-1pt}{\tiny{$\Scal$}}}(d_{\raisebox{-1pt}{\tiny{$\Scal$}}}-1)/2$ qubits, each corresponding to a pair of levels $(i,j)$ of the target system (henceforth we take $i < j$ to avoid double counting), whose energy gap is equal to the difference in energies of the target and cold qubit subspaces (hence rendering the desired exchange energy resonant)
\begin{align}
    H^{(ij)}_{\raisebox{-1pt}{\tiny{$\Hcal$}}} &= \left[\omega_i + (1+n\theta) \omega_j - \left( \omega_j + (1+n\theta) \omega_i \right) \right] \ket{1}\!\bra{1}^{(ij)}_{\raisebox{-1pt}{\tiny{$\Hcal$}}} = n\theta \left( \omega_j - \omega_i \right) \ket{1}\!\bra{1}^{(ij)}_{\raisebox{-1pt}{\tiny{$\Hcal$}}},
\end{align}
where we label the energy eigenvalues of $H_{\raisebox{-1pt}{\tiny{$\Scal$}}}$ by $\{\omega_i\}$. Each of these hot qubits begins at inverse temperature $\beta_{\raisebox{-1pt}{\tiny{$H$}}}$. After every unitary operation, the cold and hot parts of the machine are rethermalised to their respective initial temperatures. 

To understand the choice of machine Hamiltonians, consider the following two energy eigenstates of the machine: $\ket{i}_{\raisebox{-1pt}{\tiny{$\Ccal$}}} \otimes \ket{1}_{\raisebox{-1pt}{\tiny{$\Hcal$}}}^{(ij)}$ and $\ket{j}_{\raisebox{-1pt}{\tiny{$\Ccal$}}} \otimes \ket{0}_{\raisebox{-1pt}{\tiny{$\Hcal$}}}^{(ij)}$. The energy difference is
\begin{align}
    \Delta^{(ij)} &= \omega_j (1 + n\theta) - \omega_i (1+n\theta) - n\theta (\omega_j - \omega_i) = \omega_j - \omega_i,
\end{align}
matching the energy difference between the corresponding pair of energy eigenstates of the target system. Furthermore, calculating the ratio of populations of the two levels we find
\begin{align}
    g^{(ij)} &= \frac{ e^{-\beta \omega_j (1+n\theta)} }{ e^{-\beta \omega_i (1+n\theta)} e^{-\beta_{\raisebox{-1pt}{\tiny{$\Hcal$}}} n \theta (\omega_j - \omega_i)} } = e^{-(\omega_j - \omega_i) (\beta + n\theta (\beta - \beta_{\raisebox{-1pt}{\tiny{$\Hcal$}}}))}.
\end{align}
This corresponds to the Gibbs ratio of a qubit at the temperature $\beta + n\theta (\beta - \beta_{\raisebox{-1pt}{\tiny{$\Hcal$}}})$, which is the temperature that defines stage $n$ [see Eq.~\eqref{eq:idealtemperatureincoherentstagen}]. In summary, we construct a machine featuring $d_{\raisebox{-1pt}{\tiny{$\Scal$}}}(d_{\raisebox{-1pt}{\tiny{$\Scal$}}}-1)/2$ qubit subspaces (or virtual qubits), each of the same energy gap as one pair of energy eigenstates of the system, and all of which have a Gibbs ratio (or virtual temperature) corresponding the $n^{\text{th}}$ temperature of our desired sequence.

\hrulefill

\textbf{A single step of the protocol: The max exchange.}  Within each stage of the protocol, a single step consists of a unitary operation on $\Scal\Ccal\Hcal$, followed by the rethermalisation of the machine parts to their respective initial temperatures. We construct the unitary operation as follows: for every pair $(i,j)$ of system energy levels, one can calculate the absolute value of the difference in populations of the following two degenerate eigenstates $\ket{i}_{\raisebox{-1pt}{\tiny{$\Scal$}}}\ket{j}_{\raisebox{-1pt}{\tiny{$\Ccal$}}}\ket{0}_{\raisebox{-1pt}{\tiny{$\Hcal$}}}^{(ij)}$ and $\ket{j}_{\raisebox{-1pt}{\tiny{$\Scal$}}}\ket{i}_{\raisebox{-1pt}{\tiny{$\Ccal$}}}\ket{1}_{\raisebox{-1pt}{\tiny{$\Hcal$}}}^{(ij)}$. This value corresponds to the amount of population that would move under an exchange $\ket{i}_{\raisebox{-1pt}{\tiny{$\Scal$}}} \ket{j}_{\raisebox{-1pt}{\tiny{$\Ccal$}}} \ket{0}_{\raisebox{-1pt}{\tiny{$\Hcal$}}}^{(ij)} \leftrightarrow \ket{j}_{\raisebox{-1pt}{\tiny{$\Scal$}}} \ket{i}_{\raisebox{-1pt}{\tiny{$\Ccal$}}} \ket{1}_{\raisebox{-1pt}{\tiny{$\Hcal$}}}^{(ij)}$. We then choose the pair with the largest absolute value of this difference and perform that exchange, with an identity operation applied to all other subspaces. We call this unitary operation the \emph{max exchange}. We proceed to prove two statements about the max-exchange operation. First, that the heat extracted from the hot bath is proportional to the change in average energy of the system; and second, that system state under repetition of said operation converges to the thermal state of the temperature that defines the stage $n$.

Consider the change in average energy of the target system under the exchange unitary. The only two populations that change are those of the $\ket{i}_{\raisebox{-1pt}{\tiny{$\Scal$}}}$ and $\ket{j}_{\raisebox{-1pt}{\tiny{$\Scal$}}}$. We label the increase in the population of $\ket{i}_{\raisebox{-1pt}{\tiny{$\Scal$}}}$ as $\delta p$. Then, we have
\begin{align}
    \Delta E_{\raisebox{-1pt}{\tiny{$\Scal$}}} &= \mathrm{tr} \left[ H_{\raisebox{-1pt}{\tiny{$\Scal$}}} \left( \varrho_{\raisebox{-1pt}{\tiny{$\Scal$}}}^\prime - \varrho_{\raisebox{-1pt}{\tiny{$\Scal$}}} \right) \right] = -\delta p \left( \omega_j - \omega_i \right).
\end{align}
On the other hand, the populations of the corresponding hot qubit (i.e., tracing out the target system and cold machine) change by the same amount, i.e., there is a move of $\delta p$ from $\ket{1}_{\raisebox{-1pt}{\tiny{$\Hcal$}}}^{(ij)}$ to $\ket{0}_{\raisebox{-1pt}{\tiny{$\Hcal$}}}^{(ij)}$. In order to rethermalise the hot qubit, the heat drawn from the hot bath is thus
\begin{align}
    \widetilde{\Delta} E_{\raisebox{-1pt}{\tiny{$\Hcal$}}} &= \delta p \; n \theta (\omega_j - \omega_i) = - n \theta \Delta E_{\raisebox{-1pt}{\tiny{$\Scal$}}}.
\end{align}
This is an expression conveniently independent of the pair $(i,j)$ that applies after an arbitrary number of repetitions of the max-exchange operation (which will use different pairs in general). 

\hrulefill

\textbf{Convergence of the max-exchange protocol to the virtual temperature.} To show that the max-exchange protocol indeed converges to the desired system state in each stage of the protocol, we first prove a rather general statement: given a state $\varrho$ diagonal in the energy eigenbasis, if we exchange any qubit subspace within this system with a virtual qubit of a particular virtual temperature, then the relative entropy of the target system w.r.t. the thermal state of that (virtual) temperature decreases.

To this end, consider the relative entropy of a state $\varrho$ that is diagonal in the energy eigenbasis to a thermal state $\tau$. Labelling the populations of $\varrho$ as $p_i$ and those of $\tau$ as $q_i$, this can be expressed as
\begin{align}
    D(\varrho||\tau) &= \sum_k p_k \log \left(\frac{p_k}{q_k}\right).
\end{align}
We now focus on a single-qubit subspace labelled by $\{i,j\}$, which leads to
\begin{align}
    D(\varrho||\tau) &= p_i \log \left(\frac{p_i}{q_i}\right) + p_j \log \left(\frac{p_j}{q_j}\right) + \sum_{k \notin \{i,j\}} p_k \log \left(\frac{p_k}{q_k}\right) \notag \\
        &= (p_i + p_j) \left[ \frac{p_i}{p_i + p_j} \log \left( \frac{\frac{p_i}{p_i+p_j}}{\frac{q_i}{q_i+q_j}} \frac{p_i+p_j}{q_i+q_j} \right) + \frac{p_j}{p_i + p_j} \log \left( \frac{\frac{p_j}{p_i+p_j}}{\frac{q_j}{q_i+q_j}} \frac{p_i+p_j}{q_i+q_j} \right) \right] + \sum_{k \notin \{i,j\}} p_k \log \left(\frac{p_k}{q_k}\right) \notag \\
        &= N \left( \bar{p}_i \log \frac{\bar{p}_i}{\bar{q}_i} + \bar{p}_j \log \frac{\bar{p}_j}{\bar{q}_j} + \log \frac{N}{N_{\raisebox{-1pt}{\tiny{$V$}}}} \right) + \sum_{k \notin \{i,j\}} p_k \log \frac{p_k}{q_k}.
\end{align}
In the last line we renormalise the populations within the qubit subspace and labelled the total populations of the system and thermal state qubit subspaces of interest by $N$ and $N_{\raisebox{-1pt}{\tiny{$V$}}}$, respectively. Labelling the normalised states within these subspaces as $\varrho_{\raisebox{-1pt}{\tiny{$V$}}}$ and $\tau_{\raisebox{-1pt}{\tiny{$V$}}}$ respectively, we have
\begin{align}
    D(\varrho||\tau) &= N \left[ D(\varrho_{\raisebox{-1pt}{\tiny{$V$}}}||\tau_{\raisebox{-1pt}{\tiny{$V$}}}) + \log \left(\frac{N}{N_{\raisebox{-1pt}{\tiny{$V$}}}} \right)\right] + \sum_{k \notin \{i,j\}} p_k \log \left(\frac{p_k}{q_k}\right).
\end{align}
Suppose now that this qubit subspace of the target system is exchanged with a qubit subspace of any machine that has the same temperature as the thermal state above. The only object that changes in the the above expression is $\varrho_{\raisebox{-1pt}{\tiny{$V$}}}$, since the norm $N$ remains the same. In addition, $\varrho_{\raisebox{-1pt}{\tiny{$V$}}}$ always gets closer to $\tau_{\raisebox{-1pt}{\tiny{$V$}}}$ under such an exchange~\cite{Silva_2016,Clivaz_2019E}, implying that the relative entropy always strictly decreases under such an operation. 

Returning to the max-exchange protocol, note that by construction, every virtual qubit in the machine that is exchanged with the qubit subspace $\{i,j\}$ of the target system in a given stage $n$ has the same virtual temperature, $\beta_n = \beta + n\theta (\beta - \beta_{\raisebox{-1pt}{\tiny{$\Hcal$}}})$. Thus the relative entropy of the system to the thermal state at this temperature always decreases under this operation, unless the operation does not shift any population, which happens only at the unique fixed point where every qubit subspace of the system is already at the virtual temperature $\beta_n$. By monotone convergence, the relative entropy must converge, and moreover converge to the value that it has at the fixed point of the operation, which is the thermal state at inverse temperature $\beta_n$. Note that rather than choosing the qubit subspace with maximum population difference to exchange we could also have picked at random from among the pairs $\{i,j\}$ and convergence would still hold; the max-exchange protocol simply ensures the fastest rate of convergence among these choices.

\hrulefill

\textbf{Choosing a large enough number of repetitions in each stage so that the overall heat cost is close to the idealised heat cost.} Given that the max-exchange protocol in stage $n$ converges to the thermal state that we label $\varrho_{\raisebox{-1pt}{\tiny{$\Scal$}}}^{(n)}$, given any error $\delta_{\raisebox{-1pt}{\tiny{$E$}}}$, we choose a number of repetitions $m_n$ that is large enough so that the difference between the average energy of the actual final state of this stage, which we label $\widetilde{\varrho}_{\raisebox{-1pt}{\tiny{$\Scal$}}}^{(n)}$, and that of the ideal state $\varrho_{\raisebox{-1pt}{\tiny{$\Scal$}}}^{(n)}$ is less than $\delta_{\raisebox{-1pt}{\tiny{$E$}}}$. In this case, the total heat cost over all stages is close to the idealised heat cost
\begin{align}
    \left| \widetilde{\Delta} E_{\raisebox{-1pt}{\tiny{$\Hcal$}}} - \widetilde{\Delta} E_{\raisebox{-1pt}{\tiny{$\Hcal$}}}^* \right| &=  \left| \sum_{n=1}^N \left\{- n \theta \, \mathrm{tr} \left[ H_{\raisebox{-1pt}{\tiny{$\Scal$}}} \left( \widetilde{\varrho}_{\raisebox{-1pt}{\tiny{$\Scal$}}}^{(n)} - \widetilde{\varrho}_{\raisebox{-1pt}{\tiny{$\Scal$}}}^{(n-1)} \right) \right]\right\} - \sum_{n=1}^N \left\{- n \theta \, \mathrm{tr} \left[ H_{\raisebox{-1pt}{\tiny{$\Scal$}}} \left( \varrho_{\raisebox{-1pt}{\tiny{$\Scal$}}}^{(n)} - \varrho_{\raisebox{-1pt}{\tiny{$\Scal$}}}^{(n-1)} \right) \right] \right\} \right| \notag \\
        &= \left| \sum_{n=0}^{N-1} \theta \, \mathrm{tr} \left[ H_{\raisebox{-1pt}{\tiny{$\Scal$}}} \left( \widetilde{\varrho}_{\raisebox{-1pt}{\tiny{$\Scal$}}}^{(n)} - \varrho_{\raisebox{-1pt}{\tiny{$\Scal$}}}^{(n)} \right) \right] - N \theta \left( \widetilde{\varrho}_{\raisebox{-1pt}{\tiny{$\Scal$}}}^{(N)} - \varrho_{\raisebox{-1pt}{\tiny{$\Scal$}}}^{(N)} \right) \right| \notag  \\
        &\leq 2 N \theta \delta_E = 2 \left(\frac{\beta^* - \beta}{\beta - \beta_{\raisebox{-1pt}{\tiny{$\Hcal$}}}} \right) \delta_E.
\end{align}
The number of repetitions in each stage $m_n$ required depends only upon the initial choice of $\beta^*$ and $N$.

\hrulefill

\textbf{Completing the proof.} Finally, suppose that one is given any target temperature $\beta^*$ and two arbitrarily small errors, $\epsilon_{\raisebox{-1pt}{\tiny{$\beta$}}}$ for the cooling and $\epsilon_{\raisebox{-1pt}{\tiny{$E$}}}$ for the heat cost, and asked to cool incoherently in such a way that achieves
\begin{align}
    \left| \beta^\prime - \beta^* \right| \leq \epsilon_{\raisebox{-1pt}{\tiny{$\beta$}}}, \\
    \left| \widetilde{\Delta} E_{\raisebox{-1pt}{\tiny{$\Hcal$}}} - \eta^{-1} \Delta F_{\raisebox{-1pt}{\tiny{$\Scal$}}}^{(\beta)} \right| \leq \epsilon_{\raisebox{-1pt}{\tiny{$E$}}}.
\end{align}
We proceed by first choosing a number of stages $N$ so that the idealised heat cost $\widetilde{\Delta} E_{\raisebox{-1pt}{\tiny{$\Hcal$}}}^*$ is within $\tfrac{\epsilon_{\raisebox{-1pt}{\tiny{$E$}}}}{2}$ to the Carnot-Landauer bound above. The idealised sequence of temperatures satisfies $\beta_{\raisebox{-1pt}{\tiny{$N$}}} = \beta^*$ by construction. Once $N$ is fixed, for each stage from $n=1$ to $N-1$ we choose a number of repetitions for each stage $m_n$ such that the actual heat cost is within $\tfrac{\epsilon_E}{2}$ of the idealised heat cost, as discussed above. This ensures that the total heat cost is within $\epsilon_{\raisebox{-1pt}{\tiny{$E$}}}$ of the bound. Finally, we check that the number of repetitions of the last stage $m_{\raisebox{-1pt}{\tiny{$N$}}}$ is large enough for us to be within $\epsilon_{\raisebox{-1pt}{\tiny{$\beta$}}}$ of $\beta^*$. If not, we increase the number of repetitions (this can only decrease the error in the heat cost anyway) until we are close enough, as required.

\end{proof}

\section{Comparison of Cooling Paradigms and Resources for Imperfect Cooling}
\label{app:imperfectcooling}

Although we have looked at a number of cooling protocols throughout to demonstrate the ability for perfect cooling in the asymptotic limit, here we focus on imperfect cooling behaviour, i.e., when all resources are restricted to be finite and thus a perfectly pure state cannot be attained. We have three main goals in doing so.
\begin{enumerate}
    \item To illustrate the finite trade-offs between the trinity of resources (energy, time, control complexity).
    \item To compare the behaviour of different constructions of the cooling unitary for machines of the same size (i.e., analysing the energy-time trade-off for for fixed control complexity).
    \item To demonstrate the increase in resources required for cooling in the thermodynamically self-contained paradigm of energy-preserving unitaries (i.e., incoherent control), as compared to coherently driven unitaries.
\end{enumerate}

\subsection{Rates of Resource Divergence for Linear Qubit Machine Sequence}
\label{app:linearqubitscaling}

Consider cooling a qubit target system with energy gap $\omega_{\raisebox{0pt}{\tiny{$\Scal$}}}$ by swapping it sequentially with a sequence of $N$ machine qubits of linearly increasing energy gaps. In Appendix~\ref{app:incoherentqubit}, we derived the deviation from the idealised heat dissipation in the incoherent control setting for a sequence of $N$ machines [see Eq.~\eqref{eq:abstractcostdiff}], which we repeat below:
\begin{align}
    \frac{1}{\eta} \left( F^* - F_0 \right) \leq \widetilde{\Delta} E^*_{\raisebox{-1pt}{\tiny{$\Hcal$}}} \leq \frac{1}{\eta} \left( F^* - F_0 \right) + \frac{\omega_{\raisebox{0pt}{\tiny{$\Scal$}}}}{N} \left( \frac{\beta^* - \beta}{\beta - \beta_{\raisebox{-1pt}{\tiny{$H$}}}} \right).
\end{align}
We can immediately adapt this result to the paradigm of coherent control by taking $\beta_{\raisebox{-1pt}{\tiny{$H$}}} = 0$ and replacing the heat by work, which yields
\begin{align}
    \Delta F_{\raisebox{0pt}{\tiny{$\Scal$}}} \leq W \leq \Delta F_{\raisebox{0pt}{\tiny{$\Scal$}}} + \frac{\omega_{\raisebox{0pt}{\tiny{$\Scal$}}}}{N} \left( \frac{\beta^*}{\beta} - 1 \right).
\end{align}
Since the above inequalities are derived from the left and right Riemann sums of an integral, as $N$ becomes large, one can expect that $W$ lies roughly halfway between both extremes; we can thus cast the scaling in the approximate form
\begin{align}\label{eq:finitescaling}
    \left[ \frac{W - \Delta F}{\omega_{\raisebox{0pt}{\tiny{$\Scal$}}}} \right] N \sim \frac{1}{2} \left( \frac{\beta^*}{\beta} - 1 \right).
\end{align}
Thus, we see that the relevant quantifier of the energy resource here is the extra work cost above the Landauer limit relative to the system energy. Additionally, the quantifier of how much said resource is required (per machine qubit) is $\beta^*/\beta - 1$, which, for cold enough final temperatures, is approximately the ratio $\beta^*/\beta$.

Returning to the incoherent control paradigm, analysing the scaling behaviour between energy and time is more complicated. On the one hand, the expression above is only slightly modified, with the work being replaced by the heat dissipated multiplied by the Carnot factor:
\begin{align}\label{eq:finitescalingincoherent}
    \left[ \frac{\eta \; \Delta E_{\raisebox{0pt}{\tiny{$\Hcal$}}} - \Delta F}{\omega_{\raisebox{0pt}{\tiny{$\Scal$}}}} \right] N \sim \frac{1}{2} \left( \frac{\beta^*}{\beta} - 1 \right),
\end{align}
which is consistent with the work-to-heat efficiency of a Carnot engine. However, in the case of incoherent control, since the population swap only takes place within a subspace of the two-qubit machine, the total population is not completely exchanged in a single operation (in contrast to that in the coherent control setting). Thus the number of operations here required to transfer a desired amount of population to the ground state of the target is greater than the number of machine qubits $N$. To make a fair comparison, one could either compare the same number of machine qubits but swap repeatedly (with rethermalisation of the machine in between operations)---thereby fixing the control complexity at the expense of longer time---or one could increase the number of machine qubits and count time by the number of two-level swaps---thereby fixing time to be equal at the expense of increased control complexity overall. We investigate both methods in the coming section.

\subsection{Comparison of Coherent and Incoherent Control}
\label{app:comparisoncoherentincoherent}

Intuitively, the incoherent control paradigm requires the utilisation of a greater amount of resources (albeit less overall control in general) than the coherent control counterpart because of two distinct disadvantages. First, the temperature of the baths plays a substantial role in cooling performance. Consider the example of a \texttt{SWAP} gate applied between a system and machine qubit: in the coherent control case, this operation transforms the target system to the state of the thermal machine qubit, characterised by the Gibbs ratio of ground-state to excited-state population. In the incoherent control case, one requires the addition of a thermal qubit from the hot bath to render said operation energy preserving; as a result, the Gibbs ratio of the virtual qubit that the target system swaps with is, in general, worse than that of the coherent control setting, and only becomes equal in the limit of an infinite temperature hot bath. This is the first disadvantage. The second disadvantage is that in the incoherent control setting, the target system swaps with only a subspace of the machine rather than the entire one, i.e., it is swapped with a virtual qubit. Thus, the exchange of population is only partial as compared to the coherent control case: in the limiting case of an infinite temperature hot bath, said factor goes to $\tfrac{1}{2}$ for all relevant two-level subspaces. This implies that a greater number of operations, and thus time, is required in the incoherent control paradigm in order to achieve a similar result as its coherent control counterpart.

We illustrate this behaviour via the following example. The system is a degenerate qubit (beginning in the maximally mixed state), and we fix the final target ground-state population ($p=0.99$, corresponding to $\epsilon = 1-p=0.01$). Even in this simple case, the optimal finite-resource protocols with coherent and incoherent control are not known; we therefore compare protocols from each setting that make use of machines of a similar structure, namely swapping with machine qubits (virtual ones, in the incoherent control setting) of linearly increasing energy gaps.

More specifically, the coherent control cooling protocol employed is that of a sequence of swaps with machine qubits of linearly increasing energy gaps, and for the fixed target population, we can calculate the surplus work cost over the Landauer limit as a function of the number $N$ of operations (which corresponds in this case to the number of machine qubits). In the incoherent control case, we take the hot bath to be at infinite temperature, allowing for the potential saturation of the Landauer limit as in the coherent case. In this way we isolate the disadvantage that arises due to working in degenerate subspaces in our analysis. Here too we take a linear sequence of energy gaps for the cold (and hot) baths, with a single operation step corresponding to a three-level energy-conserving exchange involving the qubit taken from each of the hot and cold parts of the machine, i.e., $\ket{1}_{\raisebox{-1pt}{\tiny{$\Scal$}}} \ket{0}_{\raisebox{-1pt}{\tiny{$\Ccal$}}} \ket{0}_{\raisebox{-1pt}{\tiny{$\Hcal$}}} \leftrightarrow \ket{0}_{\raisebox{-1pt}{\tiny{$\Scal$}}} \ket{1}_{\raisebox{-1pt}{\tiny{$\Ccal$}}} \ket{1}_{\raisebox{-1pt}{\tiny{$\Hcal$}}}$. As mentioned previously, for an incoherent control protocol of fixed overall machine size, there are essentially two extremal methods of implementation. The first is to identify $N$ two-level subspaces of the total machine with distinct energy gaps and perform the sequence of virtual swaps between them and the target; in the language of Appendix~\ref{app:incoherentcoolingfinitetemperature}, we therefore have $N$ different stages with a single step within each stage (no repetitions) before moving on to the next stage. The second is to take $N/m$ two-level subspaces and swap the target with each virtual qubit $m$ times before moving on to the next; in other words, we here have $N/m$ different stages with $m$ steps (repetitions) within each stage. For the same fixed ground-state population, we plot the surplus work cost (energy drawn from the hot bath in the case of incoherent control) against the total machine size and number of two-level unitary swaps, as characterised by $N$, for both of these incoherent control adaptations, comparing them to the coherent control paradigm in Fig.~\ref{fig:cohvincoh}. 

In both control paradigms, we see that the deviation of the energy cost above the Landauer limit scales inversely with the number of operations [as expected from Eqs.~\eqref{eq:finitescaling} and~\eqref{eq:finitescalingincoherent}], but the proportionality constant is worse in the case of incoherent control. Moreover, the incoherent control paradigm with no repetitions within stages outperforms that with multiple repetitions, as intuitively expected since the former protocol corresponds to one for which the spacing between distinct energy gaps that are utilised is smaller, allowing us to stay closer to the reversible limit in each step. In our example, the no repetition incoherent control protocol is around 3 times worse than the coherent control protocol and the incoherent control protocol with $m=5$ repetitions is around 5.3 times worse, implying that one would require that many times the number of operations (i.e., that much more time) to achieve the same performance with incoherent control paradigm as with coherent control.

\begin{figure}[h]
    \centering
    \includegraphics[width=0.6\linewidth]{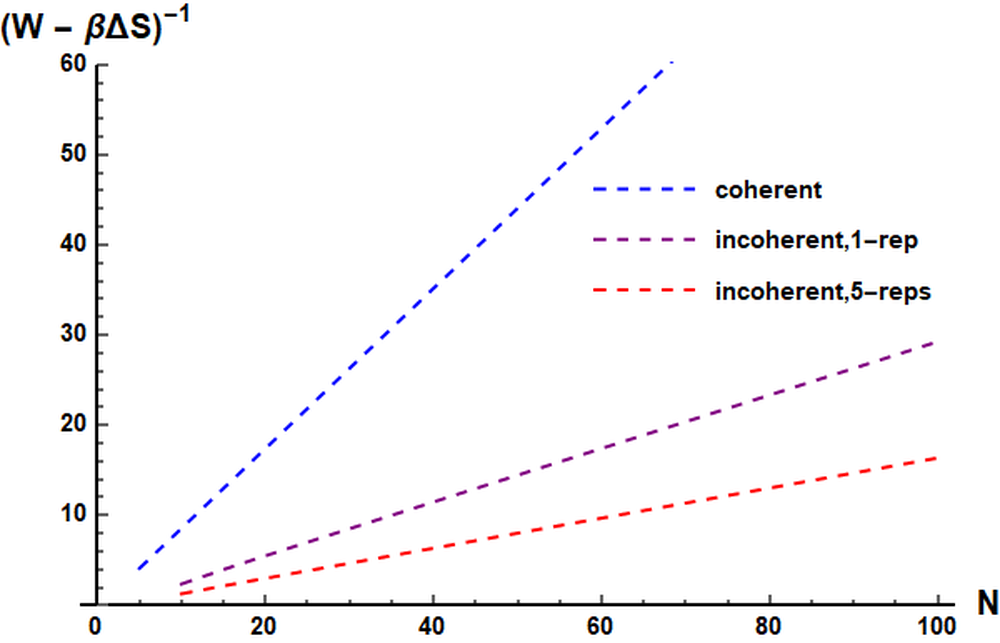}
    \caption{\emph{Imperfect Cooling with Coherent and Incoherent Control.} We compare the performance of coherent and incoherent control protocols for cooling a degenerate qubit target by swapping it with machine qubits with linearly increasing energy. The final ground-state population is fixed to be $0.99$. The inverse of the surplus work cost $W - \beta \widetilde{\Delta} S_{\raisebox{-1pt}{\tiny{$\Scal$}}}$ (with $\beta = 1$) is plotted (in units of the smallest machine energy gap, $\omega_{\raisebox{-1pt}{\tiny{$\Mcal$}}}^{\textup{min}}$) against the total number of unitary operations, with the temperature of the hot bath in the incoherent control protocols set to $\beta_H = \infty$ in order to make meaningful comparison to the coherent control case. We see that the coherent control protocol (blue) outperforms the two incoherent ones (purple, red) at any given time. As discussed in the text, there are two choices for how to implement an incoherent control protocol of this type with fixed control complexity: The red line corresponds to a protocol in which a machine (subspace) with the same energy gap is reused 5 times before moving on to the next; on the other hand, the purple line depicts the case where there are no repetitions within each stage defined by a distinct energy gap in the machine. By inspection, the single-use incoherent protocol (purple) requires approximately 3 times more unitaries to achieve the same efficiency as the coherent one (blue), whereas the five-repetition incoherent protocol (red) requires approximately 5.3 times as many unitaries as the coherent one.}
    \label{fig:cohvincoh}
\end{figure}


\begin{thebibliography}{58}%
\makeatletter
\providecommand \@ifxundefined [1]{%
 \@ifx{#1\undefined}
}%
\providecommand \@ifnum [1]{%
 \ifnum #1\expandafter \@firstoftwo
 \else \expandafter \@secondoftwo
 \fi
}%
\providecommand \@ifx [1]{%
 \ifx #1\expandafter \@firstoftwo
 \else \expandafter \@secondoftwo
 \fi
}%
\providecommand \natexlab [1]{#1}%
\providecommand \enquote  [1]{#1}%
\providecommand \bibnamefont  [1]{#1}%
\providecommand \bibfnamefont [1]{#1}%
\providecommand \citenamefont [1]{#1}%
\providecommand \href@noop [0]{\@secondoftwo}%
\providecommand \href [0]{\begingroup \@sanitize@url \@href}%
\providecommand \@href[1]{\@@startlink{#1}\@@href}%
\providecommand \@@href[1]{\endgroup#1\@@endlink}%
\providecommand \@sanitize@url [0]{\catcode `\\12\catcode `\$12\catcode
  `\&12\catcode `\#12\catcode `\^12\catcode `\_12\catcode `\%12\relax}%
\providecommand \@@startlink[1]{}%
\providecommand \@@endlink[0]{}%
\providecommand \url  [0]{\begingroup\@sanitize@url \@url }%
\providecommand \@url [1]{\endgroup\@href {#1}{\urlprefix }}%
\providecommand \urlprefix  [0]{URL }%
\providecommand \Eprint [0]{\href }%
\providecommand \doibase [0]{https://doi.org/}%
\providecommand \selectlanguage [0]{\@gobble}%
\providecommand \bibinfo  [0]{\@secondoftwo}%
\providecommand \bibfield  [0]{\@secondoftwo}%
\providecommand \translation [1]{[#1]}%
\providecommand \BibitemOpen [0]{}%
\providecommand \bibitemStop [0]{}%
\providecommand \bibitemNoStop [0]{.\EOS\space}%
\providecommand \EOS [0]{\spacefactor3000\relax}%
\providecommand \BibitemShut  [1]{\csname bibitem#1\endcsname}%
\let\auto@bib@innerbib\@empty
%</preamble>
\bibitem [{\citenamefont {Gisin}\ \emph {et~al.}(2002)\citenamefont {Gisin},
  \citenamefont {Ribordy}, \citenamefont {Tittel},\ and\ \citenamefont
  {Zbinden}}]{GisinRibordyTittelZbinden2002}%
  \BibitemOpen
  \bibfield  {author} {\bibinfo {author} {\bibfnamefont {Nicolas}\ \bibnamefont
  {Gisin}}, \bibinfo {author} {\bibfnamefont {Gr{\'e}goire}\ \bibnamefont
  {Ribordy}}, \bibinfo {author} {\bibfnamefont {Wolfgang}\ \bibnamefont
  {Tittel}}, \ and\ \bibinfo {author} {\bibfnamefont {Hugo}\ \bibnamefont
  {Zbinden}},\ }\emph {\enquote {\bibinfo {title} {Quantum cryptography},}\
  }\href {https://doi.org/10.1103/RevModPhys.74.145} {\bibfield  {journal}
  {\bibinfo  {journal} {Rev. Mod. Phys.}\ }\textbf {\bibinfo {volume} {74}},\
  \bibinfo {pages} {145} (\bibinfo {year} {2002})},\ \Eprint
  {http://arxiv.org/abs/quant-ph/0101098} {arXiv:quant-ph/0101098}\BibitemShut
  {NoStop}%
\bibitem [{\citenamefont {Pirandola}\ \emph {et~al.}(2020)\citenamefont
  {Pirandola}, \citenamefont {Andersen}, \citenamefont {Banchi}, \citenamefont
  {Berta}, \citenamefont {Bunandar}, \citenamefont {Colbeck}, \citenamefont
  {Englund}, \citenamefont {Gehring}, \citenamefont {Lupo}, \citenamefont
  {Ottaviani}, \citenamefont {Pereira}, \citenamefont {Razavi}, \citenamefont
  {Shamsul~Shaari}, \citenamefont {Tomamichel}, \citenamefont {Usenko},
  \citenamefont {Vallone}, \citenamefont {Villoresi},\ and\ \citenamefont
  {Wallden}}]{PirandolaEtAl2020}%
  \BibitemOpen
  \bibfield  {author} {\bibinfo {author} {\bibfnamefont {Stefano}\ \bibnamefont
  {Pirandola}}, \bibinfo {author} {\bibfnamefont {Ulrik~L.}\ \bibnamefont
  {Andersen}}, \bibinfo {author} {\bibfnamefont {Leonardo}\ \bibnamefont
  {Banchi}}, \bibinfo {author} {\bibfnamefont {Mario}\ \bibnamefont {Berta}},
  \bibinfo {author} {\bibfnamefont {Darius}\ \bibnamefont {Bunandar}}, \bibinfo
  {author} {\bibfnamefont {Roger}\ \bibnamefont {Colbeck}}, \bibinfo {author}
  {\bibfnamefont {Dirk}\ \bibnamefont {Englund}}, \bibinfo {author}
  {\bibfnamefont {Tobias}\ \bibnamefont {Gehring}}, \bibinfo {author}
  {\bibfnamefont {Cosmo}\ \bibnamefont {Lupo}}, \bibinfo {author}
  {\bibfnamefont {Carlo}\ \bibnamefont {Ottaviani}}, \bibinfo {author}
  {\bibfnamefont {Jason~L.}\ \bibnamefont {Pereira}}, \bibinfo {author}
  {\bibfnamefont {Mohsen}\ \bibnamefont {Razavi}}, \bibinfo {author}
  {\bibfnamefont {Jesni}\ \bibnamefont {Shamsul~Shaari}}, \bibinfo {author}
  {\bibfnamefont {Marco}\ \bibnamefont {Tomamichel}}, \bibinfo {author}
  {\bibfnamefont {Vladyslav~C.}\ \bibnamefont {Usenko}}, \bibinfo {author}
  {\bibfnamefont {Giuseppe}\ \bibnamefont {Vallone}}, \bibinfo {author}
  {\bibfnamefont {Paolo}\ \bibnamefont {Villoresi}}, \ and\ \bibinfo {author}
  {\bibfnamefont {Petros}\ \bibnamefont {Wallden}},\ }\emph {\enquote {\bibinfo
  {title} {Advances in quantum cryptography},}\ }\href
  {https://doi.org/10.1364/AOP.361502} {\bibfield  {journal} {\bibinfo
  {journal} {Adv. Opt. Photon.}\ }\textbf {\bibinfo {volume} {12}},\ \bibinfo
  {pages} {1012} (\bibinfo {year} {2020})},\ \Eprint
  {http://arxiv.org/abs/1906.01645} {arXiv:1906.01645}\BibitemShut {NoStop}%
\bibitem [{\citenamefont {Giovannetti}\ \emph {et~al.}(2011)\citenamefont
  {Giovannetti}, \citenamefont {Lloyd},\ and\ \citenamefont
  {Maccone}}]{GiovannettiLloydMaccone2011}%
  \BibitemOpen
  \bibfield  {author} {\bibinfo {author} {\bibfnamefont {Vittorio}\
  \bibnamefont {Giovannetti}}, \bibinfo {author} {\bibfnamefont {Seth}\
  \bibnamefont {Lloyd}}, \ and\ \bibinfo {author} {\bibfnamefont {Lorenzo}\
  \bibnamefont {Maccone}},\ }\emph {\enquote {\bibinfo {title} {{Advances in
  quantum metrology}},}\ }\href {https://doi.org/10.1038/nphoton.2011.35}
  {\bibfield  {journal} {\bibinfo  {journal} {Nat. Photonics}\ }\textbf
  {\bibinfo {volume} {5}},\ \bibinfo {pages} {222} (\bibinfo {year} {2011})},\
  \Eprint {http://arxiv.org/abs/1102.2318} {arXiv:1102.2318}\BibitemShut
  {NoStop}%
\bibitem [{\citenamefont {T{\'o}th}\ and\ \citenamefont
  {Apellaniz}(2014)}]{TothApellaniz2014}%
  \BibitemOpen
  \bibfield  {author} {\bibinfo {author} {\bibfnamefont {G{\'e}za}\
  \bibnamefont {T{\'o}th}}\ and\ \bibinfo {author} {\bibfnamefont {Iagoba}\
  \bibnamefont {Apellaniz}},\ }\emph {\enquote {\bibinfo {title} {Quantum
  metrology from a quantum information science perspective},}\ }\href
  {https://doi.org/10.1088/1751-8113/47/42/424006} {\bibfield  {journal}
  {\bibinfo  {journal} {J. Phys. A: Math. Theor.}\ }\textbf {\bibinfo {volume}
  {47}},\ \bibinfo {pages} {424006} (\bibinfo {year} {2014})},\ \Eprint
  {http://arxiv.org/abs/1405.4878} {arXiv:1405.4878}\BibitemShut {NoStop}%
\bibitem [{\citenamefont {Demkowicz-Dobrza{\'n}ski}\ \emph
  {et~al.}(2015)\citenamefont {Demkowicz-Dobrza{\'n}ski}, \citenamefont
  {Jarzyna},\ and\ \citenamefont
  {Ko{\l}ody{\'n}ski}}]{DemkowiczDobrzanskiJarzynaKolodynski2015}%
  \BibitemOpen
  \bibfield  {author} {\bibinfo {author} {\bibfnamefont {Rafa{\l}}\
  \bibnamefont {Demkowicz-Dobrza{\'n}ski}}, \bibinfo {author} {\bibfnamefont
  {Marcin}\ \bibnamefont {Jarzyna}}, \ and\ \bibinfo {author} {\bibfnamefont
  {Janek}\ \bibnamefont {Ko{\l}ody{\'n}ski}},\ }\emph {\enquote {\bibinfo
  {title} {Quantum limits in optical interferometry},}\ }\href
  {https://doi.org/10.1016/bs.po.2015.02.003} {\bibfield  {journal} {\bibinfo
  {journal} {Prog. Optics}\ }\textbf {\bibinfo {volume} {60}},\ \bibinfo
  {pages} {345} (\bibinfo {year} {2015})},\ \Eprint
  {http://arxiv.org/abs/1405.7703} {arXiv:1405.7703}\BibitemShut {NoStop}%
\bibitem [{\citenamefont {Preskill}(1997)}]{Preskill1997}%
  \BibitemOpen
  \bibfield  {author} {\bibinfo {author} {\bibfnamefont {John}\ \bibnamefont
  {Preskill}},\ }\emph {\enquote {\bibinfo {title} {{Fault-tolerant quantum
  computation}},}\ }in\ \href {https://doi.org/10.1142/9789812385253_0008}
  {\emph {\bibinfo {booktitle} {Introduction to Quantum Computation}}},\
  \bibinfo {editor} {edited by\ \bibinfo {editor} {\bibfnamefont {H.-K.}\
  \bibnamefont {Lo}}, \bibinfo {editor} {\bibfnamefont {Sandu}\ \bibnamefont
  {Popescu}}, \ and\ \bibinfo {editor} {\bibfnamefont {T.~P.}\ \bibnamefont
  {Spiller}}}\ (\bibinfo  {publisher} {World-Scientific, Singapore},\ \bibinfo
  {year} {1997})\ Chap.~\bibinfo {chapter} {8}, pp.\ \bibinfo {pages}
  {213--269},\ \Eprint {http://arxiv.org/abs/quant-ph/9712048}
  {arXiv:quant-ph/9712048}\BibitemShut {NoStop}%
\bibitem [{\citenamefont {Preskill}(2018)}]{Preskill2018}%
  \BibitemOpen
  \bibfield  {author} {\bibinfo {author} {\bibfnamefont {John}\ \bibnamefont
  {Preskill}},\ }\emph {\enquote {\bibinfo {title} {{Quantum Computing in the
  NISQ era and beyond}},}\ }\href {https://doi.org/10.22331/q-2018-08-06-79}
  {\bibfield  {journal} {\bibinfo  {journal} {Quantum}\ }\textbf {\bibinfo
  {volume} {2}},\ \bibinfo {pages} {79} (\bibinfo {year} {2018})},\ \Eprint
  {http://arxiv.org/abs/1801.00862} {arXiv:1801.00862}\BibitemShut {NoStop}%
\bibitem [{\citenamefont {Guryanova}\ \emph {et~al.}(2020)\citenamefont
  {Guryanova}, \citenamefont {Friis},\ and\ \citenamefont
  {Huber}}]{Guryanova2020}%
  \BibitemOpen
  \bibfield  {author} {\bibinfo {author} {\bibfnamefont {Yelena}\ \bibnamefont
  {Guryanova}}, \bibinfo {author} {\bibfnamefont {Nicolai}\ \bibnamefont
  {Friis}}, \ and\ \bibinfo {author} {\bibfnamefont {Marcus}\ \bibnamefont
  {Huber}},\ }\emph {\enquote {\bibinfo {title} {{Ideal Projective Measurements
  Have Infinite Resource Costs}},}\ }\href
  {https://doi.org/10.22331/q-2020-01-13-222} {\bibfield  {journal} {\bibinfo
  {journal} {{Quantum}}\ }\textbf {\bibinfo {volume} {4}},\ \bibinfo {pages}
  {222} (\bibinfo {year} {2020})},\ \Eprint {http://arxiv.org/abs/1805.11899}
  {arXiv:1805.11899}\BibitemShut {NoStop}%
\bibitem [{\citenamefont {Erker}\ \emph {et~al.}(2017)\citenamefont {Erker},
  \citenamefont {Mitchison}, \citenamefont {Silva}, \citenamefont {Woods},
  \citenamefont {Brunner},\ and\ \citenamefont {Huber}}]{Erker_2017}%
  \BibitemOpen
  \bibfield  {author} {\bibinfo {author} {\bibfnamefont {Paul}\ \bibnamefont
  {Erker}}, \bibinfo {author} {\bibfnamefont {Mark~T.}\ \bibnamefont
  {Mitchison}}, \bibinfo {author} {\bibfnamefont {Ralph}\ \bibnamefont
  {Silva}}, \bibinfo {author} {\bibfnamefont {Mischa~P.}\ \bibnamefont
  {Woods}}, \bibinfo {author} {\bibfnamefont {Nicolas}\ \bibnamefont
  {Brunner}}, \ and\ \bibinfo {author} {\bibfnamefont {Marcus}\ \bibnamefont
  {Huber}},\ }\emph {\enquote {\bibinfo {title} {{Autonomous Quantum Clocks:
  Does Thermodynamics Limit Our Ability to Measure Time?}}}\ }\href
  {https://doi.org/10.1103/physrevx.7.031022} {\bibfield  {journal} {\bibinfo
  {journal} {Phys. Rev. X}\ }\textbf {\bibinfo {volume} {7}},\ \bibinfo {pages}
  {031022} (\bibinfo {year} {2017})},\ \Eprint
  {http://arxiv.org/abs/1609.06704} {arXiv:1609.06704}\BibitemShut {NoStop}%
\bibitem [{\citenamefont {Schwarzhans}\ \emph {et~al.}(2021)\citenamefont
  {Schwarzhans}, \citenamefont {Lock}, \citenamefont {Erker}, \citenamefont
  {Friis},\ and\ \citenamefont {Huber}}]{SchwarzhansLockErkerFriisHuber2021}%
  \BibitemOpen
  \bibfield  {author} {\bibinfo {author} {\bibfnamefont {Emanuel}\ \bibnamefont
  {Schwarzhans}}, \bibinfo {author} {\bibfnamefont {Maximilian P.~E.}\
  \bibnamefont {Lock}}, \bibinfo {author} {\bibfnamefont {Paul}\ \bibnamefont
  {Erker}}, \bibinfo {author} {\bibfnamefont {Nicolai}\ \bibnamefont {Friis}},
  \ and\ \bibinfo {author} {\bibfnamefont {Marcus}\ \bibnamefont {Huber}},\
  }\emph {\enquote {\bibinfo {title} {{Autonomous Temporal Probability
  Concentration: Clockworks and the Second Law of Thermodynamics}},}\ }\href
  {https://doi.org/10.1103/PhysRevX.11.011046} {\bibfield  {journal} {\bibinfo
  {journal} {Phys. Rev. X}\ }\textbf {\bibinfo {volume} {11}},\ \bibinfo
  {pages} {011046} (\bibinfo {year} {2021})},\ \Eprint
  {http://arxiv.org/abs/2007.01307} {arXiv:2007.01307}\BibitemShut {NoStop}%
\bibitem [{\citenamefont {Landauer}(1961)}]{Landauer_1961}%
  \BibitemOpen
  \bibfield  {author} {\bibinfo {author} {\bibfnamefont {Rolf}\ \bibnamefont
  {Landauer}},\ }\emph {\enquote {\bibinfo {title} {{Irreversibility and Heat
  Generation in the Computing Process}},}\ }\href
  {https://doi.org/10.1147/rd.53.0183} {\bibfield  {journal} {\bibinfo
  {journal} {IBM J. Res. Dev.}\ }\textbf {\bibinfo {volume} {5}},\ \bibinfo
  {pages} {183} (\bibinfo {year} {1961})}\BibitemShut {NoStop}%
\bibitem [{\citenamefont {Nernst}(1906)}]{Nernst_1906}%
  \BibitemOpen
  \bibfield  {author} {\bibinfo {author} {\bibfnamefont {Walther}\ \bibnamefont
  {Nernst}},\ }\emph {\enquote {\bibinfo {title} {{{\"{U}}ber die Beziehung
  zwischen W{\"{a}}rmeentwicklung und maximaler Arbeit bei kondensierten
  Systemen.}}}\ }in\ \href {https://archive.org/details/mobot31753002089495}
  {\emph {\bibinfo {booktitle} {Sitzungsberichte der K{\"{o}}niglich
  Preussischen Akademie der Wissenschaften}}}\ (\bibinfo {address} {Berlin},\
  \bibinfo {year} {1906})\ pp.\ \bibinfo {pages} {933--940}\BibitemShut
  {NoStop}%
\bibitem [{\citenamefont {Ticozzi}\ and\ \citenamefont
  {Viola}(2014)}]{Ticozzi_2014}%
  \BibitemOpen
  \bibfield  {author} {\bibinfo {author} {\bibfnamefont {Francesco}\
  \bibnamefont {Ticozzi}}\ and\ \bibinfo {author} {\bibfnamefont {Lorenza}\
  \bibnamefont {Viola}},\ }\emph {\enquote {\bibinfo {title} {{Quantum
  resources for purification and cooling: fundamental limits and
  opportunities}},}\ }\href {https://doi.org/10.1038/srep05192} {\bibfield
  {journal} {\bibinfo  {journal} {Sci. Rep.}\ }\textbf {\bibinfo {volume}
  {4}},\ \bibinfo {pages} {5192} (\bibinfo {year} {2014})},\ \Eprint
  {http://arxiv.org/abs/1403.8143} {arXiv:1403.8143}\BibitemShut {NoStop}%
\bibitem [{\citenamefont {Masanes}\ and\ \citenamefont
  {Oppenheim}(2017)}]{Masanes_2017}%
  \BibitemOpen
  \bibfield  {author} {\bibinfo {author} {\bibfnamefont {Llu{\'\i}s}\
  \bibnamefont {Masanes}}\ and\ \bibinfo {author} {\bibfnamefont {Jonathan}\
  \bibnamefont {Oppenheim}},\ }\emph {\enquote {\bibinfo {title} {{A general
  derivation and quantification of the third law of thermodynamics}},}\ }\href
  {https://doi.org/10.1038/ncomms14538} {\bibfield  {journal} {\bibinfo
  {journal} {Nat. Commun.}\ }\textbf {\bibinfo {volume} {8}},\ \bibinfo {pages}
  {14538} (\bibinfo {year} {2017})},\ \Eprint {http://arxiv.org/abs/1412.3828}
  {arXiv:1412.3828}\BibitemShut {NoStop}%
\bibitem [{\citenamefont {Wilming}\ and\ \citenamefont
  {Gallego}(2017)}]{Wilming_2017}%
  \BibitemOpen
  \bibfield  {author} {\bibinfo {author} {\bibfnamefont {Henrik}\ \bibnamefont
  {Wilming}}\ and\ \bibinfo {author} {\bibfnamefont {Rodrigo}\ \bibnamefont
  {Gallego}},\ }\emph {\enquote {\bibinfo {title} {{Third Law of Thermodynamics
  as a Single Inequality}},}\ }\href
  {https://doi.org/10.1103/PhysRevX.7.041033} {\bibfield  {journal} {\bibinfo
  {journal} {Phys. Rev. X}\ }\textbf {\bibinfo {volume} {7}},\ \bibinfo {pages}
  {041033} (\bibinfo {year} {2017})},\ \Eprint
  {http://arxiv.org/abs/1701.07478} {arXiv:1701.07478}\BibitemShut {NoStop}%
\bibitem [{\citenamefont {Freitas}\ \emph {et~al.}(2018)\citenamefont
  {Freitas}, \citenamefont {Gallego}, \citenamefont {Masanes},\ and\
  \citenamefont {Paz}}]{Freitas_2018}%
  \BibitemOpen
  \bibfield  {author} {\bibinfo {author} {\bibfnamefont {Nahuel}\ \bibnamefont
  {Freitas}}, \bibinfo {author} {\bibfnamefont {Rodrigo}\ \bibnamefont
  {Gallego}}, \bibinfo {author} {\bibfnamefont {Llu{\'{i}}s}\ \bibnamefont
  {Masanes}}, \ and\ \bibinfo {author} {\bibfnamefont {Juan~Pablo}\
  \bibnamefont {Paz}},\ }\emph {\enquote {\bibinfo {title} {{Cooling to
  Absolute Zero: The Unattainability Principle}},}\ }in\ \href
  {https://doi.org/10.1007/978-3-319-99046-0_25} {\emph {\bibinfo {booktitle}
  {{Thermodynamics in the Quantum Regime}}}},\ \bibinfo {editor} {edited by\
  \bibinfo {editor} {\bibfnamefont {Felix}\ \bibnamefont {Binder}}, \bibinfo
  {editor} {\bibfnamefont {Luis~A}\ \bibnamefont {Correa}}, \bibinfo {editor}
  {\bibfnamefont {Christian}\ \bibnamefont {Gogolin}}, \bibinfo {editor}
  {\bibfnamefont {Janet}\ \bibnamefont {Anders}}, \ and\ \bibinfo {editor}
  {\bibfnamefont {Gerardo}\ \bibnamefont {Adesso}}}\ (\bibinfo  {publisher}
  {Springer International Publishing},\ \bibinfo {address} {Cham,
  Switzerland},\ \bibinfo {year} {2018})\ Chap.~\bibinfo {chapter} {25}, pp.\
  \bibinfo {pages} {597--622},\ \Eprint {http://arxiv.org/abs/1911.06377}
  {arXiv:1911.06377}\BibitemShut {NoStop}%
\bibitem [{\citenamefont {Scharlau}\ and\ \citenamefont
  {M{\"u}ller}(2018)}]{Scharlau_2018}%
  \BibitemOpen
  \bibfield  {author} {\bibinfo {author} {\bibfnamefont {Jakob}\ \bibnamefont
  {Scharlau}}\ and\ \bibinfo {author} {\bibfnamefont {Markus~P.}\ \bibnamefont
  {M{\"u}ller}},\ }\emph {\enquote {\bibinfo {title} {{Quantum Horn's lemma,
  finite heat baths, and the third law of thermodynamics}},}\ }\href
  {https://doi.org/10.22331/q-2018-02-22-54} {\bibfield  {journal} {\bibinfo
  {journal} {Quantum}\ }\textbf {\bibinfo {volume} {2}},\ \bibinfo {pages} {54}
  (\bibinfo {year} {2018})},\ \Eprint {http://arxiv.org/abs/1605.06092}
  {arXiv:1605.06092}\BibitemShut {NoStop}%
\bibitem [{\citenamefont {Ac{\'{\i}}n}\ \emph {et~al.}(2018)\citenamefont
  {Ac{\'{\i}}n}, \citenamefont {Bloch}, \citenamefont {Buhrman}, \citenamefont
  {Calarco}, \citenamefont {Eichler}, \citenamefont {Eisert}, \citenamefont
  {Esteve}, \citenamefont {Gisin}, \citenamefont {Glaser}, \citenamefont
  {Jelezko}, \citenamefont {Kuhr}, \citenamefont {Lewenstein}, \citenamefont
  {Riedel}, \citenamefont {Schmidt}, \citenamefont {Thew}, \citenamefont
  {Wallraff}, \citenamefont {Walmsley},\ and\ \citenamefont
  {Wilhelm}}]{Acin_2018}%
  \BibitemOpen
  \bibfield  {author} {\bibinfo {author} {\bibfnamefont {Antonio}\ \bibnamefont
  {Ac{\'{\i}}n}}, \bibinfo {author} {\bibfnamefont {Immanuel}\ \bibnamefont
  {Bloch}}, \bibinfo {author} {\bibfnamefont {Harry}\ \bibnamefont {Buhrman}},
  \bibinfo {author} {\bibfnamefont {Tommaso}\ \bibnamefont {Calarco}}, \bibinfo
  {author} {\bibfnamefont {Christopher}\ \bibnamefont {Eichler}}, \bibinfo
  {author} {\bibfnamefont {Jens}\ \bibnamefont {Eisert}}, \bibinfo {author}
  {\bibfnamefont {Daniel}\ \bibnamefont {Esteve}}, \bibinfo {author}
  {\bibfnamefont {Nicolas}\ \bibnamefont {Gisin}}, \bibinfo {author}
  {\bibfnamefont {Steffen~J.}\ \bibnamefont {Glaser}}, \bibinfo {author}
  {\bibfnamefont {Fedor}\ \bibnamefont {Jelezko}}, \bibinfo {author}
  {\bibfnamefont {Stefan}\ \bibnamefont {Kuhr}}, \bibinfo {author}
  {\bibfnamefont {Maciej}\ \bibnamefont {Lewenstein}}, \bibinfo {author}
  {\bibfnamefont {Max~F.}\ \bibnamefont {Riedel}}, \bibinfo {author}
  {\bibfnamefont {Piet~O.}\ \bibnamefont {Schmidt}}, \bibinfo {author}
  {\bibfnamefont {Rob}\ \bibnamefont {Thew}}, \bibinfo {author} {\bibfnamefont
  {Andreas}\ \bibnamefont {Wallraff}}, \bibinfo {author} {\bibfnamefont {Ian}\
  \bibnamefont {Walmsley}}, \ and\ \bibinfo {author} {\bibfnamefont {Frank~K.}\
  \bibnamefont {Wilhelm}},\ }\emph {\enquote {\bibinfo {title} {{The quantum
  technologies roadmap: a European community view}},}\ }\href
  {https://doi.org/10.1088/1367-2630/aad1ea} {\bibfield  {journal} {\bibinfo
  {journal} {New J. Phys.}\ }\textbf {\bibinfo {volume} {20}},\ \bibinfo
  {pages} {080201} (\bibinfo {year} {2018})},\ \Eprint
  {http://arxiv.org/abs/1712.03773} {arXiv:1712.03773}\BibitemShut {NoStop}%
\bibitem [{\citenamefont {{\AA}berg}(2013)}]{Aberg2013}%
  \BibitemOpen
  \bibfield  {author} {\bibinfo {author} {\bibfnamefont {J.}~\bibnamefont
  {{\AA}berg}},\ }\emph {\enquote {\bibinfo {title} {Truly work-like work
  extraction via a single-shot analysis},}\ }\href
  {https://doi.org/10.1038/ncomms2712} {\bibfield  {journal} {\bibinfo
  {journal} {Nat. Commun.}\ }\textbf {\bibinfo {volume} {4}},\ \bibinfo {pages}
  {1925} (\bibinfo {year} {2013})},\ \Eprint {http://arxiv.org/abs/1110.6121}
  {arXiv:1110.6121}\BibitemShut {NoStop}%
\bibitem [{\citenamefont {Skrzypczyk}\ \emph {et~al.}(2014)\citenamefont
  {Skrzypczyk}, \citenamefont {Short},\ and\ \citenamefont
  {Popescu}}]{Skrzypczyk_2014}%
  \BibitemOpen
  \bibfield  {author} {\bibinfo {author} {\bibfnamefont {Paul}\ \bibnamefont
  {Skrzypczyk}}, \bibinfo {author} {\bibfnamefont {Anthony~J.}\ \bibnamefont
  {Short}}, \ and\ \bibinfo {author} {\bibfnamefont {Sandu}\ \bibnamefont
  {Popescu}},\ }\emph {\enquote {\bibinfo {title} {Work extraction and
  thermodynamics for individual quantum systems},}\ }\href
  {https://doi.org/10.1038/ncomms5185} {\bibfield  {journal} {\bibinfo
  {journal} {Nat. Commun.}\ }\textbf {\bibinfo {volume} {5}},\ \bibinfo {pages}
  {4185} (\bibinfo {year} {2014})},\ \Eprint {http://arxiv.org/abs/1307.1558}
  {arXiv:1307.1558}\BibitemShut {NoStop}%
\bibitem [{\citenamefont {Lostaglio}\ \emph {et~al.}(2015)\citenamefont
  {Lostaglio}, \citenamefont {Jennings},\ and\ \citenamefont
  {Rudolph}}]{LostaglioJenningsRudolph2015}%
  \BibitemOpen
  \bibfield  {author} {\bibinfo {author} {\bibfnamefont {Matteo}\ \bibnamefont
  {Lostaglio}}, \bibinfo {author} {\bibfnamefont {David}\ \bibnamefont
  {Jennings}}, \ and\ \bibinfo {author} {\bibfnamefont {Terry}\ \bibnamefont
  {Rudolph}},\ }\emph {\enquote {\bibinfo {title} {{Description of quantum
  coherence in thermodynamic processes requires constraints beyond free
  energy}},}\ }\href {https://doi.org/10.1038/ncomms7383} {\bibfield  {journal}
  {\bibinfo  {journal} {Nat. Commun.}\ }\textbf {\bibinfo {volume} {6}},\
  \bibinfo {pages} {6383} (\bibinfo {year} {2015})},\ \Eprint
  {http://arxiv.org/abs/1405.2188} {arXiv:1405.2188}\BibitemShut {NoStop}%
\bibitem [{\citenamefont {Friis}\ and\ \citenamefont
  {Huber}(2018)}]{Friis2018}%
  \BibitemOpen
  \bibfield  {author} {\bibinfo {author} {\bibfnamefont {Nicolai}\ \bibnamefont
  {Friis}}\ and\ \bibinfo {author} {\bibfnamefont {Marcus}\ \bibnamefont
  {Huber}},\ }\emph {\enquote {\bibinfo {title} {Precision and {W}ork
  {F}luctuations in {G}aussian {B}attery {C}harging},}\ }\href
  {https://doi.org/10.22331/q-2018-04-23-61} {\bibfield  {journal} {\bibinfo
  {journal} {{Quantum}}\ }\textbf {\bibinfo {volume} {2}},\ \bibinfo {pages}
  {61} (\bibinfo {year} {2018})},\ \Eprint {http://arxiv.org/abs/1708.00749}
  {arXiv:1708.00749}\BibitemShut {NoStop}%
\bibitem [{\citenamefont {Campaioli}\ \emph {et~al.}(2018)\citenamefont
  {Campaioli}, \citenamefont {Pollock},\ and\ \citenamefont
  {Vinjanampathy}}]{CampaioliPollockVinjanampathy2019}%
  \BibitemOpen
  \bibfield  {author} {\bibinfo {author} {\bibfnamefont {Francesco}\
  \bibnamefont {Campaioli}}, \bibinfo {author} {\bibfnamefont {Felix~A.}\
  \bibnamefont {Pollock}}, \ and\ \bibinfo {author} {\bibfnamefont {Sai}\
  \bibnamefont {Vinjanampathy}},\ }\emph {\enquote {\bibinfo {title} {{Quantum
  Batteries}},}\ }in\ \href {https://doi.org/10.1007/978-3-319-99046-0_8}
  {\emph {\bibinfo {booktitle} {Thermodynamics in the Quantum Regime}}},\
  \bibinfo {editor} {edited by\ \bibinfo {editor} {\bibfnamefont {Felix}\
  \bibnamefont {Binder}}, \bibinfo {editor} {\bibfnamefont {Luis~A.}\
  \bibnamefont {Correa}}, \bibinfo {editor} {\bibfnamefont {Christian}\
  \bibnamefont {Gogolin}}, \bibinfo {editor} {\bibfnamefont {Janet}\
  \bibnamefont {Anders}}, \ and\ \bibinfo {editor} {\bibfnamefont {Gerardo}\
  \bibnamefont {Adesso}}}\ (\bibinfo  {publisher} {Springer International
  Publishing},\ \bibinfo {address} {Cham, Switzerland},\ \bibinfo {year}
  {2018})\ Chap.~\bibinfo {chapter} {8}, pp.\ \bibinfo {pages} {207--225},\
  \Eprint {http://arxiv.org/abs/1805.05507} {arXiv:1805.05507}\BibitemShut
  {NoStop}%
\bibitem [{\citenamefont {Scovil}\ and\ \citenamefont
  {Schulz-DuBois}(1959)}]{Scovil1959}%
  \BibitemOpen
  \bibfield  {author} {\bibinfo {author} {\bibfnamefont {Henry E.~D.}\
  \bibnamefont {Scovil}}\ and\ \bibinfo {author} {\bibfnamefont {Erich~O.}\
  \bibnamefont {Schulz-DuBois}},\ }\emph {\enquote {\bibinfo {title}
  {{Three-Level Masers as Heat Engines}},}\ }\href
  {https://doi.org/10.1103/PhysRevLett.2.262} {\bibfield  {journal} {\bibinfo
  {journal} {Phys. Rev. Lett.}\ }\textbf {\bibinfo {volume} {2}},\ \bibinfo
  {pages} {262} (\bibinfo {year} {1959})}\BibitemShut {NoStop}%
\bibitem [{\citenamefont {Kosloff}\ and\ \citenamefont
  {Levy}(2014)}]{Kosloff2014}%
  \BibitemOpen
  \bibfield  {author} {\bibinfo {author} {\bibfnamefont {Ronnie}\ \bibnamefont
  {Kosloff}}\ and\ \bibinfo {author} {\bibfnamefont {Amikam}\ \bibnamefont
  {Levy}},\ }\emph {\enquote {\bibinfo {title} {{Quantum Heat Engines and
  Refrigerators: Continuous Devices}},}\ }\href
  {https://doi.org/10.1146/annurev-physchem-040513-103724} {\bibfield
  {journal} {\bibinfo  {journal} {Annu. Rev. Phys. Chem.}\ }\textbf {\bibinfo
  {volume} {65}},\ \bibinfo {pages} {365} (\bibinfo {year} {2014})},\ \Eprint
  {http://arxiv.org/abs/1310.0683} {arXiv:1310.0683}\BibitemShut {NoStop}%
\bibitem [{\citenamefont {Uzdin}\ \emph {et~al.}(2015)\citenamefont {Uzdin},
  \citenamefont {Levy},\ and\ \citenamefont {Kosloff}}]{Uzdin2015}%
  \BibitemOpen
  \bibfield  {author} {\bibinfo {author} {\bibfnamefont {Raam}\ \bibnamefont
  {Uzdin}}, \bibinfo {author} {\bibfnamefont {Amikam}\ \bibnamefont {Levy}}, \
  and\ \bibinfo {author} {\bibfnamefont {Ronnie}\ \bibnamefont {Kosloff}},\
  }\emph {\enquote {\bibinfo {title} {{Equivalence of Quantum Heat Machines,
  and Quantum-Thermodynamic Signatures}},}\ }\href
  {https://doi.org/10.1103/PhysRevX.5.031044} {\bibfield  {journal} {\bibinfo
  {journal} {Phys. Rev. X}\ }\textbf {\bibinfo {volume} {5}},\ \bibinfo {pages}
  {031044} (\bibinfo {year} {2015})},\ \Eprint
  {http://arxiv.org/abs/1502.06592} {arXiv:1502.06592}\BibitemShut {NoStop}%
\bibitem [{\citenamefont {Mitchison}(2019)}]{Mitchison2019}%
  \BibitemOpen
  \bibfield  {author} {\bibinfo {author} {\bibfnamefont {Mark~T.}\ \bibnamefont
  {Mitchison}},\ }\emph {\enquote {\bibinfo {title} {Quantum thermal absorption
  machines: refrigerators, engines and clocks},}\ }\href
  {https://doi.org/10.1080/00107514.2019.1631555} {\bibfield  {journal}
  {\bibinfo  {journal} {Contemp. Phys.}\ }\textbf {\bibinfo {volume} {60}},\
  \bibinfo {pages} {164} (\bibinfo {year} {2019})},\ \Eprint
  {http://arxiv.org/abs/1902.02672} {arXiv:1902.02672}\BibitemShut {NoStop}%
\bibitem [{\citenamefont {Woods}\ \emph {et~al.}(2019)\citenamefont {Woods},
  \citenamefont {Ng},\ and\ \citenamefont
  {Wehner}}]{Woods2019maximumefficiencyof}%
  \BibitemOpen
  \bibfield  {author} {\bibinfo {author} {\bibfnamefont {Mischa~P.}\
  \bibnamefont {Woods}}, \bibinfo {author} {\bibfnamefont {Nelly Huei~Ying}\
  \bibnamefont {Ng}}, \ and\ \bibinfo {author} {\bibfnamefont {Stephanie}\
  \bibnamefont {Wehner}},\ }\emph {\enquote {\bibinfo {title} {The maximum
  efficiency of nano heat engines depends on more than temperature},}\ }\href
  {https://doi.org/10.22331/q-2019-08-19-177} {\bibfield  {journal} {\bibinfo
  {journal} {{Quantum}}\ }\textbf {\bibinfo {volume} {3}},\ \bibinfo {pages}
  {177} (\bibinfo {year} {2019})},\ \Eprint {http://arxiv.org/abs/1506.02322}
  {arXiv:1506.02322}\BibitemShut {NoStop}%
\bibitem [{\citenamefont {Reeb}\ and\ \citenamefont {Wolf}(2014)}]{Reeb_2014}%
  \BibitemOpen
  \bibfield  {author} {\bibinfo {author} {\bibfnamefont {David}\ \bibnamefont
  {Reeb}}\ and\ \bibinfo {author} {\bibfnamefont {Michael~M.}\ \bibnamefont
  {Wolf}},\ }\emph {\enquote {\bibinfo {title} {{An improved Landauer principle
  with finite-size corrections}},}\ }\href
  {https://doi.org/10.1088/1367-2630/16/10/103011} {\bibfield  {journal}
  {\bibinfo  {journal} {New J. Phys.}\ }\textbf {\bibinfo {volume} {16}},\
  \bibinfo {pages} {103011} (\bibinfo {year} {2014})},\ \Eprint
  {http://arxiv.org/abs/1306.4352} {arXiv:1306.4352}\BibitemShut {NoStop}%
\bibitem [{\citenamefont {Clivaz}(2020)}]{Clivaz2020Thesis}%
  \BibitemOpen
  \bibfield  {author} {\bibinfo {author} {\bibfnamefont {Fabien}\ \bibnamefont
  {Clivaz}},\ }\emph {\bibinfo {title} {{Optimal Manipulation Of Correlations
  And Temperature In Quantum Thermodynamics}}},\ \href
  {https://doi.org/10.13097/archive-ouverte/unige:145933} {Ph.D. thesis},\
  \bibinfo  {school} {University of Geneva} (\bibinfo {year} {2020}),\ \Eprint
  {http://arxiv.org/abs/2012.04321} {arXiv:2012.04321}\BibitemShut {NoStop}%
\bibitem [{\citenamefont {Riechers}\ and\ \citenamefont
  {Gu}(2021)}]{Riechers2021}%
  \BibitemOpen
  \bibfield  {author} {\bibinfo {author} {\bibfnamefont {Paul~M.}\ \bibnamefont
  {Riechers}}\ and\ \bibinfo {author} {\bibfnamefont {Mile}\ \bibnamefont
  {Gu}},\ }\emph {\enquote {\bibinfo {title} {{Impossibility of achieving
  Landauer's bound for almost every quantum state}},}\ }\href
  {https://doi.org/10.1103/PhysRevA.104.012214} {\bibfield  {journal} {\bibinfo
   {journal} {Phys. Rev. A}\ }\textbf {\bibinfo {volume} {104}},\ \bibinfo
  {pages} {012214} (\bibinfo {year} {2021})},\ \Eprint
  {http://arxiv.org/abs/2103.02337} {arXiv:2103.02337}\BibitemShut {NoStop}%
\bibitem [{\citenamefont {Huber}\ \emph {et~al.}(2015)\citenamefont {Huber},
  \citenamefont {Perarnau-Llobet}, \citenamefont {Hovhannisyan}, \citenamefont
  {Skrzypczyk}, \citenamefont {Kl{\"o}ckl}, \citenamefont {Brunner},\ and\
  \citenamefont
  {Ac{\'i}n}}]{HuberPerarnauHovhannisyanSkrzypczykKloecklBrunnerAcin2015}%
  \BibitemOpen
  \bibfield  {author} {\bibinfo {author} {\bibfnamefont {Marcus}\ \bibnamefont
  {Huber}}, \bibinfo {author} {\bibfnamefont {Mart{\'i}}\ \bibnamefont
  {Perarnau-Llobet}}, \bibinfo {author} {\bibfnamefont {Karen~V.}\ \bibnamefont
  {Hovhannisyan}}, \bibinfo {author} {\bibfnamefont {Paul}\ \bibnamefont
  {Skrzypczyk}}, \bibinfo {author} {\bibfnamefont {Claude}\ \bibnamefont
  {Kl{\"o}ckl}}, \bibinfo {author} {\bibfnamefont {Nicolas}\ \bibnamefont
  {Brunner}}, \ and\ \bibinfo {author} {\bibfnamefont {Antonio}\ \bibnamefont
  {Ac{\'i}n}},\ }\emph {\enquote {\bibinfo {title} {Thermodynamic cost of
  creating correlations},}\ }\href
  {https://doi.org/10.1088/1367-2630/17/6/065008} {\bibfield  {journal}
  {\bibinfo  {journal} {New J. Phys.}\ }\textbf {\bibinfo {volume} {17}},\
  \bibinfo {pages} {065008} (\bibinfo {year} {2015})},\ \Eprint
  {http://arxiv.org/abs/1404.2169} {arXiv:1404.2169}\BibitemShut {NoStop}%
\bibitem [{\citenamefont {Bruschi}\ \emph {et~al.}(2015)\citenamefont
  {Bruschi}, \citenamefont {Perarnau-Llobet}, \citenamefont {Friis},
  \citenamefont {Hovhannisyan},\ and\ \citenamefont
  {Huber}}]{BruschiPerarnauLlobetFriisHovhannisyanHuber2015}%
  \BibitemOpen
  \bibfield  {author} {\bibinfo {author} {\bibfnamefont {David~E.}\
  \bibnamefont {Bruschi}}, \bibinfo {author} {\bibfnamefont {Mart{\'i}}\
  \bibnamefont {Perarnau-Llobet}}, \bibinfo {author} {\bibfnamefont {Nicolai}\
  \bibnamefont {Friis}}, \bibinfo {author} {\bibfnamefont {Karen~V.}\
  \bibnamefont {Hovhannisyan}}, \ and\ \bibinfo {author} {\bibfnamefont
  {Marcus}\ \bibnamefont {Huber}},\ }\emph {\enquote {\bibinfo {title} {The
  thermodynamics of creating correlations: Limitations and optimal
  protocols},}\ }\href {https://doi.org/10.1103/PhysRevE.91.032118} {\bibfield
  {journal} {\bibinfo  {journal} {Phys. Rev. E}\ }\textbf {\bibinfo {volume}
  {91}},\ \bibinfo {pages} {032118} (\bibinfo {year} {2015})},\ \Eprint
  {http://arxiv.org/abs/1409.4647} {arXiv:1409.4647}\BibitemShut {NoStop}%
\bibitem [{\citenamefont {Vitagliano}\ \emph {et~al.}(2018)\citenamefont
  {Vitagliano}, \citenamefont {Kl{\"o}ckl}, \citenamefont {Huber},\ and\
  \citenamefont {Friis}}]{VitaglianoKloecklHuberFriis2019}%
  \BibitemOpen
  \bibfield  {author} {\bibinfo {author} {\bibfnamefont {Giuseppe}\
  \bibnamefont {Vitagliano}}, \bibinfo {author} {\bibfnamefont {Claude}\
  \bibnamefont {Kl{\"o}ckl}}, \bibinfo {author} {\bibfnamefont {Marcus}\
  \bibnamefont {Huber}}, \ and\ \bibinfo {author} {\bibfnamefont {Nicolai}\
  \bibnamefont {Friis}},\ }\emph {\enquote {\bibinfo {title} {Trade-off between
  work and correlations in quantum thermodynamics},}\ }in\ \href
  {https://doi.org/10.1007/978-3-319-99046-0_30} {\emph {\bibinfo {booktitle}
  {Thermodynamics in the Quantum Regime}}},\ \bibinfo {editor} {edited by\
  \bibinfo {editor} {\bibfnamefont {Felix}\ \bibnamefont {Binder}}, \bibinfo
  {editor} {\bibfnamefont {Luis~A.}\ \bibnamefont {Correa}}, \bibinfo {editor}
  {\bibfnamefont {Christian}\ \bibnamefont {Gogolin}}, \bibinfo {editor}
  {\bibfnamefont {Janet}\ \bibnamefont {Anders}}, \ and\ \bibinfo {editor}
  {\bibfnamefont {Gerardo}\ \bibnamefont {Adesso}}}\ (\bibinfo  {publisher}
  {Springer International Publishing},\ \bibinfo {address} {Cham,
  Switzerland},\ \bibinfo {year} {2018})\ Chap.~\bibinfo {chapter} {30}, pp.\
  \bibinfo {pages} {731--750},\ \Eprint {http://arxiv.org/abs/1803.06884}
  {arXiv:1803.06884}\BibitemShut {NoStop}%
\bibitem [{\citenamefont {Bakhshinezhad}\ \emph {et~al.}(2019)\citenamefont
  {Bakhshinezhad}, \citenamefont {Clivaz}, \citenamefont {Vitagliano},
  \citenamefont {Erker}, \citenamefont {Rezakhani}, \citenamefont {Huber},\
  and\ \citenamefont {Friis}}]{BakhshinezhadEtAl2019}%
  \BibitemOpen
  \bibfield  {author} {\bibinfo {author} {\bibfnamefont {Faraj}\ \bibnamefont
  {Bakhshinezhad}}, \bibinfo {author} {\bibfnamefont {Fabien}\ \bibnamefont
  {Clivaz}}, \bibinfo {author} {\bibfnamefont {Giuseppe}\ \bibnamefont
  {Vitagliano}}, \bibinfo {author} {\bibfnamefont {Paul}\ \bibnamefont
  {Erker}}, \bibinfo {author} {\bibfnamefont {Ali~T.}\ \bibnamefont
  {Rezakhani}}, \bibinfo {author} {\bibfnamefont {Marcus}\ \bibnamefont
  {Huber}}, \ and\ \bibinfo {author} {\bibfnamefont {Nicolai}\ \bibnamefont
  {Friis}},\ }\emph {\enquote {\bibinfo {title} {Thermodynamically optimal
  creation of correlations},}\ }\href
  {https://doi.org/10.1088/1751-8121/ab3932} {\bibfield  {journal} {\bibinfo
  {journal} {J. Phys. A: Math. Theor.}\ }\textbf {\bibinfo {volume} {52}},\
  \bibinfo {pages} {465303} (\bibinfo {year} {2019})},\ \Eprint
  {http://arxiv.org/abs/1904.07942} {arXiv:1904.07942}\BibitemShut {NoStop}%
\bibitem [{\citenamefont {Anders}\ and\ \citenamefont
  {Giovannetti}(2013)}]{Anders_2013}%
  \BibitemOpen
  \bibfield  {author} {\bibinfo {author} {\bibfnamefont {Janet}\ \bibnamefont
  {Anders}}\ and\ \bibinfo {author} {\bibfnamefont {Vittorio}\ \bibnamefont
  {Giovannetti}},\ }\emph {\enquote {\bibinfo {title} {Thermodynamics of
  discrete quantum processes},}\ }\href
  {https://doi.org/10.1088/1367-2630/15/3/033022} {\bibfield  {journal}
  {\bibinfo  {journal} {New J. Phys.}\ }\textbf {\bibinfo {volume} {15}},\
  \bibinfo {pages} {033022} (\bibinfo {year} {2013})},\ \Eprint
  {http://arxiv.org/abs/1211.0183} {arXiv:1211.0183}\BibitemShut {NoStop}%
\bibitem [{\citenamefont {Brown}\ \emph {et~al.}(2016)\citenamefont {Brown},
  \citenamefont {Friis},\ and\ \citenamefont {Huber}}]{BrownFriisHuber2016}%
  \BibitemOpen
  \bibfield  {author} {\bibinfo {author} {\bibfnamefont {E.~G.}\ \bibnamefont
  {Brown}}, \bibinfo {author} {\bibfnamefont {N.}~\bibnamefont {Friis}}, \ and\
  \bibinfo {author} {\bibfnamefont {Marcus}\ \bibnamefont {Huber}},\ }\emph
  {\enquote {\bibinfo {title} {{Passivity and practical work extraction using
  Gaussian operations}},}\ }\href
  {https://doi.org/10.1088/1367-2630/18/11/113028} {\bibfield  {journal}
  {\bibinfo  {journal} {New J. Phys.}\ }\textbf {\bibinfo {volume} {18}},\
  \bibinfo {pages} {113028} (\bibinfo {year} {2016})},\ \Eprint
  {http://arxiv.org/abs/1608.04977} {arXiv:1608.04977}\BibitemShut {NoStop}%
\bibitem [{\citenamefont {Clivaz}\ \emph
  {et~al.}(2019{\natexlab{a}})\citenamefont {Clivaz}, \citenamefont {Silva},
  \citenamefont {Haack}, \citenamefont {Bohr~Brask}, \citenamefont {Brunner},\
  and\ \citenamefont {Huber}}]{Clivaz_2019E}%
  \BibitemOpen
  \bibfield  {author} {\bibinfo {author} {\bibfnamefont {Fabien}\ \bibnamefont
  {Clivaz}}, \bibinfo {author} {\bibfnamefont {Ralph}\ \bibnamefont {Silva}},
  \bibinfo {author} {\bibfnamefont {G{\'e}raldine}\ \bibnamefont {Haack}},
  \bibinfo {author} {\bibfnamefont {Jonatan}\ \bibnamefont {Bohr~Brask}},
  \bibinfo {author} {\bibfnamefont {Nicolas}\ \bibnamefont {Brunner}}, \ and\
  \bibinfo {author} {\bibfnamefont {Marcus}\ \bibnamefont {Huber}},\ }\emph
  {\enquote {\bibinfo {title} {{Unifying paradigms of quantum refrigeration:
  Fundamental limits of cooling and associated work costs}},}\ }\href
  {https://doi.org/10.1103/PhysRevE.100.042130} {\bibfield  {journal} {\bibinfo
   {journal} {Phys. Rev. E}\ }\textbf {\bibinfo {volume} {100}},\ \bibinfo
  {pages} {042130} (\bibinfo {year} {2019}{\natexlab{a}})},\ \Eprint
  {http://arxiv.org/abs/1710.11624} {arXiv:1710.11624}\BibitemShut {NoStop}%
\bibitem [{\citenamefont {Taranto}\ \emph {et~al.}(2020)\citenamefont
  {Taranto}, \citenamefont {Bakhshinezhad}, \citenamefont {Sch\"uttelkopf},
  \citenamefont {Clivaz},\ and\ \citenamefont {Huber}}]{Taranto_2020}%
  \BibitemOpen
  \bibfield  {author} {\bibinfo {author} {\bibfnamefont {Philip}\ \bibnamefont
  {Taranto}}, \bibinfo {author} {\bibfnamefont {Faraj}\ \bibnamefont
  {Bakhshinezhad}}, \bibinfo {author} {\bibfnamefont {Philipp}\ \bibnamefont
  {Sch\"uttelkopf}}, \bibinfo {author} {\bibfnamefont {Fabien}\ \bibnamefont
  {Clivaz}}, \ and\ \bibinfo {author} {\bibfnamefont {Marcus}\ \bibnamefont
  {Huber}},\ }\emph {\enquote {\bibinfo {title} {{Exponential Improvement for
  Quantum Cooling through Finite-Memory Effects}},}\ }\href
  {https://doi.org/10.1103/PhysRevApplied.14.054005} {\bibfield  {journal}
  {\bibinfo  {journal} {Phys. Rev. Appl.}\ }\textbf {\bibinfo {volume} {14}},\
  \bibinfo {pages} {054005} (\bibinfo {year} {2020})},\ \Eprint
  {http://arxiv.org/abs/2004.00323} {arXiv:2004.00323}\BibitemShut {NoStop}%
\bibitem [{\citenamefont {Zhen}\ \emph {et~al.}(2021)\citenamefont {Zhen},
  \citenamefont {Egloff}, \citenamefont {Modi},\ and\ \citenamefont
  {Dahlsten}}]{zhen2021}%
  \BibitemOpen
  \bibfield  {author} {\bibinfo {author} {\bibfnamefont {Yi-Zheng}\
  \bibnamefont {Zhen}}, \bibinfo {author} {\bibfnamefont {Dario}\ \bibnamefont
  {Egloff}}, \bibinfo {author} {\bibfnamefont {Kavan}\ \bibnamefont {Modi}}, \
  and\ \bibinfo {author} {\bibfnamefont {Oscar}\ \bibnamefont {Dahlsten}},\
  }\emph {\enquote {\bibinfo {title} {{Universal Bound on Energy Cost of Bit
  Reset in Finite Time}},}\ }\href
  {https://doi.org/10.1103/PhysRevLett.127.190602} {\bibfield  {journal}
  {\bibinfo  {journal} {Phys. Rev. Lett.}\ }\textbf {\bibinfo {volume} {127}},\
  \bibinfo {pages} {190602} (\bibinfo {year} {2021})},\ \Eprint
  {http://arxiv.org/abs/2106.00580} {arXiv:2106.00580}\BibitemShut {NoStop}%
\bibitem [{\citenamefont {Brillouin}(1951)}]{Brillouin_1951}%
  \BibitemOpen
  \bibfield  {author} {\bibinfo {author} {\bibfnamefont {L.}~\bibnamefont
  {Brillouin}},\ }\emph {\enquote {\bibinfo {title} {{Maxwell's Demon Cannot
  Operate: Information and Entropy. I}},}\ }\href {\doibase 10.1063/1.1699951}
  {\bibfield  {journal} {\bibinfo  {journal} {J. Appl. Phys.}\ }\textbf
  {\bibinfo {volume} {22}},\ \bibinfo {pages} {334} (\bibinfo {year}
  {1951})}\BibitemShut {NoStop}%
\bibitem [{\citenamefont {Clivaz}\ \emph
  {et~al.}(2019{\natexlab{b}})\citenamefont {Clivaz}, \citenamefont {Silva},
  \citenamefont {Haack}, \citenamefont {Bohr~Brask}, \citenamefont {Brunner},\
  and\ \citenamefont {Huber}}]{Clivaz_2019L}%
  \BibitemOpen
  \bibfield  {author} {\bibinfo {author} {\bibfnamefont {Fabien}\ \bibnamefont
  {Clivaz}}, \bibinfo {author} {\bibfnamefont {Ralph}\ \bibnamefont {Silva}},
  \bibinfo {author} {\bibfnamefont {G{\'e}raldine}\ \bibnamefont {Haack}},
  \bibinfo {author} {\bibfnamefont {Jonatan}\ \bibnamefont {Bohr~Brask}},
  \bibinfo {author} {\bibfnamefont {Nicolas}\ \bibnamefont {Brunner}}, \ and\
  \bibinfo {author} {\bibfnamefont {Marcus}\ \bibnamefont {Huber}},\ }\emph
  {\enquote {\bibinfo {title} {{Unifying Paradigms of Quantum Refrigeration: A
  Universal and Attainable Bound on Cooling}},}\ }\href
  {https://doi.org/10.1103/PhysRevLett.123.170605} {\bibfield  {journal}
  {\bibinfo  {journal} {Phys. Rev. Lett.}\ }\textbf {\bibinfo {volume} {123}},\
  \bibinfo {pages} {170605} (\bibinfo {year} {2019}{\natexlab{b}})},\ \Eprint
  {http://arxiv.org/abs/1903.04970} {arXiv:1903.04970}\BibitemShut {NoStop}%
\bibitem [{\citenamefont {Ng}\ and\ \citenamefont {Woods}(2018)}]{Ng_2018}%
  \BibitemOpen
  \bibfield  {author} {\bibinfo {author} {\bibfnamefont {Nelly H.~Y.}\
  \bibnamefont {Ng}}\ and\ \bibinfo {author} {\bibfnamefont {Mischa~P.}\
  \bibnamefont {Woods}},\ }\emph {\enquote {\bibinfo {title} {{Resource Theory
  of Quantum Thermodynamics: Thermal Operations and Second Laws}},}\ }in\ \href
  {https://doi.org/10.1007/978-3-319-99046-0_26} {\emph {\bibinfo {booktitle}
  {{Thermodynamics in the Quantum Regime}}}},\ \bibinfo {editor} {edited by\
  \bibinfo {editor} {\bibfnamefont {Felix}\ \bibnamefont {Binder}}, \bibinfo
  {editor} {\bibfnamefont {Luis~A}\ \bibnamefont {Correa}}, \bibinfo {editor}
  {\bibfnamefont {Christian}\ \bibnamefont {Gogolin}}, \bibinfo {editor}
  {\bibfnamefont {Janet}\ \bibnamefont {Anders}}, \ and\ \bibinfo {editor}
  {\bibfnamefont {Gerardo}\ \bibnamefont {Adesso}}}\ (\bibinfo  {publisher}
  {Springer International Publishing},\ \bibinfo {address} {Cham,
  Switzerland},\ \bibinfo {year} {2018})\ Chap.~\bibinfo {chapter} {26}, pp.\
  \bibinfo {pages} {625--650},\ \Eprint {http://arxiv.org/abs/1805.09564}
  {arXiv:1805.09564}\BibitemShut {NoStop}%
\bibitem [{\citenamefont {Lostaglio}(2019)}]{Lostaglio_2019}%
  \BibitemOpen
  \bibfield  {author} {\bibinfo {author} {\bibfnamefont {Matteo}\ \bibnamefont
  {Lostaglio}},\ }\emph {\enquote {\bibinfo {title} {{An introductory review of
  the resource theory approach to thermodynamics}},}\ }\href
  {https://doi.org/10.1088/1361-6633/ab46e5} {\bibfield  {journal} {\bibinfo
  {journal} {Rep. Prog. Phys.}\ }\textbf {\bibinfo {volume} {82}},\ \bibinfo
  {pages} {114001} (\bibinfo {year} {2019})},\ \Eprint
  {http://arxiv.org/abs/1807.11549} {arXiv:1807.11549}\BibitemShut {NoStop}%
\bibitem [{\citenamefont {Bennett}(1982)}]{Bennett_1982}%
  \BibitemOpen
  \bibfield  {author} {\bibinfo {author} {\bibfnamefont {Charles~H.}\
  \bibnamefont {Bennett}},\ }\emph {\enquote {\bibinfo {title} {The
  thermodynamics of computation -- a review},}\ }\href
  {https://doi.org/10.1007/BF02084158} {\bibfield  {journal} {\bibinfo
  {journal} {Int. J. Theor. Phys.}\ }\textbf {\bibinfo {volume} {21}},\
  \bibinfo {pages} {905} (\bibinfo {year} {1982})}\BibitemShut {NoStop}%
\bibitem [{\citenamefont {{Debarba}}\ \emph {et~al.}(2019)\citenamefont
  {{Debarba}}, \citenamefont {{Manzano}}, \citenamefont {{Guryanova}},
  \citenamefont {{Huber}},\ and\ \citenamefont {{Friis}}}]{DebarbaEtAl2019}%
  \BibitemOpen
  \bibfield  {author} {\bibinfo {author} {\bibfnamefont {Tiago}\ \bibnamefont
  {{Debarba}}}, \bibinfo {author} {\bibfnamefont {Gonzalo}\ \bibnamefont
  {{Manzano}}}, \bibinfo {author} {\bibfnamefont {Yelena}\ \bibnamefont
  {{Guryanova}}}, \bibinfo {author} {\bibfnamefont {Marcus}\ \bibnamefont
  {{Huber}}}, \ and\ \bibinfo {author} {\bibfnamefont {Nicolai}\ \bibnamefont
  {{Friis}}},\ }\emph {\enquote {\bibinfo {title} {{Work estimation and work
  fluctuations in the presence of non-ideal measurements}},}\ }\href
  {https://doi.org/10.1088/1367-2630/ab4d9d} {\bibfield  {journal} {\bibinfo
  {journal} {New J. Phys.}\ }\textbf {\bibinfo {volume} {21}},\ \bibinfo
  {pages} {113002} (\bibinfo {year} {2019})},\ \Eprint
  {http://arxiv.org/abs/1902.08568} {arXiv:1902.08568}\BibitemShut {NoStop}%
\bibitem [{\citenamefont {Manzano}\ \emph {et~al.}(2019)\citenamefont
  {Manzano}, \citenamefont {Giorgi}, \citenamefont {Fazio},\ and\ \citenamefont
  {Zambrini}}]{Manzano_2019}%
  \BibitemOpen
  \bibfield  {author} {\bibinfo {author} {\bibfnamefont {Gonzalo}\ \bibnamefont
  {Manzano}}, \bibinfo {author} {\bibfnamefont {Gian-Luca}\ \bibnamefont
  {Giorgi}}, \bibinfo {author} {\bibfnamefont {Rosario}\ \bibnamefont {Fazio}},
  \ and\ \bibinfo {author} {\bibfnamefont {Roberta}\ \bibnamefont {Zambrini}},\
  }\emph {\enquote {\bibinfo {title} {Boosting the performance of small
  autonomous refrigerators via common environmental effects},}\ }\href
  {https://doi.org/10.1088/1367-2630/ab5c58} {\bibfield  {journal} {\bibinfo
  {journal} {New J. Phys.}\ }\textbf {\bibinfo {volume} {21}},\ \bibinfo
  {pages} {123026} (\bibinfo {year} {2019})},\ \Eprint
  {http://arxiv.org/abs/1908.10259} {arXiv:1908.10259}\BibitemShut {NoStop}%
\bibitem [{\citenamefont {van Dam}\ and\ \citenamefont
  {Hayden}(2002)}]{vanDam2002}%
  \BibitemOpen
  \bibfield  {author} {\bibinfo {author} {\bibfnamefont {Wim}\ \bibnamefont
  {van Dam}}\ and\ \bibinfo {author} {\bibfnamefont {Patrick}\ \bibnamefont
  {Hayden}},\ }\href@noop {} {\emph {\enquote {\bibinfo {title}
  {{Renyi-entropic bounds on quantum communication}},}\ }}\Eprint
  {http://arxiv.org/abs/quant-ph/0204093} {arXiv:quant-ph/0204093} (\bibinfo
  {year} {2002})\BibitemShut {NoStop}%
\bibitem [{\citenamefont {Ohya}\ and\ \citenamefont {Petz}(1993)}]{Ohya1993}%
  \BibitemOpen
  \bibfield  {author} {\bibinfo {author} {\bibfnamefont {Masanori}\
  \bibnamefont {Ohya}}\ and\ \bibinfo {author} {\bibfnamefont {D\'enes}\
  \bibnamefont {Petz}},\ }\href@noop {} {\emph {\bibinfo {title} {{Quantum
  Entropy and Its Use}}}},\ \bibinfo {edition} {2nd}\ ed.\ (\bibinfo
  {publisher} {Springer Berlin, Heidelberg},\ \bibinfo {year} {1993})\
  Chap.~\bibinfo {chapter} {2}, pp.\ \bibinfo {pages} {37--46}\BibitemShut
  {NoStop}%
\bibitem [{\citenamefont {Ohya}\ and\ \citenamefont
  {Watanabe}(2010)}]{Ohya2010}%
  \BibitemOpen
  \bibfield  {author} {\bibinfo {author} {\bibfnamefont {Masanori}\
  \bibnamefont {Ohya}}\ and\ \bibinfo {author} {\bibfnamefont {Noboru}\
  \bibnamefont {Watanabe}},\ }\emph {\enquote {\bibinfo {title} {{Quantum
  Entropy and Its Applications to Quantum Communication and Statistical
  Physics}},}\ }\href {\doibase 10.3390/e12051194} {\bibfield  {journal}
  {\bibinfo  {journal} {Entropy}\ }\textbf {\bibinfo {volume} {12}},\ \bibinfo
  {pages} {1194} (\bibinfo {year} {2010})}\BibitemShut {NoStop}%
\bibitem [{\citenamefont {Allahverdyan}\ \emph {et~al.}(2011)\citenamefont
  {Allahverdyan}, \citenamefont {Hovhannisyan}, \citenamefont {Janzing},\ and\
  \citenamefont {Mahler}}]{Allahverdyan_2011}%
  \BibitemOpen
  \bibfield  {author} {\bibinfo {author} {\bibfnamefont {Armen~E.}\
  \bibnamefont {Allahverdyan}}, \bibinfo {author} {\bibfnamefont {Karen~V.}\
  \bibnamefont {Hovhannisyan}}, \bibinfo {author} {\bibfnamefont {Dominik}\
  \bibnamefont {Janzing}}, \ and\ \bibinfo {author} {\bibfnamefont {Guenter}\
  \bibnamefont {Mahler}},\ }\emph {\enquote {\bibinfo {title} {{Thermodynamic
  limits of dynamic cooling}},}\ }\href
  {https://doi.org/10.1103/PhysRevE.84.041109} {\bibfield  {journal} {\bibinfo
  {journal} {Phys. Rev. E}\ }\textbf {\bibinfo {volume} {84}},\ \bibinfo
  {pages} {041109} (\bibinfo {year} {2011})},\ \Eprint
  {http://arxiv.org/abs/1107.1044} {arXiv:1107.1044}\BibitemShut {NoStop}%
\bibitem [{\citenamefont {Ladyman}\ \emph {et~al.}(2013)\citenamefont
  {Ladyman}, \citenamefont {Lambert},\ and\ \citenamefont
  {Wiesner}}]{Ladyman2013}%
  \BibitemOpen
  \bibfield  {author} {\bibinfo {author} {\bibfnamefont {James}\ \bibnamefont
  {Ladyman}}, \bibinfo {author} {\bibfnamefont {James}\ \bibnamefont
  {Lambert}}, \ and\ \bibinfo {author} {\bibfnamefont {Karoline}\ \bibnamefont
  {Wiesner}},\ }\emph {\enquote {\bibinfo {title} {{What is a complex
  system?}}}\ }\href {https://doi.org/10.1007/s13194-012-0056-8} {\bibfield
  {journal} {\bibinfo  {journal} {Eur. J. Philos.}\ }\textbf {\bibinfo {volume}
  {3}},\ \bibinfo {pages} {33} (\bibinfo {year} {2013})}\BibitemShut {NoStop}%
\bibitem [{\citenamefont {Holovatch}\ \emph {et~al.}(2017)\citenamefont
  {Holovatch}, \citenamefont {Kenna},\ and\ \citenamefont
  {Thurner}}]{Holovatch_2017}%
  \BibitemOpen
  \bibfield  {author} {\bibinfo {author} {\bibfnamefont {Yurij}\ \bibnamefont
  {Holovatch}}, \bibinfo {author} {\bibfnamefont {Ralph}\ \bibnamefont
  {Kenna}}, \ and\ \bibinfo {author} {\bibfnamefont {Stefan}\ \bibnamefont
  {Thurner}},\ }\emph {\enquote {\bibinfo {title} {Complex systems: physics
  beyond physics},}\ }\href {https://doi.org/10.1088/1361-6404/aa5a87}
  {\bibfield  {journal} {\bibinfo  {journal} {Eur. J. Phys.}\ }\textbf
  {\bibinfo {volume} {38}},\ \bibinfo {pages} {023002} (\bibinfo {year}
  {2017})},\ \Eprint {http://arxiv.org/abs/1610.01002}
  {arXiv:1610.01002}\BibitemShut {NoStop}%
\bibitem [{\citenamefont {Marshall}\ \emph {et~al.}(2011)\citenamefont
  {Marshall}, \citenamefont {Olkin},\ and\ \citenamefont
  {Arnold}}]{2011Marshall}%
  \BibitemOpen
  \bibfield  {author} {\bibinfo {author} {\bibfnamefont {Albert~W.}\
  \bibnamefont {Marshall}}, \bibinfo {author} {\bibfnamefont {Ingram}\
  \bibnamefont {Olkin}}, \ and\ \bibinfo {author} {\bibfnamefont {Barry~C.}\
  \bibnamefont {Arnold}},\ }\href
  {https://dx.doi.org/10.1007/978-0-387-68276-1} {\emph {\bibinfo {title}
  {{Inequalities: Theory of Majorization and its Applications}}}},\ \bibinfo
  {edition} {2nd}\ ed.,\ Vol.\ \bibinfo {volume} {143}\ (\bibinfo  {publisher}
  {Springer New York, NY},\ \bibinfo {year} {2011})\ Chap.~\bibinfo {chapter}
  {9}, pp.\ \bibinfo {pages} {297--365}\BibitemShut {NoStop}%
\bibitem [{\citenamefont {Thirring}(2002)}]{Thirring_2002}%
  \BibitemOpen
  \bibfield  {author} {\bibinfo {author} {\bibfnamefont {Walther}\ \bibnamefont
  {Thirring}},\ }\href {https://doi.org/10.1007/978-3-662-05008-8} {\emph
  {\bibinfo {title} {{Quantum Mathematical Physics: Atoms, Molecules and Large
  Systems}}}}\ (\bibinfo  {publisher} {Springer Berlin, Heidelberg},\ \bibinfo
  {year} {2002})\ Chap.~\bibinfo {chapter} {7}, pp.\ \bibinfo {pages}
  {423--499}\BibitemShut {NoStop}%
\bibitem [{\citenamefont {Gamow}(1947)}]{Gamow1947}%
  \BibitemOpen
  \bibfield  {author} {\bibinfo {author} {\bibfnamefont {George}\ \bibnamefont
  {Gamow}},\ }\href@noop {} {\emph {\bibinfo {title} {{One Two Three...
  Infinity: Facts and Speculations of Science}}}}\ (\bibinfo  {publisher}
  {Viking Press},\ \bibinfo {address} {New York, NY},\ \bibinfo {year}
  {1947})\BibitemShut {NoStop}%
\bibitem [{\citenamefont {Adesso}\ \emph {et~al.}(2014)\citenamefont {Adesso},
  \citenamefont {Ragy},\ and\ \citenamefont {Lee}}]{AdessoRagyLee2014}%
  \BibitemOpen
  \bibfield  {author} {\bibinfo {author} {\bibfnamefont {Gerardo}\ \bibnamefont
  {Adesso}}, \bibinfo {author} {\bibfnamefont {Sammy}\ \bibnamefont {Ragy}}, \
  and\ \bibinfo {author} {\bibfnamefont {Antony~R.}\ \bibnamefont {Lee}},\
  }\emph {\enquote {\bibinfo {title} {{Continuous Variable Quantum Information:
  Gaussian States and Beyond}},}\ }\href
  {https://doi.org/10.1142/S1230161214400010} {\bibfield  {journal} {\bibinfo
  {journal} {Open Syst. Inf. Dyn.}\ }\textbf {\bibinfo {volume} {21}},\
  \bibinfo {pages} {1440001} (\bibinfo {year} {2014})},\ \Eprint
  {http://arxiv.org/abs/1401.4679} {arXiv:1401.4679}\BibitemShut {NoStop}%
\bibitem [{\citenamefont {Silva}\ \emph {et~al.}(2016)\citenamefont {Silva},
  \citenamefont {Manzano}, \citenamefont {Skrzypczyk},\ and\ \citenamefont
  {Brunner}}]{Silva_2016}%
  \BibitemOpen
  \bibfield  {author} {\bibinfo {author} {\bibfnamefont {Ralph}\ \bibnamefont
  {Silva}}, \bibinfo {author} {\bibfnamefont {Gonzalo}\ \bibnamefont
  {Manzano}}, \bibinfo {author} {\bibfnamefont {Paul}\ \bibnamefont
  {Skrzypczyk}}, \ and\ \bibinfo {author} {\bibfnamefont {Nicolas}\
  \bibnamefont {Brunner}},\ }\emph {\enquote {\bibinfo {title} {Performance of
  autonomous quantum thermal machines: Hilbert space dimension as a
  thermodynamical resource},}\ }\href
  {https://doi.org/10.1103/PhysRevE.94.032120} {\bibfield  {journal} {\bibinfo
  {journal} {Phys. Rev. E}\ }\textbf {\bibinfo {volume} {94}},\ \bibinfo
  {pages} {032120} (\bibinfo {year} {2016})},\ \Eprint
  {http://arxiv.org/abs/1604.04098} {arXiv:1604.04098}\BibitemShut {NoStop}%
\end{thebibliography}
\end{document}